\newcommand{\bib}{paper.bib}
\begin{document}

\title{The Labor Market Incidence of New Technologies}
\author{Tianyu Fan\\ Yale University \\ \textit{Job Market Paper} 
\thanks{Tianyu Fan: Yale University. Email: tianyu.fan@yale.edu. Website: \url{https://tianyu-fan.com}. \\ 
I am deeply grateful to Michael Peters, Pascual Restrepo, and Fabrizio Zilibotti for their invaluable guidance and unwavering support throughout this project. 
I thank Serdar Birinci, Lorenzo Caliendo, Yu-Ting Chiang, Maxim Dvorkin, Ana Cecilia Fieler, Mayara Felix, Joel Flynn, David Hémous, Zhen Huo, Sam Kortum, Julian Kozlowski, Danial Lashkari, Ricardo Marto, Kjetil Storesletten, Michael Song, Kailan Tian, Sharon Traiberman, Laura Veldkamp, and Anton Yang for insightful discussions and constructive feedback. 
I also benefited from helpful comments from participants at the CCER Summer Institute, CUFE Workshops, Dissertation Fellow Workshop (Federal Reserve Bank of St. Louis), Growth and Development Conference (Federal Reserve Bank of Minneapolis), Growth and Institution Workshop (Tsinghua University), TRAIN Conference, and Yale Macro/Trade Lunch Workshops. 
Financial support from the Stripe Economics of AI Fellowship is gratefully acknowledged.}
}
\date{October 30, 2025 \\ \href{https://www.tianyu-fan.com/files/FAN_Technology_Incidence.pdf}{(Click here for the most recent version)}}

\begin{titlepage}\maketitle
This paper develops a new framework to analyze the incidence of labor market shocks, focusing on automation and artificial intelligence. Central to our theory is the distance-dependent elasticity of substitution (DIDES), where worker mobility between occupations declines with their distance in skill space. Mapping 306 occupations into cognitive, manual, and interpersonal skill dimensions, we estimate a low-dimensional latent skill model that preserves granular substitution patterns. We show that both automation and artificial intelligence cluster within skill-adjacent occupations, constraining employment adjustment and amplifying wage effects. The clustering nature of technologies generates unequal outcomes: 20--50\% of labor demand shocks translate to wages (versus 30\% under standard models), while mobility recovers only 20\% of losses (versus 30\% from standard estimates). 

\paragraph{Keywords} technological change, labor market adjustment, distance-dependent elasticity of substitution, automation, artificial intelligence
\paragraph{JEL Classification} J24, O33, J31, J62
\end{titlepage}

\section{Introduction} \label{s:intro}

Recent technological advances in automation and artificial intelligence have transformed labor markets, spurring productivity gains while reshaping the distribution of economic outcomes.\footnote{We follow \cite{Acemoglu2022-lv,Restrepo2024-or} in defining automation technologies as industrial robots, machinery, and software without AI capability.} These technologies share a defining characteristic that fundamentally determines their labor market impact: they do not strike randomly but concentrate in task-similar occupations. Automation targets routine manual tasks such as manufacturing, assembly, and transportation, while AI affects cognitive work including data analysis, research, and decision-making. This clustering creates a mobility constraint for displaced workers. Workers can easily transition to task-similar occupations, yet these alternatives face similar technological displacement and offer no economic refuge. An assembly worker displaced by automation could shift to construction or welding, occupations where task requirements overlap substantially, but finds these jobs similarly automated. A data analyst threatened by AI discovers that financial analysis and market research, their most accessible alternatives, face equivalent AI exposure. Workers are most mobile precisely where mobility provides the least benefit.

This paper develops a novel framework for evaluating how labor market adjustment absorbs unequal shocks across occupations, focusing on automation and artificial intelligence. To assess the labor market incidence of these technologies, we need to understand how easily workers can move to differentially affected occupations rather than similarly affected ones: a data analyst threatened by AI benefits from transitioning to management (which faces lower AI exposure) rather than to market research (which faces comparable AI threats). When workers can easily move from negatively affected occupations to those benefiting from technological change, the labor market absorbs shocks through employment reallocation; when mobility is limited, shocks manifest as wage inequality. While this problem has been recognized in the literature, existing frameworks either impose strong restrictions on substitution patterns to retain tractability or group occupations to attain realistic substitution at the cost of removing heterogeneity.\footnote{The former includes constant elasticity of substitution and nested constant elasticity of substitution frameworks. The latter comprises structural labor frameworks, especially Roy models with skill heterogeneity that flexibly model worker choices. At the extreme, skill-biased technical change frameworks define production over a small set of aggregate skill groups, achieving tractability by collapsing occupational heterogeneity entirely.} We argue that accounting for heterogeneity in workers' abilities to reallocate across occupations is crucial, as it shapes the impact of technological shocks on both employment and inequality. To address this, we develop a Roy model with latent skills that preserves rich heterogeneity in substitution patterns while remaining empirically tractable.

Our central contribution is providing a methodology for empirically assessing the incidence of labor market shocks across granular occupations. We build our theoretical analysis on a Roy model of occupational choice augmented with distance-dependent elasticity of substitution (DIDES)\footnote{\cite{Teulings2005-rz} first introduces this concept in a one-dimensional assignment framework for analyzing wage determination. We extend it to a multi-dimensional empirical setting for analyzing shock propagation across occupations.}, where worker mobility declines with skill distance between occupations—that is, differences in occupational skill intensities. Formally, workers draw correlated productivity across occupations, where correlation declines with skill distance in a multi-dimensional skill space. This correlation structure directly generates substitution patterns: high correlation between skill-similar occupations creates strong substitutability, while low correlation between skill-distant occupations limits substitution. When technological shocks cluster in skill-adjacent occupations—as we document for both automation and AI—this structure constrains employment adjustment while amplifying wage effects, generating severe and persistent inequality.

A key innovation of our framework is achieving dimensionality reduction while preserving flexible substitution patterns for a granular occupation structure. Rather than estimating hundreds of thousands of bilateral elasticities among occupations, we parameterize the entire substitution structure through a low-dimensional skill space. The correlation structure governing productivity draws depends on just $S+1$ parameters for $S$ skill dimensions: one cross-skill elasticity $\theta$ and $S$ within-skill correlation parameters $\{\rho_s\}_{s=1}^S$. This parsimony makes estimation feasible from standard aggregate data while maintaining the flexibility to capture how technological clustering constrains employment adjustment.

Our empirical implementation proceeds in three integrated steps that connect measurement to theory to structural estimation. First, we map 306 detailed occupations into a three-dimensional skill space using O*NET data, extracting cognitive, manual, and interpersonal skill intensities through principal component analysis. These skill measures operationalize the theoretical framework: they fully parameterize the labor supply structure through three interpretable skills. Second, we follow the task framework to measure how automation and AI differentially affect occupations by having ChatGPT evaluate the automation and AI feasibility of 19,200 tasks across 862 occupations. This reveals the clustering patterns central to our analysis (automation concentrates in manual-intensive occupations while AI targets cognitive-intensive jobs, with both technologies showing systematic concentration within skill-adjacent occupations rather than random dispersion). Third, we estimate the structural parameters governing substitution by leveraging how occupational employment and average wages responded to historical automation between 1980 and 2010.

The estimation reveals striking departures from standard models that fundamentally alter our understanding of labor market adjustment. Under conventional CES assumptions, which are nested in our framework, we estimate an average elasticity of 3.12, suggesting substantial worker mobility across all occupations. Allowing for skill-based correlation through our DIDES framework changes this picture dramatically. The cross-skill elasticity plummets to 1.10, implying that moving across skill boundaries is far more difficult than standard models assume. Within-skill elasticities show substantial heterogeneity: 4.8 for cognitive occupations, 4.4 for interpersonal occupations, but only 2.1 for manual occupations. The correlation parameters driving these differences reveal that cognitive skills prove most transferable ($\rho_{\text{cog}} = 0.77$), while manual skills exhibit limited transferability ($\rho_{\text{man}} = 0.48$). These estimates reveal that two-thirds of observed occupational substitution occurs within skill dimensions rather than across them, a pattern that becomes crucial when technological shocks themselves cluster.

With the substitution structure estimated, we quantify how automation and AI reshape labor market outcomes. The clustering of technological shocks within skill domains constrains adjustment: when entire skill clusters face negative shocks simultaneously, workers have limited escape options. This manifests in heterogeneous wage pass-through across occupations. While standard models predict uniform 30\% pass-through from demand shocks to wages, we find pass-through rates ranging from 20\% to 50\%. Production workers facing automation experience 40--45\% pass-through, with nearly half their occupational labor demand shocks translating directly to wage declines. The variation directly reflects how clustering eliminates escape options: when skill-similar occupations face simultaneous threats, the effective elasticity of substitution collapses, forcing wage absorption rather than employment reallocation.

Occupational mobility provides limited insurance against these wage losses. Workers recover only 20\% of automation-induced wage declines through occupational transitions, compared to 30\% predicted by standard models. This limited mobility gain emerges from technological shocks concentrating precisely where skill transferability is weakest. Automation targets manual occupations where workers have the lowest transferability ($\rho_{\text{man}} = 0.48$), creating large losses with minimal recovery options. AI affects cognitive occupations where higher transferability ($\rho_{\text{cog}} = 0.77$) offers better prospects, yet clustering still constrains escape options because natural transition targets face similar AI threats. The interaction between shock distribution and heterogeneous mobility (absent from models assuming uniform elasticity) drives the severe distributional consequences we document.

Our static analysis estimates parameters from decadal wage responses, which we interpret as long-run elasticities reflecting equilibrium adjustment over extended periods. However, one key dimension is the speed of labor market adjustment. We extend this static analysis to examine transitions by embedding DIDES into a dynamic discrete choice framework. Our examination of historical automation reveals remarkably persistent effects: gradual adoption since 1985 generated wage gaps up to 50\% between high and low exposure occupations. Employment shifts absorbed two-thirds of demand changes over this period. Under a counterfactual scenario where AI rapidly reaches automation's scale by 2030, adjustment proves much more constrained. The labor market initially absorbs less than one-third of shocks, generating sharp wage declines with mobility recovering only one-third of losses during the transition. The clustering that constrains static adjustment also slows dynamic transitions, with forward-looking behavior providing limited relief because improved outside options are offset by similar threats to alternative occupations.


These findings reshape our understanding of how technological progress affects workers and carry immediate policy implications. Conventional estimates overstate the extent to which labor market flexibility mitigates technological disruption. When technical changes cluster in skill-adjacent occupations, as our evidence establishes for both automation and AI, they systematically target rigidities in occupational substitution. Workers cannot escape to unaffected occupations because skill intensities create barriers, and the occupations they can reach face similar technological threats. This interaction between technology clustering and substitution structure, absent in standard frameworks, explains why technological change generates such pronounced and persistent inequality. Standard policy prescriptions for worker retraining miss this fundamental constraint: displaced workers' natural transition targets face similar technological risks. As AI deployment accelerates, understanding these mechanisms becomes essential for designing policies that facilitate necessary economic transitions while protecting vulnerable workers from concentrated disruption.

\paragraph{Related Literature}

Our paper contributes to four interconnected literatures: skill-biased technical change, labor reallocation dynamics, the Roy model tradition, and assignment theory.
 
\textit{Skill-biased Technical Change:} Our work builds on the extensive literature examining labor market consequences of technological change. Early research established that technological advances disproportionately benefit skilled workers \citep{Katz1992-zr, Autor1998-tu, Acemoglu2002-gk, Autor2013-zm}. Recent task-based frameworks provide a more granular understanding of how automation technologies generate unequal labor demand shifts across occupations \citep{Acemoglu2018-zi, Acemoglu2020-bj, Acemoglu2022-lv}. Our contribution complements this demand-side focus by modeling supply-side adjustment, developing a framework that captures how workers reallocate across occupations and how the interaction between substitution patterns and demand shocks determines equilibrium incidence.

In parallel, emerging research explores AI's distinct disruptive potential. \cite{webb2019impact} and \cite{Acemoglu2022-rp} demonstrate that AI affects both routine and non-routine cognitive tasks, while experimental studies by \cite{Noy2023-yp} and \cite{brynjolfsson2025generative} document how generative AI transforms knowledge-based and creative work. Recent analyses primarily examine AI's demand-side impact through task frameworks \citep{Eloundou2024-cu, brynjolfsson2025canaries, hampole2025artificial, freund2025job}. We provide the first systematic assessment of worker mobility constraints under AI exposure.

\textit{Labor Reallocation and Mobility Frictions:} A growing literature emphasizes how worker reallocation mitigates unequal demand shocks. Recent work documents how occupational mobility constraints amplify wage inequality during transitions \citep{Lee2006-sj, dvorkin2019occupation, Traiberman2019-fe}, with dynamic models studying the regulation policies \citep{Guerreiro2022-tq, lehr2022optimal, Beraja2024-og}. Regarding the source of slow adjustment, \cite{bocquet2024network} examines adjustment through job transition networks, while \cite{Adao2024-ub} highlights skill specialization as a constraint on reallocation. We extend this literature by examining how the distribution of technologies interacts with heterogeneous worker mobility in determining incidence, highlighting the importance of a flexible substitution structure.\footnote{While \cite{Bohm2025} also highlights the importance of heterogeneous labor supply elasticities, their heterogeneity stems solely from differences in employment shares across occupations, not from underlying variation in substitution structure.}


\textit{Roy Models and Multidimensional Skills:} Following the Roy tradition of selection on comparative advantage \citep{Heckman1985-gf}, recent work incorporates multidimensional skills to study business cycle dynamics \citep{grigsby2022skill},\footnote{In \cite{grigsby2022skill}'s framework, the notion of skill is equivalent to jobs (that is, one occupation is one skill). Therefore, he needs to group occupations into 15 clusters to be estimable from job transitions.} discrimination \citep{Hurst2024-jw}, and occupation choices \citep{Lise2020-hm}. We build on \cite{Lise2020-hm}'s insight about multidimensional skill structure but embed it into aggregate labor supply across granular occupations. Our innovation is mapping 300+ occupations into three latent skill dimensions while preserving rich substitution patterns, estimable from standard aggregate employment and average wage data. By adopting a Roy-Fréchet structure with copula-based correlation, we focus directly on substitution patterns.\footnote{This approach circumvents the well-known identification challenges of unobserved heterogeneity that plague selection models \citep{Heckman1990-np, French2011-dt, erosa2025labor}.} This approach yields a tractable framework that uses occupational skill intensities to parameterize substitution structure and aggregate employment shares as sufficient statistics, enabling estimation without relying on individual-level data.

\textit{Assignment Theory and DIDES:} The distance-dependent elasticity of substitution emerges naturally from assignment models where workers sort based on comparative advantage \citep{Sattinger1993-av, Teulings1995-hc, Teulings2005-rz}. These models establish that substitutability declines with skill distance, a theoretical result we operationalize empirically. While \cite{Lindenlaub2017-eu} explores multidimensional assignment theoretically, we provide the first empirical implementation that quantifies DIDES using occupational data, estimates its parameters from observed labor market responses, and demonstrates its crucial role in technological incidence.

\paragraph{Road Map} 
Section \ref{s:theory} develops a static model featuring distance-dependent elasticity of substitution (DIDES). Section \ref{s:empirics} implements the framework empirically, estimating a flexible substitution structure for granular occupations. Section \ref{s:incidence} quantifies the incidence of automation and AI. Section \ref{s:dynamic_model} extends to dynamic adjustment, embedding DIDES into a dynamic discrete choice framework. Section \ref{s:add_extension} addresses extensions including alternative specifications and heterogeneous groups. Section \ref{s:conclusion} concludes.

\section{Theoretical Framework}\label{s:theory}

Our framework combines a parsimonious task-based production structure with a flexible model of worker sorting across occupations. On the production side, each occupation represents a collection of tasks. Technological change shifts task assignment, generating occupation-specific labor demand shocks. On the labor supply side, the economy consists of workers who draw correlated productivities for performing tasks across occupations. Workers choose occupations competitively, sorting based on their comparative advantage in performing occupation-specific task bundles. Central to our analysis is the correlation structure of these productivity draws, which determines occupational substitution patterns.

We introduce a latent skill structure to parameterize the correlation in productivity across occupations. In our framework, "skills" are not primitive worker attributes but rather a dimensionality reduction device: they provide a low-dimensional representation that summarizes how productivity correlates across occupations. This parametric approach proves essential for capturing realistic substitution patterns while maintaining empirical tractability: rather than estimating thousands of bilateral elasticities, we estimate a handful of parameters governing the substitution structure.

\subsection{Model Setup} \label{ss:static_model}

\paragraph{Production and Labor Demand}

Since our focus is on labor supply responses, we adopt a deliberately parsimonious representation of labor demand. Labor demand derives from a task-based production framework following \cite{Acemoglu2018-zi, Acemoglu2022-lv}. In the underlying model (detailed in Appendix \ref{app_ss:production}), occupations perform distinct task sets that can be produced using either labor or capital, with technological change shifting task allocation between these inputs. This yields the reduced-form representation:
\begin{equation}
    y = \mathcal{A}\left(\sum_{o=1}^O \alpha_o^{\frac{1}{\sigma}} L_o^{\frac{\sigma-1}{\sigma}}\right)^{\frac{\sigma}{\sigma-1}}
\end{equation}
where $L_o$ denotes employment in occupation $o$, $\sigma$ is the elasticity of substitution between occupations (the labor demand elasticity), $\mathcal{A}$ captures aggregate productivity, and $\alpha_o$ represents the share of tasks performed by labor in occupation $o$.

The parameter $\alpha_o$ serves as a sufficient statistic for occupation-specific labor demand shocks. When automation or AI replaces labor in specific tasks, the corresponding $\alpha_o$ declines: $d\ln\alpha_o < 0$ for occupations whose tasks become automated. Conversely, if a new technology increases demand for a particular occupation, $d\ln\alpha_o > 0$.\footnote{We focus on labor demand shocks from automation and AI in both theory and measurement. In contrast, \cite{Autor2025-vd} study how automation can also change the occupational supply of workers.} This parsimonious representation captures technology's distributional effects without explicitly tracking task assignments, as the demand shifters $\{\alpha_o\}$ fully summarize technological impacts across occupations.\footnote{The aggregate productivity effect $d\ln\mathcal{A}$ represents a level shift that affects all occupations proportionally. Since our focus is on distributional incidence across occupations, this term cancels out in relative wage analysis and is omitted from subsequent analysis.}

From profit maximization, occupational wages equal marginal products:
\begin{equation*}
    w_o = \frac{\partial y}{\partial L_o} = y^{\frac{1}{\sigma}} \alpha_o^{\frac{1}{\sigma}} L_o^{-\frac{1}{\sigma}}
\end{equation*}

This labor demand equation, combined with the labor supply framework developed below, determines equilibrium wage and employment responses to technological change.

\paragraph{Workers and Labor Supply}

The economy consists of a continuum of workers indexed by $i$. Each worker draws a productivity vector $\boldsymbol{\epsilon}(i) = \{\epsilon_o(i)\}_{o=1}^O$ across occupations from a generalized multivariate Fréchet distribution:
\begin{equation}
    \operatorname{Pr}[\epsilon_1(i) \leq \epsilon_1, \ldots, \epsilon_O(i) \leq \epsilon_O] = \exp\left[-F(A_1\epsilon_1^{-\theta}, \ldots, A_O\epsilon_O^{-\theta})\right] \label{eq:prod_distr}
\end{equation}
where $A_o > 0$ captures average productivity in occupation $o$ and $\theta > 0$ governs productivity dispersion across workers. The marginal distributions are Fréchet: $\operatorname{Pr}[\epsilon_o(i) \leq \epsilon_o] = \exp(-A_o\epsilon_o^{-\theta})$, standard in Roy models with extreme value distributions. The correlation function $F$ is the central primitive of our framework, governing how productivity correlates across occupations and thereby determining the entire structure of occupational substitution.\footnote{The correlation function $F$ is related to the copula of the productivity distribution and satisfies three key properties: homogeneity of degree one, unboundedness, and the sign-switching property (ensuring occupations are gross substitutes). See Appendix~\ref{app_sss:correlation_properties} for formal definitions.}

Workers choose occupations to maximize utility. Worker $i$ receives utility $u_o(i) = w_o\epsilon_o(i)$ from occupation $o$, where $w_o$ is the wage and $\epsilon_o(i)$ represents both productivity and inverse effort cost.\footnote{Formally, workers consume $c_o = w_o$ and supply effort $\ell_o(i) = 1/\epsilon_o(i)$, yielding utility $u_o(i) = c_o/\ell_o(i) = w_o\epsilon_o(i)$.} The optimal occupational choice is:
\begin{equation*}
    o^*(i) = \arg\max_{o \in \{1,\ldots,O\}} \{w_o\epsilon_o(i)\}
\end{equation*}

The correlation function $F: \mathbb{R}_+^O \rightarrow \mathbb{R}_+$ determines substitution patterns between occupations. When productivity draws are highly correlated across occupations, workers transition more readily between them in response to wage changes. When $F$ is additive ($F = \sum_o x_o$), productivity draws are independent and the model reduces to standard CES with uniform elasticity. 

\begin{proposition}[Occupational Employment Shares]
Given the multivariate Fréchet productivity distribution in equation \eqref{eq:prod_distr}, the share of workers selecting occupation $o$ is:
\begin{equation*}
    \pi_o = \frac{A_ow_o^{\theta}F_o(A_1w_1^{\theta}, \ldots, A_Ow_O^{\theta})}{F(A_1w_1^{\theta}, \ldots, A_Ow_O^{\theta})}
\end{equation*}
where $F_o = \partial F/\partial x_o$ denotes the partial derivative with respect to the $o$-th argument.
\end{proposition}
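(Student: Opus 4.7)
The plan is to use the standard discrete-choice calculation for max-stable distributions, adapted to the generalized Fréchet with correlation function $F$. First, I would transform the random utilities: define $U_o(i) = w_o \epsilon_o(i)$ and compute their joint CDF. Since $\{U_o \le u_o\}$ is equivalent to $\{\epsilon_o \le u_o/w_o\}$, equation \eqref{eq:prod_distr} gives
\begin{equation*}
    H(u_1,\ldots,u_O) \;=\; \exp\bigl[-F(A_1 w_1^{\theta} u_1^{-\theta},\ldots,A_O w_O^{\theta} u_O^{-\theta})\bigr],
\end{equation*}
so the tilted utility vector has the same distributional form with rescaled scale parameters $\tilde A_o \equiv A_o w_o^{\theta}$.

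Next, I would express the choice probability as a diagonal integral. Since $\pi_o = \Pr[U_o \ge U_{o'}\text{ for all }o']$, by treating $\partial H/\partial u_o$ as the sub-density of $U_o$ jointly with the event that the other coordinates lie below $u_o$,
\begin{equation*}
    \pi_o \;=\; \int_0^{\infty} \left.\frac{\partial H(u_1,\ldots,u_O)}{\partial u_o}\right|_{u_1=\cdots=u_O=u} du.
\end{equation*}
Computing the derivative gives $\partial H/\partial u_o = H(\mathbf{u})\cdot \theta\, \tilde A_o u_o^{-\theta-1}\, F_o(\tilde A_1 u_1^{-\theta},\ldots,\tilde A_O u_O^{-\theta})$.

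The key simplification comes from the homogeneity properties of $F$ listed in the footnote: $F$ is homogeneous of degree one, so $F_o$ is homogeneous of degree zero. Evaluating on the diagonal $u_1 = \cdots = u_O = u$ then yields $F(\tilde{\mathbf{A}} u^{-\theta}) = u^{-\theta}F(\tilde{\mathbf{A}})$ and $F_o(\tilde{\mathbf{A}} u^{-\theta}) = F_o(\tilde{\mathbf{A}})$, making $u$ and $\tilde{\mathbf{A}}$ separable inside the integrand. The integral becomes
\begin{equation*}
    \pi_o \;=\; \tilde A_o F_o(\tilde{\mathbf{A}}) \int_0^{\infty} \exp\!\bigl[-u^{-\theta} F(\tilde{\mathbf{A}})\bigr]\, \theta\, u^{-\theta-1}\, du.
\end{equation*}
The substitution $v = u^{-\theta} F(\tilde{\mathbf{A}})$ collapses the integral to $\int_0^{\infty} e^{-v} dv / F(\tilde{\mathbf{A}}) = 1/F(\tilde{\mathbf{A}})$, yielding $\pi_o = \tilde A_o F_o(\tilde{\mathbf{A}})/F(\tilde{\mathbf{A}})$, which after substituting back $\tilde A_o = A_o w_o^{\theta}$ is the stated formula.

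The main obstacle is justifying the diagonal-integral representation of $\pi_o$ rigorously, especially when $H$ is not a standard product CDF: one must argue that $H$ is absolutely continuous off measure-zero sets and that ties among the $U_o$ occur with probability zero (guaranteed by unboundedness of $F$ and the Fréchet marginals). The homogeneity-driven change of variables is then mechanical; getting the pre-integral bookkeeping right — in particular the factor of $\theta$ and confirming that sub-density interpretation applies to the max-stable $H$ — is the only step that needs care.
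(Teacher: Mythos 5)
Your proposal is correct and follows essentially the same route as the paper's own proof in Appendix C.1: both express $\pi_o$ as the integral along the diagonal of the partial derivative of the joint CDF in the $o$-th coordinate, then use homogeneity of degree one of $F$ (hence degree zero of $F_o$) to pull $A_o w_o^{\theta} F_o/F$ out of the integrand and reduce the remaining integral to unity. The only cosmetic difference is that the paper integrates up to a finite threshold $t$ and takes $t \to \infty$, while you integrate over $(0,\infty)$ directly; your closing remarks on absolute continuity and zero-probability ties address exactly the regularity the paper leaves implicit.
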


\begin{proof}
See Appendix~\ref{app_ss:emp_share_derivation}.
\end{proof}

The employment share expression reveals that occupation $o$'s share depends on three factors: average productivity $A_o$, wage raised to the dispersion parameter ($w_o^{\theta}$), and how the correlation function characterizes occupation's relative attractiveness ($F_o/F$). This last term breaks the independence of irrelevant alternatives (IIA) property, allowing realistic substitution patterns where wage changes in one occupation affect employment shares differently across other occupations.\footnote{When $F(x_1, \ldots, x_O) = \sum_o x_o$ (independent productivity draws), $F_o/F = 1/\sum_j x_j$ for all $o$, restoring IIA and reducing to standard CES with uniform elasticity $\theta$.}

Total labor supply to occupation $o$ is $L_o = \pi_o\bar{L}$, where $\bar{L}$ is the total workforce. In the baseline specification, we assume idiosyncratic productivity reduces the cost of working but does not enter production.\footnote{Section~\ref{ss:efficiency} extends the framework to incorporate efficiency effects, where a fraction $\delta$ of workers contribute productivity directly to production. This extension reveals that when $\delta > 0$, labor supply elasticities decrease and wage pass-through increases, implying our baseline specification provides conservative estimates of technological incidence.} The correlation function $F$ fully characterizes substitution patterns through its effect on employment share responses to wage changes.\footnote{Section \ref{ss:dides} parameterizes $F$ to capture distance-dependent elasticity of substitution (DIDES), where substitutability declines with skill distance between occupations.}

\paragraph{Market Equilibrium}
A competitive equilibrium consists of a wage vector $\boldsymbol{w}^* = \{w_o^*\}_{o=1}^O$ and allocation $\boldsymbol{L}^* = \{L_o^*\}_{o=1}^O$ such that:
\begin{enumerate}
    \item \textbf{Profit maximization:} Firms choose labor to maximize profits, yielding demand:
    \begin{equation*}
        L_o^d(\boldsymbol{w}) = \left(\frac{\alpha_o}{w_o}\right)^{\sigma} y(\boldsymbol{L})
    \end{equation*}
    
    \item \textbf{Utility maximization:} Workers choose occupations optimally, yielding supply:
    \begin{equation*}
        L_o^s(\boldsymbol{w}) = \pi_o(\boldsymbol{w})\bar{L} = \frac{A_ow_o^{\theta}F_o(A_1w_1^{\theta}, \ldots, A_Ow_O^{\theta})}{F(A_1w_1^{\theta}, \ldots, A_Ow_O^{\theta})}\bar{L}
    \end{equation*}
    
    \item \textbf{Market clearing:} Labor markets clear in all occupations: 
    \begin{equation*}
        L_o^d(\boldsymbol{w}^*) = L_o^s(\boldsymbol{w}^*) = L_o^* \quad \forall o
    \end{equation*}
\end{enumerate}

\begin{proof}
Existence and uniqueness are established in Appendix~\ref{app_ss:equilibrium_proof}.
\end{proof}

\subsection{Technological Shocks and Labor Market Incidence}

We model technological change as shifts in the share of tasks performed by labor across occupations, $d\ln\boldsymbol{\alpha} = \{d\ln\alpha_o\}_{o=1}^O$, and changes in aggregate productivity, $d\ln\mathcal{A}$. While occupation-specific shifts may displace labor, aggregate productivity gains increase total output—the core tension in technological incidence. The distributional question is how these aggregate gains and occupation-specific changes are shared across workers.

\begin{proposition}[Equilibrium Responses to Technology]\label{prop:tech_incidence}
Consider a technological shock characterized by task share changes $\{d\ln\alpha_o\}_{o=1}^O$. To first order:

\noindent (i) Wage and employment responses satisfy:
\begin{align}
    d\ln\boldsymbol{w} + \frac{1}{\sigma}d\ln\boldsymbol{L}&= \frac{1}{\sigma}d\ln y \cdot \mathbf{1} + \frac{1}{\sigma}d\ln\boldsymbol{\alpha} \label{eq:labor_demand_change}\\
    d\ln\boldsymbol{L} &= \Theta \cdot d\ln\boldsymbol{w} \label{eq:labor_supply_change}
\end{align}

\noindent (ii) Equilibrium wage incidence is:
\begin{equation}
    d\ln\boldsymbol{w} = \frac{1}{\sigma}d\ln y \cdot \mathbf{1} + \Delta \cdot \frac{d\ln\boldsymbol{\alpha}}{\sigma}
    \label{eq:wage_incidence}
\end{equation}
where $\Delta = (\mathbf{I} + \Theta/\sigma)^{-1}$ is the pass-through matrix and $\Theta$ is the matrix of labor supply elasticities:
\begin{equation}
    \Theta_{oo'} = \begin{cases}
        \theta\left[\frac{x_{o'}F_{oo'}}{F_o}\bigg|_{x_j = A_jw_j^{\theta}} - \pi_{o'}\right] & \text{if } o \neq o' \\
        \theta\left[\frac{x_oF_{oo}}{F_o}\bigg|_{x_j = A_jw_j^{\theta}} + 1 - \pi_o\right] & \text{if } o = o'
    \end{cases}
    \label{eq:elasticity_matrix}
\end{equation}
\end{proposition}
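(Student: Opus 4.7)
The plan is to derive both equilibrium conditions by log-linearizing the demand and supply blocks separately, then combine them and exploit homogeneity of $F$ to simplify.

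\textbf{Step 1 (labor demand, equation \eqref{eq:labor_demand_change}).} First I would take logs of the marginal-product wage equation $w_o = y^{1/\sigma}\alpha_o^{1/\sigma}L_o^{-1/\sigma}$ and totally differentiate. This immediately yields $d\ln w_o + \tfrac{1}{\sigma}d\ln L_o = \tfrac{1}{\sigma}d\ln y + \tfrac{1}{\sigma}d\ln\alpha_o$ coordinate-wise, which is the claimed vector identity.

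\textbf{Step 2 (labor supply, equations \eqref{eq:labor_supply_change}--\eqref{eq:elasticity_matrix}).} Next I would log-linearize the employment-share formula. Writing $x_{o'}\equiv A_{o'}w_{o'}^{\theta}$ so that $d\ln x_{o'}=\theta\,d\ln w_{o'}$, the identity $\ln\pi_o=\ln A_o+\theta\ln w_o+\ln F_o-\ln F$ gives
\begin{equation*}
    d\ln\pi_o = \theta\,d\ln w_o + \sum_{o'}\frac{x_{o'}F_{oo'}}{F_o}\,\theta\,d\ln w_{o'} - \sum_{o'}\frac{x_{o'}F_{o'}}{F}\,\theta\,d\ln w_{o'}.
\end{equation*}
The key observation is that $\pi_{o'}=x_{o'}F_{o'}/F$ by the share formula itself, so the third sum collapses to $\theta\sum_{o'}\pi_{o'}d\ln w_{o'}$. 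Collecting the diagonal ($o'=o$) and off-diagonal ($o'\neq o$) coefficients then produces $\Theta$ exactly as in \eqref{eq:elasticity_matrix}. Because $d\ln L_o=d\ln\pi_o$ (total workforce fixed), this establishes $d\ln\boldsymbol{L}=\Theta\,d\ln\boldsymbol{w}$.

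\textbf{Step 3 (equilibrium pass-through, equation \eqref{eq:wage_incidence}).} Substituting the supply response into the demand identity gives $(\mathbf{I}+\Theta/\sigma)\,d\ln\boldsymbol{w}=\tfrac{1}{\sigma}d\ln y\cdot\mathbf{1}+\tfrac{1}{\sigma}d\ln\boldsymbol{\alpha}$. Inverting yields $d\ln\boldsymbol{w}=\Delta\bigl[\tfrac{1}{\sigma}d\ln y\cdot\mathbf{1}+\tfrac{1}{\sigma}d\ln\boldsymbol{\alpha}\bigr]$. To recover the exact form in \eqref{eq:wage_incidence}, I must show that $\Delta\mathbf{1}=\mathbf{1}$, equivalently $\Theta\mathbf{1}=\mathbf{0}$. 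This follows from homogeneity of degree one of $F$: Euler's theorem applied to $F_o$ (which is homogeneous of degree zero) gives $\sum_{o'}x_{o'}F_{oo'}=0$, and $\sum_{o'}\pi_{o'}=1$, so the row sums of $\Theta$ collapse to $\theta+0-\theta=0$. This completes the derivation.

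\textbf{Main obstacle.} The routine algebra is straightforward; the step that requires care is the identification in Step 2 of the term $x_{o'}F_{o'}/F$ as $\pi_{o'}$ (which turns an abstract expression into an observable) and the use in Step 3 of the homogeneity property listed in the footnote on the correlation function to guarantee $\Delta\mathbf{1}=\mathbf{1}$. Without the latter, the aggregate productivity component would mix nontrivially with the pass-through matrix, and the clean decomposition into an aggregate level shift plus a distributional term $\Delta\,d\ln\boldsymbol{\alpha}/\sigma$ would fail.
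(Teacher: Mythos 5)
Your proposal is correct and follows essentially the same route as the paper: log-differentiating the wage first-order condition, differentiating the employment-share formula (your direct computation using the identity $\pi_{o'}=x_{o'}F_{o'}/F$ is the same calculation the paper performs via relative wages and the wage index $W=F^{1/\theta}$, since $\partial \ln W/\partial \ln w_{o'}=\pi_{o'}$), and substituting supply into demand before inverting $(\mathbf{I}+\Theta/\sigma)$. Your explicit verification that $\Theta\mathbf{1}=\mathbf{0}$ by Euler's theorem applied to the degree-zero homogeneous $F_o$—so that $\Delta\mathbf{1}=\mathbf{1}$ and the aggregate term $\tfrac{1}{\sigma}d\ln y\cdot\mathbf{1}$ factors cleanly out of the pass-through—is precisely the zero-row-sum argument the paper relegates to a separate appendix, and you make that step more explicit than the paper's own incidence derivation does.
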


\begin{proof}
Part (i) follows from log-differentiating first-order conditions and employment shares. Part (ii) combines wage and employment responses. See Appendix \ref{app_ss:elasticity_derivation}.
\end{proof}

This proposition reveals how technological incidence depends on the interaction between shock distribution and the matrix of substitution elasticities. The aggregate output effect $(d\ln y/\sigma)$ raises all wages uniformly. The distributional effect, captured by the pass-through matrix $\Delta$, depends on both demand elasticity $\sigma$ and substitution matrix $\Theta$. This matrix embeds substitution patterns through two components: the correlation term $\theta x_{o'}F_{oo'}/F_o$ reflects productivity correlation between occupations, while the share term $-\theta\pi_{o'}$ represents independent substitution that depends only on employment shares. When productivities are independent ($F = \sum_o x_o$), only the share term remains, reducing to standard CES.\footnote{Rows of $\Theta$ sum to zero, confirming that only relative wage changes induce reallocation. This property follows from the homogeneity of $F$. See Appendix~\ref{app_ss:zero_row_sum}.}

The pass-through matrix $\Delta$ embodies the labor market capability to absorb distributional shocks: greater worker mobility (larger $\|\Theta\|$) enables employment adjustment that dampens wage effects, while limited mobility (smaller $\|\Theta\|$) translates shocks directly into wage disparities. In the limit where $\|\Theta\| \to 0$ (no mobility) or $\sigma \to \infty$ (perfectly elastic demand), the pass-through matrix approaches identity, yielding complete wage incidence. Conversely, as $\theta \to \infty$ (no productivity dispersion), workers become perfectly substitutable and unequal demand changes dissipate through employment reallocation, with pass-through approaching zero.

\paragraph{Mobility Gains and Welfare Recovery}

While equation~\eqref{eq:wage_incidence} captures wage effects for workers remaining in their occupations, a key aspect of demand shock incidence is workers' ability to shield themselves from negative shocks by switching occupations. Some workers benefit from such transitions, partially recovering losses through reallocation to less-affected occupations.

\begin{proposition}[Mobility Gains from Reallocation]\label{prop:mobility}
The expected welfare gain for workers initially in occupation $o$ from occupational transitions is:
\begin{equation}
    \text{Mobility Gain}_o = \sum_{o': d\ln w_{o'} > d\ln w_o} \mu_{oo'}(d\ln w_{o'} - d\ln w_o)
    \label{eq:mobility_gains}
\end{equation}
where $\mu_{oo'} = -\Theta_{oo'}(d\ln w_{o'} - d\ln w_o)$ is the fraction of workers reallocating from $o$ to $o'$.
\end{proposition}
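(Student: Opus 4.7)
The plan is to decompose the welfare change for workers initially in occupation $o$ into a within-occupation wage effect and a mobility effect from optimal switching, then show that the mobility component aggregates as in equation~\eqref{eq:mobility_gains}.

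I would start by conditioning on the original occupation. Among workers initially in $o$, those whose productivity draws are strictly interior do not switch to first order, while those arbitrarily close to an $(o,o')$ indifference ray switch precisely when $d\ln w_{o'}-d\ln w_o>0$. On this ray $w_o\epsilon_o(i)=w_{o'}\epsilon_{o'}(i)$, so the log-utility gain from switching (relative to remaining in $o$) is exactly $d\ln w_{o'}-d\ln w_o$, because the productivity draws themselves are held fixed by the perturbation. This step simultaneously delivers the per-switcher welfare gain and the positivity restriction on the summation in~\eqref{eq:mobility_gains}.

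Next I would compute the mass of switchers $\mu_{oo'}$. Writing $\mu_{oo'}$ as the joint probability that occupation $o$ is the arg-max at wages $\boldsymbol{w}$ and occupation $o'$ is the arg-max at wages $\boldsymbol{w}+d\boldsymbol{w}$, conditional on being in $o$ originally, I would Taylor-expand around $d\boldsymbol{w}=0$. The first-order coefficient is a boundary density at the pre-shock $(o,o')$ indifference set, which for the generalized multivariate Fr\'echet distribution in~\eqref{eq:prod_distr} is governed by the cross-partial $F_{oo'}$ evaluated at $x_j=A_jw_j^\theta$. Matching this calculation to the off-diagonal entries of the substitution matrix defined in~\eqref{eq:elasticity_matrix} identifies the first-order coefficient as $-\Theta_{oo'}$, yielding $\mu_{oo'}=-\Theta_{oo'}(d\ln w_{o'}-d\ln w_o)$. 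As a consistency check, aggregating bilateral inflows and outflows should reproduce the marginal share response $d\ln\pi_o=\sum_{o'}\Theta_{oo'}d\ln w_{o'}$ implied by Proposition~\ref{prop:tech_incidence} together with the zero-row-sum property of $\Theta$.

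Combining the two pieces and summing over destinations with a strictly positive wage advantage then yields~\eqref{eq:mobility_gains}. The resulting expression is second order in the wage perturbation, which is the correct order for welfare gains from a discrete-choice problem by an envelope-type argument. The main obstacle is the Taylor expansion that delivers $\mu_{oo'}$: in a multi-occupation Fr\'echet model the $(o,o')$ indifference boundary is a codimension-one set inside the full space of productivity draws, so the joint-probability calculation requires integrating out the remaining productivities using the copula structure implicit in~\eqref{eq:prod_distr}. This integration is where the cross-partial contribution $\theta x_{o'}F_{oo'}/F_o$ and the share contribution $\theta\pi_{o'}$ must combine into the single sufficient statistic $-\Theta_{oo'}$ that governs bilateral flows, and verifying that combination cleanly is the real work of the proof.
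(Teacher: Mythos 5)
Your proposal is correct and follows essentially the same route as the paper's proof in Appendix~\ref{app_ss:mobility_gain}: the paper likewise obtains the per-switcher gain $d\ln w_{o'}-d\ln w_o$ from the pre-shock indifference condition $\ln w_o+\ln\epsilon_o(i)=\ln w_{o'}+\ln\epsilon_{o'}(i)$ of the marginal worker, and then multiplies by the switcher mass $\mu_{oo'}=-\Theta_{oo'}(d\ln w_{o'}-d\ln w_o)$ restricted to destinations with $d\ln w_{o'}>d\ln w_o$. The only difference is that the boundary-density integration you flag as ``the real work'' is not carried out in the paper either---the identification of the bilateral flow with $-\Theta_{oo'}$ is asserted there as a one-line first-order statement, justified exactly by the single-wage-perturbation consistency check you describe.
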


\begin{proof}
See Appendix~\ref{app_ss:mobility_gain}.
\end{proof}

To build intuition for the determinants of mobility gains, we decompose equation~\eqref{eq:mobility_gains} into average and correlation effects. Substituting $\mu_{oo'} = -\Theta_{oo'}(d\ln w_{o'} - d\ln w_o)$ yields:
\begin{align*}
\text{Mobility Gain}_o 
&= \sum_{o':\, d\ln w_{o'} > d\ln w_o} 
\left|\Theta_{oo'}\right|(d\ln w_{o'} - d\ln w_o)^2 \nonumber\\
&= n_o \left[
\underbrace{\overline{\left|\Theta_{oo'}\right|} \cdot 
\overline{(d\ln w_{o'} - d\ln w_o)^2}}_{\text{Average effect}}
+ \underbrace{\text{Cov}\!\left(\left|\Theta_{oo'}\right|,\,
(d\ln w_{o'} - d\ln w_o)^2\right)}_{\text{Correlation effect}}
\right]
\label{eq:mobility_decomp}
\end{align*}
where $n_o=\#\{o': d \ln w_{o'}>d \ln w_o\}$. When technological shocks cluster in occupations that are close substitutes, workers have high mobility precisely to occupations facing similar negative shocks—a data analyst threatened by AI can easily transition to financial analysis, but that occupation faces comparable AI exposure. This negative correlation implies that standard models with uniform elasticities overstate welfare recovery through reallocation while understating inequality.
\subsection{Spectral Analysis of Technological Incidence} \label{ss:spectral}

We employ spectral analysis to understand how the distribution of technological shocks interacts with the substitution matrix to determine labor market incidence. This approach decomposes any shock into fundamental components (eigenshocks) given the occupation substitution structure. Each eigenshock has its own effective elasticity of substitution, revealing the capacity for employment adjustment and the associated wage effects.

\subsubsection{Eigendecomposition and Pass-Through}

The wage incidence equation \eqref{eq:wage_incidence} can be reformulated using the eigenstructure of the labor supply elasticity matrix $\Theta$. While $\Theta$ is not generally symmetric, it admits an eigendecomposition $\Theta = U\Lambda U^{-1}$ where $\Lambda = \text{diag}(\lambda_1, \ldots, \lambda_O)$ contains eigenvalues in ascending order and $U = [\boldsymbol{u}_1, \ldots, \boldsymbol{u}_O]$ contains corresponding eigenvectors.\footnote{The non-symmetry of $\Theta$ requires distinguishing between right eigenvectors (columns of $U$) and left eigenvectors (rows of $U^{-1}$). Empirically, all eigenvalues are distinct with $O$ linearly independent eigenvectors, ensuring: (i) diagonalizability, (ii) a complete basis spanning $\mathbb{R}^O$, and (iii) unique projection of shocks onto this basis given our normalization $\|\boldsymbol{u}_n\| = 1$.}

Each eigenvalue $\lambda_n$ represents the labor supply elasticity along its corresponding eigenvector $\boldsymbol{u}_n$—that is, how readily workers reallocate when relative wages change in the direction $\boldsymbol{u}_n$. This transforms the complex $O \times O$ substitution matrix into $O$ independent dimensions, each with its own elasticity.

\begin{lemma}[Eigenvalue Properties]\label{lemma:eigenvalues}
The labor supply elasticity matrix $\Theta$ satisfies:
\begin{enumerate}
    \item All eigenvalues are non-negative: $\lambda_n \geq 0$ for all $n$
    \item Exactly one zero eigenvalue: $\lambda_1 = 0$ with eigenvector $\boldsymbol{u}_1 \propto \mathbf{1}$
    \item Remaining eigenvalues are strictly positive: $\lambda_n > 0$ for $n > 1$
\end{enumerate}
\end{lemma}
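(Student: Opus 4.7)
The plan is to reduce the spectral analysis of $\Theta$ to the Hessian of a convex aggregator, then use homogeneity and convexity to extract the three claims. Let $D = \mathrm{diag}(\pi_1,\ldots,\pi_O)$, $\phi_o = \ln w_o$, and $x_o = A_o e^{\theta\phi_o}$. Substituting $\pi_o = x_o F_o/F$ into \eqref{eq:elasticity_matrix} and simplifying, a direct computation yields
\begin{equation*}
(D\Theta)_{oo'} \;=\; \theta\!\left[\frac{x_o x_{o'} F_{oo'}}{F} - \pi_o \pi_{o'}\right] + \theta\,\pi_o\,\mathbf{1}\{o=o'\},
\end{equation*}
which is symmetric in $(o,o')$ since $F_{oo'}=F_{o'o}$. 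Because $D \succ 0$, the matrix $\Theta$ shares its real eigenvalues with the symmetric matrix $D^{-1/2}(D\Theta)D^{-1/2}$, so it suffices to study the spectrum of $D\Theta$.

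Next I identify $D\Theta$ as a Hessian. Define $\psi(\phi) = \theta^{-1}\ln F\!\left(A_1 e^{\theta\phi_1},\ldots,A_O e^{\theta\phi_O}\right)$; differentiating once gives $\partial\psi/\partial\phi_o = \pi_o$, so $(D\Theta)_{oo'} = \partial^2\psi/(\partial\phi_o\partial\phi_{o'})$. Claim (2) then follows immediately: linear homogeneity of $F$ implies $\psi(\phi+t\mathbf{1}) = \psi(\phi)+t$, hence $\mathrm{Hess}_\phi\psi$ annihilates $\mathbf{1}$ and $\Theta\mathbf{1}=0$. For claim (1), I would establish convexity of $\psi$ in $\phi$ through the stochastic representation $F^{1/\theta}(A_1 e^{\theta\phi_1},\ldots) \propto \mathbb{E}\!\left[\max_o e^{\phi_o}\epsilon_o(i)\right]$: for each realization of $\boldsymbol{\epsilon}(i)$, the map $\phi \mapsto \max_o e^{\phi_o}\epsilon_o(i)$ is a pointwise maximum of log-affine functions and is therefore log-convex in $\phi$; log-convexity is preserved under expectation, so $\psi$ is convex. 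Consequently $D\Theta \succeq 0$ and every eigenvalue of $\Theta$ is non-negative.

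For claim (3), I must argue that $\mathrm{Hess}_\phi\psi$ has rank exactly $O-1$, so that the zero eigenvalue is simple. Any additional null vector $\mathbf{v}\notin\mathrm{span}(\mathbf{1})$ would force $\psi$ to be affine along $\mathbf{v}$, which in turn would require $\ln\mathbb{E}[\max_o e^{\phi_o+c_o(i)}]$ to be linear in the corresponding wage perturbations. With finite dispersion $\theta$ and a non-degenerate correlation function $F$, the choice probabilities $\pi_o(\phi)$ are strictly interior and respond non-trivially to every non-proportional shift in $\phi$, ruling out this affineness. The main obstacle is making this strictness argument rigorous: I would express $\mathbf{v}^\top \mathrm{Hess}_\phi\psi\,\mathbf{v}$ as a conditional-variance-type quadratic form in the Fréchet choice distribution, and then identify the minimal regularity conditions on $F$ (beyond the homogeneity and sign-switching properties collected in Appendix~\ref{app_sss:correlation_properties}) under which this quadratic form vanishes only on $\mathrm{span}(\mathbf{1})$, so that no degenerate correlation structure can produce additional zero eigenvalues.
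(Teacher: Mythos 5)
Your route is genuinely different from the paper's and, for claims (1) and (2), it is correct and in one respect stronger. The paper proves non-negativity via Gershgorin discs applied to $\Theta$ directly, but that argument needs the eigenvalues to be real, which the paper merely asserts ``due to economic interpretation.'' Your symmetrization closes that hole: since $(D\Theta)_{oo'} = \theta\left[x_o x_{o'} F_{oo'}/F - \pi_o\pi_{o'}\right] + \theta\pi_o\mathbf{1}\{o=o'\}$ is symmetric and $\Theta$ is similar to $D^{-1/2}(D\Theta)D^{-1/2}$, realness of the spectrum is automatic. Your Hessian identification $D\Theta = \mathrm{Hess}_\phi\,\psi$ and the homogeneity argument for $\Theta\mathbf{1}=\mathbf{0}$ are both correct (the latter is equivalent to the paper's zero-row-sum argument in Appendix~\ref{app_ss:zero_row_sum}). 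One small caveat on the convexity step: the representation $F^{1/\theta}\propto\mathbb{E}[\max_o e^{\phi_o}\epsilon_o(i)]$ requires $\theta>1$ for the Fréchet mean to exist; this is harmless (the estimates satisfy it, and a fractional moment $\mathbb{E}[(\max_o e^{\phi_o}\epsilon_o)^r]^{1/r}$ with $r<\theta$ runs the same max-of-affine/log-convexity argument for any $\theta>0$), but it should be flagged.

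The genuine gap is claim (3), which you explicitly leave open while searching for ``minimal regularity conditions'' beyond those in Appendix~\ref{app_sss:correlation_properties}. No extra conditions are needed, and the fix lives inside your own framework. The off-diagonal entries of your symmetrized matrix satisfy
\begin{equation*}
(D\Theta)_{oo'} \;=\; \theta\!\left[\frac{x_o x_{o'} F_{oo'}}{F} - \pi_o\pi_{o'}\right] \;\le\; -\,\theta\,\pi_o\pi_{o'} \;<\; 0,
\end{equation*}
because sign-switching gives $F_{oo'}\le 0$ while the Fréchet structure guarantees strictly interior shares $\pi_o>0$; note the strict negativity is delivered by the share term alone, so the ``degenerate'' case you worry about—independence, $F_{oo'}=0$—is in fact the worst case and still works. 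Since $D\Theta$ is symmetric with zero row (hence column) sums, it is a weighted Laplacian of the complete graph, and
\begin{equation*}
\mathbf{v}^{\top}(D\Theta)\mathbf{v} \;=\; -\tfrac{1}{2}\sum_{o\neq o'} (D\Theta)_{oo'}\,(v_o-v_{o'})^2 \;\ge\; \tfrac{\theta}{2}\sum_{o\neq o'} \pi_o\pi_{o'}\,(v_o-v_{o'})^2,
\end{equation*}
which vanishes if and only if $\mathbf{v}\propto\mathbf{1}$. This one identity settles claims (1) and (3) simultaneously and makes the stochastic-representation convexity argument optional. The paper instead proves simplicity by a maximum-principle argument: normalize a putative non-uniform null vector so its maximum is $1$ at $o^*$, and use the strictly negative off-diagonals $\Theta_{o^*o'}<0$ to derive $(\Theta\boldsymbol{x})_{o^*}>0$, a contradiction—again relying only on $-\theta\pi_{o'}$ for strictness. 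Both closures are elementary; your variance-based program was aiming at a harder target than the structure requires.
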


\begin{proof}
The zero eigenvalue follows from the row sum property $\sum_{o'}\Theta_{oo'} = 0$. Non-negativity follows from gross substitutes. See Appendix \ref{app_ss:eigenvalue_proof}.
\end{proof}

The zero eigenvalue $\lambda_1 = 0$ reflects that uniform wage changes ($\boldsymbol{u}_1 \propto \mathbf{1}$) induce no labor reallocation since only relative wages matter for occupational choice. Positive eigenvalues $\lambda_n > 0$ measure labor supply elasticities for different directions of relative wage changes. Large eigenvalues indicate shock directions enabling extensive reallocation—workers have many unaffected alternatives. Small eigenvalues indicate limited mobility options—affected occupations and their natural alternatives face similar shocks.

\begin{proposition}[Spectral Decomposition of Incidence]\label{prop:spectral}
Any technological shock decomposes uniquely into eigenshocks:
\begin{equation*}
    \frac{d\ln\boldsymbol{\alpha}}{\sigma} = \sum_{n=1}^O b_n \boldsymbol{u}_n
\end{equation*}
where weights $b_n$ can be recovered as the coefficients in a linear projection of the shocks onto basis $\boldsymbol{b}=\left(U^{\prime} U\right)^{-1} U^{\prime} \cdot (d\ln\boldsymbol{\alpha}/\sigma)$. The wage response is:
\begin{equation*}
    d\ln\boldsymbol{w} = \frac{d\ln y}{\sigma}\mathbf{1} + \sum_{n=1}^O \underbrace{\frac{\sigma}{\sigma + \lambda_n}}_{\text{pass-through}} b_n\boldsymbol{u}_n
\end{equation*}
\end{proposition}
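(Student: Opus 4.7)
The plan is to start from the wage incidence equation \eqref{eq:wage_incidence}, namely $d\ln\boldsymbol{w} = (d\ln y/\sigma)\mathbf{1} + \Delta \cdot (d\ln\boldsymbol{\alpha}/\sigma)$ with $\Delta = (\mathbf{I}+\Theta/\sigma)^{-1}$, and to diagonalize everything through the eigendecomposition $\Theta = U\Lambda U^{-1}$ guaranteed by Lemma \ref{lemma:eigenvalues} together with the footnote's premise that $\Theta$ admits $O$ linearly independent eigenvectors. The uniform output term $(d\ln y/\sigma)\mathbf{1}$ is already aligned with $\boldsymbol{u}_1 \propto \mathbf{1}$ (the eigenvector at $\lambda_1 = 0$) and carries through unchanged, so the work reduces to rewriting the $\Delta$--image of the shock in the eigenbasis.

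First I would establish the decomposition statement. Because the columns of $U$ form a basis of $\mathbb{R}^O$, there exists a unique coefficient vector $\boldsymbol{b}$ with $d\ln\boldsymbol{\alpha}/\sigma = U\boldsymbol{b}$. For a square invertible $U$, the left-inverse formula collapses via $(U'U)^{-1}U' = U^{-1}(U')^{-1}U' = U^{-1}$, which recovers $\boldsymbol{b} = (U'U)^{-1}U'(d\ln\boldsymbol{\alpha}/\sigma)$ and gives the projection representation in the proposition. Second I would apply $\Delta$ to this expansion. Writing $\mathbf{I} + \Theta/\sigma = U(\mathbf{I} + \Lambda/\sigma)U^{-1}$ and noting that $\mathbf{I}+\Lambda/\sigma$ is diagonal with strictly positive entries $1 + \lambda_n/\sigma$ (by $\lambda_n \geq 0$ from Lemma \ref{lemma:eigenvalues} and $\sigma > 0$), invertibility is immediate and $\Delta = U(\mathbf{I}+\Lambda/\sigma)^{-1}U^{-1}$. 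Multiplying by $U\boldsymbol{b}$ cancels $U^{-1}U = \mathbf{I}$ and delivers $\Delta \cdot (d\ln\boldsymbol{\alpha}/\sigma) = \sum_n (1+\lambda_n/\sigma)^{-1} b_n \boldsymbol{u}_n = \sum_n [\sigma/(\sigma+\lambda_n)]\, b_n \boldsymbol{u}_n$, which is the claimed formula after adding back the aggregate term.

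The main obstacle, and really the only non-mechanical point, is justifying the diagonalizability of the non-symmetric elasticity matrix $\Theta$. Symmetry is not available because equation \eqref{eq:elasticity_matrix} makes $\Theta_{oo'} \neq \Theta_{o'o}$ in general, so a direct appeal to the spectral theorem fails. The resolution is the footnote's empirical statement that all $O$ eigenvalues are distinct, which is a standard sufficient condition for diagonalizability; combined with non-negativity and the explicit zero eigenvalue from Lemma \ref{lemma:eigenvalues}, this ensures $U$ is invertible, the eigenvalues $\sigma + \lambda_n$ appearing in the denominator are strictly positive, and the coefficient vector $\boldsymbol{b}$ is uniquely pinned down. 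Once this structural input is granted, the remainder of the argument is a transparent manipulation of the diagonal factorization.
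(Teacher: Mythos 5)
Your proposal is correct and follows essentially the same route as the paper's proof in Appendix~\ref{app_ss:spectral_proof}: expand the shock in the eigenbasis, factor $\Delta = U(\mathbf{I}+\Lambda/\sigma)^{-1}U^{-1}$, and read off the pass-through $\sigma/(\sigma+\lambda_n)$ term by term, with your identity $(U'U)^{-1}U' = U^{-1}$ playing the same role as the paper's biorthogonality relation $\boldsymbol{v}_n'\boldsymbol{u}_m = \delta_{nm}$. Your explicit justification of diagonalizability via distinct eigenvalues and of the positivity of $\sigma+\lambda_n$ simply makes precise what the paper relegates to its footnote and Lemma~\ref{lemma:eigenvalues}.
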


\begin{proof}
Apply eigendecomposition to $\Delta = (\mathbf{I} + \Theta/\sigma)^{-1} = U(\mathbf{I} + \Lambda/\sigma)^{-1}U^{-1}$. See Appendix \ref{app_ss:spectral_proof}.
\end{proof}

The pass-through factor $\sigma/(\sigma + \lambda_n)$ generalizes the classic one-dimensional incidence formula to a multi-dimensional occupational setting. Our spectral decomposition reveals that each shock direction has its own effective elasticity $\lambda_n$, generating heterogeneous incidence across different shock distributions. When technological shocks align with low-elasticity dimensions (small $\lambda_n$), workers cannot escape through reallocation, generating near-complete pass-through to wages. When shocks align with high-elasticity dimensions (large $\lambda_n$), extensive worker mobility dissipates the impact through employment adjustment. This decomposition shows why shock distribution matters: technological changes loading heavily on low-elasticity eigenvectors—those affecting clusters of skill-similar occupations—create maximal wage effects with minimal offsetting mobility.

\subsubsection{Illustration: Clustered versus Dispersed Shocks}

To illustrate these results, consider four occupations organized in two skill clusters: cognitive ($c_1, c_2$) and manual ($m_1, m_2$). Workers' productivity follows a nested structure with within-cluster correlation $\rho \in [0,1)$:
\begin{equation*}
    \Pr[\boldsymbol{\epsilon}(i) \leq \boldsymbol{\epsilon}] = \exp\left[-\left(\epsilon_{c_1}^{\frac{-\theta}{1-\rho}} + \epsilon_{c_2}^{\frac{-\theta}{1-\rho}}\right)^{1-\rho} - \left(\epsilon_{m_1}^{\frac{-\theta}{1-\rho}} + \epsilon_{m_2}^{\frac{-\theta}{1-\rho}}\right)^{1-\rho}\right]
\end{equation*}

This structure generates high substitutability within clusters but limited substitution across them. With equal initial employment shares, the eigendecomposition yields:
\begin{equation*}
    \boldsymbol{\lambda} = \begin{pmatrix}
        0 \\
        \theta \\
        \theta/(1-\rho) \\
        \theta/(1-\rho)
    \end{pmatrix}, \quad
    U = \frac{1}{2}\begin{pmatrix}
        1 & 1 & 1 & 1 \\
        1 & 1 & -1 & -1 \\
        1 & -1 & 1 & -1 \\
        1 & -1 & -1 & 1
    \end{pmatrix}
\end{equation*}

Three distinct shock patterns emerge:
\begin{itemize}
    \item $\boldsymbol{u}_1 = (1,1,1,1)'$: Uniform shocks ($\lambda_1 = 0$) with complete pass-through
    \item $\boldsymbol{u}_2 = (1,1,-1,-1)'$: Cross-cluster shocks ($\lambda_2 = \theta$) affecting cognitive and manual occupations oppositely
    \item $\boldsymbol{u}_3, \boldsymbol{u}_4$: Within-cluster shocks ($\lambda = \theta/(1-\rho)$) with differential effects within each cluster
\end{itemize}

The cross-cluster shock $\boldsymbol{u}_2$ has the smallest positive eigenvalue, yielding pass-through $\sigma/(\sigma + \theta)$. When $\theta$ is small (limited overall mobility) or $\sigma$ is large (flexible demand), this approaches complete pass-through. Crucially, workers displaced from cognitive occupations find their natural alternatives—other cognitive occupations—similarly affected, constraining mobility and amplifying wage disparity.

Within-cluster shocks achieve better adjustment. With eigenvalue $\theta/(1-\rho)$, pass-through becomes $\sigma(1-\rho)/[\sigma(1-\rho) + \theta]$. Higher within-cluster correlation $\rho$ increases the eigenvalue, enabling more reallocation because workers can transition to unaffected occupations in the same cluster. When one cognitive occupation faces a negative shock while another remains stable, high correlation within the cognitive cluster facilitates movement between them.

This example crystallizes why technological clustering matters. When automation or AI concentrates in skill-adjacent occupations, aligning with low-eigenvalue eigenvectors, it generates maximal wage adjustment with minimal offsetting mobility. The next section formalizes this intuition through a distance-dependent substitution structure in high-dimensional occupational space.

\subsection{Distance-Dependent Elasticity of Substitution} \label{ss:dides}

The spectral analysis revealed why technological shocks clustered in skill space generate severe wage inequality. We now move from the illustrative 2×2 example to the full complexity of real labor markets with hundreds of occupations and multiple skill dimensions. The key challenge is maintaining tractability while capturing realistic substitution patterns. We achieve this through a DIDES framework with a cross-nested constant elasticity of substitution (CNCES) functional form \citep{Lind2023-rl} that embeds distance-dependent substitution via a low-dimensional latent skill structure. 

\subsubsection{Latent Skill Formulation}

\paragraph{Microfoundation: Skills and Occupational Productivity}

Workers possess a vector of latent skills $s \in \mathcal{S}$. For each skill, they draw productivity across occupations from a correlated Fréchet distribution:
\begin{equation*}
    \Pr[\epsilon_1^s(i) \leq \epsilon_1^s, \ldots, \epsilon_O^s(i) \leq \epsilon_O^s] = \exp\left[-\left(\sum_{o=1}^O (\epsilon_o^s)^{\frac{-\theta}{1-\rho_s}}\right)^{1-\rho_s}\right]
\end{equation*}
where skill-specific correlation coefficient $\rho_s \in [0,1)$ governs skill transferability. This parameter captures a fundamental aspect of human capital: some skills transfer seamlessly across occupations while others are context-specific. General cognitive abilities (problem-solving, analytical thinking) typically exhibit high transferability (large $\rho_s$), while occupation-specific manual techniques (operating particular machinery, specialized surgical procedures) show low transferability (small $\rho_s$).

Occupations differ in their skill utilization. Let $A_o^s$ denote occupation $o$'s productivity when employing skill $s$. Workers optimally deploy their skills, achieving productivity:
\begin{equation*}
    \epsilon_o(i) = \max_{s \in \mathcal{S}} A_o^s \cdot \epsilon_o^s(i)
\end{equation*}

This max operator captures how workers sort into occupations based on comparative advantage. Different occupations require different skill combinations: data analysis demands strong cognitive skills, construction requires manual dexterity, and sales positions need interpersonal abilities. The parameters $\{A_o^s\}$ encode these occupation-specific skill productivity. Workers with exceptional manual dexterity but modest cognitive skills achieve the highest productivity in manual-intensive occupations where $A_o^{\text{manual}}$ is large. Conversely, cognitively gifted workers maximize productivity in occupations with high $A_o^{\text{cognitive}}$. This generates endogenous sorting: workers self-select into occupations that best utilize their skill endowments, with the occupation-skill match determining productivity.

\paragraph{DIDES Structure}

The microfoundation yields a tractable aggregate structure:

\begin{proposition}[DIDES through Cross-Nested CES]\label{prop:cnces}
The joint productivity distribution across occupations follows:
\begin{equation*}
    \Pr[\epsilon_1(i) \leq \epsilon_1, \ldots, \epsilon_O(i) \leq \epsilon_O] = \exp[-F(A_1\epsilon_1^{-\theta}, \ldots, A_O\epsilon_O^{-\theta})]
\end{equation*}
with correlation function:
\begin{equation}
    F(x_1, \ldots, x_O) = \sum_{s \in \mathcal{S}} \left[\sum_{o=1}^O (\omega_o^s x_o)^{\frac{1}{1-\rho_s}}\right]^{1-\rho_s} \label{eq:corr_cnces}
\end{equation}
where $A_o = \sum_s (A_o^s)^{\theta}$ is occupation $o$'s overall labor productivity and $\omega_o^s = (A_o^s)^{\theta}/A_o$ represents occupation $o$'s skill intensity in dimension $s$.
\end{proposition}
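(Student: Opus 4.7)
The statement is essentially a closure/aggregation result: the max over independent skill-specific Fréchet systems, each with its own nested correlation, again has a Fréchet joint distribution, now with a cross-nested CES correlation function $F$. My strategy is to compute the joint CDF of $\{\epsilon_o(i)\}_o$ directly from its definition $\epsilon_o(i) = \max_{s} A_o^s \epsilon_o^s(i)$, substitute the within-skill distribution, and then reparameterize in the $(A_o,\omega_o^s)$ variables.

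\textbf{Step 1: reduce the max to a joint event per skill.} For any threshold vector $(\epsilon_1,\dots,\epsilon_O)$,
\[
\{\epsilon_o(i)\le \epsilon_o\}=\bigcap_{s\in\mathcal{S}}\{A_o^s \epsilon_o^s(i)\le \epsilon_o\}=\bigcap_{s\in\mathcal{S}}\{\epsilon_o^s(i)\le \epsilon_o/A_o^s\}.
\]
I would state explicitly (as is standard for cross-nested/Roy structures, and implicit in the microfoundation above) that the skill-specific productivity vectors $\{\epsilon^s(i)\}_{s\in\mathcal{S}}$ are independent across $s$. Under this assumption the joint event factorizes across skills:
\[
\Pr[\epsilon_1(i)\le \epsilon_1,\dots,\epsilon_O(i)\le \epsilon_O]=\prod_{s\in\mathcal{S}}\Pr\bigl[\epsilon_o^s(i)\le \epsilon_o/A_o^s,\ \forall o\bigr].
\]

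\textbf{Step 2: plug in the within-skill nested Fréchet CDF.} Using the distribution posited for each skill,
\[
\Pr\bigl[\epsilon_o^s(i)\le \epsilon_o/A_o^s,\ \forall o\bigr]=\exp\!\left[-\left(\sum_{o=1}^O (\epsilon_o/A_o^s)^{-\theta/(1-\rho_s)}\right)^{1-\rho_s}\right]=\exp\!\left[-\left(\sum_{o=1}^O\bigl((A_o^s)^\theta\epsilon_o^{-\theta}\bigr)^{1/(1-\rho_s)}\right)^{1-\rho_s}\right].
\]
Multiplying across $s$ and exponentiating the sum gives
\[
\Pr[\epsilon_1(i)\le \epsilon_1,\dots,\epsilon_O(i)\le \epsilon_O]=\exp\!\left[-\sum_{s\in\mathcal{S}}\left(\sum_{o=1}^O\bigl((A_o^s)^\theta\epsilon_o^{-\theta}\bigr)^{1/(1-\rho_s)}\right)^{1-\rho_s}\right].
\]

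\textbf{Step 3: reparameterize to match $F$.} Define $A_o=\sum_{s}(A_o^s)^\theta$ and $\omega_o^s=(A_o^s)^\theta/A_o$, so that $(A_o^s)^\theta\epsilon_o^{-\theta}=\omega_o^s\cdot A_o\epsilon_o^{-\theta}=\omega_o^s x_o$ with $x_o:=A_o\epsilon_o^{-\theta}$. Substituting yields exactly
\[
\exp\!\left[-\sum_{s\in\mathcal{S}}\left(\sum_{o=1}^O (\omega_o^s x_o)^{1/(1-\rho_s)}\right)^{1-\rho_s}\right]=\exp[-F(x_1,\dots,x_O)],
\]
which matches \eqref{eq:corr_cnces}. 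As a short sanity check I would verify that the marginal collapses correctly: sending $\epsilon_{o'}\to\infty$ for all $o'\neq o$ kills every $x_{o'}$, and since $\sum_s\omega_o^s=1$ by construction, the exponent reduces to $-x_o=-A_o\epsilon_o^{-\theta}$, confirming that $A_o$ is indeed occupation $o$'s overall Fréchet scale.

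\textbf{Main obstacle.} The algebra is essentially mechanical; the only substantive modeling step is the independence of the skill-specific productivity systems across $s$, which is what makes the product factorization in Step 1 legitimate and ultimately produces the additive-over-$s$ structure of $F$. I would flag this assumption explicitly in the proof (and note its consistency with the Lind–Ramondo CNCES construction cited in the text), since without it the max operator does not yield a closed-form copula and the clean CNCES representation would fail.
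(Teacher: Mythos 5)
Your proposal is correct and follows essentially the same route as the paper's proof in Appendix \ref{app_ss:cnces_proof}: write the event $\{\epsilon_o(i)\le\epsilon_o\}$ as an intersection over skills via the max operator, factorize across $s$ using independence of the skill-specific draws, substitute the within-skill nested Fréchet CDF, and reparameterize with $A_o=\sum_s(A_o^s)^\theta$ and $\omega_o^s=(A_o^s)^\theta/A_o$. Your explicit flagging of the cross-skill independence assumption and the marginal sanity check (using $\sum_s\omega_o^s=1$) are sensible additions, but they do not change the argument.
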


\begin{proof}
See Appendix \ref{app_ss:cnces_proof}.
\end{proof}

The skill intensities $\{\omega_o^s\}$ map occupations into skill space. Each $\omega_o^s$ measures how intensively occupation $o$ relies on skill $s$: data analysts and financial analysts both exhibit high cognitive intensity, locating them near each other in this space, while construction workers have high manual intensity, placing them in a distant region. This skill space geography determines substitution patterns through two mechanisms:

\begin{itemize}
    \item \textbf{Proximity effect:} Occupations with similar skill intensities are strong substitutes.
    \item \textbf{Transferability effect:} High $\rho_s$ amplifies substitution between occupations sharing skill $s$.
\end{itemize}

The key feature of Proposition \ref{prop:cnces} is that it achieves remarkable dimensionality reduction. The full substitution matrix requires $O^2$ parameters—with 300 occupations, this means 90,000 bilateral elasticities. Our framework collapses this to $S + 1$ structural parameters ($S$ skill-specific correlation parameters $\{\rho_s\}$ and one cross-skill dispersion parameter $\theta$) plus $300 \times S$ skill intensities $\{\omega_o^s\}$. Crucially, as I show below, the skill intensities can be measured directly from occupational data, leaving only the structural parameters to be estimated. For three skills (cognitive, manual, interpersonal), we estimate just four parameters while capturing rich substitution patterns across hundreds of occupations.

To be clear, "skills" in this framework are not primitive worker characteristics but rather a parsimonious device for parameterizing how productivity correlates across occupations based on skill similarity.

\subsubsection{Employment and Substitution Structure}

The DIDES framework generates explicit expressions for employment shares and substitution elasticities, revealing how distance in skill space governs labor market outcomes.

\begin{proposition}[Employment and Elasticities]\label{prop:cnces_elasticity}
Under DIDES, occupational employment shares decomposed as:
\begin{equation}
    \pi_o = \sum_{s \in \mathcal{S}} \pi_o^s = \sum_{s \in \mathcal{S}} \underbrace{\pi_o^{s,W}}_{\text{within-skill share}} \cdot \underbrace{\pi^{s}}_{\text{between-skill share}}
    \label{eq:cnces_shares}
\end{equation}
where:
\begin{align*}
    \pi_o^{s,W} &= \frac{(\omega_o^s A_o w_o^{\theta})^{\frac{1}{1-\rho_s}}}{\sum_{o'} (\omega_{o'}^s A_{o'} w_{o'}^{\theta})^{\frac{1}{1-\rho_s}}} \quad \text{(occupation $o$'s share among skill-$s$ users)} \\
    \pi^{s} &= \frac{\left[\sum_{o'} (\omega_{o'}^s A_{o'} w_{o'}^{\theta})^{\frac{1}{1-\rho_s}}\right]^{1-\rho_s}}{\sum_{s'} \left[\sum_{o'} (\omega_{o'}^{s'} A_{o'} w_{o'}^{\theta})^{\frac{1}{1-\rho_{s'}}}\right]^{1-\rho_{s'}}} \quad \text{(skill $s$'s share of workforce)}
\end{align*}

The correlated substitution component in \eqref{eq:elasticity_matrix} is:
\begin{equation}
    \theta\frac{x_{o'}F_{oo'}}{F_o}\bigg|_{x_j = A_jw_j^{\theta}} = -\theta \sum_{s \in \mathcal{S}} \frac{\rho_s}{1-\rho_s} \cdot \pi_o^{s,W} \pi_{o'}^{s,W} \cdot \frac{\pi^{s}}{\pi_o}
    \label{eq:cnces_elasticity}
\end{equation}
\end{proposition}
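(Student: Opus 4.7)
The strategy is to specialize the general employment-share and elasticity formulas (from Propositions 1 and 2) to the cross-nested CES correlation function in \eqref{eq:corr_cnces}, and then recognize the within-skill and between-skill logit factors that fall out of the derivatives of $F$. It helps to introduce the shorthand
\[
G_s(\mathbf{x}) \;:=\; \sum_{o=1}^O (\omega_o^s x_o)^{\tfrac{1}{1-\rho_s}}, \qquad \text{so that}\qquad F(\mathbf{x}) \;=\; \sum_{s\in\mathcal{S}} G_s(\mathbf{x})^{1-\rho_s}.
\]
A one-line calculation gives $\partial G_s/\partial x_o = \tfrac{1}{1-\rho_s}\,\omega_o^s (\omega_o^s x_o)^{\rho_s/(1-\rho_s)}$, and chaining through the outer power yields $x_o F_o = \sum_{s} G_s^{-\rho_s}(\omega_o^s x_o)^{1/(1-\rho_s)}$.

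For the share decomposition, I would substitute $x_j = A_j w_j^\theta$ into $\pi_o = x_o F_o/F$ from Proposition 1 and factor each summand as $\pi_o^{s,W}\cdot \pi^{s}$, where $\pi_o^{s,W} = (\omega_o^s x_o)^{1/(1-\rho_s)}/G_s$ is the within-skill logit share and $\pi^{s} = G_s^{1-\rho_s}/F$ is the between-skill logit share. This reproduces exactly the two-level nested-logit structure in \eqref{eq:cnces_shares}; the cross-nested feature is that every occupation appears in every nest, weighted by $\omega_o^s$.

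For the correlated substitution term, I would differentiate $F_o$ in $x_{o'}$ with $o'\neq o$. Since $(\omega_o^s x_o)^{1/(1-\rho_s)}$ does not depend on $x_{o'}$, only the $G_s^{-\rho_s}$ factor contributes, giving $\partial G_s^{-\rho_s}/\partial x_{o'} = -\tfrac{\rho_s}{1-\rho_s}\,G_s^{-\rho_s-1}(\omega_{o'}^s x_{o'})^{1/(1-\rho_s)}/x_{o'}$. Multiplying by $x_{o'}$ and dividing by $F_o$,
\[
\frac{x_{o'}F_{oo'}}{F_o} \;=\; -\,\frac{1}{x_o F_o}\sum_{s}\frac{\rho_s}{1-\rho_s}\,G_s^{-\rho_s-1}\,(\omega_o^s x_o)^{1/(1-\rho_s)}(\omega_{o'}^s x_{o'})^{1/(1-\rho_s)}.
\]
Grouping $G_s^{-\rho_s-1} = (G_s^{1-\rho_s}/F)\cdot F/G_s^{2}$ and dividing top and bottom by $F$ turns each $(\omega_o^s x_o)^{1/(1-\rho_s)}/G_s$ into $\pi_o^{s,W}$, the analogous factor for $o'$ into $\pi_{o'}^{s,W}$, the residual $G_s^{1-\rho_s}/F$ into $\pi^{s}$, and the prefactor $1/(x_oF_o/F) = 1/\pi_o$. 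Multiplying through by $\theta$ delivers \eqref{eq:cnces_elasticity}.

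The main obstacle is not conceptual but bookkeeping: one must manage the exponents $1/(1-\rho_s)$ and $\rho_s/(1-\rho_s)$ consistently across both levels of the nest and correctly identify the inverse-share factor $1/\pi_o$ that arises from normalizing by $F_o$ rather than by $F$. Once the $G_s$ shorthand is in place, both parts reduce to standard cross-nested logit algebra and require no additional ingredients beyond Proposition 1.
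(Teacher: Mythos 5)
Your proposal is correct and follows essentially the same route as the paper's proof in Appendix~\ref{app_ss:cnces_elasticity_proof}: both introduce the nest aggregator $G_s$, compute $F_o$ and $F_{oo'}$ directly, and identify the within-skill logit share $\pi_o^{s,W}$, the between-skill share $\pi^{s}$, and the inverse-share factor $1/\pi_o = F/(x_oF_o)$. The only difference is cosmetic bookkeeping—the paper factors $x_{o'}F_{oo'}/F_o$ as $\pi_{o'}^{s,W}\cdot(\pi_o^s/\pi_o)$ while you rescale by $F$ via $G_s^{-\rho_s-1} = (G_s^{1-\rho_s}/F)\cdot F/G_s^{2}$—and the resulting algebra is identical.
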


\begin{proof}
See Appendix \ref{app_ss:cnces_elasticity_proof}.
\end{proof}

The employment decomposition in equation \eqref{eq:cnces_shares} shows that occupational employment share $\pi_o$ aggregates skill-specific contributions $\pi_o^s$, each equaling the product of within-skill share $\pi_o^{s,W}$ (occupation $o$'s share among skill-$s$ users) and between-skill share $\pi^{s}$ (skill $s$'s workforce share). 

The elasticity formula \eqref{eq:cnces_elasticity} reveals how skill distance determines substitutability. The product $\pi_o^{s,W} \pi_{o'}^{s,W}$ measures skill overlap between occupations, while $\rho_s/(1-\rho_s)$ scales this overlap by transferability. High $\rho_s$ amplifies substitution even with modest overlap, while low $\rho_s$ limits substitution despite substantial overlap. Two data analysts at different firms (high overlap, high transferability) are strong substitutes; a data analyst and welder (low overlap, low transferability) are not.

This structure explains why technological clustering in skill-adjacent occupations limits employment adjustment. When automation concentrates in manual-intensive occupations, displaced workers face a mobility trap. Their high within-skill shares ($\pi_o^{\text{manual},W}$ large) indicate concentration in manual occupations. Clustering ensures their natural alternatives (other manual occupations) face similar negative shocks, forcing small employment absorption with large wage adjustments.

The framework nests standard models as special cases. When $\rho_s = 0$ for all skills (no correlation), the model reduces to CES with uniform elasticity $\theta$. Our framework generalizes nested CES models where each occupation belongs exclusively to one nest. Traditional nested CES requires pre-specifying exclusive occupation groups (manufacturing versus services, routine versus non-routine). In contrast, DIDES allows occupations to draw from multiple skills with varying intensities $\{\omega_o^s\}$, measured directly from occupational data. This flexibility proves crucial: data reveal that most occupations blend multiple skills, and these continuous skill intensities (rather than discrete categories) determine substitution patterns.

\subsection{Heterogeneous Workers} \label{ss:hetero_workers}

Our baseline model assumes workers are ex-ante identical, differing only in their idiosyncratic productivity draws. However, individuals may differ systematically across demographic groups, age cohorts, or education levels, possessing different comparative advantages across occupations. We now extend the framework to incorporate such systematic heterogeneity, allowing us to study how technological change affects different segments of the workforce differently. In Section~\ref{ss:distortions}, we examine how automation changed inequality between demographic groups.

Consider demographic groups $g \in G$ (e.g., race $\times$ gender combinations) that differ in their occupational productivity distributions. Each group draws productivity from:
\begin{equation*}
\Pr[\boldsymbol{\epsilon}^g(i) \leq \boldsymbol{\epsilon}] = \exp\left[-F\left(A_1^g \epsilon_1^{-\theta}, \ldots, A_O^g \epsilon_O^{-\theta}\right)\right]
\end{equation*}
where $A_o^g$ represents group $g$'s average productivity in occupation $o$. While comparative advantages $\{A_o^g\}$ vary across groups, the correlation function $F$ and dispersion parameter $\theta$ remain common, preserving the underlying substitution structure.\footnote{Group-specific employment shares are denoted $\pi_o^g$, yielding group-specific elasticity matrices $\Theta^g$.}

The productivity differences $\{A_o^g\}$ can arise from multiple sources—labor market discrimination, differences in skill endowments, or heterogeneous preferences for job amenities. The source of these differences does not affect the substitution patterns: given observed employment distributions, groups with identical employment shares $\{\pi_o^g\}$ exhibit identical substitution elasticities, regardless of whether these shares arise from discrimination, productivity, or preferences. The elasticity matrix $\Theta^g$ depends only on the equilibrium employment distribution\footnote{The group employment shares are sufficient statistic as shown in Appendix~\ref{app_ss:hat_algebra}.}, not on its underlying causes.

This group heterogeneity serves two purposes in our analysis. First, it enables us to study heterogeneous impacts of technological change across demographic groups. A group concentrated in manual occupations experiences automation differently than one concentrated in cognitive occupations, revealing how clustering interacts with initial employment distributions to generate unequal outcomes. Second, this heterogeneity provides identifying variation for estimation: different groups exhibit distinct substitution patterns based on their occupational employment distributions, with their differential reallocation responses to the same wage changes helping to identify elasticity parameters.

\section{Measurement and Estimation}\label{s:empirics}

This section empirically formulates and estimates the DIDES framework through two steps. First, we measure key model inputs: occupational skill intensities ($\omega_o^s$) from O*NET descriptors and technological exposures, denoted $\boldsymbol{z}^{\text{Automation}}$ and $\boldsymbol{z}^{\text{AI}}$, through task-level evaluations of automation and AI feasibility. These measurements reveal that both technologies cluster within skill-adjacent occupations—automation concentrates in manual-intensive jobs while AI concentrates in cognitive-intensive ones. Second, we estimate the structural parameters $\{\theta, \{\rho_s\}_{s\in\mathcal{S}}\}$ by exploiting how occupational employment responded to automation-induced wage changes between 1980 and 2010.

\subsection{Data and Measurement}

The primary data source for measuring both skill intensities and occupational exposure to technologies is O*NET (the Occupational Information Network).\footnote{The O*NET database, maintained by the U.S. Department of Labor, provides comprehensive data on occupational characteristics, worker skills, and job requirements across a wide range of professions: \url{https://www.onetonline.org/}.} O*NET provides two key elements: (i) skill intensities, which define an occupation's location in the skill space of labor supply, and (ii) task descriptions, which allow measurement of exposure to automation and AI.

\paragraph{Occupational Skill Intensities}

The theoretical framework requires measures of skill intensities $\{\omega_o^s\}$ that map occupations into a low-dimensional skill space. To operationalize this concept, we follow \cite{Lise2020-hm} and extract skill intensities directly from O*NET data rather than estimating them (see Appendix \ref{b:appendix:skills} for detailed methodology).

To extract the main skill dimensions and reduce dimensionality, we apply Principal Component Analysis (PCA) to approximately 200 O*NET descriptors covering skills, abilities, knowledge, work activities, and work context. Following \cite{Lise2020-hm}, we reduce these to three interpretable dimensions through exclusion restrictions: (i) mathematics scores load exclusively onto \textit{cognitive} intensity, (ii) mechanical knowledge onto \textit{manual} intensity, and (iii) social perceptiveness onto \textit{interpersonal} intensity.\footnote{The three principal components explain 58\% of total variation, with cognitive skills accounting for 35.6\%, manual skills 15.2\%, and interpersonal skills 6.9\%.} These orthogonal dimensions align with the model's assumption of independent skill-specific productivity distributions. The O*NET skill descriptors have cardinal meaning, measuring skill intensity on quantitative scales that correspond to our theoretical object $\omega_o^s$—the share of occupational productivity attributable to each skill dimension.

To construct the skill intensity parameters $\omega_o^s$ that enter the correlation function $F$, we first rescale principal component loadings to skill indices $r_o^s \in [0,1]$ using linear transformations that preserve relative distances between occupations.\footnote{Linear transformations preserve the distance metric in skill space—a key feature for DIDES. Converting to ranks would impose uniform spacing between adjacent occupations, eliminating meaningful variation in skill proximity.} We then compute the final skill intensities as variance-weighted shares:
\begin{equation*}
\omega_o^s = \frac{r_o^s \times \text{Var}_s}{\sum_{s' \in \mathcal{S}} r_o^{s'} \times \text{Var}_{s'}}
\end{equation*}
where the weight $\text{Var}_s$ is the variance explained by skill $s$ that preserves the empirical salience of each component. This formulation ensures $\sum_s \omega_o^s = 1$ for each occupation, consistent with the theoretical requirement that $\omega_o^s = (A_o^s)^\theta/A_o$ represents relative skill intensity. Table \ref{tab:occupation_examples} provides illustrative examples.

\begin{table}[ht]
\centering
\caption{Skill Intensities and Technological Exposures for Selected Occupations}
\label{tab:occupation_examples}
\resizebox{\linewidth}{!}{
\begin{tabular}{lcccccc}
\toprule
& \multicolumn{3}{c}{Skill Intensities} & & \multicolumn{2}{c}{Technological Exposure} \\
\cmidrule{2-4} \cmidrule{6-7}
Occupation & Cognitive & Manual & Interpersonal & & AI & Automation \\
\midrule
Chief Executives & 0.71 & 0.11 & 0.18 & & 0.28 & 0.03 \\
Electrical Engineers & 0.73 & 0.19 & 0.08 & & 0.71 & 0.19 \\
Economists & 0.79 & 0.07 & 0.14 & & 0.86 & 0.31 \\
Licensed Practical Nurses & 0.52 & 0.26 & 0.22 & & 0.08 & 0.47 \\
Textile Machine Operators & 0.52 & 0.47 & 0.01 & & 0.02 & 0.51 \\
\bottomrule
\end{tabular}
}
\note{\textit{Notes:} Skill intensities ($\omega_o^s$) represent the relative importance of cognitive, manual, and interpersonal skills for each occupation, with values summing to 1.0 across the three dimensions. Technological exposure measures indicate the share of tasks within each occupation that can potentially be performed by AI (generative models) or automation (robots, machines, and rule-based software) without human intervention.}
\end{table}

\paragraph{Occupational Exposure to Technologies}

To estimate structural parameters and assess incidence, we construct measures of occupational exposure to automation and AI ($\boldsymbol{z}^{\text{Automation}}$ and $\boldsymbol{z}^{\text{AI}}$). 

Several related measures exist for occupational exposure to automation \citep{Autor2013-zm, Acemoglu2022-lv, Autor2024-ay}. In contrast, measuring occupational exposure to AI presents unique challenges, as its full labor market impact has yet to materialize. To construct forward-looking measures, we follow \cite{Eloundou2024-cu} and leverage ChatGPT to evaluate task-level automation and AI feasibility.\footnote{This LLM-based approach has been validated by subsequent studies. \cite{Bick2024-eu} and \cite{Tomlinson2025-va} demonstrate high correlations between LLM task evaluations and ex-post real-world generative AI adoption patterns. Most notably, \cite{brynjolfsson2025canaries} find that LLM exposure measures predict actual employment declines: early-career workers (ages 22-25) in the most AI-exposed occupations have experienced a 13\% relative decline in employment since widespread AI adoption.}

Specifically, we query ChatGPT on whether each task in O*NET's database (covering 19,200 tasks across 862 occupations) can be performed without human intervention by: (i) industrial robots, machinery, and software without AI capabilities (representing traditional automation exposure) or (ii) generative AI models like ChatGPT (representing AI exposure). ChatGPT estimates that approximately 6,000 tasks—one-third of the total—can potentially be performed by AI, a magnitude comparable to automation technologies.

\begin{table}[ht]
\centering
\caption{Task-Level Evaluation of Automation and AI Exposure}
\label{tab:task_eval_example}
\begin{tabular}{lcc}
\toprule
Task Description & Automation & AI \\
\midrule
\textbf{Economists, Market and Survey Researchers} & & \\
\quad Explain economic impact of policies to the public & No & Yes \\
\quad Supervise research projects and students' study projects & No & No \\
\quad Teach theories, principles, and methods of economics & No & Yes \\
\addlinespace
\textbf{Textile Sewing Machine Operators} & & \\
\quad Remove holding devices and finished items from machines & Yes & No \\
\quad Cut materials according to specifications, using tools & Yes & No \\
\quad Record quantities of materials processed & Yes & Yes \\
\bottomrule
\end{tabular}
\note{\textit{Notes:} This table presents examples of task-level evaluations using ChatGPT. Automation exposure is assessed by asking: ``Can industrial robots, machines, and computers (no AI capability) perform this task without human intervention?'' AI exposure is determined by querying: ``Can generative AI (e.g., large language models like ChatGPT) potentially perform this task without human intervention?'' Each task receives a binary classification.}
\end{table}

Table \ref{tab:task_eval_example} provides examples of task evaluations for two occupations: economists and sewing machine operators. This classification distinguishes automation-exposed tasks, which involve well-defined, rule-based processes susceptible to mechanization, from AI-exposed tasks, which primarily involve inductive reasoning, complex decision-making, and non-physical cognitive work. The latter pattern aligns with Polanyi's Paradox—many cognitive tasks resist codification into explicit rules, making them more amenable to AI than traditional automation \citep{Autor2015-ff}.

Using these task-level evaluations, we compute the share of tasks within each occupation that is either automatable or AI-exposed according to ChatGPT's evaluation, forming our occupational exposure measures $\boldsymbol{z}^{\text{Automation}}$ and $\boldsymbol{z}^{\text{AI}}$. Table \ref{tab:occupation_examples} reports automation and AI exposure levels for selected occupations. Additional methodological details and validation against existing measures are provided in Appendix \ref{b:appendix:expos}.

\paragraph{Technological Exposure in Skill Space}

We now demonstrate that technological exposure clusters in skill space: occupations with similar skill intensities face similar levels of automation and AI exposure. Consistent with existing research showing that manual-intensive occupations are more susceptible to automation \citep{Autor2003-jz}, our ChatGPT evaluations confirm this relationship. Panel (a) of Figure \ref{f:expo_skill_space} demonstrates that automation exposure increases with manual skill index and decreases with cognitive index. Conversely, Panel (b) reveals that AI exposure follows the opposite pattern: cognitive-intensive occupations face greater vulnerability to AI, as these technologies increasingly perform complex analytical and decision-making tasks.

\begin{figure}[ht]
    \centering
    \subcaptionbox{skill intensities vs. Automation Exposure}{\includegraphics[scale=0.15]{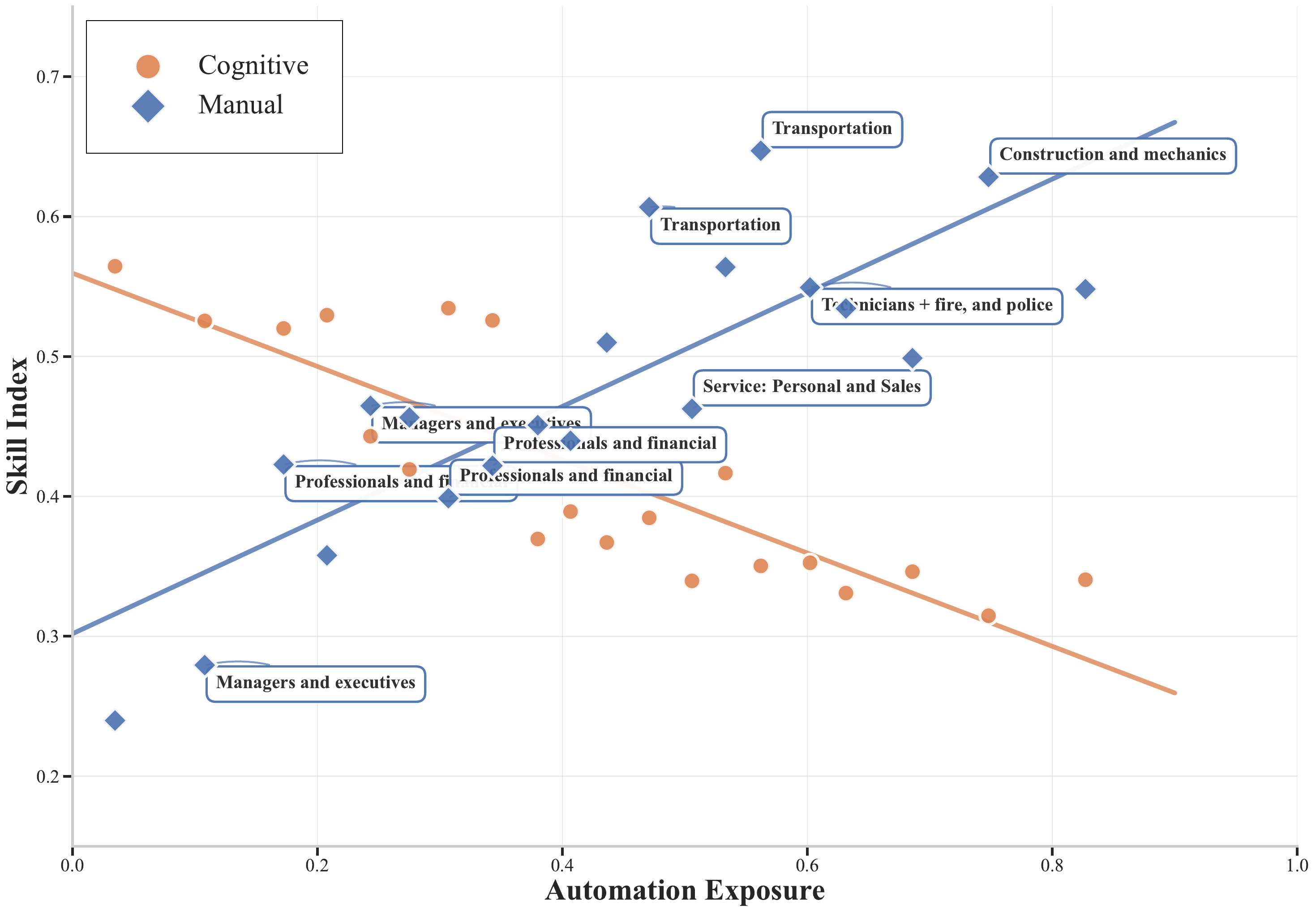}}\hfill
    \subcaptionbox{skill intensities vs. AI Exposure}{\includegraphics[scale=0.15]{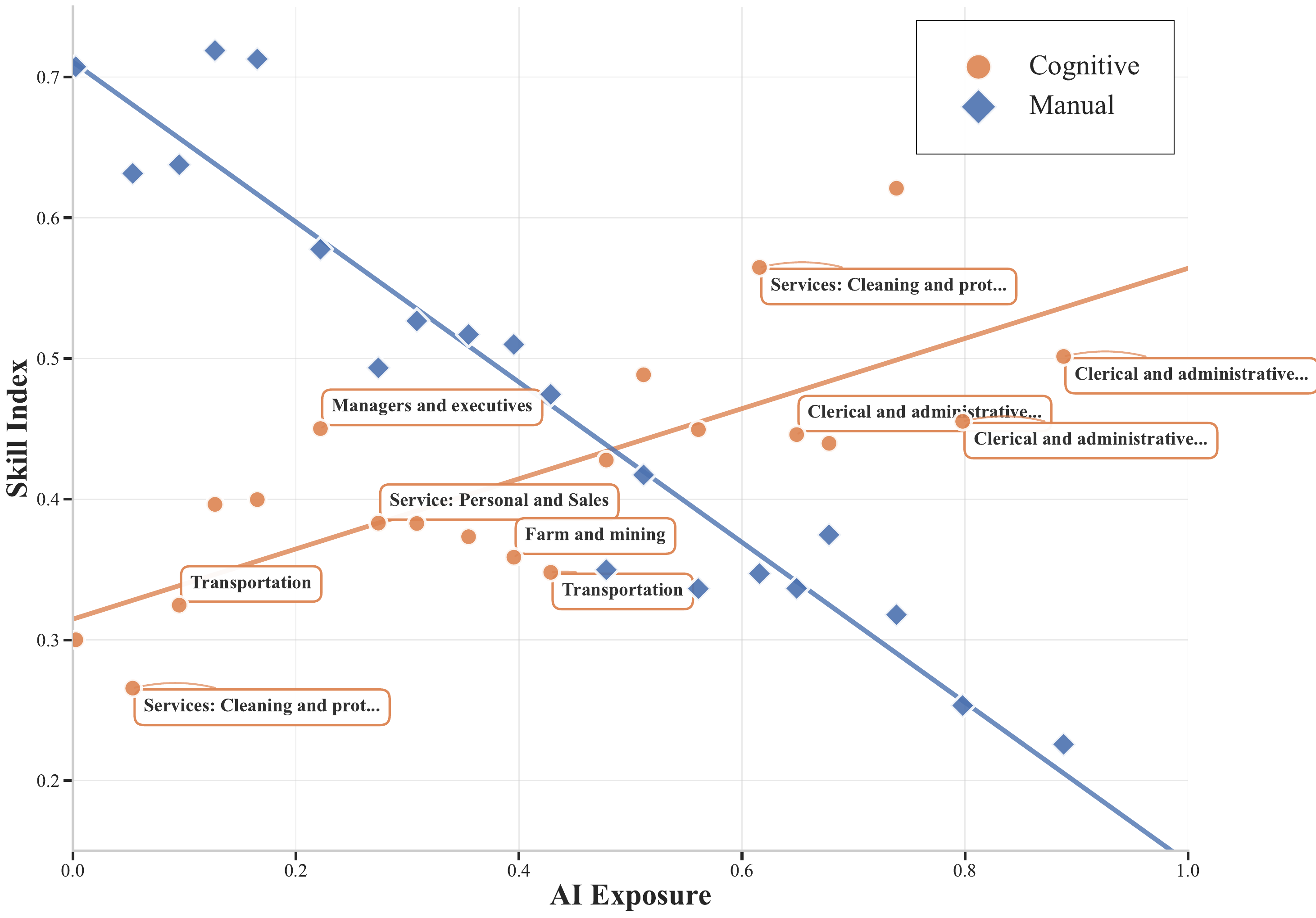}}\hfill
    \subcaptionbox{Automation Exposure in Skill Space}{\includegraphics[scale=0.25]{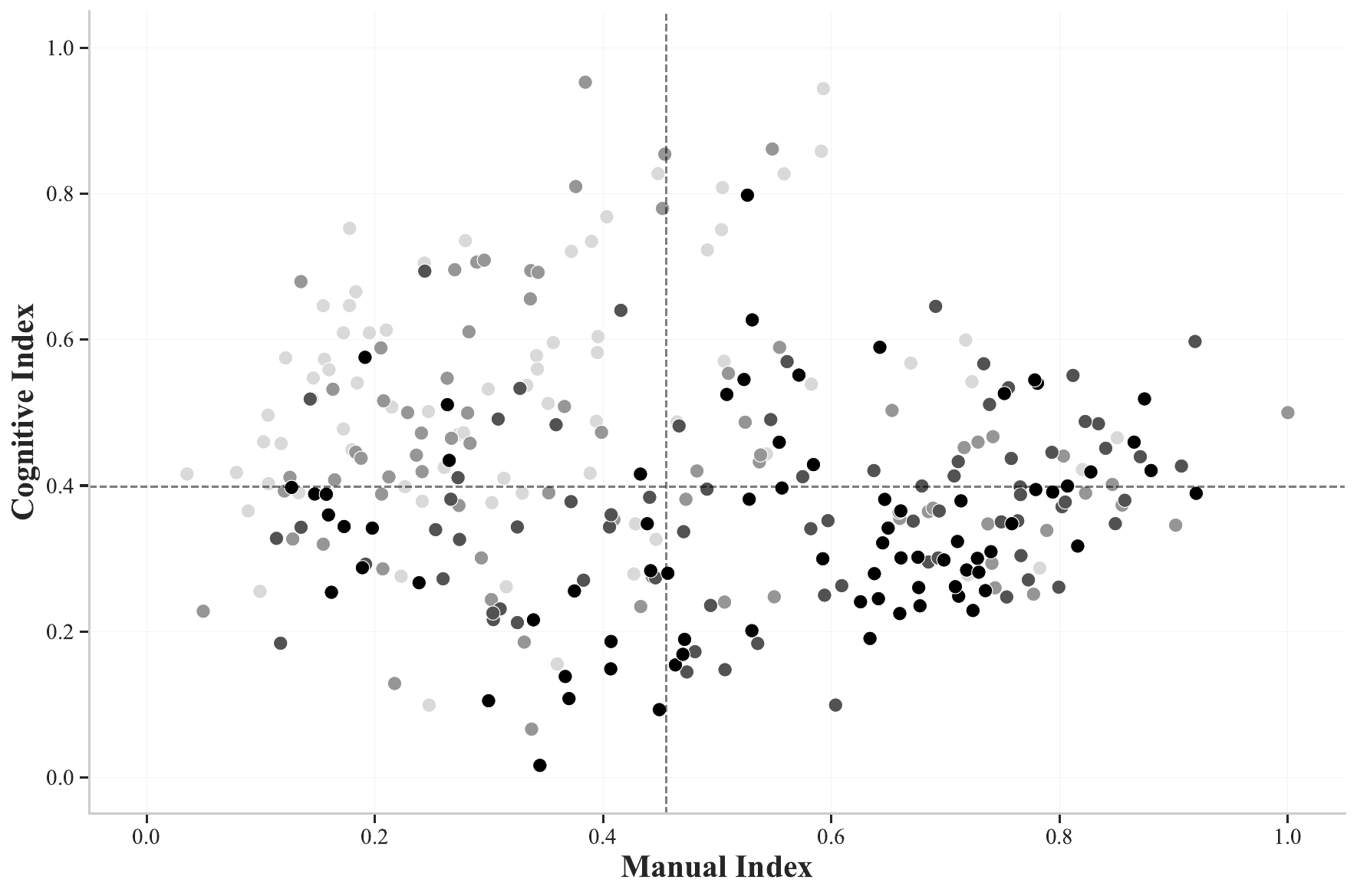}}\hfill
    \subcaptionbox{AI Exposure in Skill Space}{\includegraphics[scale=0.25]{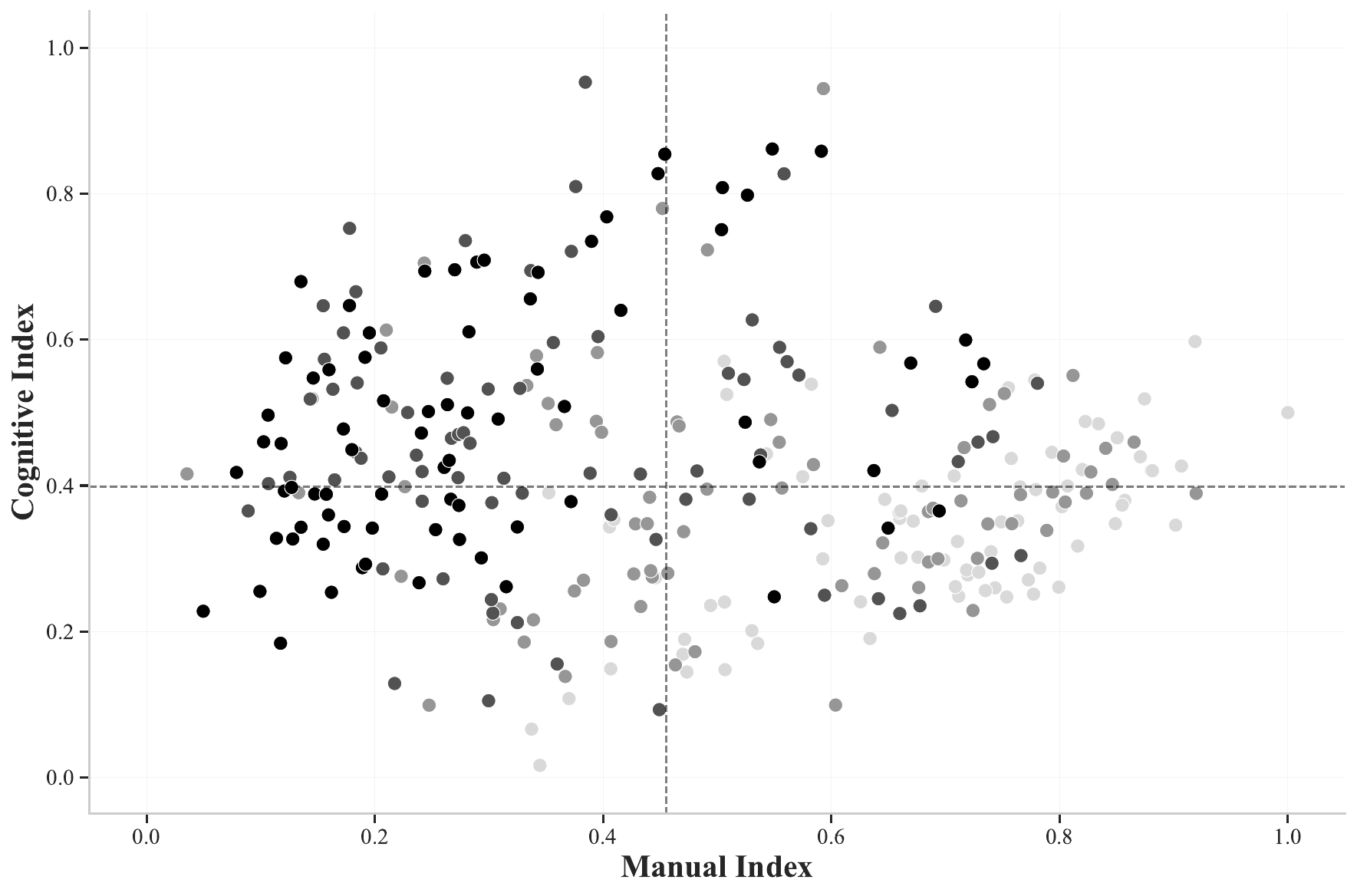}}
    \caption{Technological Exposure in Skill Space}
    \note{\textit{Notes:} This figure illustrates the distribution of automation and AI exposure across occupational skill space. Panels (a) and (b) present binscatter plots of occupational skill indices obtained from PCA ($r_o^s$) against technological exposure, each bin contains 15 occupations. Panels (c) and (d) visualize the same exposure patterns in two-dimensional cognitive-manual skill space, where darker shading indicates higher exposure levels.}
    \label{f:expo_skill_space}
\end{figure}

While automation and AI target distinct occupational segments, they share a critical feature: both technologies cluster within skill-adjacent occupations. Panels (c) and (d) of Figure \ref{f:expo_skill_space} visualize this clustering in cognitive-manual skill space, where darker shading indicates higher exposure. Automation concentrates in the lower-right region (high manual, low cognitive requirements), while AI clusters in the upper-left region (high cognitive, low manual requirements). This concentration has profound implications for labor market adjustment: as illustrated in Section \ref{ss:spectral}, clustering restricts worker mobility because displaced workers' natural alternatives—occupations requiring similar skills—face similar technological threats.

The choice of cognitive and manual dimensions reflects their empirical importance: together they account for 88\% of total skill variance across occupations.\footnote{Since cognitive and manual skills dominate occupational differentiation, they largely determine substitution patterns and mobility constraints.} Given this dominance, our descriptive analysis focuses on these two dimensions, while Appendix \ref{b:appendix:expos_int} examines technological exposure along the interpersonal dimension.

\subsection{Estimation of Structural Parameters}

With occupational skill intensities $\{\omega_o^s\}$ and automation exposure $\boldsymbol{z}^{\text{Automation}}$ measured, we now estimate the structural parameters $\{\theta, \{\rho_s\}_{s\in\mathcal{S}}\}$ that govern occupational substitution. Our estimation strategy exploits long-run employment responses to automation-induced wage changes across demographic groups between 1980 and 2010.

\subsubsection{Wage and Employment Effects of Automation}

We estimate automation-induced wage changes using the Panel Study of Income Dynamics (PSID) from 1976-2019.\footnote{Wage data for salaried workers are only available starting in 1976. The sample includes individuals aged 16–64, employed in nonagricultural, nonmilitary jobs, who are part of the core PSID sample (SRC). We exclude the oversample of low-income households (SEO sample) and the immigrant samples added in the 1990s to maintain sample consistency over time.} Following \cite{Cortes2016-nb}, we exploit within-individual job spell variation to address selection concerns and worker composition changes that plague cross-sectional average wage comparisons:\footnote{\cite{Cortes2016-nb} classifies occupations into three discrete groups: low-skill services, manufacturing, and high-skill services. We instead use continuous automation exposure for 306 occupations, providing richer variation for estimating supply elasticities.}

\begin{equation*}
\ln w_{i(o),t} = \beta_t \cdot z^{\text{Automation}}_o + \mathbf{X}'_{it}\gamma + \delta_{i,o} + u_{i,o,t}
\end{equation*}
where $\delta_{i,o}$ represents individual-occupation spell fixed effects and $\mathbf{X}_{it}$ includes year effects and time-varying individual characteristics. The individual-occupation spell fixed effects are crucial for addressing both selection and composition concerns. By tracking the same worker in the same occupation over time, we control for worker-specific occupational productivity—addressing selection bias from workers with different productivity systematically sorting into automation-exposed occupations. 

\begin{figure}[ht]
    \centering
    \subcaptionbox{Wage Effects from PSID}{\includegraphics[scale=0.25]{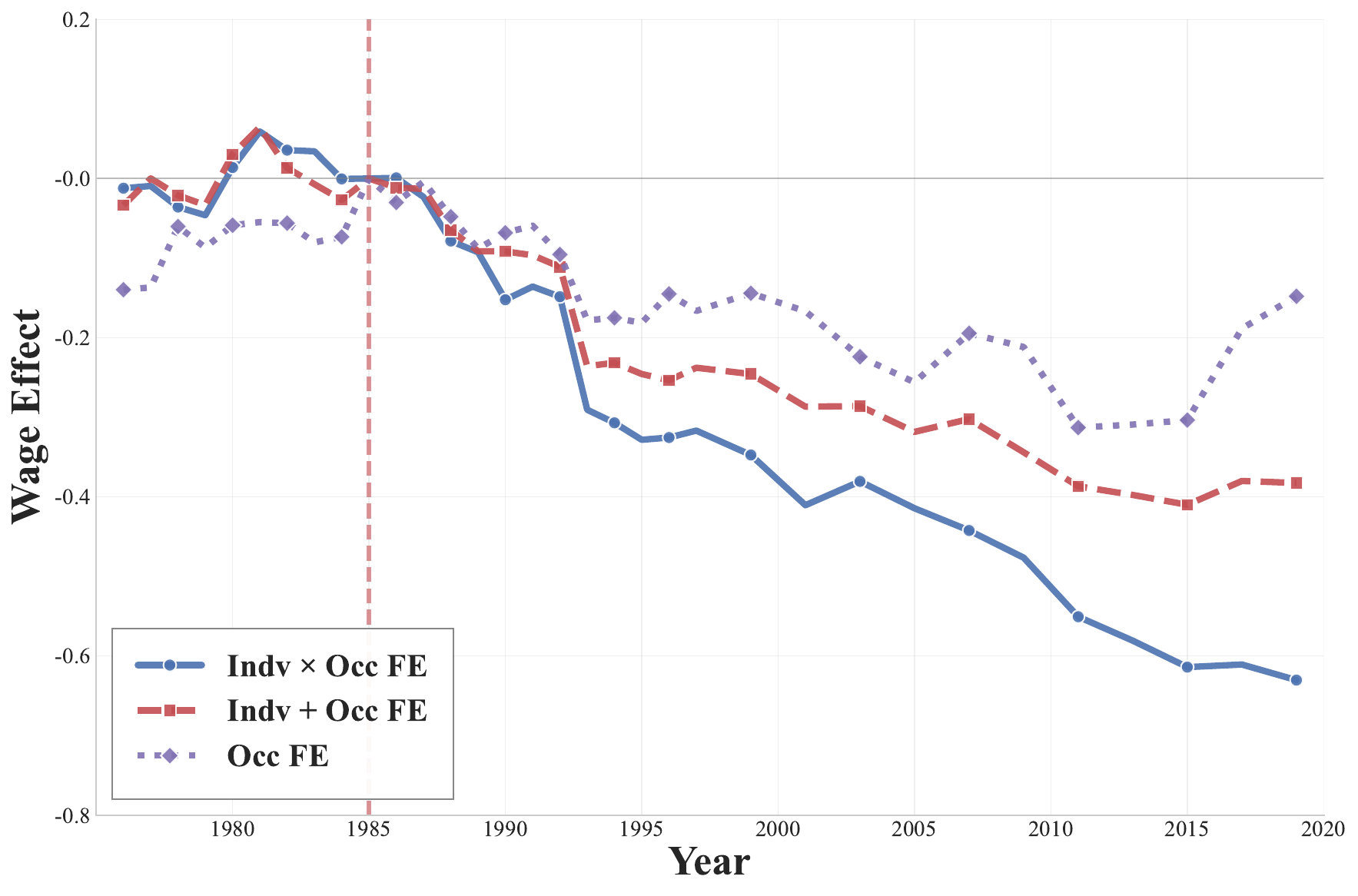}}\hfill
    \subcaptionbox{Employment Effects from Census}{\includegraphics[scale=0.25]{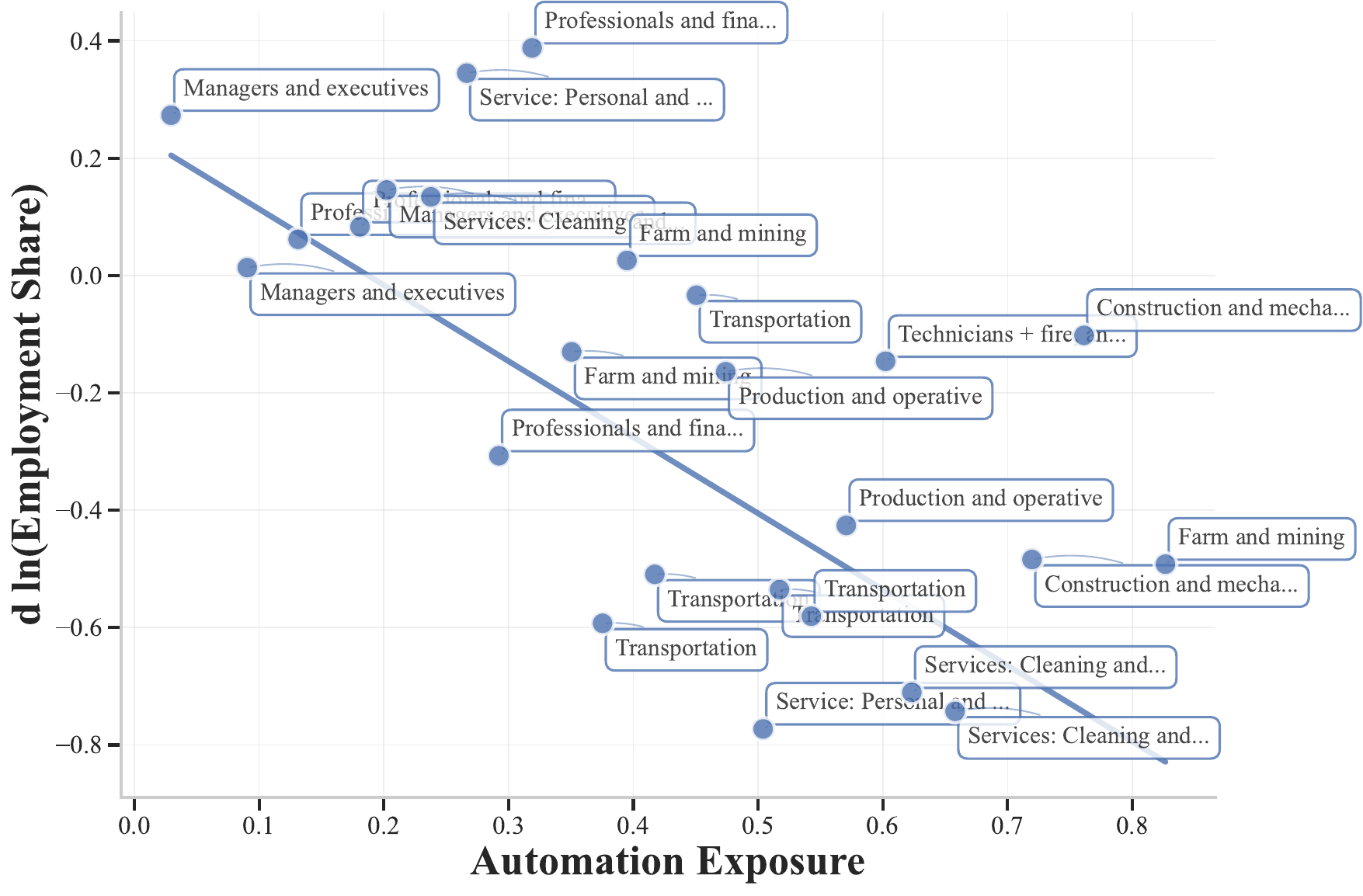}}
    \caption{Effects of Automation on Wages and Employment}
    \note{\textit{Notes:} Panel (a) shows estimated wage effects of automation exposure using PSID data. The solid blue line uses individual-occupation spell fixed effects to control for selection, while dashed lines show estimates with occupation fixed effects only (purple) or individual and occupation fixed effects separately (red). Panel (b) presents employment share changes from Census (1980-2000) and ACS (2010) data using a binscatter plot where each bin contains 10 occupations.}
    \label{f:auto_wage_emp}
\end{figure}

Figure \ref{f:auto_wage_emp} demonstrates the importance of controlling for selection. Panel (a) compares three specifications: (i) our preferred specification with job spell fixed effects (solid blue line), (ii) occupation fixed effects only (dashed purple line), and (iii) individual and occupation fixed effects separately (dashed red line). The specification without individual controls (purple line) yields only a 30 log point wage decline for maximally exposed occupations, while our preferred specification (blue line) shows a 60 log point decline by 2019. This stark difference reveals the extent of selection bias and worker composition changes: workers remaining in automation-exposed occupations have systematically higher occupational productivity than those who exit, causing cross-sectional comparisons to understate true wage effects. By purging this selection bias through job spell fixed effects, the blue line isolates the wage impact of automation on individual workers, providing unbiased estimates of true wage declines experienced within job matches. The red line, which includes individual and occupation fixed effects separately, accounts for composition changes but cannot fully control for selection into specific occupations, yielding intermediate estimates. Wage effects are remarkably similar for men and women, as shown in Appendix~\ref{b:appendix:wage_emp_auto}.

The timing of wage effects is consistent with automation accelerating in the mid-1980s.\footnote{The mid-1980s marked a turning point in automation adoption, coinciding with the widespread introduction of microprocessor-based manufacturing technologies and computerized machinery \citep{acemoglu2011skills}. \cite{Autor2003-jz} documents that routine task-intensive occupations began experiencing relative employment declines around this period.} Panel (a) shows no divergence in wage trends prior to 1985, followed by persistent and growing wage gaps between high- and low-exposure occupations. This pattern validates our identification strategy: occupations with different automation exposure experienced similar pre-trends before widespread automation adoption, then diverged systematically based on their exposure—precisely the pattern expected if automation causally affects wages in automatable occupations.

Panel (b) shows corresponding employment effects using Census and ACS data. Occupations with maximum automation exposure experienced a 100 log point decline in employment share relative to unexposed occupations. Crucially, the simultaneous decline in both wages and employment—rather than opposing movements—is consistent with automation operating as a negative labor demand shock. If automation were instead a supply phenomenon (workers voluntarily leaving certain occupations), we would observe rising wages in exposed occupations due to reduced labor supply. The parallel declines in wages and employment thus identify automation as reducing firms' demand for labor in affected occupations. These wage and employment responses provide the key moments for structural estimation.

\subsubsection{Estimation Strategy and Results}
\paragraph{Linear Representation and Identification}
While the full model is nonlinear, first-order approximation clarifies our identification strategy. Log-linearizing the labor supply system yields:
\begin{equation*}
\hat{\mathbf{L}}^g = \Theta^g(\theta, \boldsymbol{\rho}, \boldsymbol{\omega}, \mathbf{L}^g) \cdot \hat{\mathbf{w}} + \boldsymbol{\mu}^g
\end{equation*}
where the elasticity matrix $\Theta^g$ depends on structural parameters $\{\theta, \boldsymbol{\rho}\}$ and group-specific employment shares $\mathbf{L}^g$ (see equation \eqref{eq:cnces_elasticity}).

\begin{assumption}[Exclusion Restriction]
Automation exposure $\mathbf{z}^{\text{Automation}}$ is orthogonal to unobserved labor supply shocks:
\begin{equation*}
\mathbb{E}[\boldsymbol{\mu}^g|\mathbf{z}^{\text{Automation}}] = 0
\end{equation*}
where $\boldsymbol{\mu}^g$ represents unobserved labor supply shocks for group $g$.
\end{assumption}

In terms of our model's primitives, this assumption requires that automation exposure is uncorrelated with changes in occupation-specific productivity parameters $A_o$. The assumption is justified by the nature of automation exposure: it reflects technological feasibility—whether tasks can be codified into programmable rules—rather than shifts in workers' occupation-specific productivities. Moreover, the simultaneous decline in both wages and employment documented in Figure \ref{f:auto_wage_emp} confirms that automation operates as a labor demand shock rather than a supply shift. The timing evidence further supports this interpretation: Panel (a) shows no divergence in wage trends prior to 1985, followed by persistent wage gaps between high- and low-exposure occupations, consistent with automation adoption rather than pre-existing productivity trends.

\paragraph{Structural Estimation}

Given estimated wage changes $\{\hat{w}_o\} = \hat{\beta} \cdot \boldsymbol{z}^{\text{Automation}}$ from automation, we estimate elasticity parameters $\{\theta, \boldsymbol{\rho}\}$ via structural estimation. We start with a proposition establishing that observed employment shares and structural parameters are sufficient for computing employment changes given wage changes.

\begin{proposition}[Hat Algebra] \label{prop:hat_algebra}
Given relative wage changes $\hat{\mathbf{w}}$, correlation function $F$, and parameters $\{\theta, \boldsymbol{\rho}\}$, observed employment shares $\{\boldsymbol{\pi}_t^g\}_{g\in G}$ serve as sufficient statistics for counterfactual employment shares $\{\boldsymbol{\pi}_{t+1}^g\}_{g\in G}$ without requiring levels of wages or productivities.
\end{proposition}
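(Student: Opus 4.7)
The approach exploits the CNCES structure from Proposition~\ref{prop:cnces_elasticity}: wages and productivities enter the share equations only through the composite index $X_o^g \equiv A_o^g w_o^{\theta}$, which under a wage shock $\hat{\mathbf{w}}$ rescales cleanly as $X_o^{g,\prime} = X_o^g\,\hat{w}_o^{\theta}$ (holding $A_o^g$ fixed). Since the within-skill share $\pi_o^{s,W,g}$ and between-skill share $\pi^{s,g}$ are both homogeneous of degree zero in $\{X_{o'}^g\}_{o'}$, their counterfactual values can be expressed as multiplicative updates of the initial shares whose factors involve only $\hat{\mathbf{w}}$, the initial decomposition, and the structural parameters $\{\theta,\boldsymbol{\rho}\}$.

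The main algebraic step is to divide the numerator and denominator of each CNCES expression by the initial normalizing sums $Y^{s,g}$ and $Z^g$, thereby eliminating the unknown levels. This yields, for each skill $s$,
$$\pi_o^{s,W,g,\prime} = \pi_o^{s,W,g}\cdot\frac{\hat{w}_o^{\theta/(1-\rho_s)}}{\hat{Y}^{s,g}},\qquad \hat{Y}^{s,g}\equiv\sum_{o'}\pi_{o'}^{s,W,g}\,\hat{w}_{o'}^{\theta/(1-\rho_s)},$$
and an analogous update for $\pi^{s,g,\prime}$ as a weighted function of $\{\hat{Y}^{s,g}\}_s$ and $\{\pi^{s,g}\}_s$. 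Summing $\pi_o^{s,W,g,\prime}\pi^{s,g,\prime}$ across $s$ delivers $\pi_o^{g,\prime}$, with neither $A_o^g$ nor $w_o$ appearing anywhere in the formula.

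The remaining, and most delicate, step is to recover the initial decomposition $(\pi_o^{s,W,g},\pi^{s,g})$ from the observed aggregate shares $\pi_o^g$ alone, using only $\{\omega_o^s\}$ and $\{\theta,\boldsymbol{\rho}\}$. I would cast this as inverting the nonlinear system $\pi_o^g = \sum_s (\omega_o^s X_o^g)^{1/(1-\rho_s)}(Y^{s,g})^{-\rho_s}/Z^g$ for the $O$ unknowns $X_o^g$, modulo a scale normalization that the degree-zero homogeneity renders innocuous. Invertibility follows from the gross substitutes property already invoked in Lemma~\ref{lemma:eigenvalues}: the Jacobian $\partial\pi_o^g/\partial\ln X_{o'}^g$ has strictly positive diagonal, nonpositive off-diagonal entries, and zero row sums, which by standard arguments implies a unique preimage on the share simplex. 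Once $X_o^g$ is recovered, substituting back into the CNCES formulas pins down the decomposition, and the hat-algebra identities above deliver the counterfactual shares from $\hat{\mathbf{w}}$ and $\{\pi_o^g\}$ alone.
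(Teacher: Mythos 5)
Your proof is correct, but it takes a more specialized route than the paper's own argument in Appendix~\ref{app_ss:hat_algebra}. The paper works with a general correlation function $F$: it defines correlation-adjusted shares $\tilde{\pi}^g_{o,t} = A^g_{o,t}w_{o,t}^{\theta}/F(A^g_{1,t}w_{1,t}^{\theta},\ldots,A^g_{O,t}w_{O,t}^{\theta})$, uses degree-zero homogeneity of $F_o$ to obtain $\pi^g_{o,t} = \tilde{\pi}^g_{o,t}\,F_o(\tilde{\pi}^g_{1,t},\ldots,\tilde{\pi}^g_{O,t})$, asserts this is a one-to-one mapping, and then derives the single update $\tilde{\pi}^g_{o,t+1} = \hat{w}_{o,t+1}^{\theta}\tilde{\pi}^g_{o,t}/F(\{\hat{w}_{o',t+1}^{\theta}\tilde{\pi}^g_{o',t}\}_{o'})$ before mapping back to observed shares. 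Your construction is the CNCES specialization of exactly this: the composite indices $X_o^g$ recovered modulo scale are the paper's $\tilde{\pi}^g_o$ under the normalization $F=1$, and your within-skill and between-skill updates with elasticities $\theta/(1-\rho_s)$ are what the paper's one-line update expands to under equation~\eqref{eq:corr_cnces}; I checked both of your update formulas and they are right. What each route buys: the paper's version is more compact and covers any admissible $F$ (which is what lets the same machinery carry over to the dynamic hat algebra in Proposition~\ref{prop:6}), while your version makes the skill-level mechanics explicit and, notably, supplies an actual argument for the inversion step—positive diagonal, nonpositive off-diagonal, zero row sums of the Jacobian in $\ln X$, hence uniqueness of the preimage modulo scale by a Gale--Nikaid\^o/gross-substitutes argument—where the paper merely asserts the bijection. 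One small point to make explicit: your P-matrix argument delivers uniqueness of the preimage, while existence is free only because the observed shares are assumed to be generated by the model, so the inversion is well-posed on the attainable set rather than on the whole simplex.
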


\begin{proof}
See Appendix \ref{app_ss:hat_algebra} for proof and algorithm.
\end{proof}

This result allows us to express employment changes in terms of observed employment shares, generalizing the local elasticity result from Proposition \ref{prop:tech_incidence}. For each demographic group $g$, model-implied employment changes are:
\begin{equation*}
\hat{\pi}_{o,t}^g = \hat{\pi}_o^g\left(\theta, \boldsymbol{\rho}, \{\boldsymbol{\omega}^s\}_{s \in \mathcal{S}}, \boldsymbol{\pi}_{t}^g, \hat{\boldsymbol{w}}\right)
\end{equation*}
where hat variables denote proportional changes, $\hat{x}_t = x_{t+1} / x_{t}$. Importantly, wage changes $\{\hat{w}_o\}$ are common across all demographic groups, as automation operates as an occupational demand shock affecting all workers in a given occupation equally. Appendix \ref{b:appendix:wage_emp_auto} confirms that estimated wage effects are remarkably similar for men and women.\footnote{While wage changes are identical across groups, employment responses differ due to group-specific initial distributions $\{\pi_{o,t}^g\}$ across occupations. Groups concentrated in different parts of the occupational skill space respond differently to the same wage changes, providing the cross-group variation that identifies our structural parameters.}

We estimate parameters using pseudo-Poisson maximum likelihood, commonly used in gravity estimation \citep{Fally2015-tn,Lind2023-rl}:

\begin{equation*}
\{\hat{\theta}, \hat{\boldsymbol{\rho}}\} = \arg\min_{\theta, \boldsymbol{\rho}} \sum_{o,g} \kappa\left(\pi_{o,t+1}^g, \pi_{o,t}^g \cdot \hat{\pi}_{o,t}^g(\theta, \boldsymbol{\rho})\right)
\end{equation*}
where $\kappa(x, \hat{x}) = 2[x\ln(x/\hat{x}) - (x - \hat{x})]$ is the PPML objective function. The estimation embeds the exclusion restriction through the moment condition:
\begin{equation*}
\mathbb{E}\left[v_{o,t}^g \mid \boldsymbol{z}^{\text{Automation}}, \boldsymbol{\pi}_t^g\right] = 0
\end{equation*}
where $v_{o,t}^g = \pi_{o,t+1}^g/\pi_{o,t}^g\hat{\pi}_{o,t}^{g} - 1$ is the prediction error.

\begin{table}[ht]
\centering
\caption{PPML Estimation Results of Labor Supply Elasticities across Demographic Groups}
\label{tab:PPMLestimates}
\resizebox{\textwidth}{!}{%
\begin{tabular}{lcccccccc}
\hline\hline
                      & \multicolumn{4}{c}{1980--2000} & \multicolumn{4}{c}{1980--2010} \\ 
\cmidrule(lr){2-5} \cmidrule(lr){6-9}
                      & CES     & DIDES   & DIDES      & DIDES   & CES     & DIDES   & DIDES      & DIDES   \\
                      & All     & All     & White Men  & White   & All     & All     & White Men  & White   \\
                      &         &         & \& Women   & Women   &         &         & \& Women   & Women   \\
\hline
$\theta$              & 3.12    & 1.10    & 1.07       & 2.29    & 2.85    & 1.02    & 1.06       & 1.97    \\
                      & (0.20)  & (0.23)  & (0.47)     & (2.08)  & (0.20)  & (0.50)  & (0.46)     & (0.80)  \\
$\rho_{\text{Cog}}$   & 0       & 0.77    & 0.78       & 0.65    & 0       & 0.76    & 0.75       & 0.65    \\
                      & --      & (0.13)  & (0.09)     & (0.36)  & --      & (0.17)  & (0.22)     & (0.67)  \\
$\rho_{\text{Man}}$   & 0       & 0.48    & 0.50       & 0.39    & 0       & 0.44    & 0.45       & 0.38    \\
                      & --      & (0.18)  & (0.15)     & (0.44)  & --      & (0.23)  & (0.19)     & (0.52)  \\
$\rho_{\text{Int}}$   & 0       & 0.75    & 0.77       & 0.62    & 0       & 0.72    & 0.74       & 0.62    \\
                      & --      & (0.13)  & (0.14)     & (0.41)  & --      & (0.19)  & (0.14)     & (0.27)  \\
\hline
Observations          & 2,448   & 2,448   & 612        & 306     & 2,448   & 2,448   & 612        & 306     \\
\hline\hline
\end{tabular}%
}
\note{\textit{Notes:} Standard errors in parentheses. Following the literature, we scale the Poisson deviance by the mean-variance ratio of the data to obtain standard errors. While scaling does not affect the estimates, it aligns the deviance with the data variance. The CES specification imposes $\rho = 0$ across all occupation groups, while the DIDES specification allows for heterogeneous distance-dependent elasticities. Columns 1 and 5 report CES estimates for the full sample. Columns 2 and 6 report DIDES estimates for the full sample. Columns 3 and 7 restrict the sample to white men and women, while columns 4 and 8 further restrict to white women only. The estimates for the CES specification are close to those from a simple OLS regression with predicted wage effect as a regressor.}
\end{table}

Table \ref{tab:PPMLestimates} presents PPML estimation results. Column 1 shows the CES benchmark, which imposes $\rho_s = 0$ and yields $\hat{\theta} = 3.12$ (s.e. = 0.20)—the average elasticity under independent productivity draws across occupations. Column 2 presents our main DIDES specification using all demographic groups, which dramatically alters the results.

Three key findings emerge. First, cross-skill elasticity falls to $\hat{\theta} = 1.10$ (s.e. = 0.23), implying that approximately two-thirds of observed substitution occurs within skill clusters rather than across them. Second, skill transferability varies substantially: cognitive skills show the highest correlation parameter ($\hat{\rho}_{\text{Cog}} = 0.77$, s.e. = 0.13), followed by interpersonal skills ($\hat{\rho}_{\text{Int}} = 0.75$, s.e. = 0.13), while manual skills exhibit the lowest transferability ($\hat{\rho}_{\text{Man}} = 0.48$, s.e. = 0.18). Third, these parameters imply heterogeneous within-skill elasticities: $\theta/(1-\rho_s)$ equals 4.8 for cognitive occupations, 4.4 for interpersonal occupations, but only 2.1 for manual occupations.

Columns 3-8 demonstrate robustness across subsamples and time periods. Columns 3-4 show that estimates remain stable when restricting to white workers, though standard errors increase with smaller samples.\footnote{The larger but imprecise estimate for white women alone ($\hat{\theta} = 2.29$, s.e. = 2.08) likely reflects both identification and compositional effects. Cross-skill identification comes primarily from men, whose employment concentrates in manual occupations, providing clearer variation in response to clustering shocks. Rising female labor force participation during our sample period means cross-sectional variation partly captures new entrants who are inherently more flexible in their occupational choices, inflating the estimated elasticity. Women's more dispersed employment across skill clusters, however, provides valuable identifying variation for correlation parameters $\{\rho_s\}$. This complementarity in identification across demographic groups strengthens our overall estimates.} Columns 5-8 replicate the analysis using 1980--2010 employment changes, yielding nearly identical point estimates with slightly larger standard errors, confirming the robustness of our parameter estimates.

A natural question is whether a simpler approach might suffice. One could specify labor supply using nested CES, where occupations are partitioned into mutually exclusive categories—such as broad sectors (low-skill service, high-skill service, manufacturing) or skill intensity exclusive groupings (cognitive, manual, interpersonal)—with higher elasticities of substitution within nests than across them. To assess this alternative, we estimate nested CES specifications using the same procedure and test both nesting structures. Results detailed in Appendix \ref{b:appendix:nested_ces} reveal that most within-nest correlation parameters are statistically insignificant, while cross-nest elasticities (2.05--2.67) converge toward our CES benchmark of 3.12, effectively reducing nested CES to standard CES. The contrast with our DIDES estimates—where $\theta = 1.10$ and substantial skill-specific correlations emerge—demonstrates a key advantage of our approach: rather than requiring researchers to impose arbitrary categorical restrictions, we allow occupations to draw continuously from multiple skills with varying intensities $\{\omega_o^s\}$ measured directly from occupational data. This flexibility proves empirically critical, as most occupations blend multiple skills and substitution patterns reflect continuous skill proximity rather than discrete categories.
\subsection{The Topology of Occupational Substitution}

Our estimated parameters reveal the fundamental structure of labor market substitution. Figure~\ref{fig:occupation_topology_1980} visualizes the substitution topology among 306 occupations based on implied cross-wage elasticities from our DIDES model using 1980--2000 data.

\begin{figure}[ht]
    \centering
    \includegraphics[width = \linewidth]{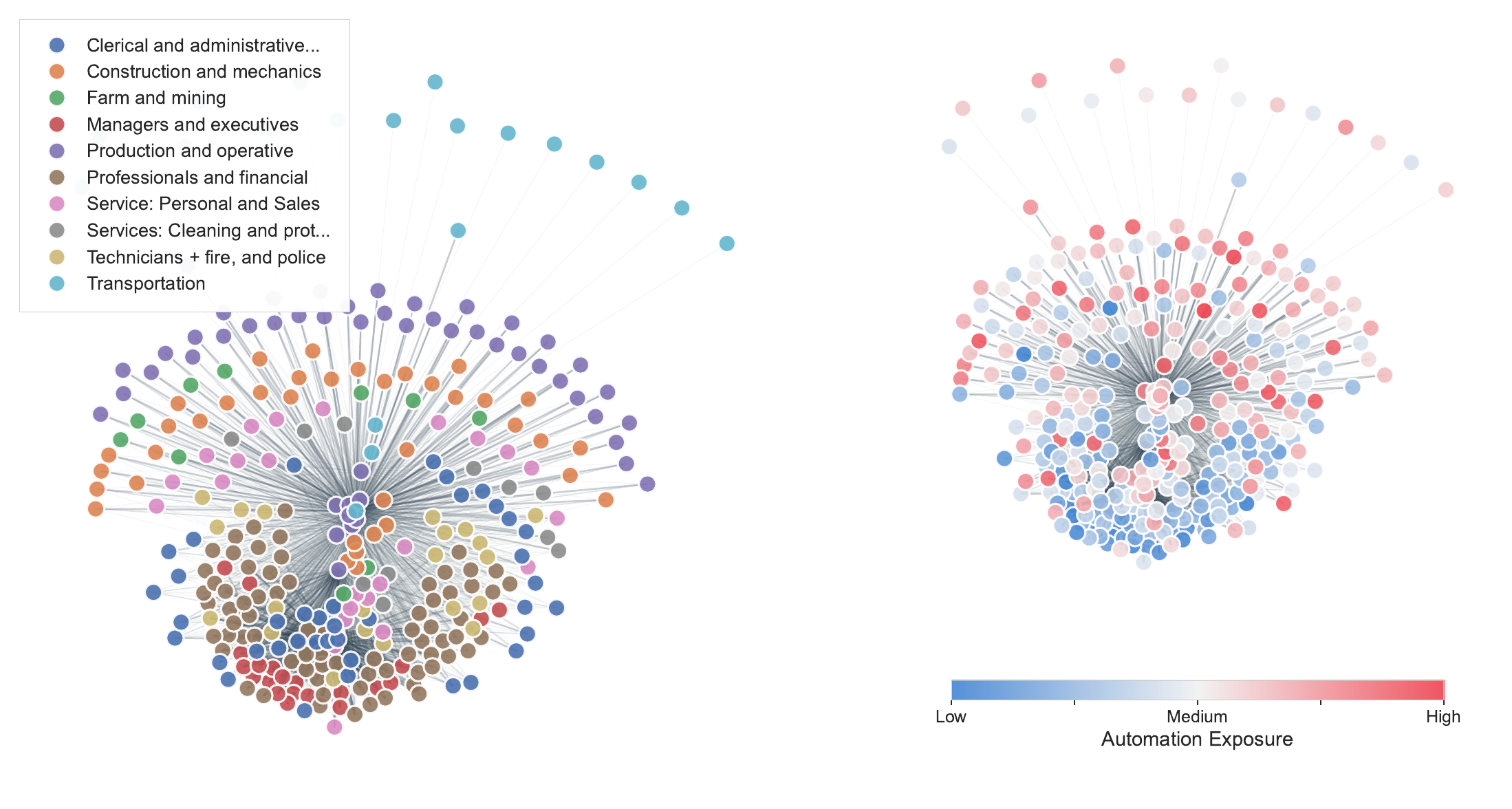}
    \caption{The Structure of Occupational Substitutability and Automation Exposure}
    \label{fig:occupation_topology_1980}
    \note{\textit{Notes:} This figure visualizes the substitution structure among 306 occupations based on estimated cross-wage elasticities from the DIDES model. Each node represents one occupation. Edges connect occupation pairs, with darker and thicker lines indicating stronger substitution relationships (top 20\% of elasticities shown). Node positions are determined using a force-directed layout algorithm that places more substitutable occupations closer together. The left panel colors nodes by one-digit Census occupation categories (e.g., professional, production, service occupations). The right panel maps automation exposure onto the same network structure, with colors ranging from blue (low exposure) to red (high exposure).}
\end{figure}

The left panel reveals how estimated elasticities organize occupations into distinct groupings. Each node represents one occupation, colored by its Census occupation category. Edges connect occupation pairs, with darker and thicker lines indicating stronger cross-wage elasticities—we display only the top 20\% of elasticities to highlight the strongest substitution relationships. Node positions emerge from a force-directed layout algorithm that places more substitutable occupations closer together, allowing the estimated elasticities themselves to determine the network structure.

The visualization reveals distinct groupings with strong within-group substitutability. Production and operative occupations (lower right) form dense interconnections, as do professional and financial occupations (upper left), while service occupations show more dispersed patterns. This structure emerges from our estimated correlation parameters—$\rho_{\text{cog}} = 0.77$, $\rho_{\text{man}} = 0.48$, and $\rho_{\text{int}} = 0.75$—rather than from imposed categorical assumptions. The varying density of connections across different regions reflects heterogeneous mobility constraints: workers can more easily transition between tightly connected occupations than between loosely connected ones.

This topology directly illustrates distance-dependent elasticity of substitution (DIDES). Dense connections within occupational groupings indicate high substitutability among similar occupations, while sparse connections across groupings reveal limited substitution across different occupational domains. This contrasts sharply with nested CES specifications, which impose rigid categorical boundaries. Our empirical tests reveal why nested CES fails: within-nest correlations are statistically insignificant and cross-nest elasticities (2.05--2.67) converge to the CES benchmark (3.12).\footnote{Standard Nested-CES forces arbitrary boundaries that may group dissimilar occupations together or separate similar ones. Our DIDES framework avoids this by allowing occupations to draw from multiple skills with varying intensities. See Appendix~\ref{b:appendix:nested_ces} for detailed comparisons.} The DIDES framework's flexibility—with $\theta = 1.10$ for cross-skill substitution but within-skill elasticities ranging from 2.1 to 4.8—captures heterogeneous mobility constraints that discrete nesting misses.

The right panel overlays automation exposure onto this substitution graph, revealing how technological clustering constrains adjustment. Automation concentrates in production and operative occupations (shown in red), creating a mobility trap: these occupations form a tightly connected group, meaning workers' most natural transition targets face similar automation threats. Dense connections that normally facilitate adjustment instead propagate technological shocks throughout the group. When automation affects manufacturing workers, they could easily transition to construction or transportation occupations—but those alternatives face comparable automation exposure. Low substitutability between differentially affected groups amplifies rather than dissipates wage effects.\footnote{Appendix~\ref{appendix:ai_topology} shows AI exhibits analogous clustering in professional and technical occupations, suggesting this pattern characterizes skill-biased technological change generally.} This visualization crystallizes our core theoretical insight: the interaction between technological clustering and substitution structure—not average labor market flexibility—determines distributional consequences.
\section{The Incidence of Automation and AI} \label{s:incidence}

Having estimated the structural parameters governing occupational substitution, we now quantify the labor market incidence of automation and AI. We adopt the labor demand elasticity estimate $\sigma = 1.34$ from \cite{Caunedo2023-db}. Our incidence analysis proceeds in three steps: we first apply spectral decomposition to formalize how technological exposures cluster in skill space, then document heterogeneous wage pass-through rates across occupations, and finally quantify how occupational mobility mitigates welfare losses. Throughout this analysis, we employ the parameter estimates from column 2 of Table \ref{tab:PPMLestimates}, which uses the 1980–2000 period.

\subsection{Spectral Decomposition of Technological Shocks}

We apply the spectral framework from Section~\ref{ss:spectral} to decompose measured automation and AI exposures, $\boldsymbol{z}^{\text{Automation}}$ and $\boldsymbol{z}^{\text{AI}}$, into eigenshocks—fundamental patterns revealing the labor market's adjustment capacity.\footnote{The occupational demand shock relates to automation exposure via $d\ln \boldsymbol{\alpha} = \sigma(\mathbf{I} + \Theta / \sigma) \cdot \beta \cdot \boldsymbol{z}^{\text{Automation}}$, where $d\ln \boldsymbol{w} = \beta \cdot \boldsymbol{z}^{\text{Automation}}$ is the wage effect of the automation shock. Therefore, when we apply eigendecomposition to the exposure vector, $\boldsymbol{z} = \sum_{n = 1}^O b_n \cdot \boldsymbol{u}_n$, the implied demand shock becomes $d\ln \boldsymbol{\alpha} = \sum_{n = 1}^O (1 + \lambda_n/\sigma) b_n \cdot \boldsymbol{u}_n$.} Each eigenshock's eigenvalue determines reallocation possibilities: smaller eigenvalues indicate rigid adjustment channels and larger wage effects.\footnote{We compute the substitution matrix by inverting employment shares to obtain implied productivity levels $A_{o,t}w_{o,t}^{\theta}$. Eigenshocks are eigenvectors of this matrix, with eigenvalues measuring absorption capacity.}

\begin{figure}[ht]
    \centering
    \subcaptionbox{Automation and AI Exposures}{\includegraphics[width=0.48\linewidth]{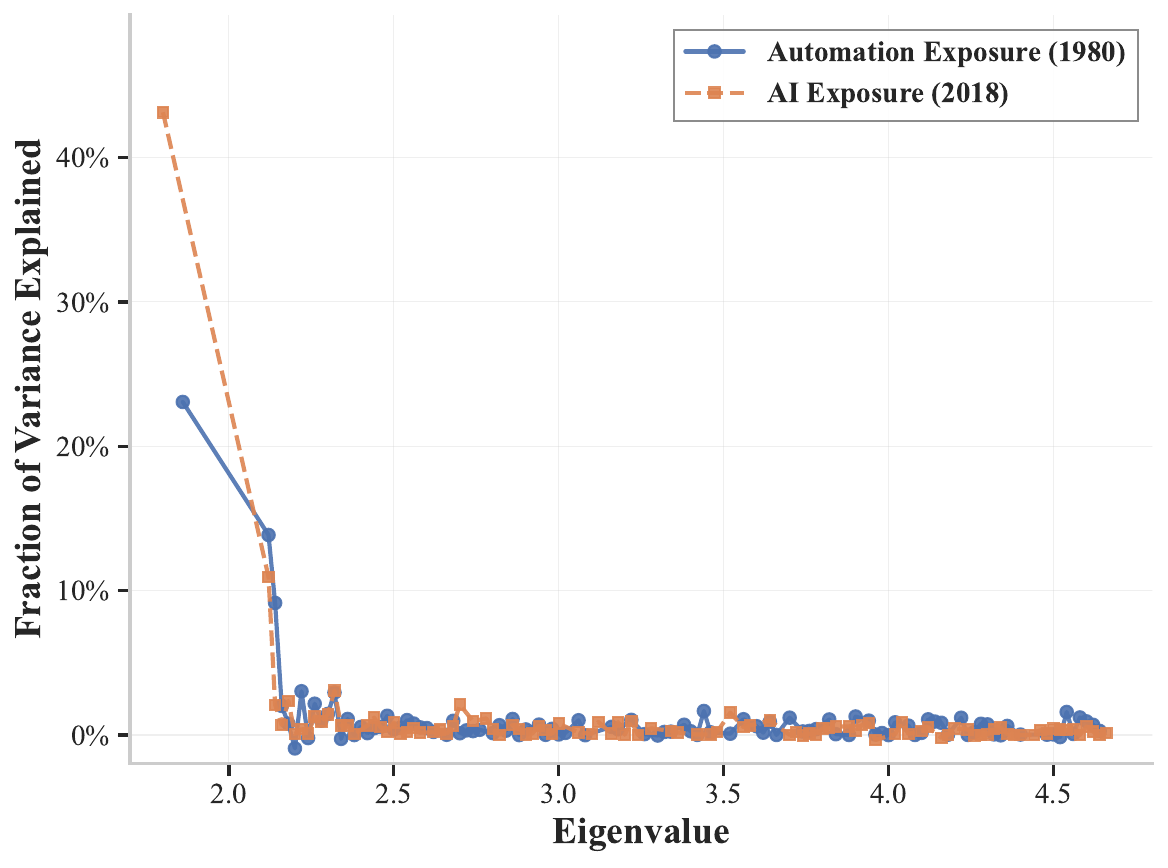}}\hfill
    \subcaptionbox{Trade and Demographic Shocks}{\includegraphics[width=0.48\linewidth]{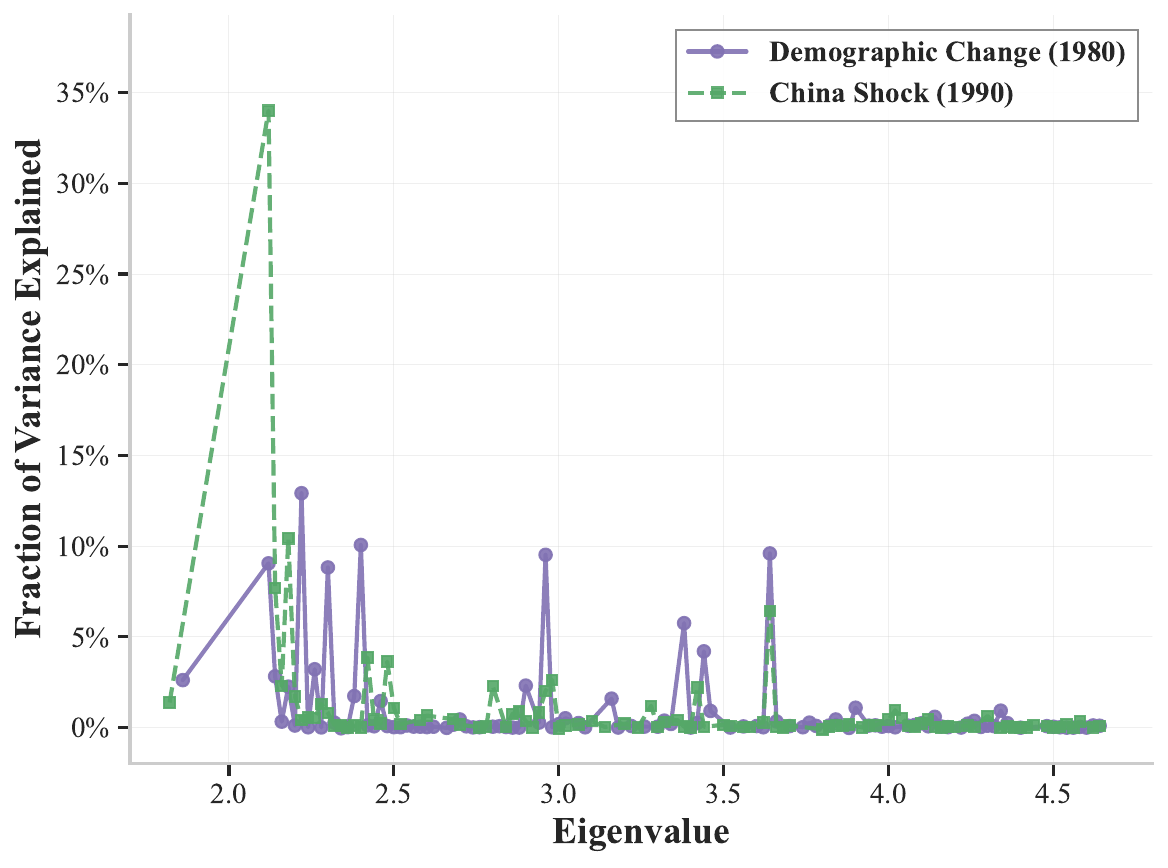}}
    \caption{Variance Decomposition of Labor Demand Shocks}
    \label{f:spectral_decomposition}
    \note{\textit{Notes:} Decomposition of labor demand shocks into eigenshocks ordered by eigenvalue magnitude using substitution matrices based on 1980 (automation, China Shock, Demographic Change) and 2018 (AI) employment shares. Each point represents the share of variance explained by the corresponding eigenshock. Smaller eigenvalues indicate limited employment reallocation capacity. Panel A: Automation and AI concentrate on low-eigenvalue eigenshocks (1.8--2.0). Panel B: Trade shocks from Chinese import competition (China shock) and demographic shocks from population aging distribute across higher eigenvalues. Demand shift measures obtained from \cite{Autor2024-ay}.}
\end{figure}

Figure~\ref{f:spectral_decomposition} reveals why technological shocks generate severe wage adjustments. The figure displays the share of variance contributed by each eigenshock, ordered by the magnitude of its associated eigenvalue. Panel A shows that both automation and AI load disproportionately onto eigenshocks with the smallest eigenvalues (1.8--2.0). Automation concentrates 23\% of its variance on the smallest eigenvalue; AI concentrates 44\%—dwarfing other components. These small eigenvalues represent shock patterns affecting clusters of similar occupations simultaneously, leaving workers with minimal escape options. When technological shocks concentrate on such rigid adjustment channels, the labor market cannot dissipate them through employment reallocation. Appendix~\ref{app:spatial_visualization} visualizes this clustering mechanism, showing how AI exposure aligns precisely with the low-eigenvalue eigenshock pattern in cognitive-manual skill space.

Clustering patterns vary across demographic groups, reflecting heterogeneous employment distributions across skill space. Appendix~\ref{app:gender_spectral} demonstrates that automation constrains male workers more severely—loading 31\% of variance on the smallest eigenvalue versus 10\% for women—due to men's concentration in manual-intensive occupations. This gender difference disappears for AI, where both groups face similar extreme concentration on low eigenvalues, suggesting cognitive task clustering affects all workers regardless of occupational segregation patterns.\footnote{The convergence in AI's gender impact contrasts with automation's differentiated effects, implying future technological shocks may generate more uniform demographic impacts while maintaining severe absolute clustering effects.} In Section~\ref{ss:distortions}, we discuss the role of automation and changing discrimination in between-group inequality.

Panel B contrasts this with two alternative occupational labor demand shocks: trade shocks from Chinese import competition and demand shocks from population aging.\footnote{The China shock represents occupation-level demand changes from increased Chinese import competition in manufacturing during 1991–2014, while demographic changes capture demand shifts from population aging (particularly the Baby Boom generation) over 1980–2018. We obtain these occupation-level demand changes from \cite{Autor2024-ay}, who construct them by combining industry-level shocks with occupations' employment distributions across industries. The China shock primarily affects manufacturing occupations, while demographic changes reflect consumption pattern shifts as the population ages. See Appendix~\ref{app:alternative_shocks} for details.} These distribute variance across eigenshocks with moderate-to-high eigenvalues. The China shock's largest loading (34\%) occurs at eigenvalue 2.1, while demographic changes load substantially on eigenvalues above 2.5. These patterns create flexible adjustment pathways—workers displaced from declining industries or occupations can transition to expanding ones, a mechanism that technological clustering eliminates.
\subsection{The Structure of Wage Incidence}

The spectral analysis revealed that automation and AI generate limited employment adjustment and large wage effects to first order; we now document their heterogeneous manifestation across occupations. Our central finding: technological incidence depends not on average substitutability (corresponding to CES estimates) but on the structure of substitution—specifically, how clustering interacts with skill-based mobility constraints. 

With the estimated wage changes $\hat{\boldsymbol{w}}_{1980\rightarrow2010}^{\text{Automation}} = \hat{\beta}_{2010} \cdot \boldsymbol{z}^{\text{Automation}}$ and the estimated labor supply parameters $\{\hat{\theta}, \hat{\boldsymbol{\rho}}\}$, the model implies employment changes $\hat{\boldsymbol{L}}_{1980\rightarrow2010}^{\text{Automation}}$. We construct the share of wage incidence for automation from equation~\eqref{eq:wage_incidence} as:
\begin{equation*}
    \text{Wage Pass-Through}^{\text{Automation}}_o = \frac{\ln \left( \hat{w}_{o,1980\rightarrow2010}^{\text{Automation}} / \hat{W}_{1980\rightarrow2010}^{\text{Automation}}\right) }{\ln \left( \hat{w}_{o,1980\rightarrow2010}^{\text{Automation}} / \hat{W}_{1980\rightarrow2010}^{\text{Automation}}\right) + \ln \hat{L}_{o,1980\rightarrow2010}^{\text{Automation}}/\sigma}
\end{equation*}
where $\hat{W}_{1980\rightarrow2010}^{\text{Automation}}$ is the change in the aggregate wage index under automation.\footnote{The aggregate wage index is $W = F(A_1w_1^{\theta}, \ldots, A_Ow_O^{\theta})^{1/\theta}$, where $F$ is the correlation function from the DIDES structure. This aggregates occupational wages weighted by their productivity parameters and correlation structure. Importantly, with observed employment shares and wage changes, we can compute aggregate wage index changes without requiring the productivity levels $\{A_o\}$ via hat algebra (see Appendix~\ref{app_ss:hat_algebra}).} For AI, we construct incidence measures similarly, assuming AI exposure $\boldsymbol{z}^{\text{AI}}$ generates the same proportional wage effects as automation (i.e., $\hat{\boldsymbol{w}}^{\text{AI}} = \hat{\beta}_{2010} \cdot \boldsymbol{z}^{\text{AI}}$), allowing us to predict prospective AI incidence. Under the CES framework, the share of wage adjustment is constant across all occupations and equal to $\frac{\sigma}{\sigma + \hat{\theta}} = 0.3$.
\begin{figure}[ht]
    \centering
    \subcaptionbox{Automation Pass-Through}{\includegraphics[scale=0.20]{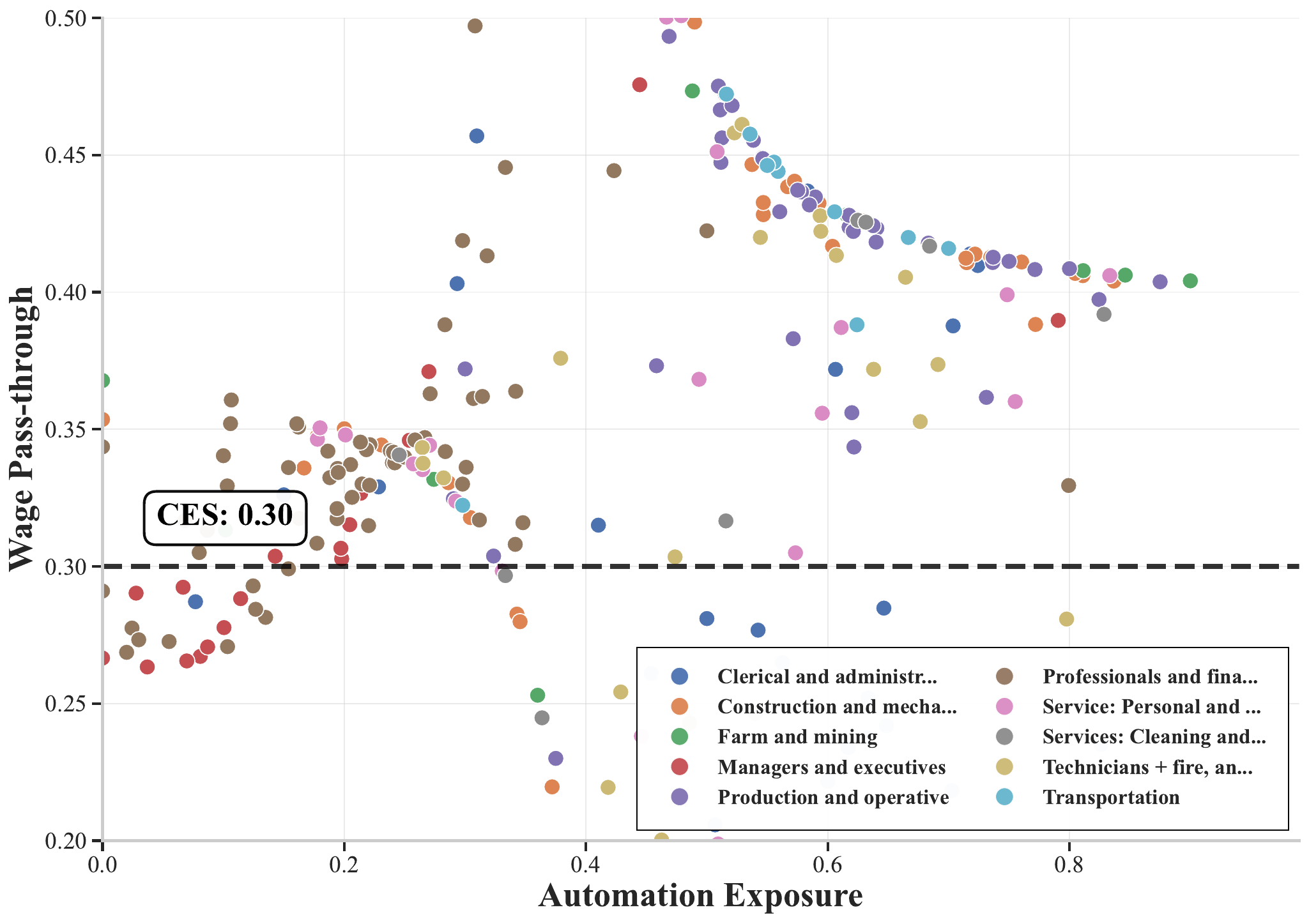}}\hfill
    \subcaptionbox{AI Pass-Through}{\includegraphics[scale=0.20]{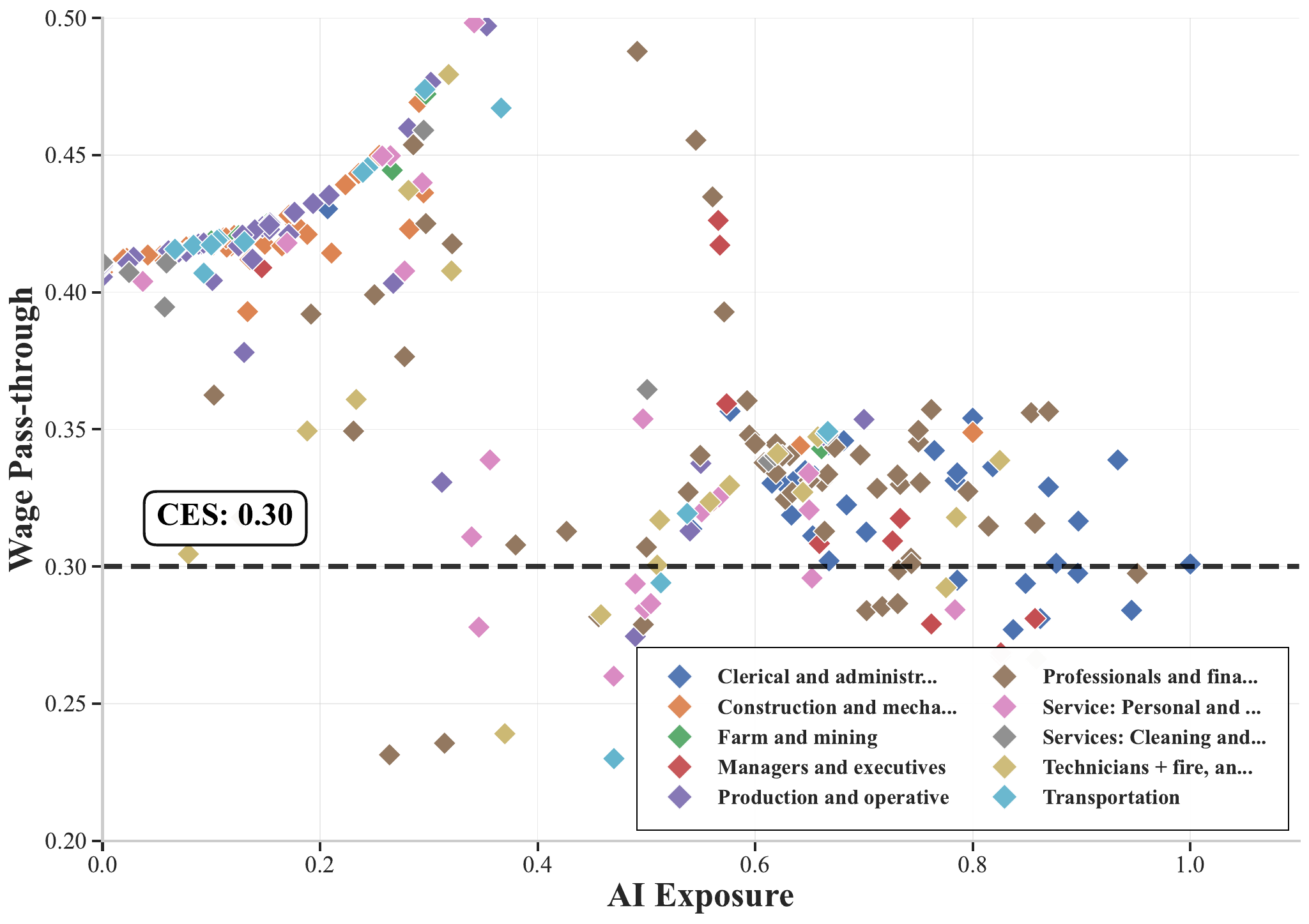}} \hfill
    \subcaptionbox{Distribution of Automation Pass-Through}{\includegraphics[scale=0.24]{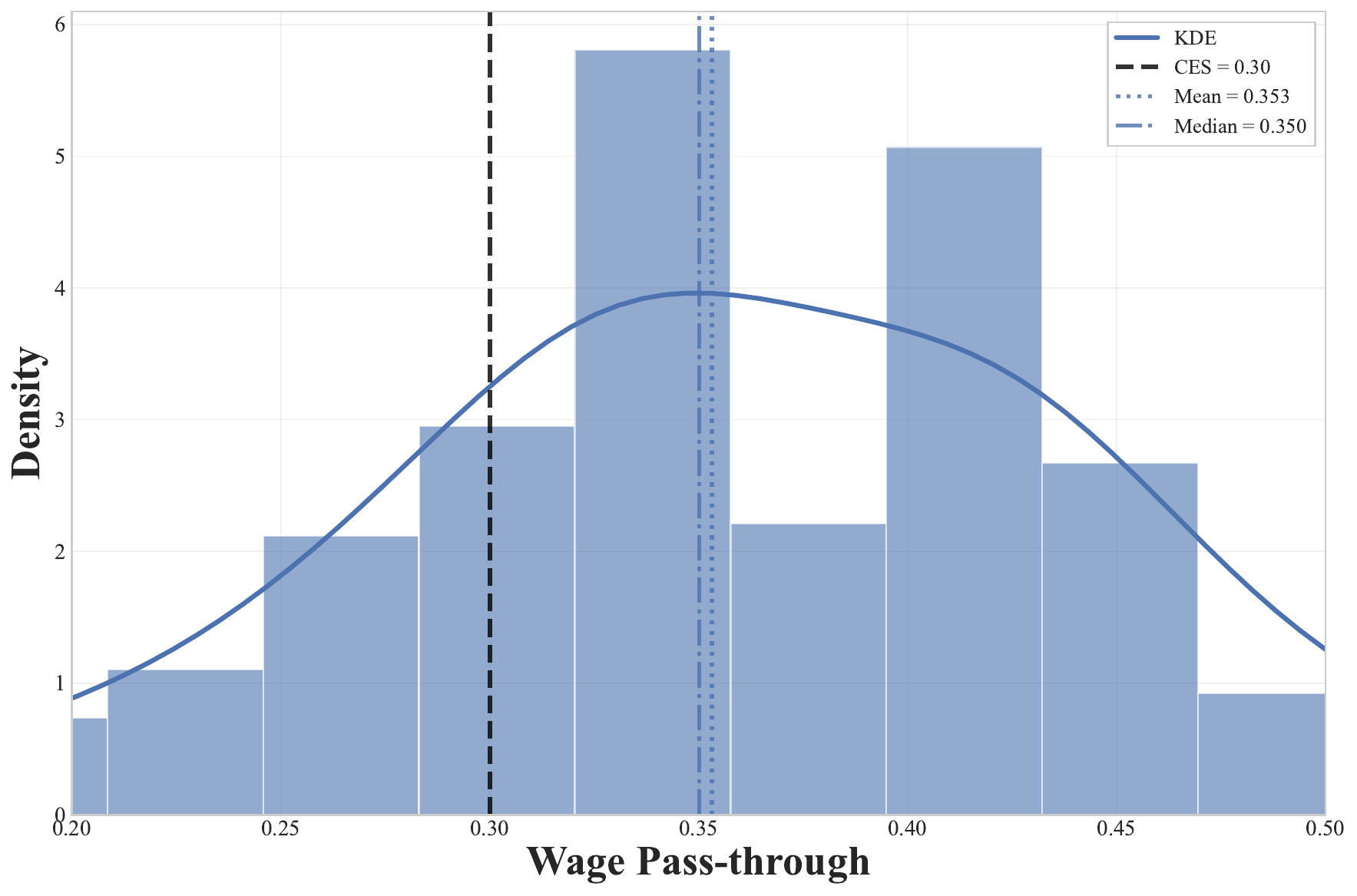}}\hfill
    \subcaptionbox{Distribution of AI Pass-Through}{\includegraphics[scale=0.24]{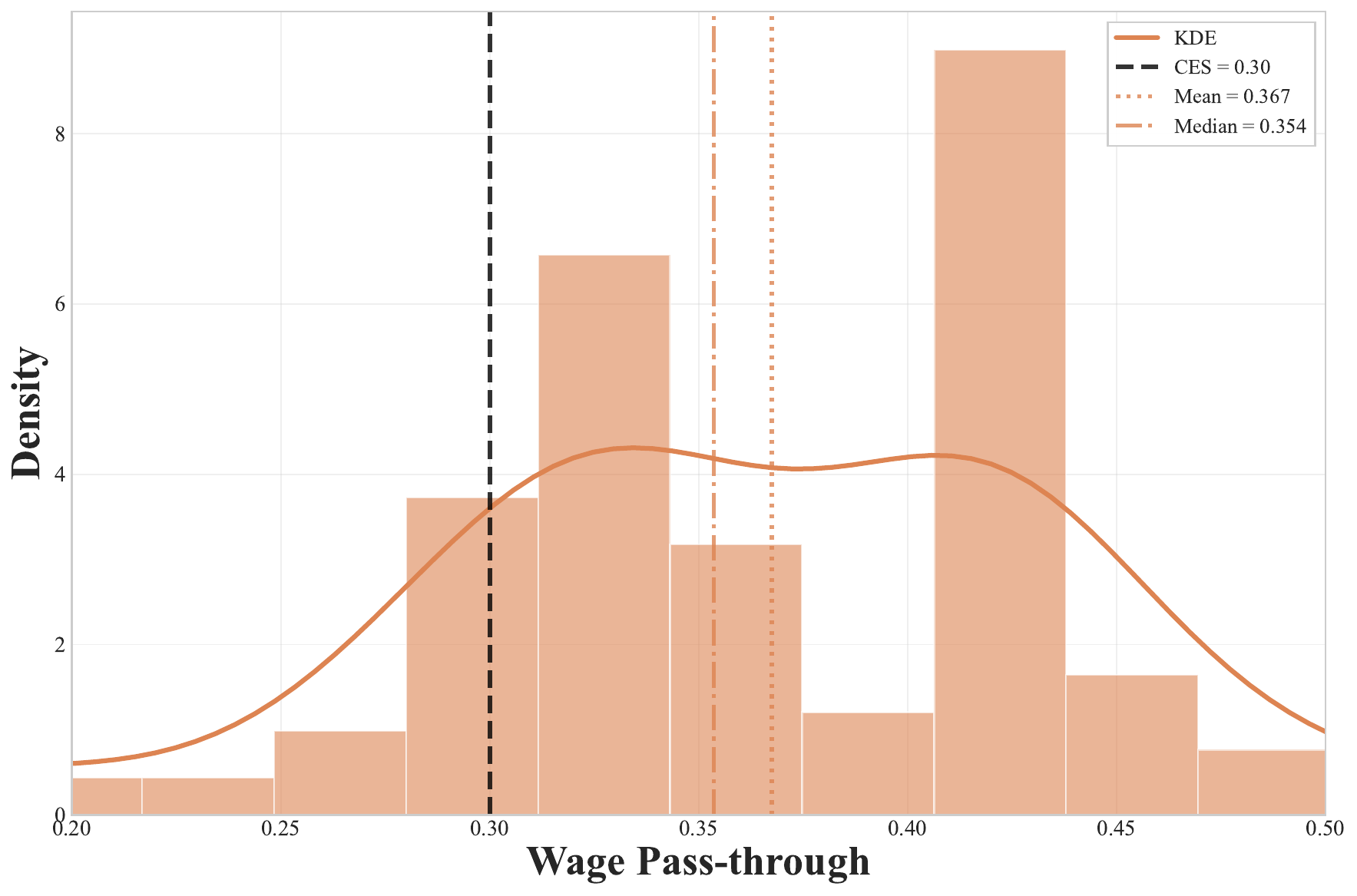}}
    \caption{Heterogeneous Wage Pass-Through of Technological Shocks}
    \label{f:incidence}
    \note{\textit{Notes:} Panels A-B: Pass-through rates versus exposure, with points colored by occupation category. Horizontal line marks CES benchmark (0.30). Panels C-D: Pass-through distributions with CES benchmark (dashed), mean (dotted), and median (dash-dotted) lines.}
\end{figure}

Figure~\ref{f:incidence} reveals systematic heterogeneity in how demand shocks split between employment and wage adjustments.\footnote{We invert the model using automation wage effects to recover demand changes. For AI, we normalize shocks to match automation's aggregate effect, enabling distributional comparison.} Panel A shows automation's pass-through ranging from 20\% to 50\%—far exceeding the 30\% CES benchmark for most exposed occupations. This variation directly reflects our theoretical prediction: when shocks cluster in skill space, affected workers cannot escape through reallocation, forcing adjustment through wages.

Production and transportation workers (purple/cyan points) exemplify this trap. Facing the largest negative shocks with 40–45\% pass-through, nearly half their demand destruction translates to wage losses—not the 30\% standard models predict. Automation clusters in manual-intensive occupations, eliminating natural transition pathways. The effective elasticity falls from 3.12 to 2.45, yielding an average pass-through of 0.353. This reveals that ignoring substitution structure and clustering overstates labor supply elasticity by 28\%, systematically understating wage inequality from technological change.

Panel B reveals AI's distinctive pattern compared to automation. While highly exposed occupations (right side) show modestly elevated pass-through around 35\%, larger pass-through rates emerge for AI-complementary occupations (left side), which exhibit pass-through rates of 40–45\%. This asymmetry reveals that AI generates larger wage gains for beneficiaries than wage losses for those displaced. Workers in AI-complementary occupations—those with minimal AI exposure—capture substantial wage increases because displaced cognitive workers cannot easily transition into these roles, which often require different skill combinations like manual or interpersonal expertise. This creates a more pronounced winner-take-all than automation: those who benefit from AI experience larger relative gains, amplifying inequality through a different mechanism than the symmetric displacement effects of automation.

Panels C and D translate occupation-specific effects into workforce distributions. Three features challenge conventional models. First, mean pass-through (0.353 for automation, 0.367 for AI) exceeds the CES benchmark by 20\%, implying that CES overstates average labor supply elasticity by 31\% and systematically understates wage effects. Second, the substantial range (0.2--0.5) reflects considerable heterogeneity in substitution structure. Third, these patterns have substantial consequences: for a 30\% demand shock, heavily exposed manual occupations experience 13.5\% wage declines (30\% × 0.45) versus 9\% under CES—a 50\% larger effect.

\subsection{Welfare Recovery Through Occupational Mobility}

While wage pass-through captures static losses, workers partially recover through occupational transitions. These mobility gains also depend crucially on the interaction between exposure patterns and worker mobility—revealing why average elasticities mislead.

\begin{figure}[ht]
    \centering
    \subcaptionbox{Automation: Mobility Gains}{\includegraphics[scale=0.2]{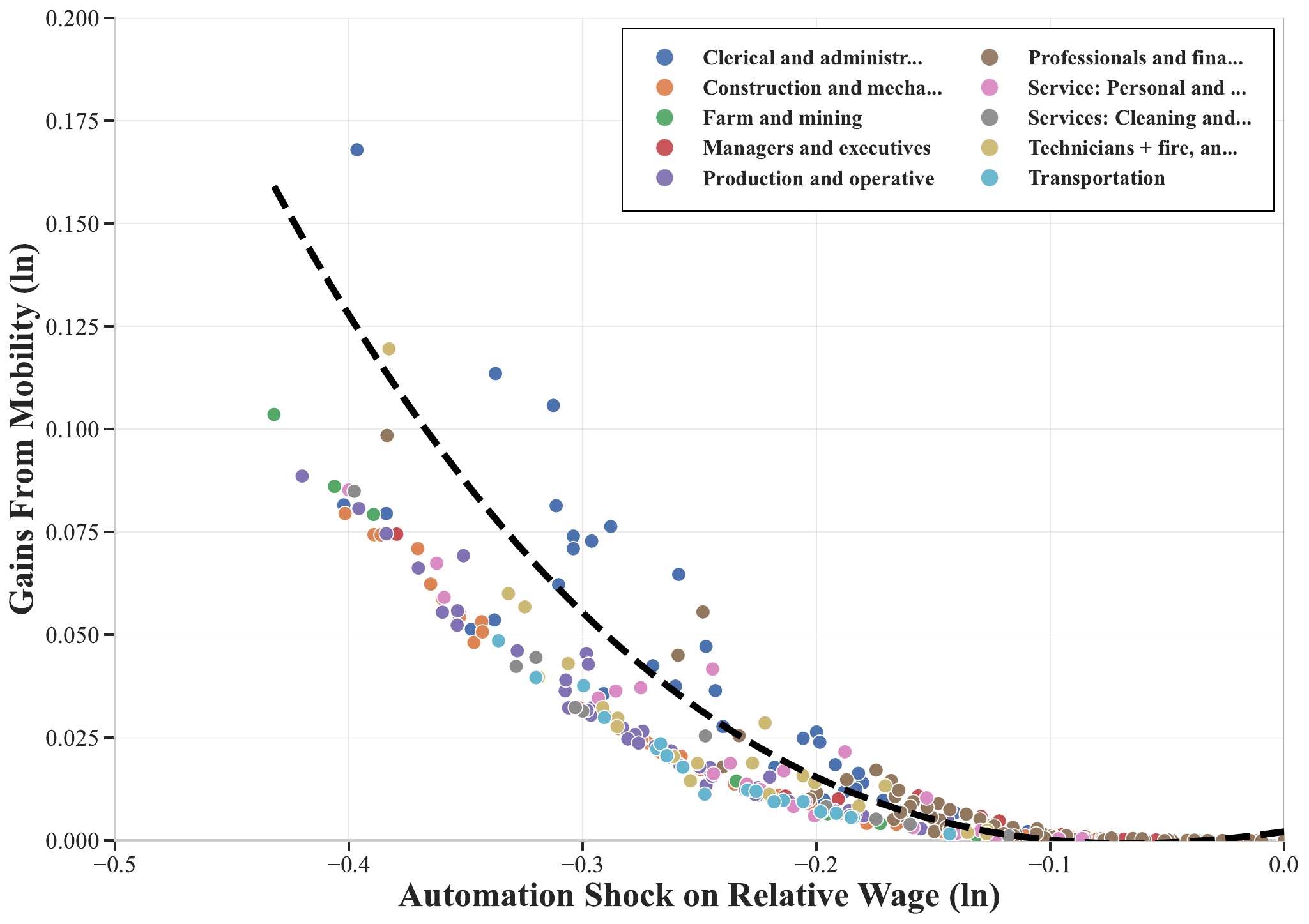}}\hfill
    \subcaptionbox{AI: Mobility Gains}{\includegraphics[scale=0.2]{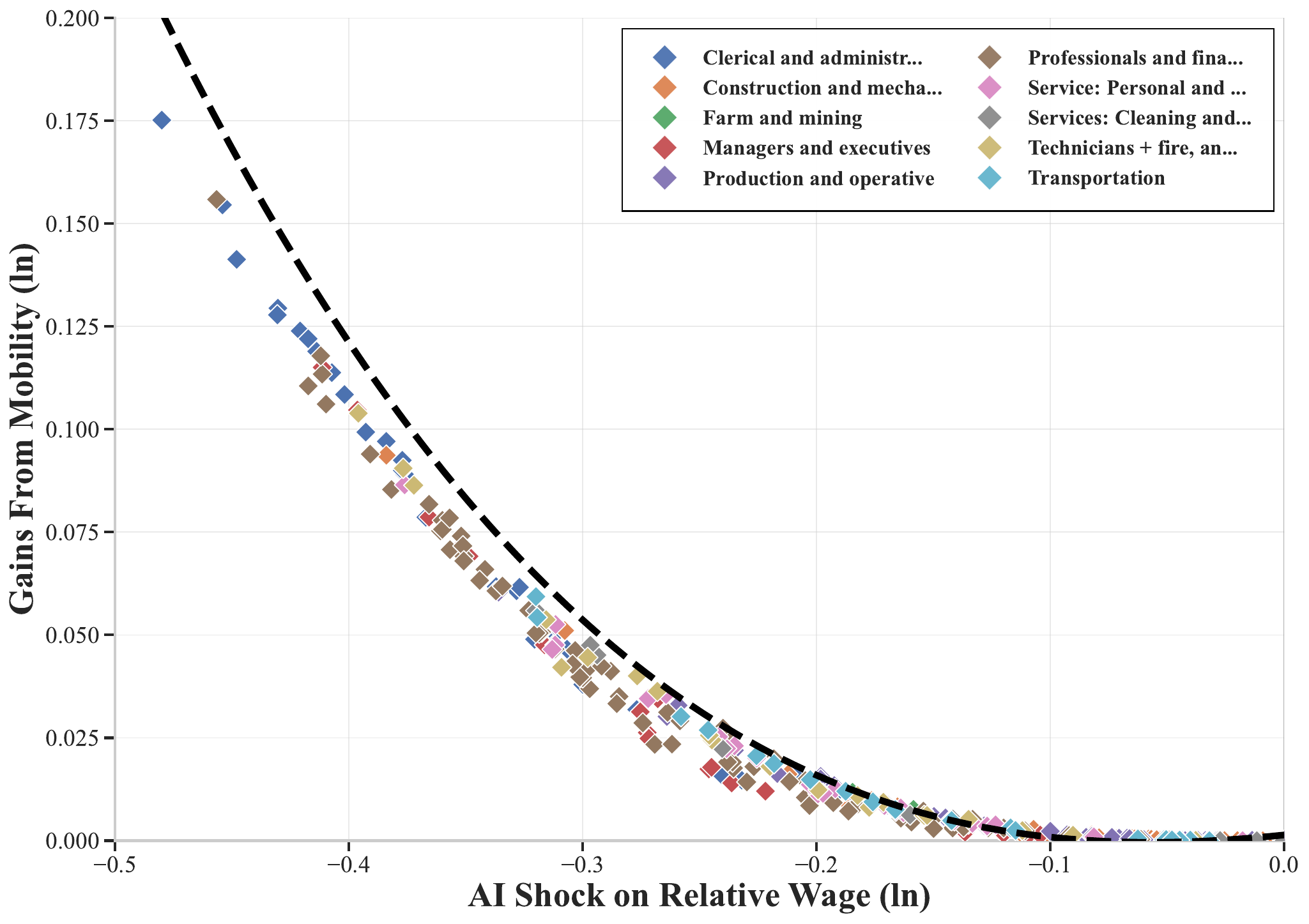}}
    \caption{Welfare Recovery Through Occupational Mobility}
    \label{f:gains_from_mobility}
    \note{\textit{Notes:} Welfare gains from transitions (equivalent variation) versus wage effects of technological shocks. Points represent occupations (DIDES model); black dashed curves show CES predictions.}
\end{figure}

Figure~\ref{f:gains_from_mobility} quantifies mobility gains as equivalent variation: $\mathrm{EV}_o = \sum_{o'} \mu_{oo'} \cdot \frac{\hat{w}_{o'}}{\hat{w}_{o}}$. The convex relationship confirms that larger shocks induce more transitions. Yet the systematic gap between our estimates and CES predictions reveals how clustering constrains recovery.

Panel A exposes automation's mobility trap: workers experiencing 40 log point wage declines recover only 20\% of losses through transitions, versus 30\% under CES—a 50\% overstatement. This gap emerges from a damaging interaction: automation concentrates in manual occupations where workers have the lowest skill transferability ($\rho_{\text{man}} = 0.48$). Production workers cannot escape to construction or transportation—their most productive alternatives—because these face similar threats. The clustering that drives wage losses simultaneously blocks escape routes. Standard models using average elasticity $\theta = 3.12$ miss this interaction, falsely implying that manual workers enjoy the same mobility as others.

Panel B reveals AI's contrasting pattern: cognitive workers experiencing 40 log point wage declines recover approximately 27\% of losses, still below CES predictions but notably higher than automation's impact. This improvement directly reflects our estimated $\rho_{\text{cog}} = 0.77$—cognitive skills transfer more readily across occupations. A threatened data analyst has more viable alternatives than a displaced welder, even when many cognitive occupations face AI exposure. Yet recovery remains limited: even with higher transferability, clustering ensures that workers' best alternatives are often similarly threatened.

\subsection{Summary: The Complete Incidence Picture}

Our three-pronged analysis reveals how technological clustering fundamentally reshapes labor market adjustment, generating more severe and persistent inequality than standard frameworks predict.

The spectral decomposition established that automation and AI concentrate on eigenshocks with the smallest eigenvalues (1.8--2.0), directing disruption through the labor market's most rigid adjustment channels. This concentration manifests in heterogeneous wage pass-through: our estimates reveal 20--50\% of demand shifts for both automation and AI translate to wages—substantially exceeding the 30\% CES benchmark. Heavily automation-exposed occupations face pass-through rates reaching 45\%, implying 50\% larger wage effects than standard models predict.

Mobility provides limited insurance against these losses. Workers heavily exposed to automation recover only 20\% of wage declines through occupational transitions, compared to 30\% under conventional assumptions. This constraint emerges from a damaging interaction: technological shocks cluster precisely where skill transferability is weakest. Automation targets manual occupations with $\rho_{\text{man}} = 0.48$, creating a double bind—large losses with minimal recovery options. AI focuses on cognitive occupations where $\rho_{\text{cog}} = 0.77$ offers better prospects, yet clustering still constrains escape routes.

The crucial insight connecting theory to empirics: incidence depends not only on average substitutability but on the interaction between shock distribution and substitution structure. In our framework, pass-through matrix $\Delta = (\mathbf{I} + \Theta/\sigma)^{-1}$ captures this interaction—when shocks align with low-eigenvalue eigenshocks (clustering), the labor market cannot dissipate shocks through reallocation, forcing adjustment through wages. This structural mechanism, absent from models assuming uniform elasticity, explains why technological change generates such pronounced and persistent distributional consequences.
\section{Dynamic Extension with Transition} \label{s:dynamic_model}
Our model retains the Roy structure, allowing it to be embedded within related frameworks while incorporating richer substitution patterns through DIDES. In this section, we extend the static model into a dynamic discrete choice framework \citep{Artuc2010-xh, Caliendo2019-oi} to capture gradual labor market transitions. This extension allows us to examine the dynamic labor market incidence of technological adoption in both the transition and the long run \citep{lehr2022optimal, Adao2024-ub}.

\subsection{Dynamic Discrete Choice with DIDES} \label{s:dynamic_dides}

Our focus is on studying the dynamic labor market incidence of technological adoption rather than modeling firms' endogenous technology adoption decisions. The production side remains identical to the static framework, while workers make rational, forward-looking occupational choices in response to automation and AI shocks. To model these choices, we adopt a structure similar to \cite{Caliendo2019-oi} with a correlated productivity distribution among jobs.

\paragraph{Workers' Dynamic Decision}

In each period, we denote the vector of occupational employment by $\boldsymbol{L}_t $. Workers are assumed to be hand-to-mouth, taking the wage path $\left\{\boldsymbol{w}_t\right\}_{t=0}^{\infty}$ as given, and derive utility from consumption and labor supply according to:
\begin{equation*}
    U\left(\left\{c_t(i), \ell_t(i)\right\}_{t=0}^{\infty}\right) = \sum_{t=0}^{\infty} \beta^t\left(\ln c_t(i)-\ln \ell_t(i)\right)
\end{equation*}

At the beginning of each period, workers draw labor productivity across all occupations from the same distribution as in the static model:\footnote{Unlike \cite{dvorkin2019occupation} and \cite{seo2023sectoral}, we abstract from persistent worker heterogeneity while allowing for a flexible substitution structure to focus on incidence. Incorporating persistent worker heterogeneity would allow for richer welfare analysis but would substantially increase data requirements, necessitating long panels with large numbers of individuals.}
\begin{equation*}
    \operatorname{Pr}\left[\epsilon_1(i) \leq \epsilon_1, \ldots, \epsilon_O(i) \leq \epsilon_O\right]=\exp \left[-F\left(A_1 \epsilon_1^{-\theta}, \ldots, A_O \epsilon_O^{-\theta}\right)\right]
\end{equation*}

After observing their labor productivity, workers choose an occupation, with consumption equal to occupational income, $c_t = w_{o,t}$, and labor supply given by:
\begin{equation*}
    \ln \left( \ell_t \left( i\right)\right) = - \kappa \ln \left(\epsilon_{o,t} \left( i\right) \right)
\end{equation*}

In contrast to the static model, productivity enters the labor supply function with a short-run disutility factor $\kappa$, which governs short-run labor supply elasticity. While we could allow workers to redraw labor productivity with some probability, doing so yields the same sufficient statistics for counterfactual welfare and similar dynamics.\footnote{Since we estimate the short-run elasticity from observed transition data, the key counterfactual objects depend only on these estimated elasticities, not on whether productivity redraws occur through a probability mechanism or our baseline specification. See \cite{dvorkin2019occupation} for a similar insight regarding sufficient statistics in dynamic discrete choice models.} Additionally, workers incur job transition costs $\tau_{oo^\prime}$ when switching occupations.


\begin{assumption}
    Job transition cost is constant over time $\tau_{oo^\prime}$ and measured in terms of utility.
\end{assumption}

Given this economic environment, we formulate workers' decisions recursively via the following Hamilton-Jacobi-Bellman (HJB) equation:
\begin{equation*}
    v_{o,t}\left(\boldsymbol{\epsilon}_{t}\right)=\max_{o^{\prime}}\left\{ \ln w_{o^{\prime},t}+\kappa\ln\epsilon_{o^{\prime},t}+\beta V_{o,t+1}-\tau_{oo^{\prime}}\right\}
\end{equation*}
where $V_{o,t+1}=\mathrm{E}_{\boldsymbol{\epsilon}}\left[v_{o,t+1}\left(\boldsymbol{\epsilon}\right)\right]$ and $v_{o,t}\left(\boldsymbol{\epsilon}_{t}\right)$ denote a worker's lifetime utility in occupation $o$ after observing their productivity. This utility comprises current-period benefits $\ln w_{o^{\prime},t}+\kappa\ln\epsilon_{o^{\prime},t}$ and discounted expected future utility $V_{o,t+1}$, net of job transition cost $\tau_{oo^{\prime}}$. Workers choose occupation $o^\prime$ to maximize lifetime utility.

Consequently, we can recursively express occupational expected utility as:
\begin{align*}
    V_{o,t} &=\ln\left(F\left(A_{1,t}Z_{o1,t}{}^{\frac{\theta}{\kappa}},\ldots,A_{O,t}Z_{oO,t}^{\frac{\theta}{\kappa}}\right)^{\frac{\kappa}{\theta}}\right)+\bar{\gamma}\frac{\kappa}{\theta} \\
    \text{where } Z_{oo^{\prime},t} &= \exp\left(\beta V_{o^{\prime},t+1}+\ln w_{o^{\prime},t}-\tau_{oo^{\prime}}\right)
\end{align*}

Additionally, job transition probability can be derived as shown in Appendix \ref{a:appendix:dynamic_worker}:
\begin{equation*}
    \mu_{oo^{\prime},t}=\frac{A_{o^{\prime},t}Z_{oo^{\prime},t}{}^{\frac{\theta}{\kappa}}\times F_{o^{\prime}}\left(A_{1,t}Z_{o1,t}{}^{\frac{\theta}{\kappa}},\ldots,A_{O,t}Z_{oO,t}^{\frac{\theta}{\kappa}}\right)}{F\left(A_{1,t}Z_{o1,t}{}^{\frac{\theta}{\kappa}},\ldots,A_{O,t}Z_{oO,t}^{\frac{\theta}{\kappa}}\right)}
\end{equation*}

The interplay of job transition costs and idiosyncratic productivity shocks generates slow labor market adjustments in our model. A key distinction of our approach is that it allows for rich substitution patterns between jobs, as embedded in the correlation function $F$. When $F$ is additive, our framework reduces to the standard model.

\paragraph{Dynamic Equilibrium}
The production side follows static firm optimization as discussed in the static model. The share of tasks performed by labor, denoted by $\left\{ \boldsymbol{\alpha}_t\right\}_{t=0}^{\infty}$, characterizes distributional effects of technological adoption, while aggregate productivity, $\left\{\mathcal{A}_t\right\}_{t=0}^{\infty}$, is Hicks-neutral (see Appendix \ref{app_ss:production}). All other occupational labor productivity is represented by $\left\{\boldsymbol{A}_t\right\}_{t=0}^{\infty}$.

Given time-varying fundamentals $\left\{\Psi_t \right\}_{t = 0}^{\infty}=\left\{ \boldsymbol{\alpha}_t, \mathcal{A}_t, \boldsymbol{A}_t\right\}_{t=0}^{\infty}$, we define a dynamic equilibrium under rational expectations. In this equilibrium, there exists a time path of wages $ \left\{ \boldsymbol{w}_{t}\right\}_{t=0}^{\infty}$, occupational allocations $\left\{ \boldsymbol{L}_{t}\right\}_{t=0}^{\infty}$, and job transition probabilities $ \left\{ \mu_{t}\right\}_{t=0}^{\infty}$ such that:
\begin{enumerate}
    \item Wage vector $\boldsymbol{w}_t = \boldsymbol{w}\left(\boldsymbol{\alpha}_t, \mathcal{A}_t, \boldsymbol{L}_t\right)$ solves the static production equilibrium.
    \item Workers' optimal occupational choices yield job transitions $\mu_{t}=\left\{ \mu_{oo^{\prime},t}\right\} _{o=1,o^{\prime}=1}^{O,O}$.
    \item Labor allocation evolves according to $L_{o, t}=\sum_{o^{\prime}} \mu_{o o^{\prime}, t} L_{o, t-1}$.\footnote{We construct job flows from retrospective responses in the March CPS; consequently, aggregate flows derived from these responses do not directly match observed occupational employment levels. To account for this discrepancy, we adjust the evolution of occupational employment as $L_{o, t}=\sum_{o^{\prime}} \mu_{o o^{\prime}, t} L_{o, t-1} + \Delta L_{o,t}$.}
\end{enumerate}
\subsection{Dynamic Hat Algebra with Correlation}

In this section, we extend the dynamic hat algebra solution method to incorporate correlated productivity distributions, enabling richer substitution patterns. The model addresses key counterfactual questions: What would have happened to the wage distribution if automation technologies had not been adopted? How much can the labor market absorb unequal demand shocks caused by AI if AI technologies are adopted to the same extent as automation, but at a much more rapid pace by 2030?

Formally, our counterfactual analysis studies how equilibrium allocations across occupations and over time change relative to a baseline economy when faced with an alternative sequence of fundamentals, denoted by $\left\{\Psi^{\prime}_t\right\}_{t=1}^{\infty}$. We examine how changes in these counterfactual fundamentals affect equilibrium outcomes of interest.

To facilitate characterization of the dynamic equilibrium, we introduce additional notation. For any scalar or vector $x$, we denote its proportional change between periods $t$ and $t + 1$ as $\dot{x}_{t + 1} = x_{t+1}/x_t$. Additionally, $x_t^\prime$ denotes the corresponding variable in the counterfactual economy. Finally, we define $\hat{x}_{t+1} = \dot{x}_{t+1}^{\prime}/\dot{x}_{t+1}$, which represents the ratio of the time change in the counterfactual equilibrium to that in the initial equilibrium.

Before characterizing counterfactual outcomes, we introduce the correlation-adjusted transition probability:
\begin{equation*}
    \tilde{\mu}_{oo^{\prime},t}=A_{o^{\prime},t}Z_{oo^{\prime},t}{}^{\frac{\theta}{\kappa}}/F\left(A_{1,t}Z_{o1,t}{}^{\frac{\theta}{\kappa}},\ldots,A_{O,t}Z_{oO,t}^{\frac{\theta}{\kappa}}\right)
\end{equation*}
which serves as a sufficient statistic. Note that when $F$ is additive, $\tilde{\mu}_t$ coincides with $\mu_t$ since $\tilde{\mu}_{oo^{\prime},t} = \mu_{oo^{\prime},t} / F_{o^\prime}$.

\begin{lemma}
    Given correlation function $F$, there exists a unique mapping between occupation transition probability $\mu_t$ and correlation-adjusted transition probability $\tilde{\mu}_t$:
    \begin{equation*}
        \Bigl\{ \mu_{oo^{\prime},t} = \tilde{\mu}_{oo^{\prime},t} \, F_{o^{\prime}}\Bigl(\tilde{\mu}_{o1,t},\ldots,\tilde{\mu}_{oO,t}\Bigr) \Bigr\}_{o^{\prime}=1}^{O},\quad \forall\, o
    \end{equation*}
\end{lemma}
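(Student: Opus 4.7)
The plan is to establish the stated formula in the forward direction, and then argue invertibility via the homogeneity structure of $F$. First, fixing the origin occupation $o$ and writing $x_{o'} \equiv A_{o',t} Z_{oo',t}^{\theta/\kappa}$, the definitions in the text give $\tilde{\mu}_{oo',t} = x_{o'}/F(x)$ and $\mu_{oo',t} = x_{o'} F_{o'}(x)/F(x)$. Because $F$ is homogeneous of degree one (Appendix~\ref{app_sss:correlation_properties}), each partial $F_{o'}$ is homogeneous of degree zero, so
\begin{equation*}
F_{o'}(\tilde{\mu}_{o1,t},\ldots,\tilde{\mu}_{oO,t}) = F_{o'}\!\left(\frac{x_1}{F(x)},\ldots,\frac{x_O}{F(x)}\right) = F_{o'}(x).
\end{equation*}
Substituting this back into $\tilde{\mu}_{oo',t} \cdot F_{o'}(\tilde{\mu}_{o\cdot,t})$ reproduces $x_{o'} F_{o'}(x)/F(x) = \mu_{oo',t}$, which verifies the forward formula.

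For the uniqueness half, I would first observe that the forward map has image on the level set $\mathcal{L} \equiv \{y \in \mathbb{R}_+^O : F(y) = 1\}$, since homogeneity gives $F(\tilde{\mu}_{o\cdot,t}) = F(x)/F(x) = 1$. Moreover, Euler's theorem applied to $F$ yields $\sum_{o'} y_{o'} F_{o'}(y) = F(y)$, which combined with the normalization $\sum_{o'} \mu_{oo',t} = 1$ forces any preimage $\tilde{\mu}$ of a valid $\mu$ to satisfy $F(\tilde{\mu}) = 1$. Uniqueness therefore reduces to showing that the restricted map $\Phi: \mathcal{L} \to \Delta^{O-1}$ defined by $\Phi_{o'}(y) = y_{o'} F_{o'}(y)$ is injective.

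The main obstacle I anticipate is this injectivity of $\Phi$. My plan is to invoke the sign-switching (gross-substitutes) property from Appendix~\ref{app_sss:correlation_properties}, which delivers $F_{oo'} \leq 0$ for $o \neq o'$, combined with homogeneity of degree one. Together these imply that $\ln F$ is concave in the logarithms of its arguments, so the vector of log-shares $(\ln \Phi_{o'}(y))_{o'}$ defines a strictly monotone transformation of $\ln y$ on $\mathcal{L}$, preventing two distinct points of $\mathcal{L}$ from generating identical share vectors. An equivalent duality argument would treat the system $\{y_{o'} F_{o'}(y) = \mu_{o'}\}$ with $F(y)=1$ as the first-order conditions of a strictly concave program---analogous to recovering indirect-utility primitives from expenditure shares via Hotelling's lemma---which admits a unique maximizer. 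As a concrete check, I would verify the inversion in closed form for the CNCES parameterization in~\eqref{eq:corr_cnces} by inverting each skill-specific CES aggregator, confirming that the abstract argument specializes correctly to the empirically relevant functional form used throughout the paper.
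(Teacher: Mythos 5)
Your forward direction coincides exactly with the paper's argument: setting $x_{o'} = A_{o',t}Z_{oo',t}^{\theta/\kappa}$, homogeneity of degree zero of $F_{o'}$ gives $F_{o'}(\tilde{\mu}_{o1,t},\ldots,\tilde{\mu}_{oO,t}) = F_{o'}(x)$, and substitution recovers $\mu_{oo',t} = x_{o'}F_{o'}(x)/F(x)$. That identity is in fact all the paper itself establishes—Appendix~\ref{a:appendix:occ_switch} (and the static analogue in Appendix~\ref{app_ss:hat_algebra}) derive the relation and then simply \emph{assert} the one-to-one property—so your attempt to actually prove injectivity goes beyond the paper's text. Your two reductions are correct and clean: homogeneity of degree one gives $F(\tilde{\mu}_{o1,t},\ldots,\tilde{\mu}_{oO,t}) = F(x)/F(x) = 1$, and Euler's theorem together with $\sum_{o'}\mu_{oo',t}=1$ forces any candidate preimage onto the level set $\mathcal{L}=\{y: F(y)=1\}$, so uniqueness is exactly injectivity of $\Phi_{o'}(y)=y_{o'}F_{o'}(y)$ on $\mathcal{L}$.

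However, the mechanism you invoke for injectivity has a sign error: $G(z)\equiv \ln F(e^{z_1},\ldots,e^{z_O})$ is \emph{convex} in $z$, not concave. Check CES: $G(z) = (1-\rho)\ln\sum_o e^{z_o/(1-\rho)}$ is a scaled log-sum-exp, hence convex; more generally $G(z)+\bar{\gamma}$ is the expected maximum utility of the MEV distribution generated by $F$, a supremum of affine functions, hence convex. (Your duality remark is actually consistent with this once stated carefully: $\mu = \nabla G(z)$ is the first-order condition of maximizing the concave objective $\sum_{o'}\mu_{o'}z_{o'} - G(z)$, concave precisely because $G$ is convex.) The repaired argument runs as follows: $\Phi$ on $\mathcal{L}$ is $\nabla G$ restricted to $\{G=0\}$; translation invariance $G(z+c\mathbf{1})=G(z)+c$ makes $G$ affine along $\mathbf{1}$; and the Hessian satisfies $H = \mathrm{diag}(\boldsymbol{\pi})\,\Theta/\theta$, where $\boldsymbol{\pi}=\nabla G(z)$, so by Lemma~\ref{lemma:eigenvalues} its kernel is exactly $\mathrm{span}(\mathbf{1})$ under strict gross substitutes. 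Then $\nabla G(z)=\nabla G(z')$ implies $G$ is affine on the segment $[z,z']$, hence $z-z'\propto \mathbf{1}$, and $G(z)=G(z')=0$ forces $z=z'$. Two caveats: like the paper's own eigenvalue proof, this requires the strict version of the sign-switching property ($F_{oo'}<0$ for $o\neq o'$, whereas Appendix~\ref{app_sss:correlation_properties} only states $F_{oo'}\leq 0$); and your proposed closed-form verification for CNCES will not go through as stated—because each occupation loads on multiple skill nests in equation~\eqref{eq:corr_cnces}, the system does not invert nest-by-nest in closed form, and in practice the inversion is computed by fixed-point iteration as in the paper's hat-algebra algorithm.
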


With this correlation-adjusted transition probability, we introduce the dynamic hat algebra with correlation to analyze how economic outcomes change counterfactually. Specifically, we study how allocations and wages across occupations evolve over time in response to alternative sequences of fundamentals, denoted by $\left\{\hat{\Psi}_t\right\}_{t=1}^{\infty}$.

\begin{proposition}[Dynamic Hat Algebra with Correlation] \label{prop:6}
    For time-varying counterfactual changes in fundamentals $\left\{\hat{\Psi}_t\right\}_{t=1}^{\infty} = \left\{\hat{\boldsymbol{\alpha}}_t, \hat{\mathcal{A}}_t, \hat{\boldsymbol{A}}_t\right\}_{t=1}^{\infty}$ with $\lim _{t \rightarrow \infty} \hat{\Psi}_t=1$, observed allocations and transition probability $\left\{\boldsymbol{L}_t, \mu_{t}\right\}_{t=0}^{\infty}$ are sufficient to characterize counterfactual changes in allocations, wages, and expected utility ($u_{o,t} = \exp\left(V_{o,t} \right)$). Formally:
    \begin{itemize}
        \item Counterfactual changes in wages $\hat{\boldsymbol{w}}_t = \hat{\boldsymbol{w}}\left( \hat{\boldsymbol{\alpha}}_t, \hat{\mathcal{A}}_t, \hat{\boldsymbol{L}}_t \right)$ solve the static production equilibrium.
        \item Counterfactual correlation-adjusted transition probability is:
        \begin{equation*}
            \tilde{\mu}_{oo^{\prime},t}^{\prime} = \frac{\tilde{\mu}_{oo^{\prime},t-1}^{\prime}\dot{\tilde{\mu}}_{oo^{\prime},t}\hat{A}_{o^{\prime},t}\hat{u}_{o^{\prime},t+1}^{\beta\frac{\theta}{\kappa}}\hat{w}_{o^{\prime},t}^{\frac{\theta}{\kappa}}}{F\left(\left\{ \tilde{\mu}_{oo^{\prime\prime},t-1}^{\prime}\dot{\tilde{\mu}}_{oo^{\prime\prime},t}\hat{A}_{o^{\prime\prime},t}\hat{u}_{o^{\prime\prime},t+1}^{\beta\frac{\theta}{\kappa}}\hat{w}_{o^{\prime\prime},t}^{\frac{\theta}{\kappa}}\right\} _{o^{\prime\prime}=1}^{O}\right)}
        \end{equation*}
        \item Counterfactual change in utility is characterized by:
        \begin{equation*}
            \hat{u}_{o,t+1} = F\left(\left\{ \tilde{\mu}_{oo^{\prime\prime},t}^{\prime}\dot{\tilde{\mu}}_{oo^{\prime\prime},t+1}\hat{A}_{o^{\prime\prime},t+1}\hat{u}_{o^{\prime\prime},t+2}^{\beta\frac{\theta}{\kappa}}\hat{w}_{o^{\prime\prime},t+1}^{\frac{\theta}{\kappa}}\right\} _{o^{\prime\prime}=1}^{O}\right)^{\frac{\kappa}{\theta}}
        \end{equation*}
        with terminal condition $\lim_{t\rightarrow \infty}\hat{u}_{o,t} = 1$.
        \item Counterfactual occupational allocation evolves according to $L_{o^{\prime},t}^{\prime}=\sum_{o}\mu^{\prime}_{oo^{\prime},t}L_{o,t-1}^{\prime}$.
    \end{itemize}
\end{proposition}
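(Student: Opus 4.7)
The plan is to follow the dynamic hat-algebra strategy of \cite{Caliendo2019-oi}, adapted to accommodate the non-additive correlation function $F$. The central mechanism is that homogeneity of degree one of $F$ plays the role that simple summation plays in the standard Fréchet case: it allows the unobserved productivity levels $\{A_{o,t}\}$ and wage levels $\{w_{o,t}\}$ to cancel when we take proportional changes between the baseline and counterfactual economies.

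I would first rewrite the Bellman and transition-probability expressions from Section~\ref{s:dynamic_dides} in exponentiated form. Setting $u_{o,t} = \exp(V_{o,t})$ and $\tilde{Z}_{oo',t} = A_{o',t}\,Z_{oo',t}^{\theta/\kappa}$ with $Z_{oo',t} = u_{o',t+1}^{\beta}\,w_{o',t}\,e^{-\tau_{oo'}}$, the equilibrium restrictions become
\begin{equation*}
    u_{o,t}^{\theta/\kappa} = e^{\bar{\gamma}}\, F\bigl(\tilde{Z}_{o1,t},\ldots,\tilde{Z}_{oO,t}\bigr), \qquad \tilde{\mu}_{oo',t} = \frac{\tilde{Z}_{oo',t}}{F\bigl(\tilde{Z}_{o1,t},\ldots,\tilde{Z}_{oO,t}\bigr)}.
\end{equation*}
Two identities then follow. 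First, homogeneity of $F$ gives $F\bigl(\{\tilde{\mu}_{oo'',t}\}_{o''}\bigr) = 1$ for every $o$ and $t$, a self-normalizing property that carries over to the counterfactual and underlies the preceding lemma. Second, both expressions hold in the baseline and primed economies with identical transition costs $\tau_{oo'}$, so taking the counterfactual-to-baseline ratio yields $\hat{\tilde{Z}}_{oo',t} = \hat{A}_{o',t}\,\hat{u}_{o',t+1}^{\beta\theta/\kappa}\,\hat{w}_{o',t}^{\theta/\kappa}$, in which the unknown level of productivity has dropped out.

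Next, I would derive the two recursions in the proposition by substituting these identities back into the definitions. For the transition probability, writing $\tilde{Z}_{oo',t}'$ as the product of a lagged counterfactual term, a baseline time change, and a current counterfactual change, and then invoking the homogeneity of $F$ in the denominator, exchanges the unobservable $\tilde{Z}'$ arguments for the observable combination $\tilde{\mu}_{oo',t-1}'\,\dot{\tilde{\mu}}_{oo',t}$ scaled by $\hat{A}_{o',t}\,\hat{u}_{o',t+1}^{\beta\theta/\kappa}\,\hat{w}_{o',t}^{\theta/\kappa}$. For the utility change, dividing the counterfactual Bellman equation by the baseline one and factoring $F(\tilde{Z}_{t+1,o})$ through homogeneity yields the stated formula; substituting the $\tilde{\mu}'$ recursion one step back produces the specific combination $\tilde{\mu}_{oo'',t}'\,\dot{\tilde{\mu}}_{oo'',t+1}$ appearing inside $F$. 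The evolution of $\boldsymbol{L}'$ is then immediate from the equilibrium definition together with $\mu_{oo',t}' = \tilde{\mu}_{oo',t}'\,F_{o'}\bigl(\tilde{\mu}_{o1,t}',\ldots,\tilde{\mu}_{oO,t}'\bigr)$ supplied by the preceding lemma.

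The main obstacle is the coupled, self-referential structure of the system: the utility changes $\{\hat{u}_{t+1}\}$ enter the $\{\tilde{\mu}_t'\}$ recursion and vice versa, and both depend on the wage path $\{\hat{w}_t\}$ that in turn depends on the evolving counterfactual allocations $\{\hat{L}_t\}$. I would resolve this with a backward-forward iteration anchored by the terminal condition $\lim_{t\to\infty}\hat{u}_{o,t}=1$ and the initial conditions $\tilde{\mu}_{oo',0}' = \tilde{\mu}_{oo',0}$, $L_{o,0}'=L_{o,0}$: guess a candidate path $\{\hat{u}_t^{(k)}\}$, propagate $\tilde{\mu}'$ and $\boldsymbol{L}'$ forward while solving the static production block $\hat{w}_t = \hat{w}(\hat{\boldsymbol{\alpha}}_t,\hat{\mathcal{A}}_t,\hat{\boldsymbol{L}}_t)$ at each date, then update utilities backward from the terminal condition via the $\hat{u}$ recursion, and iterate to a fixed point. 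Uniqueness follows from the contraction induced by $\beta<1$ together with strict monotonicity and homogeneity of $F$; existence and uniqueness of each static wage step are inherited from Appendix~\ref{app_ss:equilibrium_proof}.
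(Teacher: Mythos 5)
Your core derivation is the paper's own: exponentiating the Bellman equation to get $u_{o,t}^{\theta/\kappa} = e^{\bar{\gamma}}\,F\bigl(\tilde{Z}_{o1,t},\ldots,\tilde{Z}_{oO,t}\bigr)$ and $\tilde{\mu}_{oo',t} = \tilde{Z}_{oo',t}/F(\cdot)$, using degree-one homogeneity of $F$ to cancel the levels of $\{A_{o,t}\}$, $\{w_{o,t}\}$ and the time-invariant $\tau_{oo'}$, converting ratios of growth rates into the two stated recursions, recovering observed flows via $\mu_{oo',t}' = \tilde{\mu}_{oo',t}' F_{o'}(\tilde{\mu}_{o1,t}',\ldots,\tilde{\mu}_{oO,t}')$, and solving by backward-forward iteration anchored at $\lim_{t\to\infty}\hat{u}_{o,t}=1$. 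All of that is correct and matches Appendix C's derivation step for step, including the self-normalization $F(\{\tilde{\mu}_{oo'',t}\}_{o''})=1$.

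The genuine gap is your initialization at $t=1$. The generic recursion for $\tilde{\mu}_{oo',t}'$ presumes that the lagged input $\tilde{\mu}_{oo',t-1}'$ is \emph{expectation-consistent} with the counterfactual, i.e., built from $\tilde{Z}_{oo',t-1}'$ embedding the counterfactual continuation value $u_{o',t}'$. You initialize with $\tilde{\mu}_{oo',0}' = \tilde{\mu}_{oo',0}$, which is correct as the \emph{realized} object when the shock is unexpected, but it embeds the baseline continuation $u_{o',1}$, not $u_{o',1}'$ — exactly the wedge an unexpected shock creates. Computing directly, since $A'_{o',0}=A_{o',0}$ and $w'_{o',0}=w_{o',0}$ but $u_{o',2}'/u_{o',2} = \hat{u}_{o',1}\hat{u}_{o',2}$, one gets $\tilde{Z}_{oo',1}'/\tilde{Z}_{oo',1} = \hat{A}_{o',1}\bigl(\hat{u}_{o',1}\hat{u}_{o',2}\bigr)^{\beta\theta/\kappa}\hat{w}_{o',1}^{\theta/\kappa}$, so the correct first-period formula replaces $\tilde{\mu}_{oo',0}'\dot{\tilde{\mu}}_{oo',1} = \tilde{\mu}_{oo',1}$ by $\vartheta_{oo',1} = \tilde{\mu}_{oo',1}\hat{u}_{o',1}^{\beta\theta/\kappa}$ inside both the $\tilde{\mu}'$ and $\hat{u}$ updates — the paper's separate time-$0$ expressions exist precisely for this reason, and the proof pointer in the main text flags them. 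Your forward propagation omits the $\hat{u}_{o',1}^{\beta\theta/\kappa}$ revision-of-expectations factor, so the fixed point your algorithm converges to is the wrong path (the error at $t=1$ propagates through every subsequent period and through the backward utility sweep). Separately, your uniqueness claim — a contraction from $\beta<1$ plus monotonicity and homogeneity of $F$ — is asserted rather than established; the proposition claims only that observed $\{\boldsymbol{L}_t,\mu_t\}$ suffice to \emph{characterize} counterfactual changes, so either supply the contraction argument or confine the claim to existence of the characterization.
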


\begin{proof}
    See Appendix \ref{a:appendix:dynamic_hat_algebra} for proof and for the different expression for time $0$ that accounts for unexpected changes in fundamentals.
\end{proof}

Proposition~\ref{prop:6} demonstrates the sufficient statistic property of the dynamic hat algebra: observed allocations and transition probabilities fully characterize counterfactual outcomes under a new sequence of fundamentals. Moreover, it underscores the critical role of the substitution structure—captured by function $F$—which governs counterfactual outcomes. While observed allocations serve as sufficient statistics, the substitution structure characterized by $F$ determines how changes in fundamentals translate into counterfactual wages and allocations.\footnote{When $F$ is additive, we return to the standard dynamic hat algebra approach with independent productivity distribution.} Our static results, showing that clustering of technological changes combined with DIDES leads to unequal labor market incidence, persist in the dynamic framework.

Finally, as derived in Appendix \ref{a:appendix:dynamic_welfare_metric}, welfare change resulting from a shift in fundamentals—measured in terms of consumption equivalent variation—can be expressed as:
\begin{equation*}
    \mathrm{EV}_{o,t} = \left(1 - \beta \right)\sum_{s=t}^{\infty}\beta^{s-t}\ln\left(\hat{w}_{o,s}/\hat{\tilde{\mu}}_{oo,s}^{\frac{\kappa}{\theta}}\right)
\end{equation*}
Moreover, changes in occupation-specific correlation-adjusted staying probabilities capture gains from mobility, echoing results in \cite{Arkolakis2012-tk}, once the substitution structure is taken into account.

\subsection{Data and Estimation}

\paragraph{The Euler-Equation Approach}

Following the Euler-equation approach of \cite{Artuc2010-xh}, we estimate the short-run elasticity while accounting for correlation in the productivity distribution. Specifically, we derive the following estimating equation:
\begin{equation*}
    \ln\frac{\tilde{\mu}_{oo^{\prime},t}}{\tilde{\mu}_{oo,t}} = \frac{\theta}{\kappa}\ln\frac{w_{o^{\prime},t}}{w_{o,t}}+\beta\ln\frac{\tilde{\mu}_{oo^{\prime},t+1}}{\tilde{\mu}_{o^{\prime}o^{\prime},t+1}}+\left(\beta-1\right)\tau_{oo^{\prime}} + \nu_t
\end{equation*}
where $\nu_t$ is an error term. This expression parallels the formulation in \cite{Artuc2010-xh} (hereafter ACM) but uses correlation-adjusted transition probabilities. Intuitively, correlation-adjusted transitions incorporate information on expected future wages and the option value of job mobility, with adjusted future transitions serving as sufficient statistics for these option values (see Appendix \ref{a:appendix:dynamic_welfare_metric} for details). The key insight is that, after conditioning on these adjusted future values, the coefficient $\frac{\theta}{\kappa}$ represents the elasticity of relative correlation-adjusted transitions with respect to relative wage changes.\footnote{We cannot separately identify $\theta$ and $\kappa$, nor is this necessary, as they enter equilibrium dynamics and welfare metrics jointly, as demonstrated in Proposition \ref{prop:6}.} As in ACM, the theory implies that lagged values of wages and adjusted transitions are valid instruments.\footnote{The exclusion condition requires that the error term $\nu_t$ is serially uncorrelated. See ACM for detailed discussion.}

\paragraph{Data and Estimation Results}
Our estimation strategy requires aggregate job flows across occupations and average wages—data readily available from standard sources. We construct these measures using individual-level data from the US Census Bureau's March Current Population Survey (CPS). While our approach requires only aggregate transitions and wages, the limited CPS sample size necessitates grouping occupations. We therefore cluster occupations into 15 groups based on their skill intensities using a k-means algorithm. This grouping is natural, as occupations with similar skill intensities cluster together in skill space. We then compute annualized job transition probabilities among these 15 clusters, $\mu_t$, for the period 1976–2019. Appendix \ref{b:appendix:CPS} provides detailed discussion of data construction.

\begin{table}[ht]
\centering
\def\sym#1{\ifmmode^{#1}\else\(^{#1}\)\fi}
\caption{Estimation of Short-Run Elasticity $\theta/\kappa$}
\label{tab:short_run_elas}
\resizebox{0.5\linewidth}{!}{
\begin{tabular}{l*{4}{c}}
\toprule
            &\multicolumn{1}{c}{(1)} & \multicolumn{1}{c}{(2)} & \multicolumn{1}{c}{(3)} & \multicolumn{1}{c}{(4)} \\
            & OLS & IV & IV + Dest. FE & IV + Origin FE \\
\midrule
$\theta/\kappa$ & 0.063  & 0.071 & 0.068  & 0.080 \\
                & (0.018)        & (0.018)        & (0.025)       & (0.025)       \\
\midrule
Observations    & 630 & 630 & 630 & 630 \\
Destination FE  & No  & No  & Yes & No  \\
Origin FE       & No  & No  & No  & Yes \\
IV              & No  & Yes & Yes & Yes \\
\bottomrule
\end{tabular}
}
\note{\textit{Notes:} This table reports estimates of the short-run elasticity $\theta/\kappa$ from the Euler equation specification. Column (1) presents the baseline OLS estimate. Column (2) employs IV estimation using lagged adjusted job transition probabilities and wages as instruments. Columns (3) and (4) add destination and origin fixed effects, respectively, both with IV. Standard errors in parentheses.}
\end{table}

We use $\beta = 0.96$ as the annual discount factor. Table \ref{tab:short_run_elas} reports estimation results for the short-run elasticity $\frac{\theta}{\kappa}$. Column (1) presents the OLS estimate, yielding a short-run elasticity of 0.063. Column (2) implements an IV approach using lagged adjusted transition probabilities and wages as instruments, resulting in an estimate of 0.071. Columns (3) and (4) incorporate destination and origin fixed effects, respectively, yielding estimates of 0.068 and 0.080. While these estimates are broadly consistent, they are substantially lower than those reported in ACM, primarily due to our use of correlation-adjusted transition probabilities. As discussed in the static model, this adjustment nets out within-skill substitutability—a major source of variation in transition responses to relative wage changes. Moreover, grouping occupations by similar skill intensities further reduces across-cluster transition responses, contributing to smaller elasticity estimates.

\subsection{The Dynamic Incidence of Automation and AI}

We now assess the effects of automation and AI within a slow-adjustment labor market framework. In our quantitative evaluation, we employ 15 occupation clusters with transition probabilities constructed from CPS data. For counterfactual applications, we use elasticities from our static estimation, augmented by the short-run labor supply elasticity $\theta/\kappa = 0.07$ estimated via the Euler-equation approach. To maintain clarity, we focus on the time path of average effects, as cross-sectional heterogeneity closely mirrors results from the static model.

For automation technologies, we obtain ex-post estimates of their dynamic wage effects, as shown in Panel A of Figure \ref{f:auto_wage_emp}. Occupations with higher automation exposure have experienced gradual relative wage decline since 1985, resulting in up to 50\% difference between occupations where all tasks are exposed and those where none are. To match this observed wage trend, we calibrate the occupation-specific demand shocks $\left\{\hat{\boldsymbol{\alpha}}_t^{\text{Automation}}\right\}$. We then implement the following counterfactual: what would have occurred if these automation shocks had not materialized since 1985?\footnote{Since we focus on the distributional effects of automation exposure, we omit discussion of aggregate gains.} Figure \ref{f:dynamic_incidence} displays the resulting time path of average effects across occupations.

\begin{figure}[ht]
    \centering
    \subcaptionbox{Employment Effects of Automation}{\includegraphics[scale=0.25]{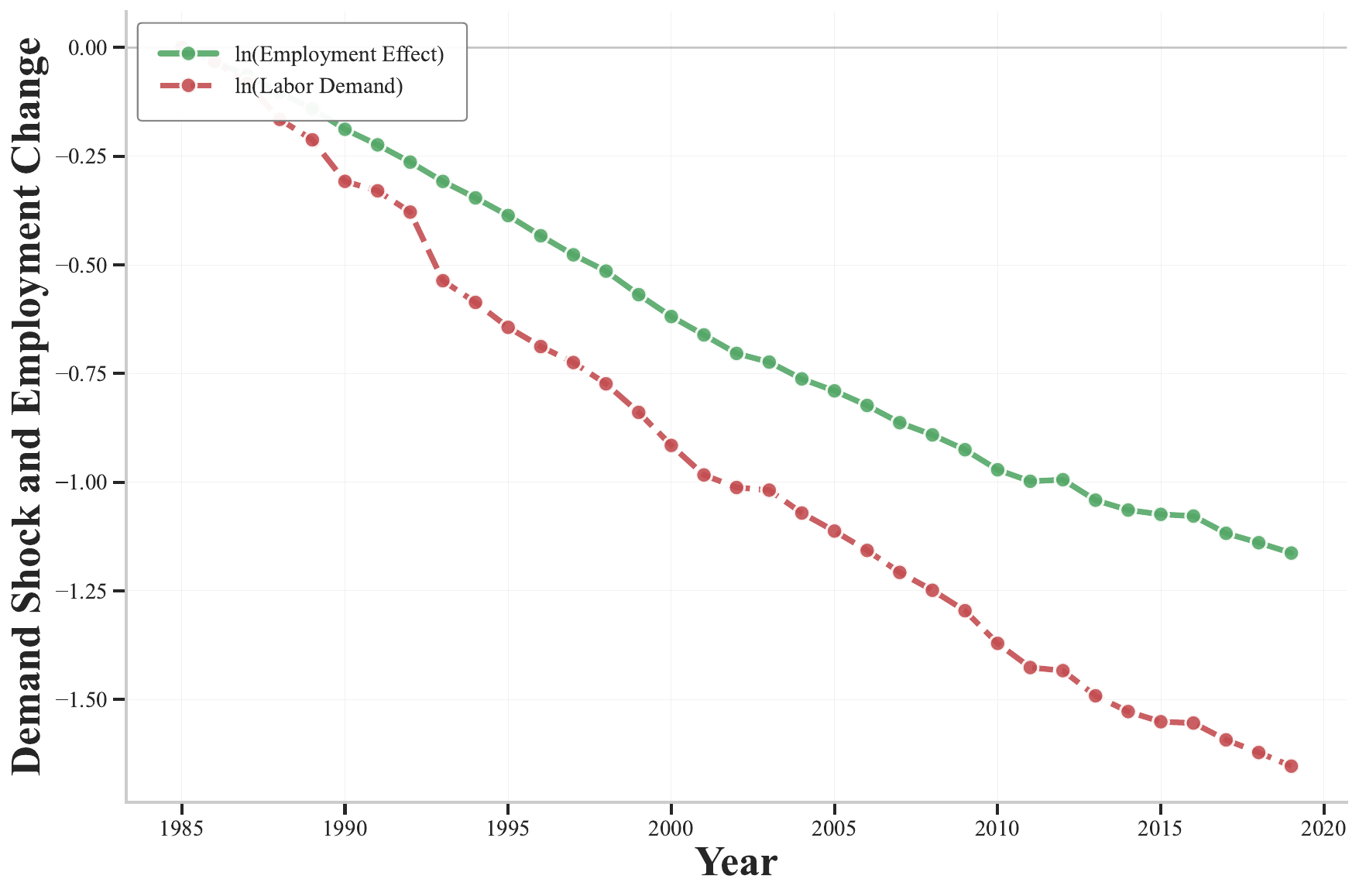}}\hfill
    \subcaptionbox{Wage Incidence of Automation}{\includegraphics[scale=0.25]{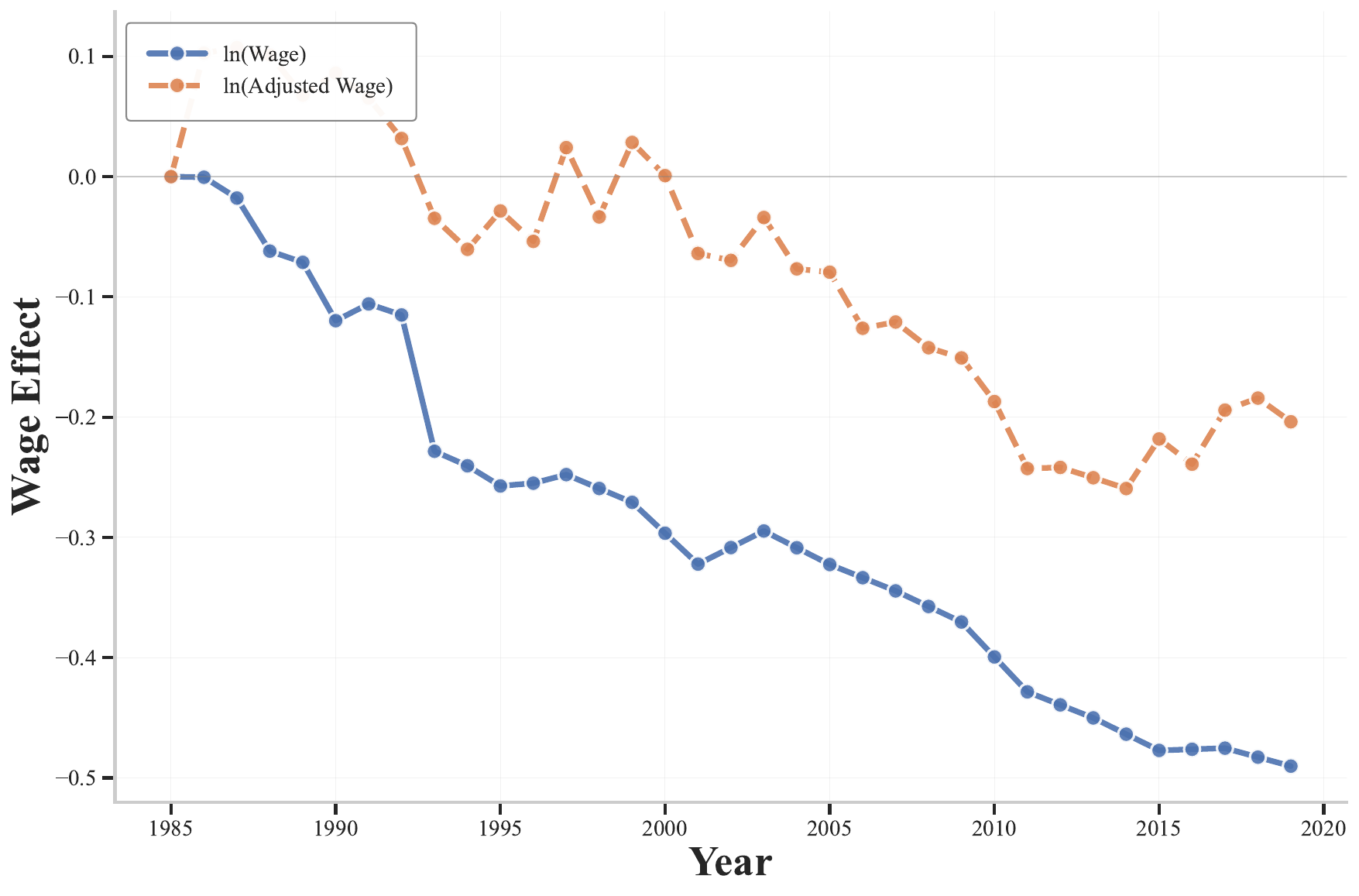}} 
    \subcaptionbox{Employment Effects of AI}{\includegraphics[scale=0.25]{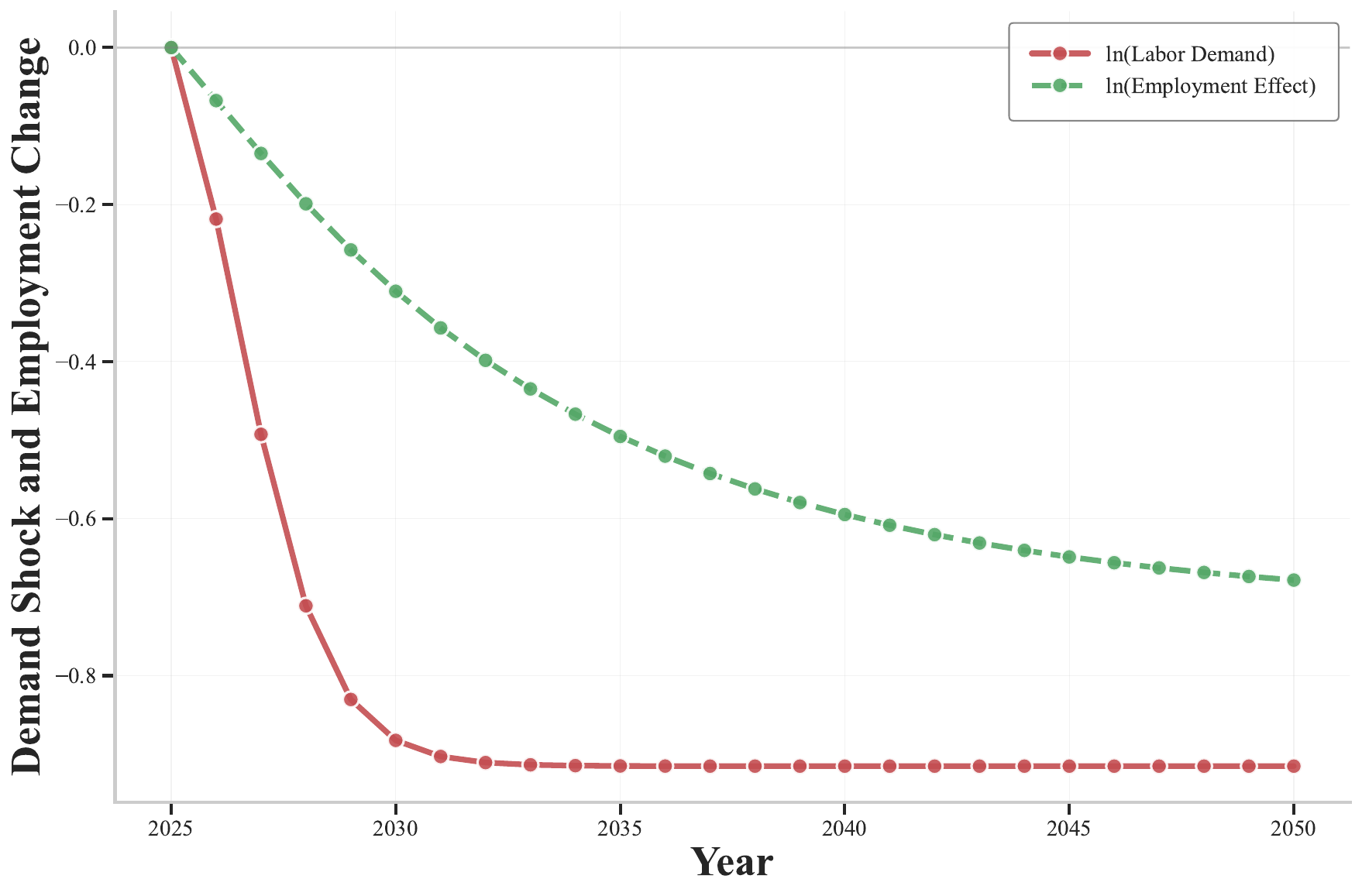}} \hfill
    \subcaptionbox{Wage Incidence of AI}{\includegraphics[scale=0.25]{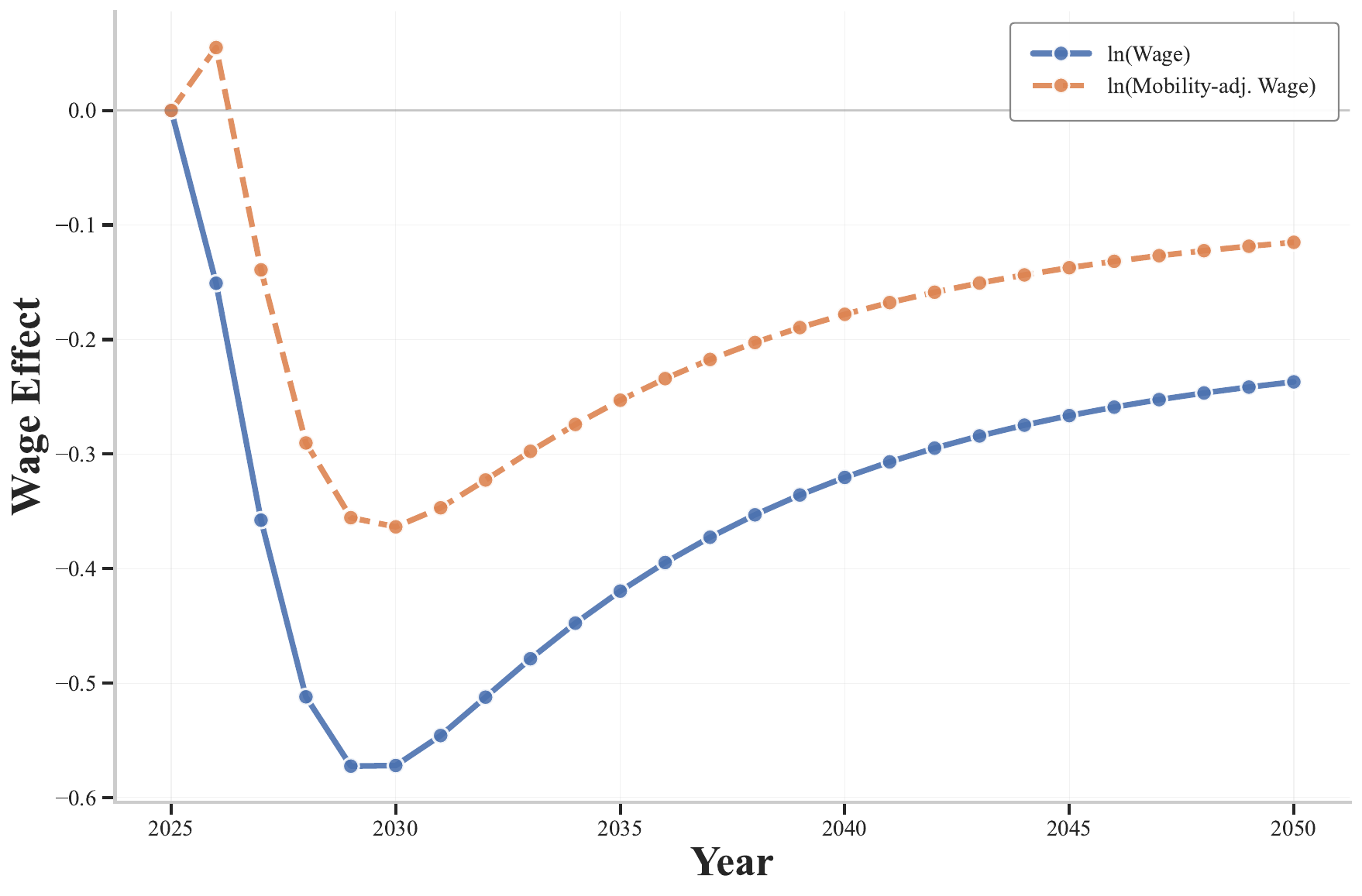}}
    \caption{The Dynamic Incidence of Automation and AI}
    \note{\textit{Notes:} Time path of average effects across occupations. Panels A and B show employment and wage effects of automation exposure, while Panels C and D depict projected effects of rapid AI adoption. Dashed lines represent changes in labor demand, green lines indicate employment shifts, and blue lines capture wage incidence. Orange lines in Panels B and D account for mobility-adjusted wage changes.}
    \label{f:dynamic_incidence}
\end{figure}

Panel A of Figure \ref{f:dynamic_incidence} illustrates the relative decline in occupational labor demand due to automation exposure (dashed line) alongside the relative demand changes absorbed by employment shifts (green line). Employment adjustments mitigate roughly two-thirds of relative demand changes, with the remaining one-third materializing as relative wage changes (blue line in Panel B). In Panel B, the orange line represents cumulative mobility-adjusted wage changes, given by $\sum_{s=1985}^{t} \ln\left(\hat{w}_{o,s}/\hat{\tilde{\mu}}_{oo,s}^{\kappa/\theta}\right)$, which accounts for worker mobility gains. These gains offset approximately half of wage losses. Compared to the static model, mobility gains are higher because we allow workers to redraw productivity.\footnote{Workers in current jobs typically have higher occupation-specific productivity due to selection; if productivity were permanent, they would face greater losses when transitioning.} Furthermore, because workers are forward-looking, mobility gains occur early in the adjustment process, as outside options improve immediately for negatively affected jobs, while wage effects accumulate gradually.

The gradual wage impact of automation suggests progressive adoption over the past four decades, allowing the labor market to absorb roughly two-thirds of associated labor demand shifts. This gradual adoption makes labor market adjustment in transition similar to that in the long run. However, if AI advances rapidly—as many practitioners advocate—the labor market may face greater adjustment challenges. To explore this scenario, we consider a counterfactual in which AI adoption reaches automation's scale by 2030,\footnote{Specifically, we set the AI demand shock proportional to automation's shock scaled by relative exposure: $\hat{\boldsymbol{\alpha}}^{\text{AI}} / \boldsymbol{z}^{\text{AI}} = \hat{\boldsymbol{\alpha}}^{\text{Automation}} / \boldsymbol{z}^{\text{Automation}}$. This implies occupations face demand changes proportional to their AI exposure, with the aggregate magnitude matching automation's cumulative impact.} allowing us to evaluate labor market responses to rapid technological transition.

Panels C and D of Figure \ref{f:dynamic_incidence} illustrate the dynamic incidence of accelerated AI adoption. Panel C shows that the labor market adjusts sluggishly, absorbing less than one-third of relative demand shifts initially, with another third absorbed over subsequent decades. In Panel D, occupations highly exposed to AI experience sharp wage declines as full adoption materializes by 2030, followed by gradual recovery. Mobility gains offset approximately one-third of relative wage losses during transition. These findings suggest that slow labor market adjustment severely limits its ability to absorb rapid AI advancement impacts.

These findings underscore a key insight extending beyond the static model: clustering of both automation and AI exposure constrains worker mobility, limiting the labor market's capacity to absorb shocks through occupational transitions in both the short and long run. For automation, gradual adoption since 1985 allowed the labor market to absorb roughly two-thirds of demand shifts, yet this adjustment generated persistent wage disparities reaching 50\% between high and low exposure occupations. For rapid AI expansion, mobility constraints operate more severely during transition: the labor market initially absorbs less than one-third of shocks, with mobility offsetting only one-third of wage losses, amplifying short-run inequality before gradual convergence.

\section{Additional Extensions} \label{s:add_extension}

This section extends our DIDES framework along two important dimensions: alternative specifications of worker efficiency and heterogeneous groups.
\subsection{Worker Heterogeneity and Effective Labor Supply} \label{ss:efficiency}

Our baseline model assumes that idiosyncratic worker productivity affects only the disutility of labor supply, operating through preferences rather than production. Here we examine a generalized specification where a fraction $\delta \in [0,1]$ of workers contribute their idiosyncratic productivity directly to production, while the remaining $(1-\delta)$ experience reduced effort costs proportional to their productivity \citep{Hsieh2019-np}.

This hybrid specification yields identical employment shares but distinct effective labor supplies. The effective labor in occupation $o$ becomes:
\begin{equation*}
    L_o^{\text{eff}} = \pi_o\bar{L}\left[\delta \cdot \mathbb{E}[\epsilon_o(i)|o^*(i)=o] + (1-\delta)\right]
\end{equation*}
where $\mathbb{E}[\epsilon_o(i)|o^*(i)=o] = \Gamma(1-1/\theta)W/w_o$ represents the conditional expectation of productivity, with $W$ being the wage index. For $\delta$ workers, higher wages attract workers with lower average productivity due to selection effects, partially offsetting employment increases.

The effective labor supply elasticity becomes (see Appendix \ref{app:efficiency} for derivation):
\begin{equation*}
    \Theta_{oo'}^{\text{eff}} = \begin{cases}
        \theta\left[\frac{x_oF_{oo}}{F_o}\right] + (\theta-s_o^{\delta})(1-\pi_o) & \text{if } o = o' \\
        \theta\left[\frac{x_{o'}F_{oo'}}{F_o}\right] - (\theta-s_{o}^{\delta})\pi_{o'} & \text{if } o \neq o'
    \end{cases}
\end{equation*}
where $s_o^{\delta}$ is the share of efficiency units in occupation $o$ contributed by productivity workers.\footnote{Specifically, $s_o^{\delta} = \delta \cdot \mathbb{E}[\epsilon_o(i)|o^*(i)=o]/[\delta \cdot \mathbb{E}[\epsilon_o(i)|o^*(i)=o] + (1-\delta)] = \delta \cdot \Gamma(1-1/\theta)W/w_o/[\delta \cdot \Gamma(1-1/\theta)W/w_o + (1-\delta)]$. This share increases with $\delta$ and decreases with the occupation's relative wage $w_o/W$.}

Two key insights emerge from this decomposition. First, the correlation term $\theta x_{o'}F_{oo'}/F_o$, capturing skill-based substitution patterns, remains unaffected by worker composition. Second, the independent substitution coefficient decreases from $\theta$ to $(\theta-s_o^{\delta})$ as the share of efficiency units from productivity workers rises. This occurs because for workers with productivity in production, cross-skill substitutability arises from the dispersion of productivity draws, which is inherently less transferable across occupations. The correlation structure, however, reflects the underlying skill intensities that determine occupational proximity, remaining invariant to how productivity manifests.

\paragraph{Implications for Incidence} 
Consider two extreme cases. Our baseline specification ($\delta = 0$) implies $s_o^{\delta} = 0$ for all occupations, yielding maximal labor supply elasticities with independent coefficient $\theta = 1.10$. At the other extreme, when $\delta = 1$ (all workers contribute productivity to production), $s_o^{\delta} = 1$ for all occupations by construction, reducing the independent coefficient to $(\theta-1) = 0.10$—essentially eliminating independent substitutability.

This reduction amplifies wage pass-through.\footnote{Mobility gains from reallocation depend on transition patterns rather than whether productivity enters production or preferences. For given wage changes, the welfare metric (equivalent variation) remains comparable across specifications with different $\delta$ values.} With $\Delta^{\text{eff}} = (\mathbf{I} + \Theta^{\text{eff}}/\sigma)^{-1}$, the pass-through matrix is inversely related to labor supply elasticities. Appendix \ref{app:passthrough_efficiency} quantifies this amplification: under $\delta = 1$, wage pass-through ranges from 0.20 to 0.70 across occupations (compared to 0.387 for CES), with automation-exposed occupations clustering at 0.60--0.65—nearly double the CES benchmark. Our baseline estimates thus provide lower bounds on wage inequality from technological clustering. The difference is substantial: moving from $\delta = 0$ to $\delta = 1$ would reduce independent labor market flexibility by over 90\%, leaving primarily correlation-driven substitution patterns to mediate technological shocks.
\subsection{Automation and Between-Group Inequality} \label{ss:distortions}

We extend our framework to examine how automation interacts with group-specific differences in occupational sorting to shape inequality across demographic groups \citep{Burstein2019-jl}. Following Section \ref{ss:hetero_workers}, we allow for group-specific productivity $A_o^g$ across occupations. The share of group $g$ workers choosing occupation $o$ is:
\begin{equation*}
    \pi_{o,t}^g = \frac{A_{o,t}^g w_{o,t}^{\theta} \cdot F_o(A_{1,t}^g w_{1,t}^\theta, \ldots, A_{O,t}^g w_{O,t}^\theta)}{F(A_{1,t}^g w_{1,t}^\theta, \ldots, A_{O,t}^g w_{O,t}^\theta)}
\end{equation*}

The correlation structure $F(\cdot)$ from our DIDES framework governs how groups sort across occupations. As detailed in Appendix \ref{app_ss:distortions}, the group-occupation productivity terms $\{A_{o,t}^g\}$ can be decomposed into labor productivity, pecuniary discrimination (wage penalties), and non-pecuniary barriers (amenity differences, cultural frictions). Observed changes in employment shares and wages allow us to separately identify these components over time, following the approach of \cite{Hsieh2019-np}.

\paragraph{Mobility Gains Across Demographic Groups}

The substitution structure determines how effectively different demographic groups can shield themselves from automation through occupational reallocation. We measure group-specific mobility gains as:
\begin{equation*}
    \text{Mobility Gain}^g = \Delta \ln W_{1980,\text{Automation}}^{g} - \sum_{o=1}^{O} \pi_{o,1980}^g \cdot \Delta \ln w_o
\end{equation*}
where $\Delta \ln W_{1980,\text{Automation}}^{g}$ is the change in the aggregate wage index for group $g$ under automation from 1980 to 2010 (computed using $\hat{w}_o^{\text{Automation}}$), and the second term represents the wage change that would occur absent any mobility. The difference quantifies welfare gains from workers' ability to transition toward less-affected occupations.

Table \ref{tab:mobility_gains_groups} reports mobility gains under two specifications of the correlation function: our estimated DIDES structure and the CES specification with uniform elasticity. The results reveal substantial heterogeneity in mobility capacity across demographic groups and substantial bias when imposing uniform substitutability.

\begin{table}[ht]
\centering
\caption{Mobility Gains from Automation by Demographic Group, 1980}
\label{tab:mobility_gains_groups}
\resizebox{\textwidth}{!}{%
\begin{tabular}{lcccccccc}
\hline\hline
                    & White    & White    & Black    & Black    & Hispanic & Hispanic & Other    & Other    \\
                    & Men      & Women    & Men      & Women    & Men      & Women    & Men      & Women    \\
\hline
No Mobility         & 0.00     & $-2.49$  & $-5.19$  & $-3.39$  & $-4.49$  & $-4.23$  & $-0.38$  & $-2.20$  \\
\addlinespace
DIDES               & 1.26     & 1.35     & 1.48     & 1.21     & 1.44     & 1.15     & 1.34     & 1.39     \\
\addlinespace
CES                 & 1.55     & 1.21     & 1.53     & 1.14     & 1.54     & 1.06     & 1.62     & 1.29     \\
\addlinespace
CES Bias            & 23.2     & $-10.4$  & 3.4      & $-5.8$   & 6.9      & $-7.8$   & 20.7     & $-7.1$   \\
\hline\hline
\end{tabular}%
}
\note{\textit{Notes:} Mobility gains measure the difference between aggregate wage index changes and employment-weighted occupation-specific wage changes. "No Mobility" shows relative wage changes holding employment shares fixed. "DIDES" uses our estimated distance-dependent correlation structure. "CES" imposes uniform elasticity across all occupations. "CES Bias" reports $(\text{CES} - \text{DIDES})/\text{DIDES} \times 100$. All values in log points (multiplied by 100) except CES Bias (in percent). The no-mobility row is normalized relative to white men due to the missing level intercept in our analysis.}
\end{table}

Three patterns emerge from Table \ref{tab:mobility_gains_groups}. First, mobility provides meaningful but limited compensation for automation-induced wage losses across all demographic groups: groups recover 1.2-1.5 log points through reallocation. This compensation is similar in magnitude across groups despite their different exposure to automation. Second, absent mobility, groups initially concentrated in manual-intensive occupations (Black men, Hispanic men and women) would face substantially larger relative wage losses—between 4.2 and 5.2 log points compared to white men—demonstrating their higher direct exposure to automation. However, their mobility gains are not larger, suggesting that occupational clustering constrains adjustment capacity even for heavily exposed groups. Third, the CES specification mismeasures mobility gains, with errors ranging from $-10.4\%$ to $+23.2\%$ across groups. The CES model overstates mobility gains for men (particularly white and other men) while understating gains for most women, generating misleading conclusions about which groups can best adapt to technological change.

These findings demonstrate that substitution patterns matter not just for aggregate incidence but for understanding distributional consequences across demographic groups. The heterogeneity in CES bias reflects how correlation structure interacts with group-specific employment patterns. Men's initial concentration in manual intensive occupations makes their mobility particularly insensitive to automation, while women's more dispersed occupational distribution across both manual and cognitive domains generates different mobility patterns. The CES specification, by imposing uniform substitutability, misses that workers can easily transition between skill-similar occupations but find limited shelter when shocks cluster. 

\section{Conclusion} \label{s:conclusion}

This paper establishes that the labor market incidence of technological change depends fundamentally on the interaction between shock distribution and the substitution structure. By developing and implementing a framework with distance-dependent elasticity of substitution (DIDES), we reveal how technological clustering in skill-adjacent occupations constrains employment reallocation and amplifies wage adjustment from automation and AI.

Our theoretical contribution embeds DIDES into a Roy model through correlated productivity draws that decline with skill distance. This achieves crucial dimensionality reduction, collapsing hundreds of thousands of bilateral elasticities into just four parameters governing a three-dimensional skill space. When technological shocks cluster in skill-adjacent occupations, they align with low-eigenvalue eigenvectors of the substitution matrix, forcing wage adjustment rather than employment reallocation.

Empirically, we map 306 occupations into cognitive, manual, and interpersonal skill dimensions and estimate that two-thirds of substitution occurs within skill clusters. Cognitive skills prove most transferable ($\rho_{\text{cog}} = 0.77$) while manual skills show limited portability ($\rho_{\text{man}} = 0.48$). These heterogeneous elasticities interact with technological clustering to generate striking patterns: on average, 36\% of demand shocks from both automation and AI translate to wages (versus 30\% under CES)—implying standard models overstate labor supply elasticity by 31\%. For most automation-exposed occupations, pass-through rates reach 45\%, generating wage effects 50\% larger than CES predictions. Workers recover only 20\% of wage losses through mobility, compared to 30\% predicted by standard models.

The dynamic analysis reveals persistent constraints. Gradual automation since 1985 generated wage gaps of up to 50\% between high and low exposure occupations. Rapid AI adoption shows starker patterns: less than one-third of shocks absorbed initially, with mobility offsetting only one-third of wage losses. This sluggish adjustment reflects clustered shocks eliminating transition pathways precisely where workers need them most.

Three key insights emerge. First, technological clustering is the fundamental driver of distributional consequences: when shocks concentrate in skill-adjacent occupations, labor market absorption through employment adjustment is constrained. Second, heterogeneous skill transferability creates asymmetric mobility gains: manual workers face high automation exposure combined with low transferability, experiencing severe losses, while cognitive workers threatened by AI benefit from higher transferability that enables more effective mobility. Third, conventional frameworks with uniform elasticities underestimate wage effects by 20\% on average and 50\% for heavily exposed occupations, obscuring the severity of technological incidence. As AI deployment accelerates with concentration in cognitive occupations, the labor market faces adjustment challenges that may exceed those from historical automation.

Future research should explore how rigid labor market adjustment shapes firms' incentives for technology innovation and adoption. The DIDES framework extends beyond technology to any distributional labor market shock: trade shocks, climate change, or demographic shifts. The central message is clear: technological progress need not generate severe inequality, but clustering combined with skill-based mobility constraints ensures that it does. Recognizing this mechanism, invisible to frameworks assuming uniform substitution, is essential for policies that protect workers while facilitating adjustment in an increasingly AI-automated economy.

\newpage
\bibliography{\bib}

\newpage
\appendix

\section{Proof of Results in Main Text}\label{a:appendix}

\subsection{Production and Labor Demand} \label{app_ss:production}

\subsubsection{Task Framework} \label{app_sss:task_framework}

Following \cite{Acemoglu2022-lv}, we begin with a task-based production framework where final output aggregates a continuum of tasks $\mathcal{T}$ through a constant elasticity of substitution technology:
\begin{equation*}
    y = \left(\int_{\mathcal{T}} y(x)^{\frac{\sigma-1}{\sigma}} dx\right)^{\frac{\sigma}{\sigma-1}}
\end{equation*}
where $y(x)$ denotes the input of task $x$ and $\sigma > 1$ is the elasticity of substitution between tasks.

The task space is partitioned across $O$ occupations, $\mathcal{O} = \{\mathcal{T}_1, \mathcal{T}_2, \ldots, \mathcal{T}_O\}$, where each task belongs exclusively to one occupation:
\begin{equation*}
    \mathcal{T} = \bigcup_{o=1}^{O} \mathcal{T}_o \quad \text{with} \quad \mathcal{T}_i \cap \mathcal{T}_j = \emptyset \text{ for } i \neq j
\end{equation*}

Each task can be produced using either labor or capital under perfect substitution:
\begin{equation*}
    y(x) = \ell_o(x) + a(x)k(x), \quad \forall x \in \mathcal{T}_o
\end{equation*}
where $\ell_o(x)$ is labor input from occupation $o$, $k(x)$ is capital input, and $a(x)$ represents task-specific capital productivity. Capital is produced from final output at unit cost.

\subsubsection{Labor Demand} \label{app_sss:labor_demand}

Given occupational wages $\{w_o\}_{o=1}^O$, cost minimization determines the optimal allocation of tasks between labor and capital. For each occupation $o$, tasks are assigned according to:
\begin{equation*}
    \mathcal{T}_o^{\ell} = \{x \in \mathcal{T}_o : w_o \leq 1/a(x)\} \quad \text{and} \quad \mathcal{T}_o^k = \{x \in \mathcal{T}_o : w_o > 1/a(x)\}
\end{equation*}
where $\mathcal{T}_o^{\ell}$ denotes tasks performed by labor and $\mathcal{T}_o^k$ denotes tasks performed by capital.

The equilibrium price of each task equals the unit cost of production:
\begin{equation*}
    p(x) = \begin{cases}
        1/a(x) & \text{if } x \in \mathcal{T}_o^k \\
        w_o & \text{if } x \in \mathcal{T}_o^{\ell}
    \end{cases}
\end{equation*}

Task demand follows from the CES structure: $y(x) = y \cdot p(x)^{-\sigma}$. Integrating over all tasks performed by occupation $o$ yields labor demand:
\begin{equation*}
    L_o = \int_{x \in \mathcal{T}_o^{\ell}} \ell_o(x) dx = y \cdot w_o^{-\sigma} \cdot M_{\mathcal{T}_o^{\ell}}
\end{equation*}
where $M_{\mathcal{T}_o^{\ell}} = \int_{\mathcal{T}_o^{\ell}} dx$ is the measure of tasks performed by occupation $o$.

\subsubsection{Reduced-Form Representation} \label{app_sss:reduced_form}

The zero-profit condition implies:
\begin{equation*}
    1 = \int_{\mathcal{T}} p(x)^{1-\sigma} dx = \int_{\cup_o \mathcal{T}_o^k} a(x)^{\sigma-1} dx + \sum_{o=1}^O w_o^{1-\sigma} \cdot M_{\mathcal{T}_o^{\ell}}
\end{equation*}

Define the share of tasks performed by labor in occupation $o$ as $s_o^{\ell} = M_{\mathcal{T}_o^{\ell}}/M_{\mathcal{T}}$, where $M_{\mathcal{T}}$ is the total measure of tasks. Similarly, let $s^k = 1 - \sum_o s_o^{\ell}$ denote the share of tasks performed by capital, with average capital productivity $a^k$ such that $s^k(a^k)^{\sigma-1} = \int_{\cup_o \mathcal{T}_o^k} a(x)^{\sigma-1} dx/M_{\mathcal{T}}$.

Solving for equilibrium output and substituting out capital yields the reduced-form production function:
\begin{equation*}
    y = \mathcal{A} \left(\sum_{o=1}^O \alpha_o^{\frac{1}{\sigma}} L_o^{\frac{\sigma-1}{\sigma}}\right)^{\frac{\sigma}{\sigma-1}}
\end{equation*}
where:
\begin{align*}
    \mathcal{A} &= \left[1 - s^k(a^k)^{\sigma-1}\right]^{-\frac{\sigma}{\sigma-1}} \quad \text{(aggregate productivity)} \\
    \alpha_o &= \frac{s_o^{\ell}}{1 - s^k(a^k)^{\sigma-1}} \quad \text{(effective labor share of occupation } o\text{)}
\end{align*}

This reduced form captures the essential features of the task model: $\alpha_o$ represents occupation $o$'s share of labor-performed tasks after accounting for automation. When technology advances increase $a(x)$ for tasks in $\mathcal{T}_o$, more tasks shift from labor to capital, reducing $s_o^{\ell}$ and hence $\alpha_o$. The occupational wage then follows:
\begin{equation} \label{app_eq:wage_equation}
    w_o = \frac{\partial y}{\partial L_o} = y^{\frac{1}{\sigma}} \alpha_o^{\frac{1}{\sigma}} L_o^{-\frac{1}{\sigma}}
\end{equation}

This parsimonious representation allows us to analyze the distributional effects of automation and AI through changes in task shares $\{\alpha_o\}$ without explicitly tracking individual task assignments.

\subsection{Workers and Labor Supply} \label{app_ss:labor_supply}

\subsubsection{Properties of the Correlation Function} \label{app_sss:correlation_properties}

The labor supply side of our model builds on a Roy framework with correlated productivity across occupations. Central to our analysis is the correlation function $F: \mathbb{R}_{+}^O \rightarrow \mathbb{R}_{+}$, which governs the substitution structure between occupations. This function satisfies three key properties:

\begin{enumerate}
    \item \textbf{Homogeneity of degree one:} $F(\lambda x_1, \ldots, \lambda x_O) = \lambda F(x_1, \ldots, x_O)$ for all $\lambda > 0$
    \item \textbf{Unboundedness:} $\lim_{x_o \to \infty} F(x_1, \ldots, x_O) = \infty$ for any $o$
    \item \textbf{Sign-switching property:} Mixed partial derivatives alternate in sign—the $n$-th order mixed partial is non-negative if $n$ is odd and non-positive if $n$ is even
\end{enumerate}

The sign-switching property ensures that occupations are gross substitutes from workers' perspective, a crucial feature for equilibrium uniqueness. Additionally, $C(u_1, \ldots, u_O) = \exp[-F(-\ln u_1, \ldots, -\ln u_O)]$ forms a max-stable copula, guaranteeing that workers' occupational choices aggregate consistently across the population.\footnote{Max-stability ensures that $C(u_1, \ldots, u_O) = C(u_1^{1/m}, \ldots, u_O^{1/m})^m$ for all $m > 0$ and $(u_1, \ldots, u_O) \in [0,1]^O$. This property is essential for the aggregation of individual choices to yield tractable labor supply functions.}

\subsubsection{Labor Supply} \label{app_sss:worker_optimization}

Workers are heterogeneous in their productivity across occupations. Each worker $i$ draws a productivity vector $\boldsymbol{\epsilon}(i) = \{\epsilon_o(i)\}_{o=1}^O$ from the joint distribution:
\begin{equation*}
    \Pr[\epsilon_1(i) \leq \epsilon_1, \ldots, \epsilon_O(i) \leq \epsilon_O] = \exp\left[-F\left(A_1\epsilon_1^{-\theta}, \ldots, A_O\epsilon_O^{-\theta}\right)\right]
\end{equation*}
where $A_o > 0$ captures average productivity in occupation $o$ and $\theta > 0$ governs the dispersion of productivity across workers. The marginal distributions are Fréchet: $\Pr[\epsilon_o(i) \leq \epsilon] = \exp(-A_o\epsilon^{-\theta})$.

Workers choose occupations to maximize utility. A worker with productivity vector $\boldsymbol{\epsilon}(i)$ receives utility $u_o(i) = w_o\epsilon_o(i)$ from working in occupation $o$, where the productivity term captures both output produced and the inverse of effort cost. The optimal choice is:
\begin{equation}
    o^*(i) = \arg\max_{o} \{w_o\epsilon_o(i)\}
\end{equation}

Given this optimization, the fraction of workers choosing occupation $o$ is:
\begin{equation} \label{app_eq:employment_share}
    \pi_o = \Pr[w_o\epsilon_o(i) = \max_{o'} w_{o'}\epsilon_{o'}(i)] = \frac{A_o w_o^{\theta} F_o(A_1w_1^{\theta}, \ldots, A_Ow_O^{\theta})}{F(A_1w_1^{\theta}, \ldots, A_Ow_O^{\theta})}
\end{equation}
where $F_o = \partial F/\partial x_o$ denotes the partial derivative with respect to the $o$-th argument.\footnote{The derivation of employment shares follows from the principle of maximum stability for multivariate extreme value distributions. See Section \ref{app_ss:emp_share_derivation} for the complete proof.}

Total labor supply to occupation $o$ is $L_o = \pi_o \bar{L}$, where $\bar{L}$ is the total workforce. The elasticity of labor supply with respect to wages determines how workers reallocate across occupations:
\begin{equation} \label{app_eq:elasticity_matrix}
    \Theta_{oo'} \equiv \frac{\partial \ln L_o}{\partial \ln w_{o'}} = 
    \begin{cases}
        \theta\left[\frac{x_{o'}F_{oo'}}{F_o}\bigg|_{x_j = A_jw_j^{\theta}} - \pi_{o'}\right] & \text{if } o \neq o' \\
        \theta\left[\frac{x_oF_{oo}}{F_o}\bigg|_{x_j = A_jw_j^{\theta}} + 1 - \pi_o\right] & \text{if } o = o'
    \end{cases}
\end{equation}
where $x_o = A_ow_o^{\theta}$ for notational convenience. The derivation of these elasticities from the employment share equation is provided in Section \ref{app_ss:elasticity_derivation}.

The cross-elasticities $\Theta_{oo'}$ for $o \neq o'$ are negative (reflecting substitution) while own-elasticities $\Theta_{oo}$ are positive. Importantly, $\sum_{o'} \Theta_{oo'} = 0$, confirming that proportional wage increases do not affect relative employment—only relative wage changes induce reallocation.\footnote{See Section \ref{app_ss:zero_row_sum} for the proof that row sums equal zero using the homogeneity property of $F$.}


\subsection{CNCES Microfoundation \ref{prop:cnces}} \label{app_ss:cnces_proof}

We derive the joint productivity distribution from the skill-specific distributions and the max operator.

\textbf{Step 1: Skill-specific productivity.}
For each skill $s$, productivity follows a correlated Fréchet distribution:
\begin{equation*}
    \Pr[\epsilon_1^s(i) \leq \epsilon_1^s, \ldots, \epsilon_O^s(i) \leq \epsilon_O^s] = \exp\left[-\left(\sum_{o=1}^O (\epsilon_o^s)^{\frac{-\theta}{1-\rho_s}}\right)^{1-\rho_s}\right]
\end{equation*}

\textbf{Step 2: Occupational productivity as maximum.}
Since $\epsilon_o(i) = \max_{s \in \mathcal{S}} A_o^s \cdot \epsilon_o^s(i)$, we have:
\begin{align*}
    \Pr[\epsilon_1(i) \leq \epsilon_1, \ldots, \epsilon_O(i) \leq \epsilon_O] &= \Pr\left[\epsilon_o^s(i) \leq \frac{\epsilon_o}{A_o^s}, \forall o, \forall s\right] \\
    &= \prod_{s \in \mathcal{S}} \Pr\left[\epsilon_o^s(i) \leq \frac{\epsilon_o}{A_o^s}, \forall o\right]
\end{align*}
where the product follows from independence across skills.

\textbf{Step 3: Substitute and simplify.}
Using the skill-specific distribution:
\begin{align*}
    &= \prod_{s \in \mathcal{S}} \exp\left[-\left(\sum_{o=1}^O (A_o^s)^{\frac{\theta}{1-\rho_s}} \epsilon_o^{\frac{-\theta}{1-\rho_s}}\right)^{1-\rho_s}\right] \\
    &= \exp\left[-\sum_{s \in \mathcal{S}} \left(\sum_{o=1}^O \left[(A_o^s)^{\theta} \epsilon_o^{-\theta}\right]^{\frac{1}{1-\rho_s}}\right)^{1-\rho_s}\right]
\end{align*}

\textbf{Step 4: Define aggregate parameters.}
Let $A_o = \sum_s (A_o^s)^{\theta}$ and $\omega_o^s = (A_o^s)^{\theta}/A_o$. Then:
\begin{equation*}
    (A_o^s)^{\theta} \epsilon_o^{-\theta} = \omega_o^s A_o \epsilon_o^{-\theta}
\end{equation*}

Substituting yields the correlation function:
\begin{equation*}
    F(x_1, \ldots, x_O) = \sum_{s \in \mathcal{S}} \left[\sum_{o=1}^O (\omega_o^s x_o)^{\frac{1}{1-\rho_s}}\right]^{1-\rho_s}
\end{equation*}
where $x_o = A_o\epsilon_o^{-\theta}$, completing the proof.
\subsection{CNCES Employment and Elasticities \ref{prop:cnces_elasticity}} \label{app_ss:cnces_elasticity_proof}

We derive the employment shares and correlated elasticities under the CNCES structure.

\subsubsection*{Part 1: Employment Shares}

From equation \eqref{app_eq:employment_share}, the employment share is:
\begin{equation*}
    \pi_o = \frac{A_ow_o^{\theta}F_o(x_1, \ldots, x_O)}{F(x_1, \ldots, x_O)}
\end{equation*}

For CNCES, $F(x_1, \ldots, x_O) = \sum_{s \in \mathcal{S}} G_s^{1-\rho_s}$ where $G_s = \sum_{o'} (\omega_{o'}^s x_{o'})^{\frac{1}{1-\rho_s}}$.

The partial derivative is:
\begin{equation*}
    F_o = \frac{\partial F}{\partial x_o} = \sum_{s \in \mathcal{S}} G_s^{-\rho_s} \omega_o^s (\omega_o^s x_o)^{\frac{\rho_s}{1-\rho_s}}
\end{equation*}

Therefore:
\begin{align*}
    \pi_o &= \frac{x_o F_o}{\sum_{o'} x_{o'} F_{o'}} = \sum_{s \in \mathcal{S}} \frac{(\omega_o^s x_o)^{\frac{1}{1-\rho_s}}}{G_s} \cdot \frac{G_s^{1-\rho_s}}{F} \\
    &= \sum_{s \in \mathcal{S}} \underbrace{\frac{(\omega_o^s A_o w_o^{\theta})^{\frac{1}{1-\rho_s}}}{\sum_{o'} (\omega_{o'}^s A_{o'} w_{o'}^{\theta})^{\frac{1}{1-\rho_s}}}}_{\pi_o^{s,W}} \cdot \underbrace{\frac{G_s^{1-\rho_s}}{\sum_{s'} G_{s'}^{1-\rho_{s'}}}}_{\pi^{s,B}}
\end{align*}

This establishes equation \eqref{eq:cnces_shares} with $\pi_o^s = \pi_o^{s,W} \cdot \pi^{s,B}$.

\subsubsection*{Part 2: Correlated Elasticities}

To derive equation \eqref{eq:cnces_elasticity}, we need the second derivative:
\begin{equation*}
    F_{oo'} = \frac{\partial^2 F}{\partial x_o \partial x_{o'}} = -\sum_{s \in \mathcal{S}} \frac{\rho_s}{1-\rho_s} G_s^{-\rho_s-1} \cdot \omega_o^s (\omega_o^s x_o)^{\frac{\rho_s}{1-\rho_s}} \cdot \omega_{o'}^s (\omega_{o'}^s x_{o'})^{\frac{\rho_s}{1-\rho_s}}
\end{equation*}

The ratio becomes:
\begin{equation*}
    \frac{x_{o'} F_{oo'}}{F_o} = -\sum_{s \in \mathcal{S}} \frac{\rho_s}{1-\rho_s} \cdot \underbrace{\frac{(\omega_{o'}^s x_{o'})^{\frac{1}{1-\rho_s}}}{G_s}}_{\mu_{o'}^s} \cdot \underbrace{\frac{G_s^{-\rho_s} \omega_o^s (\omega_o^s x_o)^{\frac{\rho_s}{1-\rho_s}}}{F_o}}_{\gamma_o^s}
\end{equation*}

Substituting $x_o = A_o w_o^{\theta}$ and noting that:
- $\mu_{o'}^s = \pi_{o'}^{s,W}$ (within-skill share)
- $\gamma_o^s = \pi_o^s/\pi_o$ (skill $s$'s contribution to occupation $o$)
- $\pi_o^s = \pi_o^{s,W} \cdot \pi^{s}$

We obtain:
\begin{equation*}
    \theta\frac{x_{o'}F_{oo'}}{F_o}\bigg|_{x_j = A_jw_j^{\theta}} = -\theta \sum_{s \in \mathcal{S}} \frac{\rho_s}{1-\rho_s} \cdot \pi_o^{s,W} \pi_{o'}^{s,W} \cdot \frac{\pi^{s}}{\pi_o}
\end{equation*}

This completes the proof of equation \eqref{eq:cnces_elasticity}.

\subsection{Proof of Proposition \ref{prop:hat_algebra} (Hat Algebra)} \label{app_ss:hat_algebra}

The proof demonstrates how observed employment shares serve as sufficient statistics for predicting counterfactual changes without requiring wage or productivity levels.

\textbf{Step 1: Express employment shares in terms of the correlation function.}

Given wages $\boldsymbol{w}_t$ and group-specific productivity $\{A_t^g\}_{g \in G}$, employment shares are:
\begin{equation*}
\pi^g_{o,t} = \frac{A^g_{o,t} w_{o,t}^{\theta} F_o(A^g_{1,t} w_{1,t}^\theta, \ldots, A^g_{O,t} w_{O,t}^\theta)}{F(A^g_{1,t} w_{1,t}^\theta, \ldots, A^g_{O,t} w_{O,t}^\theta)}
\end{equation*}

\textbf{Step 2: Define correlation-adjusted shares.}

Let the correlation-adjusted employment share be:
\begin{equation*}
\tilde{\pi}^g_{o,t} = \frac{A^g_{o,t} w_{o,t}^{\theta}}{F(A^g_{1,t} w_{1,t}^\theta, \ldots, A^g_{O,t} w_{O,t}^\theta)}
\end{equation*}

Since $F_o$ is homogeneous of degree zero, we obtain:
\begin{equation*}
\pi^g_{o,t} = \tilde{\pi}^g_{o,t} F_o(\tilde{\pi}^g_{1,t}, \ldots, \tilde{\pi}^g_{O,t})
\end{equation*}

This establishes a one-to-one mapping between observed shares $\{\pi^g_{o,t}\}$ and adjusted shares $\{\tilde{\pi}^g_{o,t}\}$.

\textbf{Step 3: Derive the evolution of adjusted shares.}

For wage changes from $t$ to $t+1$, the ratio of adjusted shares is:
\begin{equation*}
\frac{\tilde{\pi}^g_{o,t+1}}{\tilde{\pi}^g_{o,t}} = \frac{(w_{o,t+1}/w_{o,t})^{\theta}}{F(A^g_{1,t} w_{1,t+1}^\theta, \ldots, A^g_{O,t} w_{O,t+1}^\theta)/F(A^g_{1,t} w_{1,t}^\theta, \ldots, A^g_{O,t} w_{O,t}^\theta)}
\end{equation*}

Using the homogeneity property of $F$, the denominator simplifies to:
\begin{equation*}
\frac{F(A^g_{1,t} w_{1,t+1}^\theta, \ldots, A^g_{O,t} w_{O,t+1}^\theta)}{F(A^g_{1,t} w_{1,t}^\theta, \ldots, A^g_{O,t} w_{O,t}^\theta)} = F(\{\hat{w}_{o,t+1}^{\theta} \tilde{\pi}^g_{o,t}\}_{o \in O})
\end{equation*}

where $\hat{w}_{o,t+1} = w_{o,t+1}/w_{o,t}$ denotes the relative wage change.

\textbf{Step 4: Obtain the counterfactual algorithm.}

The adjusted shares evolve according to:
\begin{equation*}
\tilde{\pi}_{o,t+1}^g = \frac{ \hat{w}_{o,t+1}^\theta \tilde{\pi}^g_{o,t} }{F(\{\hat{w}_{o',t+1}^{\theta} \tilde{\pi}^g_{o',t}\}_{o' \in O})}
\end{equation*}

Finally, recover the counterfactual employment shares:
\begin{equation*}
\pi^g_{o,t+1} = \tilde{\pi}^g_{o,t+1} F_o(\tilde{\pi}^g_{1,t+1}, \ldots, \tilde{\pi}^g_{O,t+1})
\end{equation*}

\textbf{Step 5: Derive the change in the aggregate wage index.}

From the definition of correlation-adjusted shares in Step 2, we can express:
\begin{equation*}
F(A^g_{1,t} w_{1,t}^\theta, \ldots, A^g_{O,t} w_{O,t}^\theta) = \frac{A^g_{o,t} w_{o,t}^{\theta}}{\tilde{\pi}^g_{o,t}}
\end{equation*}

for any occupation $o$. Define the aggregate wage index as:
\begin{equation*}
W_t^g \equiv F(A^g_{1,t} w_{1,t}^\theta, \ldots, A^g_{O,t} w_{O,t}^\theta)^{1/\theta}
\end{equation*}

This represents the expected wage for group $g$ workers before observing idiosyncratic productivity draws. Taking the ratio between periods:
\begin{equation*}
\hat{W}_{t+1}^g = \frac{W_{t+1}^g}{W_t^g} = \left[\frac{F(A^g_{1,t} w_{1,t+1}^\theta, \ldots, A^g_{O,t} w_{O,t+1}^\theta)}{F(A^g_{1,t} w_{1,t}^\theta, \ldots, A^g_{O,t} w_{O,t}^\theta)}\right]^{1/\theta}
\end{equation*}

Using the result from Step 3:
\begin{equation*}
\hat{W}_{t+1}^g = F(\{\hat{w}_{o,t+1}^{\theta} \tilde{\pi}^g_{o,t}\}_{o \in O})^{1/\theta}
\end{equation*}

This shows that the counterfactual change in the aggregate wage index can be computed directly from observed employment shares $\{\tilde{\pi}^g_{o,t}\}$ and relative wage changes $\{\hat{w}_{o,t+1}\}$, without requiring knowledge of productivity levels.

This completes the proof and provides an algorithm to compute counterfactual employment shares and aggregate welfare changes using only observed shares and relative wage changes, without requiring knowledge of productivity levels or absolute wages. $\square$
\subsection{Dynamic Model with Forward-Looking Workers} \label{a:appendix:dynamic}

This appendix extends the static framework to incorporate forward-looking occupational choice with adjustment frictions. The dynamic model enables analysis of transition paths and the timing of labor market responses to technological shocks.

\subsubsection{Workers' Dynamic Problem} \label{a:appendix:dynamic_worker}

\textbf{Setup.} Consider a continuum of hand-to-mouth workers distributed across $O$ occupations. Workers maximize expected lifetime utility over consumption $c_t(i)$ and labor effort $\ell_t(i)$:
\begin{equation*}
U\left(\{c_t(i), \ell_t(i)\}_{t=0}^{\infty}\right) = \sum_{t=0}^{\infty} \beta^t[\ln c_t(i) - \ln \ell_t(i)]
\end{equation*}
where $\beta \in (0,1)$ is the discount factor.

\textbf{Productivity draws.} Each period, workers draw productivity vectors 
\[\boldsymbol{\epsilon}_t(i) = (\epsilon_{1,t}(i), \ldots, \epsilon_{O,t}(i))\]
from the same multivariate Fréchet distribution as in the static model:
\begin{equation*}
\Pr[\epsilon_{1,t}(i) \leq \epsilon_1, \ldots, \epsilon_{O,t}(i) \leq \epsilon_O] = \exp\left[-F(A_{1,t}\epsilon_1^{-\theta}, \ldots, A_{O,t}\epsilon_O^{-\theta})\right]
\end{equation*}
where the correlation function $F$ embeds the CNCES structure:
\begin{equation*}
F(x_1, \ldots, x_O) = \sum_{s=1}^S \left[\sum_{o=1}^O (\omega_{so} x_o)^{\frac{1}{1-\rho_s}}\right]^{1-\rho_s}
\end{equation*}

\textbf{Occupational choice with transition costs.} After observing $\boldsymbol{\epsilon}_t(i)$, workers choose occupations subject to transition costs $\tau_{oo'} \geq 0$ (measured in utility units). The instantaneous utility from occupation $o'$ is:
\begin{equation*}
u_t(i) = \ln w_{o',t} + \kappa \ln \epsilon_{o',t}(i)
\end{equation*}
where $\kappa > 0$ governs the short-run labor supply elasticity, capturing sluggish adjustment relative to the static model's long-run elasticity $\theta$.

\textbf{Value function.} The Bellman equation for a worker in occupation $o$ with productivity $\boldsymbol{\epsilon}_t$ is:
\begin{equation*}
v_{o,t}(\boldsymbol{\epsilon}_t) = \max_{o'} \left\{\ln w_{o',t} + \kappa \ln \epsilon_{o',t} + \beta V_{o',t+1} - \tau_{oo'}\right\}
\end{equation*}
where $V_{o',t+1} = \mathbb{E}_{\boldsymbol{\epsilon}}[v_{o',t+1}(\boldsymbol{\epsilon})]$ is the expected continuation value.

\textbf{Aggregation.} Define the inclusive value:
\begin{equation*}
Z_{oo',t} = \exp(\beta V_{o',t+1} + \ln w_{o',t} - \tau_{oo'})
\end{equation*}

Given the Fréchet structure, the expected value simplifies to:
\begin{equation*}
V_{o,t} = \ln\left[F(A_{1,t}Z_{o1,t}^{\theta/\kappa}, \ldots, A_{O,t}Z_{oO,t}^{\theta/\kappa})^{\kappa/\theta}\right] + \bar{\gamma}\frac{\kappa}{\theta}
\end{equation*}
where $\bar{\gamma}$ is the Euler-Mascheroni constant.

This formulation nests the static model when $\kappa = \theta$ (no adjustment frictions) and generates gradual transitions when $\kappa < \theta$ (costly adjustment). The correlation structure $F$ preserves the DIDES property: workers transition more easily between skill-similar occupations, but adjustment slows when technological shocks cluster within skill domains.

\subsubsection{Occupation Switching Probabilities} \label{a:appendix:occ_switch}

This section derives the transition probabilities between occupations, showing how the correlation structure generates realistic mobility patterns.

\textbf{Switching probability.} The probability that a worker in occupation $o$ switches to $o'$ at time $t$ is:
\begin{equation*}
\mu_{oo',t} = \Pr\left[Z_{oo',t} \epsilon_{o',t}^\kappa \geq \max_{o''} Z_{oo'',t} \epsilon_{o'',t}^\kappa\right]
\end{equation*}

Using the properties of the multivariate Fréchet distribution (see Section \ref{app_ss:emp_share_derivation}), this probability becomes:
\begin{equation*}
\mu_{oo',t} = \frac{A_{o',t} Z_{oo',t}^{\theta/\kappa} F_{o'}(A_{1,t}Z_{o1,t}^{\theta/\kappa}, \ldots, A_{O,t}Z_{oO,t}^{\theta/\kappa})}{F(A_{1,t}Z_{o1,t}^{\theta/\kappa}, \ldots, A_{O,t}Z_{oO,t}^{\theta/\kappa})}
\end{equation*}
where $F_{o'} = \partial F/\partial x_{o'}$ denotes the partial derivative.

\textbf{Correlation-adjusted transition rates.} Define the correlation-adjusted transition probability:
\begin{equation*}
\tilde{\mu}_{oo',t} = \frac{A_{o',t} Z_{oo',t}^{\theta/\kappa}}{F(A_{1,t}Z_{o1,t}^{\theta/\kappa}, \ldots, A_{O,t}Z_{oO,t}^{\theta/\kappa})}
\end{equation*}

This adjustment isolates the role of correlation from the baseline substitution effect. The observed and adjusted probabilities are related by:
\begin{equation*}
\mu_{oo',t} = \tilde{\mu}_{oo',t} F_{o'}(\tilde{\mu}_{o1,t}, \ldots, \tilde{\mu}_{oO,t})
\end{equation*}

This establishes a one-to-one mapping between observed transitions $\{\mu_{oo',t}\}$ and adjusted rates $\{\tilde{\mu}_{oo',t}\}$.

\textbf{Euler equation for mobility.} The evolution of adjusted transition rates satisfies:
\begin{equation*}
\ln\frac{\tilde{\mu}_{oo',t}}{\tilde{\mu}_{oo,t}} = \frac{\theta}{\kappa}\ln\frac{w_{o',t}}{w_{o,t}} + \beta\ln\frac{\tilde{\mu}_{oo',t+1}}{\tilde{\mu}_{o'o',t+1}} + (\beta-1)\tau_{oo'}
\end{equation*}

This Euler equation shows that relative transition rates depend on three factors:
\begin{itemize}
    \item Current wage differentials (scaled by $\theta/\kappa$, the short-run elasticity)
    \item Future option values (captured by next period's staying probabilities)
    \item Transition costs (discounted by $\beta-1 < 0$)
\end{itemize}

The correlation-adjusted formulation enables estimation of $\theta/\kappa$ from observed transitions while accounting for the skill-based clustering that constrains mobility between distant occupations.

\subsubsection{Static Production Equilibrium} \label{a:appendix:dynamic_production}

This section characterizes the production side of the economy, which remains static within each period while labor allocations adjust dynamically across periods.

\textbf{Production technology.} The production side follows the reduced-form representation derived in Appendix \ref{app_sss:reduced_form}. Output in period $t$ is given by:
\begin{equation*}
Y_t = \mathcal{A}_t \left(\sum_{o=1}^O \alpha_{o,t}^{\frac{1}{\sigma}} L_{o,t}^{\frac{\sigma-1}{\sigma}}\right)^{\frac{\sigma}{\sigma-1}}
\end{equation*}
where $\mathcal{A}_t$ captures aggregate productivity (incorporating the contribution of capital to production) and $\alpha_{o,t}$ represents occupation $o$'s effective labor share after accounting for task automation.

\textbf{Wage determination.} Competitive labor markets equate wages to marginal products:
\begin{equation*}
w_{o,t} = \frac{\partial Y_t}{\partial L_{o,t}} = Y_t^{\frac{1}{\sigma}}\alpha_{o,t}^{\frac{1}{\sigma}}L_{o,t}^{-\frac{1}{\sigma}}
\end{equation*}

The task shares $\{\alpha_{o,t}\}$ capture the distributional effects of technology: when automation or AI displaces labor from tasks in occupation $o$, the corresponding $\alpha_{o,t}$ declines, reducing wages even as aggregate productivity $\mathcal{A}_t$ may rise through lower production costs. See Appendix \ref{app_sss:reduced_form} for the microfoundation from the task-based framework.

\subsubsection{Dynamic Equilibrium} \label{a:appendix:dynamic_equilibrium}

This section defines the dynamic equilibrium, accounting for data limitations and characterizing the conditions for market clearing across time.

\textbf{Measurement reconciliation.} The retrospective design of the March CPS creates a discrepancy between measured job flows and observed employment levels. We account for this by augmenting the employment evolution equation:
\begin{equation*}
L_{o,t} = \sum_{o'=1}^O \mu_{o'o,t} L_{o',t-1} + \Delta L_{o,t}
\end{equation*}
where $\Delta L_{o,t}$ represents exogenous net inflows/outflows satisfying $\sum_o L_{o,t} = 1$ (normalization) and $\sum_o \Delta L_{o,t} = 0$ (no aggregate employment change).

\textbf{Model primitives.} The economy is characterized by:
\begin{itemize}
    \item Time-varying fundamentals: $\boldsymbol{A}_t = \{A_{o,t}\}$ (productivity), $\boldsymbol{\alpha}_t = \{\alpha_{o,t}\}$ (task shares), $\mathcal{A}_t$ (aggregate productivity)
    \item Structural parameters: $\tau_{oo'}$ (transition costs), $\omega_{os}$ (skill weights), $\sigma$ (demand elasticity), $\theta$ (dispersion), $\rho_s$ (skill correlation), $\kappa$ (short-run elasticity), $\beta$ (discount factor)
\end{itemize}

\textbf{Definition (Dynamic Equilibrium).} A dynamic equilibrium is a sequence $\{\boldsymbol{L}_t, \boldsymbol{w}_t, \mu_t, V_t\}_{t=0}^{\infty}$ satisfying:

\begin{enumerate}
    \item \textit{Production equilibrium:} Wages equal marginal products and output clears markets:
    \begin{align*}
        w_{o,t} &= Y_t^{\frac{1}{\sigma}}\alpha_{o,t}^{\frac{1}{\sigma}} L_{o,t}^{-\frac{1}{\sigma}} \\
        Y_t &= \mathcal{A}_t \left(\sum_o \alpha_{o,t}^{\frac{1}{\sigma}} L_{o,t}^{\frac{\sigma-1}{\sigma}}\right)^{\frac{\sigma}{\sigma-1}}
    \end{align*}
    
    \item \textit{Optimal expectations:} Workers correctly anticipate future values:
    \begin{equation*}
        V_{o,t} = \ln\left[F(A_{1,t}Z_{o1,t}^{\theta/\kappa}, \ldots, A_{O,t}Z_{oO,t}^{\theta/\kappa})^{\kappa/\theta}\right] + \bar{\gamma}\frac{\kappa}{\theta}
    \end{equation*}
    
    \item \textit{Optimal mobility:} Transition probabilities satisfy workers' optimization:
    \begin{equation*}
        \mu_{oo',t} = \frac{A_{o',t} Z_{oo',t}^{\theta/\kappa} F_{o'}(A_{1,t}Z_{o1,t}^{\theta/\kappa}, \ldots, A_{O,t}Z_{oO,t}^{\theta/\kappa})}{F(A_{1,t}Z_{o1,t}^{\theta/\kappa}, \ldots, A_{O,t}Z_{oO,t}^{\theta/\kappa})}
    \end{equation*}
    
    \item \textit{Labor market clearing:} Employment evolves according to transitions:
    \begin{equation*}
        L_{o,t} = \sum_{o'} \mu_{o'o,t} L_{o',t-1} + \Delta L_{o,t}
    \end{equation*}
\end{enumerate}

This equilibrium preserves the DIDES structure: technological shocks that cluster in skill space generate limited mobility (through $\mu$) and force adjustment through wages, creating persistent inequality during transitions.

\subsubsection{System in Changes} \label{a:appendix:sys_changes_dynamic}

This section expresses the dynamic equilibrium in growth rates, facilitating the analysis of transition paths and steady-state convergence.

\textbf{Notation.} Define the growth factor $\dot{x}_{t+1} = x_{t+1}/x_t$ for any variable $x$. For utility, define $u_{o,t} = \exp(V_{o,t})$ to work with levels rather than logs.

\textbf{Production in changes.} Log-differentiating the wage equation yields:
\begin{equation*}
\sigma \ln \dot{w}_{o,t+1} + \ln \dot{L}_{o,t+1} = \ln \dot{Y}_{t+1} + \ln \dot{\alpha}_{o,t+1}
\end{equation*}

This links wage growth to changes in aggregate output, task shares, and employment.

\textbf{Dynamic system in growth rates.} The evolution of correlation-adjusted transition probabilities and expected utilities can be expressed as:
\begin{align}
\frac{\tilde{\mu}_{oo',t}}{\tilde{\mu}_{oo',t-1}} &= \frac{\dot{A}_{o',t}\dot{u}_{o',t+1}^{\beta\theta/\kappa}\dot{w}_{o',t}^{\theta/\kappa}}{F(\{\tilde{\mu}_{oo'',t-1}\dot{A}_{o'',t}\dot{u}_{o'',t+1}^{\beta\theta/\kappa}\dot{w}_{o'',t}^{\theta/\kappa}\}_{o''=1}^{O})} \label{eq:mu_growth}\\
\dot{u}_{o,t+1} &= F(\{\tilde{\mu}_{oo'',t}\dot{A}_{o'',t+1}\dot{u}_{o'',t+2}^{\beta\theta/\kappa}\dot{w}_{o'',t+1}^{\theta/\kappa}\}_{o''=1}^{O})^{\kappa/\theta} \label{eq:u_growth}
\end{align}

The observed transition probabilities follow:
\begin{equation*}
\mu_{oo',t} = \tilde{\mu}_{oo',t} F_{o'}(\tilde{\mu}_{o1,t}, \ldots, \tilde{\mu}_{oO,t})
\end{equation*}

with employment evolving according to the transition matrix and exogenous flows.

\textbf{Interpretation.} Equations \eqref{eq:mu_growth}--\eqref{eq:u_growth} form a forward-looking system where current mobility depends on future expected utilities. The correlation function $F$ preserves the DIDES structure: when technological shocks cluster (affecting the growth rates $\dot{A}_{o,t}$ and $\dot{\alpha}_{o,t}$ in skill-similar occupations), the denominator in \eqref{eq:mu_growth} limits relative mobility adjustments, forcing wage changes to absorb the shock.

See Appendix \ref{c:appendix:add_math_dynamic} for detailed derivations.
\subsubsection{Dynamic Hat Algebra} \label{a:appendix:dynamic_hat_algebra}

This section extends the hat algebra to dynamic settings, enabling counterfactual analysis of transition paths under alternative technological scenarios.

\textbf{Counterfactual notation.} For counterfactual fundamentals $\{\hat{\boldsymbol{A}}_t, \hat{\boldsymbol{\alpha}}_t, \hat{\mathcal{A}}_t\}$, define:
\begin{itemize}
    \item $\hat{x}_t = \dot{x}_t'/\dot{x}_t$: ratio of counterfactual to baseline growth rates
    \item $\dot{x}_t' = x_t'/x_{t-1}'$: counterfactual growth rate
\end{itemize}

\textbf{Counterfactual equilibrium.} The wage response follows from production equilibrium:
\begin{equation*}
\hat{w}_{o,t+1} = \left(\frac{\hat{Y}_{t+1} \hat{\alpha}_{o,t+1}}{\hat{L}_{o,t+1}}\right)^{\frac{1}{\sigma}}
\end{equation*}

Counterfactual transition probabilities evolve recursively:
\begin{equation*}
\tilde{\mu}_{oo',t}' = \frac{\tilde{\mu}_{oo',t-1}' \dot{\tilde{\mu}}_{oo',t} \hat{A}_{o',t} \hat{u}_{o',t+1}^{\beta\theta/\kappa} \hat{w}_{o',t}^{\theta/\kappa}}{F(\{\tilde{\mu}_{oo'',t-1}' \dot{\tilde{\mu}}_{oo'',t} \hat{A}_{o'',t} \hat{u}_{o'',t+1}^{\beta\theta/\kappa} \hat{w}_{o'',t}^{\theta/\kappa}\}_{o''=1}^{O})}
\end{equation*}

Expected utilities adjust according to:
\begin{equation*}
\hat{u}_{o,t+1} = F(\{\tilde{\mu}_{oo'',t}' \dot{\tilde{\mu}}_{oo'',t+1} \hat{A}_{o'',t+1} \hat{u}_{o'',t+2}^{\beta\theta/\kappa} \hat{w}_{o'',t+1}^{\theta/\kappa}\}_{o''=1}^{O})^{\kappa/\theta}
\end{equation*}

with observed transitions $\mu_{oo',t}' = \tilde{\mu}_{oo',t}' F_{o'}(\tilde{\mu}_{o1,t}', \ldots, \tilde{\mu}_{oO,t}')$ and employment evolution:
\begin{equation*}
L_{o,t}' = \sum_{o'} \mu_{o'o,t}' L_{o',t-1}' + \Delta L_{o,t}
\end{equation*}

\subsubsection{Initial Conditions}

For unexpected shocks at $t=1$ (with baseline conditions at $t=0$: $\hat{u}_{o,0} = 1$, $\mu_{oo',0}' = \mu_{oo',0}$, $L_{o,0}' = L_{o,0}$):
\begin{align*}
\tilde{\mu}_{oo',1}' &= \frac{\vartheta_{oo',1} \hat{A}_{o',1} \hat{w}_{o',1}^{\theta/\kappa} \hat{u}_{o',2}^{\beta\theta/\kappa}}{F(\{\vartheta_{oo'',1} \hat{A}_{o'',1} \hat{w}_{o'',1}^{\theta/\kappa} \hat{u}_{o'',2}^{\beta\theta/\kappa}\}_{o''=1}^{O})} \\
\hat{u}_{o,1} &= F(\{\vartheta_{oo',1} \hat{A}_{o',1} \hat{w}_{o',1}^{\theta/\kappa} \hat{u}_{o',2}^{\beta\theta/\kappa}\}_{o'=1}^{O})^{\kappa/\theta}
\end{align*}
where $\vartheta_{oo',1} = \tilde{\mu}_{oo',1} \hat{u}_{o',1}^{\beta\theta/\kappa}$ captures the initial adjustment.

See Appendix \ref{c:appendix:add_math_dynamic} for detailed derivations.
\subsubsection{Welfare Metrics} \label{a:appendix:dynamic_welfare_metric}

This section derives welfare measures that account for both wage changes and mobility gains through the lens of staying probabilities.

\textbf{Value function decomposition.} The recursive value function can be rewritten to highlight the role of staying probabilities:
\begin{align*}
V_{o,t} &= \ln w_{o,t} + \beta V_{o,t+1} + \frac{\kappa}{\theta} \ln\left(\frac{F(A_{1,t}Z_{o1,t}^{\theta/\kappa}, \ldots, A_{O,t}Z_{oO,t}^{\theta/\kappa})}{\exp(\beta V_{o,t+1} + \ln w_{o,t})^{\theta/\kappa}}\right) + \bar{\gamma}\frac{\kappa}{\theta} \\
&= \ln(A_{o,t}^{\kappa/\theta} w_{o,t}) + \beta V_{o,t+1} + \frac{\kappa}{\theta} \ln\left(\frac{1}{\tilde{\mu}_{oo,t}}\right) + \bar{\gamma}\frac{\kappa}{\theta}
\end{align*}

Iterating forward yields:
\begin{equation*}
V_{o,t} = \sum_{s=t}^{\infty} \beta^{s-t} \ln\left[\left(\frac{A_{o,s}}{\tilde{\mu}_{oo,s}}\right)^{\kappa/\theta} w_{o,s}\right] + \frac{\bar{\gamma}\kappa}{\theta(1-\beta)}
\end{equation*}

\textbf{Equivalent variation.} The welfare change from baseline to counterfactual, measured as equivalent variation $\delta_{o,t}$, satisfies:
\begin{equation*}
V_{o,t}' = V_{o,t} + \frac{\ln \delta_{o,t}}{1-\beta}
\end{equation*}

This yields:
\begin{equation*}
\delta_{o,t} = (1-\beta)\sum_{s=t}^{\infty} \beta^{s-t} \ln\left[\frac{w_{o,s}'}{w_{o,s}} \left(\frac{A_{o,s}'/\tilde{\mu}_{oo,s}'}{A_{o,s}/\tilde{\mu}_{oo,s}}\right)^{\kappa/\theta}\right]
\end{equation*}

\textbf{Hat algebra representation.} Expressed in terms of counterfactual changes:
\begin{equation*}
\delta_{o,t} = (1-\beta)\sum_{s=t}^{\infty} \beta^{s-t} \ln\left(\hat{w}_{o,s} \cdot \hat{\tilde{\mu}}_{oo,s}^{-\kappa/\theta}\right)
\end{equation*}

The term $\hat{\tilde{\mu}}_{oo,s}^{-\kappa/\theta}$ captures mobility gains: when technological shocks reduce staying probabilities (workers transition more), this provides partial welfare compensation for wage losses. 

\subsection{Alternative Specification with Productivity in Production} \label{app:efficiency}

Consider an alternative specification where a fraction $\delta \in [0,1]$ of workers contribute their idiosyncratic productivity directly to production, while the remaining $(1-\delta)$ experience reduced effort costs proportional to their productivity as in the baseline model.

\paragraph{Workers with productivity in production ($\delta$ fraction)} 
For these workers, occupation $o$ provides wage $w_o\epsilon_o(i)$ per unit of labor supplied, where $\epsilon_o(i)$ enters production as efficiency units. Their utility from occupation $o$ is:
\begin{equation*}
    u_o(i) = \ln(w_o\epsilon_o(i)) = \ln w_o + \ln\epsilon_o(i)
\end{equation*}

\paragraph{Workers with productivity in preferences ($(1-\delta)$ fraction)} 
As in the baseline model, these workers receive wage $w_o$ and supply effort $\ell_o(i) = 1/\epsilon_o(i)$, yielding utility:
\begin{equation*}
    u_o(i) = \ln w_o - \ln\ell_o(i) = \ln w_o + \ln\epsilon_o(i)
\end{equation*}

Both types make identical occupational choices since utility functions are equivalent. The key difference emerges in aggregation and wage responses.

\paragraph{Effective Labor Supply}
The effective labor supplied to occupation $o$ combines both worker types:
\begin{equation}\label{app:eq:eff_labor}
    L_o^{\text{eff}} = \delta \cdot \pi_o\bar{L} \cdot \mathbb{E}[\epsilon_o(i)|o^*(i)=o] + (1-\delta) \cdot \pi_o\bar{L}
\end{equation}

where the conditional expectation of productivity is:
\begin{equation*}
    \mathbb{E}[\epsilon_o(i)|o^*(i)=o] = \Gamma\left(1-\frac{1}{\theta}\right)\frac{W}{w_o}
\end{equation*}

with wage index $W = F(A_1w_1^{\theta}, \ldots, A_Ow_O^{\theta})^{1/\theta}$. This yields:
\begin{equation}\label{app:eq:eff_labor_final}
    L_o^{\text{eff}} = \pi_o\bar{L}\left[1 + \delta\left(\Gamma\left(1-\frac{1}{\theta}\right)\frac{W}{w_o} - 1\right)\right]
\end{equation}

Define $s_o^{\delta}$ as the share of efficiency units in occupation $o$ contributed by productivity workers:
\begin{equation*}
    s_o^{\delta} = \frac{\delta \cdot \mathbb{E}[\epsilon_o(i)|o^*(i)=o]}{\delta \cdot \mathbb{E}[\epsilon_o(i)|o^*(i)=o] + (1-\delta)} = \frac{\delta \cdot \Gamma(1-\frac{1}{\theta})\frac{W}{w_o}}{\delta \cdot \Gamma(1-\frac{1}{\theta})\frac{W}{w_o} + (1-\delta)}
\end{equation*}

Note that for $\delta$ workers, higher wages in occupation $o$ attract workers with lower average productivity due to selection effects, partially offsetting employment increases.

\paragraph{Labor Supply Elasticities}
The effective elasticity of labor supply combines both worker types (see Appendix \ref{app:eff_elasticity_derivation} for detailed derivation):
\begin{equation}\label{app:eq:eff_elasticity}
    \Theta_{oo'}^{\text{eff}} = \begin{cases}
        \theta\left[\frac{x_oF_{oo}}{F_o}\right] + (\theta-s_o^{\delta})(1-\pi_o) & \text{if } o = o' \\
        \theta\left[\frac{x_{o'}F_{oo'}}{F_o}\right] - (\theta-s_{o'}^{\delta})\pi_{o'} & \text{if } o \neq o'
    \end{cases}
\end{equation}

This decomposition reveals two key components:
\begin{itemize}
    \item \textbf{Correlation term} $\theta\cdot\frac{x_{o'}F_{oo'}}{F_o}$: Captures skill-based substitution patterns, unaffected by worker composition
    \item \textbf{Baseline term} with coefficient $(\theta-s_o^{\delta})$: Represents average substitutability, decreasing in the share of efficiency units from productivity workers
\end{itemize}

As $s_o^{\delta}$ increases (either through higher $\delta$ or lower relative wages in occupation $o$), baseline substitutability falls. The coefficient ranges from $\theta$ when $s_o^{\delta} = 0$ (no productivity workers or very high wages) to $(\theta-1)$ when $s_o^{\delta} \to 1$ (dominated by productivity workers). Since $\theta > 1$ in our estimates, the elasticity remains positive but diminished. This implies that when more efficiency units come from workers contributing productivity to production, the labor market becomes less responsive to wage changes, with adjustment increasingly dominated by skill-based substitution patterns rather than average mobility.

\paragraph{Implications for Incidence}
Higher values of $s_o^{\delta}$ have three main effects:
\begin{enumerate}
    \item \textbf{Reduced overall elasticities:} The norm $\|\Theta^{\text{eff}}\|$ decreases, implying less employment adjustment
    \item \textbf{Greater wage pass-through:} With $\Delta^{\text{eff}} = (\mathbf{I} + \Theta^{\text{eff}}/\sigma)^{-1}$, smaller elasticities yield larger diagonal elements
    \item \textbf{Correlation dominance:} Skill-based substitution patterns become the primary adjustment mechanism
\end{enumerate}

Our baseline specification ($\delta = 0$, hence $s_o^{\delta} = 0$ for all $o$) assumes all workers experience productivity through preferences, maximizing labor supply elasticities. Any positive $\delta$ would reduce these elasticities through $s_o^{\delta} > 0$, implying our estimates provide lower bounds on wage inequality from technological clustering.

\subsection{Technological Change and Group-Specific Labor Market Frictions} \label{app_ss:distortions}

Following \cite{Burstein2019-jl}, \cite{Hsieh2019-np}, and \cite{Hurst2024-jw}, we incorporate pre-existing labor market discrimination to examine how technological change interacts with group-specific barriers to generate heterogeneous distributional effects.

\paragraph{Setup}

We normalize labor productivity across groups: $A_{o,t}^g = A_{o,t}$ for all $g$.\footnote{This normalization follows standard practice in the discrimination literature. The key assumption is that relative labor productivity across groups is constant, implying changes in occupational distributions and wages relative to white men are driven by changes in discrimination or preferences rather than productivity differences.}

Worker $i$ from group $g$ receives utility from occupation $o$ at time $t$:
\begin{equation*}
    u_{o,t}^g(i) = \ln[(1 - \tau_{o,t}^g)w_{o,t}] + \ln z_{o,t}^g + \ln\epsilon_o(i)
\end{equation*}
where $\tau_{o,t}^g \in [0,1)$ represents pecuniary discrimination (wage "tax"), $z_{o,t}^g > 0$ captures non-pecuniary barriers, and $\epsilon_o(i)$ is idiosyncratic productivity. A fraction $\delta$ of workers contribute productivity to production while $(1-\delta)$ affect effort costs.

\paragraph{Equilibrium Employment Shares}

The share of group $g$ workers choosing occupation $o$ is:
\begin{equation*}
    \pi_{o,t}^g = \frac{x_o^g \cdot F_o(\mathbf{x}^g)}{F(\mathbf{x}^g)}
\end{equation*}
where $x_o^g \equiv A_{o,t}[(1-\tau_{o,t}^g)w_{o,t}z_{o,t}^g]^{\theta}$ and $\mathbf{x}^g = (x_1^g, \ldots, x_O^g)$.

Define the correlation-adjusted share as $\tilde{\pi}_o^g \equiv x_o^g/F(\mathbf{x}^g)$, which satisfies:
\begin{equation*}
    \pi_o^g = \tilde{\pi}_o^g \cdot F_o(\tilde{\boldsymbol{\pi}}^g)
\end{equation*}
This establishes a one-to-one mapping between observed shares $\{\pi_o^g\}$ and adjusted shares $\{\tilde{\pi}_o^g\}$.

\paragraph{Welfare Index and Average Productivity}

The group welfare index is:
\begin{equation*}
    W_t^g \equiv F(\mathbf{x}^g)^{1/\theta}
\end{equation*}

Average productivity for workers selecting occupation $o$ in group $g$:
\begin{equation*}
    \mathbb{E}[\epsilon_o(i)|o^*(i)=o, g] = \Gamma\left(1-\frac{1}{\theta}\right) \cdot A_{o,t}^{1/\theta} \cdot (\tilde{\pi}_o^g)^{-1/\theta}
\end{equation*}

\paragraph{Average Wages: Geometric Mean}

Following \cite{Hsieh2019-np}, we define the geometric average wage for group $g$ workers in occupation $o$:
\begin{equation*}
    \overline{\text{wage}}_{o,t}^g \equiv (1-\tau_{o,t}^g)w_{o,t} \cdot e^{\mathbb{E}[\ln\epsilon_o(i)|o^*(i)=o,g]}
\end{equation*}

With fraction $\delta$ of workers contributing productivity to production and $(1-\delta)$ affecting effort costs, the weighted geometric average becomes:
\begin{equation*}
    \overline{\text{wage}}_{o,t}^g = (1-\tau_{o,t}^g)w_{o,t} \cdot e^{\delta \cdot \mathbb{E}[\ln\epsilon_o(i)|o^*(i)=o,g]}
\end{equation*}

From Fréchet distribution properties:
\begin{equation*}
    \mathbb{E}[\ln\epsilon_o(i)|o^*(i)=o,g] = -\frac{\gamma}{\theta} + \ln\mathbb{E}[\epsilon_o(i)|o^*(i)=o,g]
\end{equation*}
where $\gamma$ is the Euler-Mascheroni constant.

Since $\mathbb{E}[\epsilon_o(i)|o^*(i)=o,g] = \Gamma(1-1/\theta) \cdot A_{o,t}^{1/\theta} \cdot (\tilde{\pi}_o^g)^{-1/\theta}$:
\begin{equation*}
    \overline{\text{wage}}_{o,t}^g = (1-\tau_{o,t}^g)w_{o,t} \cdot \left[\Gamma(1-1/\theta) \cdot A_{o,t}^{1/\theta} \cdot (\tilde{\pi}_o^g)^{-1/\theta} \cdot e^{-\gamma/\theta}\right]^{\delta}
\end{equation*}

The crucial advantage of the geometric average is its multiplicative separability: pecuniary discrimination $(1-\tau_{o,t}^g)$ enters only through the base wage term, while non-pecuniary discrimination $z_{o,t}^g$ affects only productivity selection through $\tilde{\pi}_o^g$.

\paragraph{Identification of Discrimination Changes}

Using geometric average wages, we identify relative discrimination changes across occupations. Since welfare indices are unobservable, we can only identify occupation-specific discrimination relative to an economy-wide average.

From geometric average wages:
\begin{equation*}
    \Delta\ln\overline{\text{wage}}_{o,t}^g - \Delta\ln\overline{\text{wage}}_{o,t}^w = \Delta\ln\left[\frac{1-\tau_{o,t}^g}{1-\tau_{o,t}^w}\right] - \frac{\delta}{\theta}\left[(\Delta\ln\tilde{\pi}_{o,t}^g - \Delta\ln\tilde{\pi}_{o,t}^w)\right]
\end{equation*}

From adjusted employment shares:
\begin{equation*}
    \Delta\ln\tilde{\pi}_{o,t}^g - \Delta\ln\tilde{\pi}_{o,t}^w = \theta \cdot \Delta\ln\left[\frac{(1-\tau_{o,t}^g)z_{o,t}^g}{(1-\tau_{o,t}^w)z_{o,t}^w}\right]
\end{equation*}

Combining these equations yields the composite discrimination effect:
\begin{equation*}
    \Delta\ln\left[\frac{(1-\tau_{o,t}^g)z_{o,t}^g}{(1-\tau_{o,t}^w)z_{o,t}^w}\right] = \frac{1}{\theta}(\Delta\ln\tilde{\pi}_{o,t}^g - \Delta\ln\tilde{\pi}_{o,t}^w)
\end{equation*}

Pecuniary discrimination:
\begin{equation*}
    \Delta\ln\left[\frac{1-\tau_{o,t}^g}{1-\tau_{o,t}^w}\right] = (\Delta\ln\overline{\text{wage}}_{o,t}^g - \Delta\ln\overline{\text{wage}}_{o,t}^w) + \frac{\delta}{\theta}(\Delta\ln\tilde{\pi}_{o,t}^g - \Delta\ln\tilde{\pi}_{o,t}^w)
\end{equation*}

Non-pecuniary discrimination:
\begin{equation*}
    \Delta\ln\left[\frac{z_{o,t}^g}{z_{o,t}^w}\right] = \frac{1-\delta}{\theta}(\Delta\ln\tilde{\pi}_{o,t}^g - \Delta\ln\tilde{\pi}_{o,t}^w) - (\Delta\ln\overline{\text{wage}}_{o,t}^g - \Delta\ln\overline{\text{wage}}_{o,t}^w)
\end{equation*}

\textbf{Special cases:}
\begin{itemize}
    \item When $\delta = 0$: Pecuniary discrimination is identified directly from wage gaps, and non-pecuniary from employment shifts
    \item When $\delta = 1$: Non-pecuniary discrimination vanishes from the wage equation, allowing clean separation
    \item With $\delta = 0.5$ (our estimate from Section \ref{ss:efficiency}): Both discrimination types contribute equally to employment gaps while pecuniary discrimination has larger wage effects
\end{itemize}

\paragraph{Counterfactual Analysis of Between-Group Inequality}

This framework enables analysis of how technological change affects between-group inequality through three counterfactual scenarios:

\textbf{Counterfactual 1: No technological change.} Holding technology constant ($\Delta\ln A_o = \Delta\ln w^{\text{Automation}}_o = 0$) while allowing discrimination to evolve reveals baseline convergence trends. This isolates whether groups were converging absent technological disruption, crucial for understanding whether technology accelerates or reverses existing trends.

\textbf{Counterfactual 2: No pecuniary discrimination changes.} Fixing wage penalties ($\Delta\ln(1-\tau_o^g) = 0$) while allowing technology and non-pecuniary barriers to evolve shows how pecuniary discrimination mediates technological impact. Comparing to the full model reveals whether wage discrimination changes amplify or dampen technological disruption.

\textbf{Counterfactual 3: No non-pecuniary discrimination changes.} Holding amenity and cultural barriers constant ($\Delta\ln z_o^g = 0$) isolates how non-pecuniary factors shape adjustment to technological shocks. The gap between this and Counterfactual 2 identifies how non-pecuniary barriers prevent optimal reallocation following technological change.

\paragraph{Welfare and Wage Implications}

These counterfactuals reveal distinct effects on group welfare and average wages. Group $g$'s welfare relative to white men is:
\begin{equation*}
    \frac{W^g}{W^w} = \left[\frac{F(\mathbf{x}^g)}{F(\mathbf{x}^w)}\right]^{1/\theta}
\end{equation*}
where $\mathbf{x}^g$ embeds technology, pecuniary, and non-pecuniary factors through $x_o^g = A_{o,t}[(1-\tau_{o,t}^g)w_{o,t}z_{o,t}^g]^{\theta}$.

The relative average wage between groups is:
\begin{equation*}
    \frac{\overline{\text{wage}}^g}{\overline{\text{wage}}^w} = \frac{\sum_o \pi_o^g \cdot \overline{\text{wage}}_o^g}{\sum_o \pi_o^w \cdot \overline{\text{wage}}_o^w}
\end{equation*}
where $\overline{\text{wage}}_o^g$ incorporates both direct wage effects and selection through the geometric average.

By comparing these objects across our three counterfactual scenarios, we can decompose between-group inequality changes into different sources: the direct effect of technological change through occupational wages, the role of changing pecuniary discrimination, the contribution of non-pecuniary barriers, and their interactions through the substitution structure. This decomposition reveals how technological clustering interacts with pre-existing inequalities.
\newpage

\section{Additional Empirical Results}\label{b:apppendix}

\setcounter{theorem}{0}
\setcounter{proposition}{0} 
\setcounter{lemma}{0}
\setcounter{corollary}{0}
\setcounter{definition}{0}
\setcounter{assumption}{0}
\setcounter{remark}{0}
\setcounter{table}{0}
\setcounter{figure}{0}
\setcounter{equation}{0} 
%
\renewcommand{\thetheorem}{B\arabic{theorem}}
\renewcommand{\theproposition}{B\arabic{proposition}}
\renewcommand{\thelemma}{B\arabic{lemma}}
\renewcommand{\thecorollary}{B\arabic{corollary}}
\renewcommand{\thedefinition}{B\arabic{definition}}
\renewcommand{\theassumption}{B\arabic{assumption}}
\renewcommand{\theremark}{B\arabic{remark}}
\renewcommand{\thetable}{B\arabic{table}}
\renewcommand{\thefigure}{B\arabic{figure}}
\renewcommand{\theequation}{B\arabic{equation}}


\subsection{Occupation Classification}\label{b:appendix:occ_xwalk}

Our analyses require constructing occupation-level panels for the period 1980–2018. To this end, following \cite{Autor2024-ay}, we use a consistent occupation coding scheme (occ1990dd), originally developed by Dorn (2009) and updated through 2018, which yields a balanced panel of 306 consistent, 3-digit occupations. This detailed classification preserves crucial occupational variation and accurately captures the structure of the labor market over the period.

\subsection{Occupational Skill Intensities}\label{b:appendix:skills}

This appendix details the construction of occupational skill intensities $\{\omega_o^s\}$ from O*NET data, following the methodology of \cite{Lise2020-hm}.

\paragraph{Data Source and Theoretical Correspondence} 

O*NET version 28.2 provides comprehensive occupational information for 873 occupations. The database contains 277 descriptors organized into nine categories, with ratings derived from two sources: (i) worker surveys for occupation-specific assessments, and (ii) occupational analyst surveys for standardized evaluations. We retain 218 descriptors from five categories—skills, abilities, knowledge, work activities, and work context—as these directly correspond to the theoretical concept of skill intensities. The remaining categories (job interests, work values, work styles, and experience/education requirements) are excluded as they reflect preferences or credentials rather than skill utilization in production.

The O*NET skill descriptors have cardinal meaning: evaluators assess both the importance and level of each skill intensity for performing an occupation's tasks on quantitative scales. These cardinal measures capture how intensively occupations utilize different skills in production, directly corresponding to our theoretical object $\omega_o^s = (A_o^s)^\theta/A_o$—the share of occupational productivity attributable to each skill dimension. We treat O*NET measures as direct observations of this skill productivity structure under the maintained assumption that O*NET's measurement protocol reflects the same productive skill intensities embedded in our model. 

\paragraph{Dimension Reduction} 

Following \cite{Lise2020-hm}, we apply Principal Component Analysis (PCA) with exclusion restrictions to extract three interpretable skill dimensions:
\begin{enumerate}
\item \textbf{Cognitive skills}: Identified through the mathematics knowledge descriptor
\item \textbf{Manual skills}: Identified through the mechanical knowledge descriptor  
\item \textbf{Interpersonal skills}: Identified through the social perceptiveness descriptor
\end{enumerate}

These exclusion restrictions ensure that each principal component has a clear economic interpretation while maintaining orthogonality—a property that aligns with the model's assumption of independent skill-specific productivity draws. The first three components explain 58\% of total variation (cognitive: 35.6\%, manual: 15.2\%, interpersonal: 6.9\%), with the dominance of cognitive and manual dimensions reflecting their primary role in occupational differentiation and, consequently, in determining substitution patterns.

\paragraph{Construction of Skill Intensities} 

The raw principal component loadings contain negative values, violating the theoretical requirement that $\omega_o^s \geq 0$. We address this through a two-step procedure:

\textbf{Step 1: Rescaling.} Apply linear transformations to map each occupation's loading on principal component $s$ to the unit interval:
\begin{equation*}
r_o^s = \frac{PC_o^s - \min_{o'} PC_{o'}^s}{\max_{o'} PC_{o'}^s - \min_{o'} PC_{o'}^s}
\end{equation*}

Linear transformations are crucial as they preserve relative distances between occupations in each skill dimension—a key feature for distance-dependent elasticity of substitution. Converting to ranks would impose uniform spacing between adjacent occupations, eliminating meaningful variation in skill proximity that drives substitution patterns.

\textbf{Step 2: Variance weighting.} Convert rescaled loadings to variance-weighted shares to obtain the final skill intensities:
\begin{equation*}
\omega_o^s = \frac{r_o^s \times \text{Var}_s}{\sum_{s' \in \mathcal{S}} r_o^{s'} \times \text{Var}_{s'}}
\end{equation*}
where $\text{Var}_s$ is the proportion of variance explained by component $s$. This weighting ensures that skills contributing more to occupational variation receive proportionally higher weight in the correlation function $F$, consistent with their greater role in determining substitution patterns. The formulation guarantees $\sum_s \omega_o^s = 1$ for each occupation, as required by the theoretical restriction that skill intensities sum to unity.

\paragraph{Mapping to Occupation Codes} 

The final step maps O*NET occupation codes to the consistent occ1990dd classification used throughout the analysis, enabling linkage with employment and wage data from Census and CPS. The crosswalk covers 306 three-digit occupations, preserving granular variation while maintaining temporal consistency from 1980-2018.
\subsection{Measures of Automation and AI Exposure}\label{b:appendix:expos}

\paragraph{Existing Measures and Task Evaluation}

The literature has developed several measures of occupational exposure to automation, each capturing different aspects of technological vulnerability. These include occupational routine task intensity \citep{Autor2013-zm}, the decline in labor share due to the adoption of industrial robots, machines, and software \citep{Acemoglu2022-lv}, and occupational exposure to automation patents \citep{Autor2024-ay}. These measures share a common theoretical foundation rooted in Polanyi's Paradox \citep{Autor2015-ff}: jobs codifiable into well-defined rules or algorithms are more susceptible to automation and are typically classified as routine. Consistent with this framework, numerous studies document that occupations with higher automation exposure have experienced slower wage growth over the past four decades.

In contrast to automation, measuring occupational exposure to artificial intelligence presents unique challenges, as its full economic impact remains unrealized. To address this challenge, recent research has leveraged large language models (LLMs) as predictive tools for assessing economic outcomes. \cite{Eloundou2024-cu} pioneered this approach by evaluating occupational exposure to LLMs through a dual methodology: human annotators and GPT-4 classified O*NET tasks using an exposure rubric to determine whether LLMs can perform or assist with specific tasks. Their findings highlight the potential of LLMs as general-purpose technologies. 

Subsequent validation studies strengthen confidence in this approach. \cite{Bick2024-eu} and \cite{Tomlinson2025-va} demonstrate high correlations between LLM task evaluations and ex-post real-world generative AI adoption patterns. Most compelling, \cite{brynjolfsson2025canaries} provide causal evidence that LLM exposure measures predict actual labor market outcomes: using high-frequency administrative payroll data, they document that early-career workers (ages 22-25) in the most AI-exposed occupations have experienced a 13\% relative decline in employment since widespread AI adoption, with effects concentrated in occupations where AI automates rather than augments human labor.

\paragraph{Our Methodology: ChatGPT Task Evaluation}

Building on this validated literature, we adopt a streamlined yet comparable approach leveraging ChatGPT to directly estimate AI and automation exposure. Our methodology employs the O*NET database, which provides detailed descriptions of 19,200 tasks across 862 occupations. Each task undergoes two distinct assessments:

\begin{itemize}
    \item \textbf{AI Exposure}: We query ChatGPT: ``Can generative AI (e.g., large language models like ChatGPT) potentially perform this task without human intervention?'' This assessment captures the extent to which occupations are exposed to AI-driven technologies.
    \item \textbf{Automation Exposure}: We query ChatGPT: ``Can industrial robots, machines, and computers (no AI capability) perform this task without human intervention?'' This distinguishes tasks automatable using conventional, rule-based systems from those requiring advanced AI capabilities.
\end{itemize}

Based on these evaluations, ChatGPT estimates that approximately 6,000 tasks (roughly one-third of the total) can be performed by AI without human intervention—a scale comparable to traditional automation technologies. This classification provides a granular perspective on the differential impacts of AI versus traditional automation across occupations. We then calculate the share of automatable or AI-exposed tasks within each occupation to construct our exposure measures.

\begin{figure}[ht]
    \centering
    \includegraphics[width=0.8\linewidth]{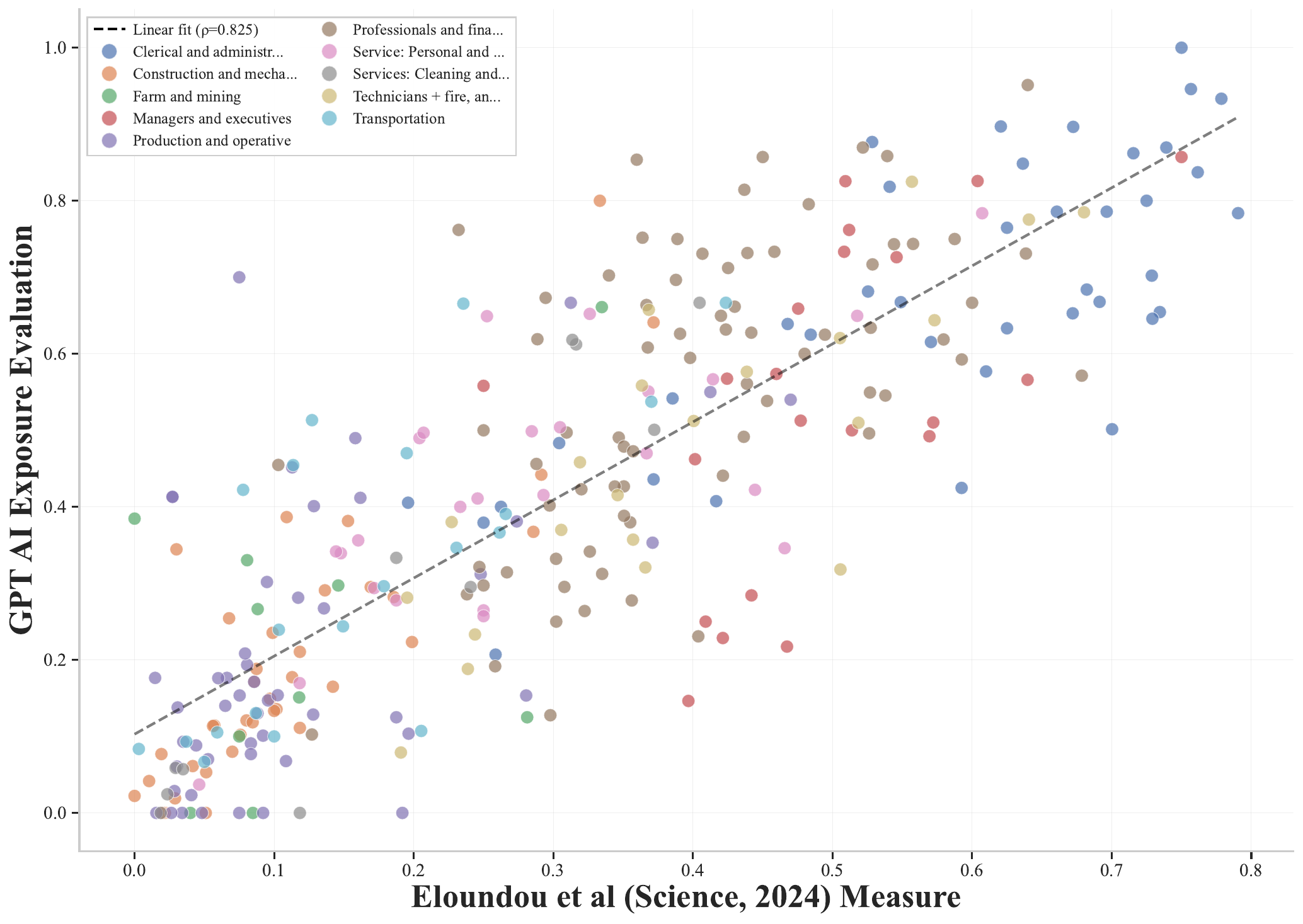}
    \caption{Comparison of AI Exposure Measures}
    \note{\textit{Notes:}This figure compares our ChatGPT AI evaluation scores (y-axis) with the science-based measure from \cite{Eloundou2024-cu} (x-axis). Points represent occupations colored by broad occupational categories. The dashed line shows the linear fit with correlation coefficient $\rho = 0.825$. The strong positive relationship validates our ChatGPT evaluation methodology against established measures in the literature.}
    \label{f:ai_science_comparison}
\end{figure}

\paragraph{Validation Against Existing Measures}

To validate our approach, we compare our ChatGPT-based measures with established metrics in the literature. Figure \ref{f:ai_science_comparison} demonstrates that our occupational exposure to generative AI correlates strongly with \cite{Eloundou2024-cu}'s measure, yielding a correlation coefficient of 0.825. This high correlation validates our streamlined methodology while confirming the robustness of LLM-based evaluation approaches.

\begin{figure}[ht]
    \centering
    \subcaptionbox{Automation Exposure and Wage, AR 2022}{\includegraphics[scale=0.39]{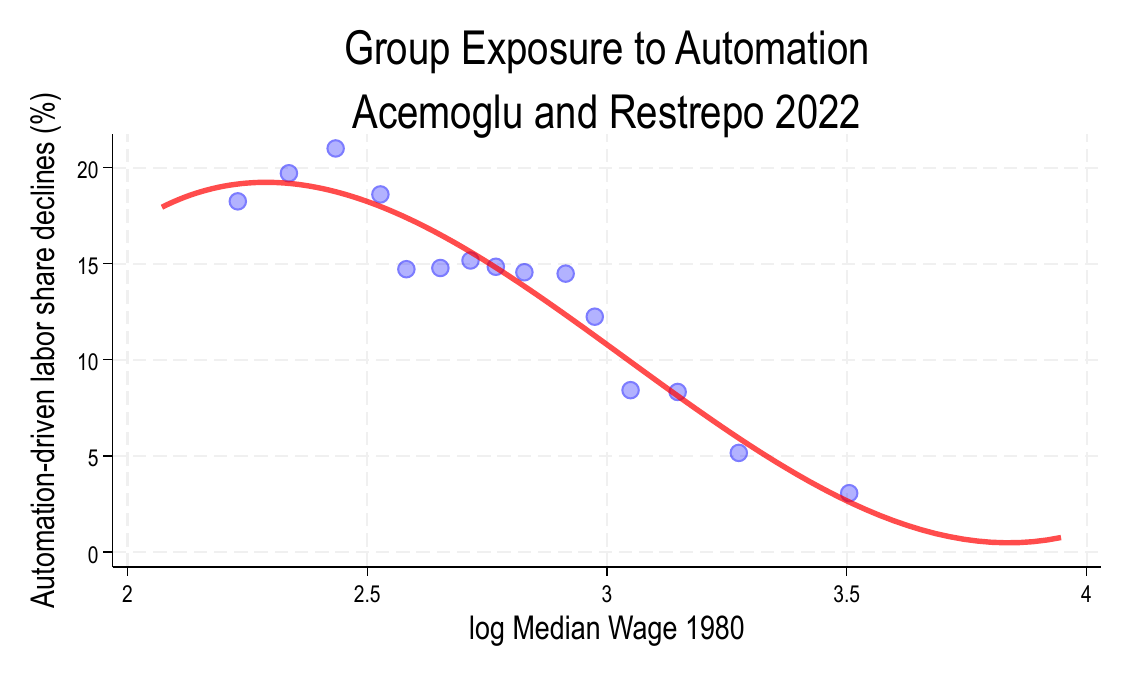}}\hfill
    \subcaptionbox{Automation Exposure and Wage, ChatGPT}{\includegraphics[scale=0.39]{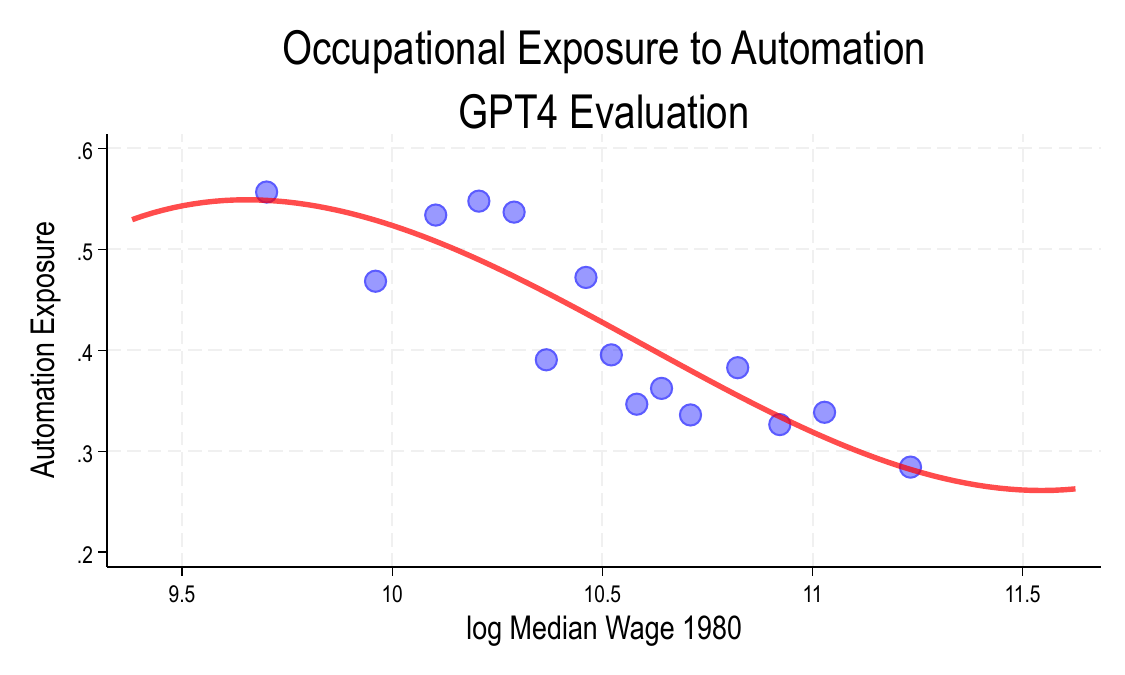}}\\
    \subcaptionbox{Routine Task Intensity}{\includegraphics[scale=0.39]{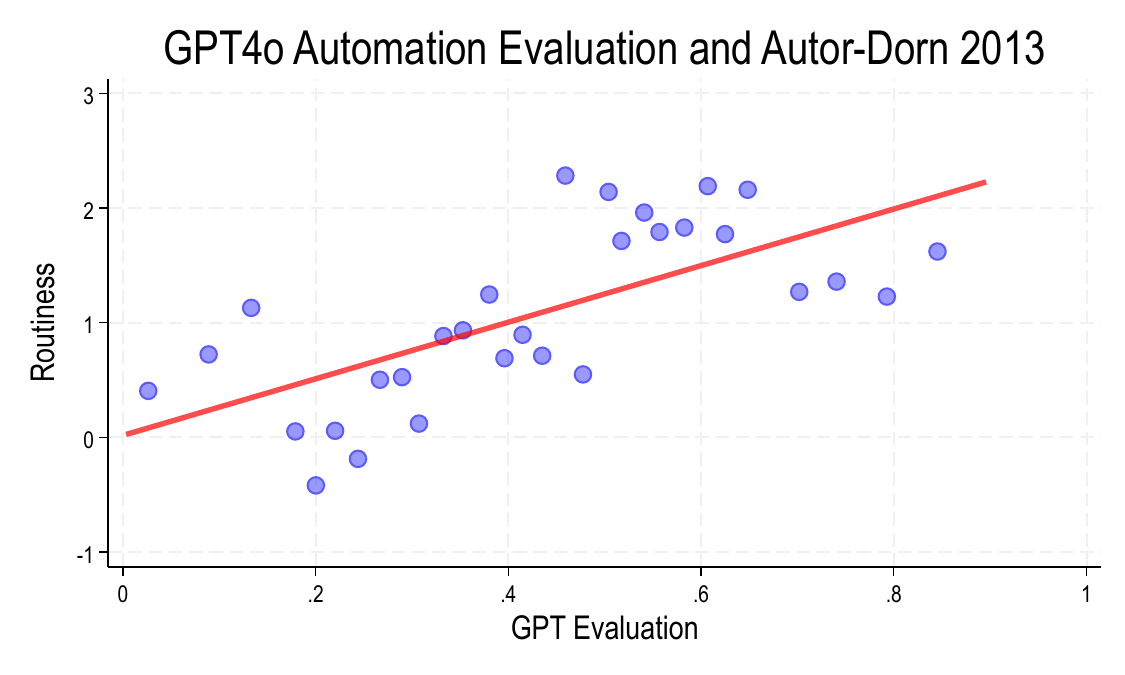}}\hfill
    \subcaptionbox{Automation Patent Exposure}{\includegraphics[scale=0.39]{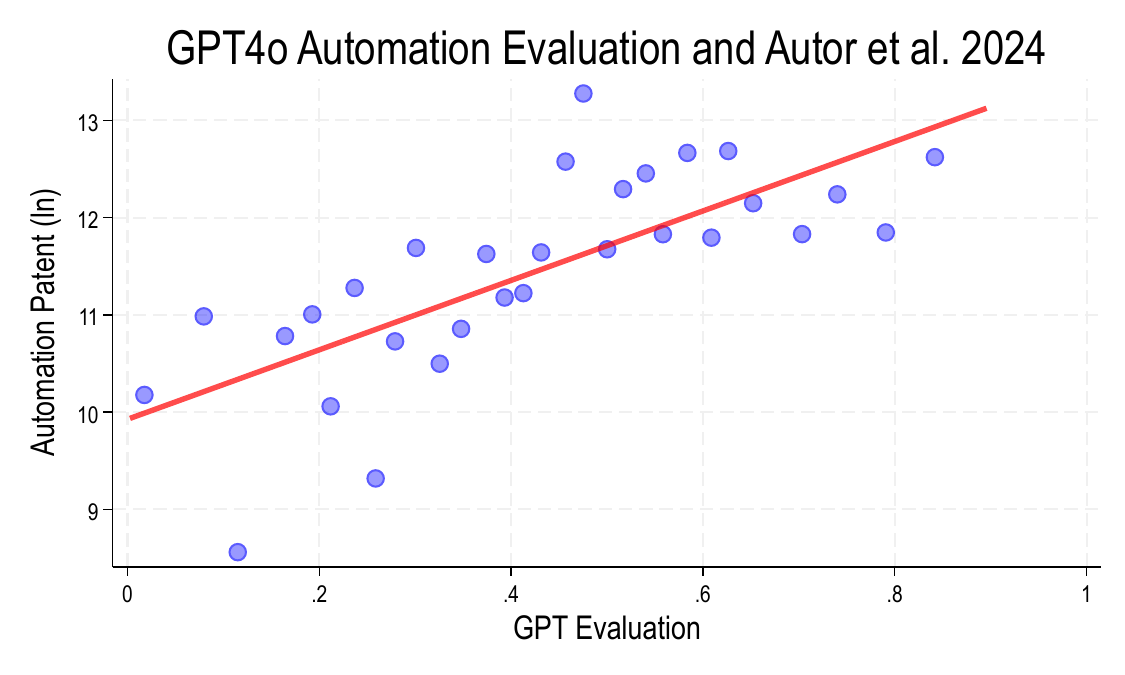}}
    \caption{Validation of Automation Exposure Measures}
    \note{\textit{Notes:}This figure compares automation exposure as evaluated by ChatGPT with existing measures using binscatter plots. Panel (a) shows the decline in labor share due to automation from \cite{Acemoglu2022-lv}, Panel (b) presents our ChatGPT estimates, Panel (c) compares with routine task intensity from \cite{Autor2013-zm}, and Panel (d) validates against automation patent exposure from \cite{Autor2024-ay}. The consistent patterns across all measures validate our ChatGPT evaluation approach.}
    \label{f:auto_expos}
\end{figure}

Figure \ref{f:auto_expos} further validates our automation exposure measure by comparing ChatGPT's estimates with existing metrics. Panels (a) and (b) compare our estimates with the automation exposure measure from \cite{Acemoglu2022-lv}. Since their measure operates at the demographic-age-education group level rather than the occupational level, we plot exposure against log median wage in 1980. The striking similarity of the two distributions across income levels confirms the validity of our approach. Panel (c) demonstrates a strong correlation between our measure and occupational routine task intensity, while Panel (d) reveals consistent patterns with exposure to automation patents from \cite{Autor2024-ay}.

These validation exercises demonstrate that our ChatGPT-based methodology produces measures highly consistent with established approaches while offering the advantage of direct, task-level evaluation for both automation and AI exposure. This validation is crucial for our subsequent analysis of how technological clustering in skill space shapes labor market incidence.
\subsection{Technological Exposure across Inter-personal Dimension} \label{b:appendix:expos_int}
Figure \ref{f:expos_int} illustrates how occupational exposure to automation and AI varies with interpersonal skill intensities. Panel (a) shows that occupations requiring greater interpersonal skills tend to be less exposed to automation, aligning with the intuition that social and emotional intelligence—often critical in managerial, negotiation, and caregiving roles—are difficult to codify into rule-based processes. In contrast, Panel (b) reveals that occupations with higher interpersonal skill intensities tend to be more exposed to AI, though with greater variance. This noisier relationship suggests that while AI can assist or complement interpersonal tasks (e.g., customer support or education), full automation remains limited by the complexity of human interaction.

These findings reinforce the distinct nature of AI and automation risks: whereas automation displaces predictable, rule-based tasks, AI is more likely to augment or replace cognitive tasks, including those requiring some degree of human interaction. However, interpersonal-intensive occupations—such as psychologists, teachers, and business executives—still rely on empathy, persuasion, and social nuance, which remain challenging for AI to fully replicate.

\begin{figure}[ht]
    \centering
    \subcaptionbox{Interpersonal vs. Automation Exposure}{\includegraphics[scale=0.39]{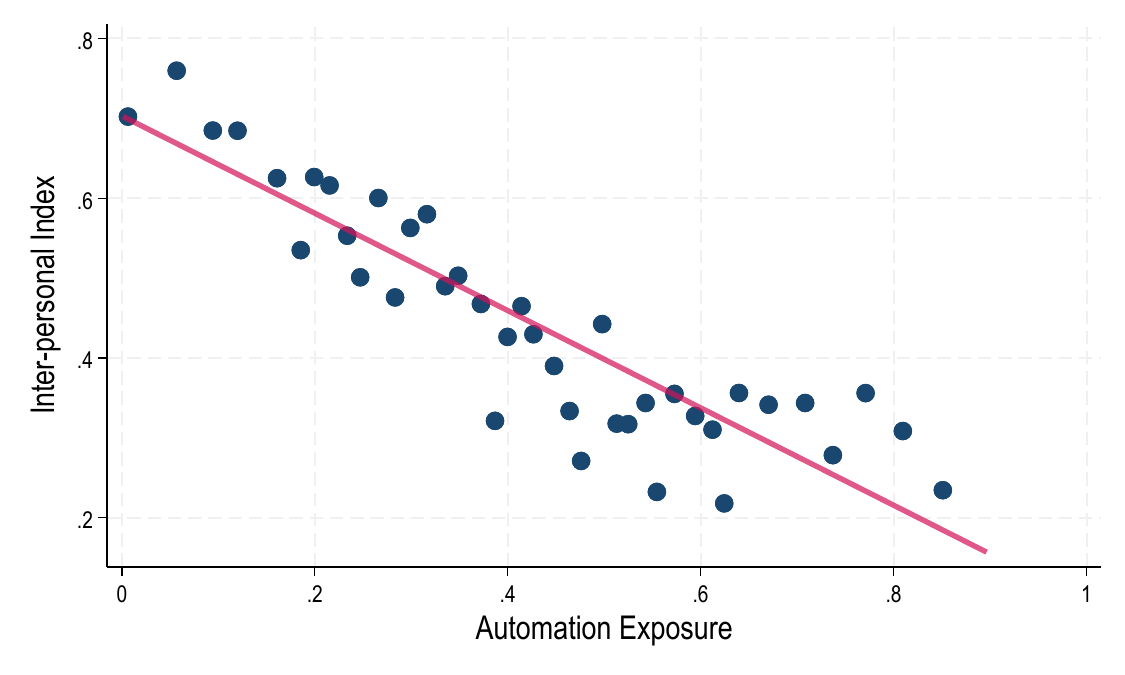}}\hfill
    \subcaptionbox{Interpersonal vs. AI Exposure}{\includegraphics[scale=0.39]{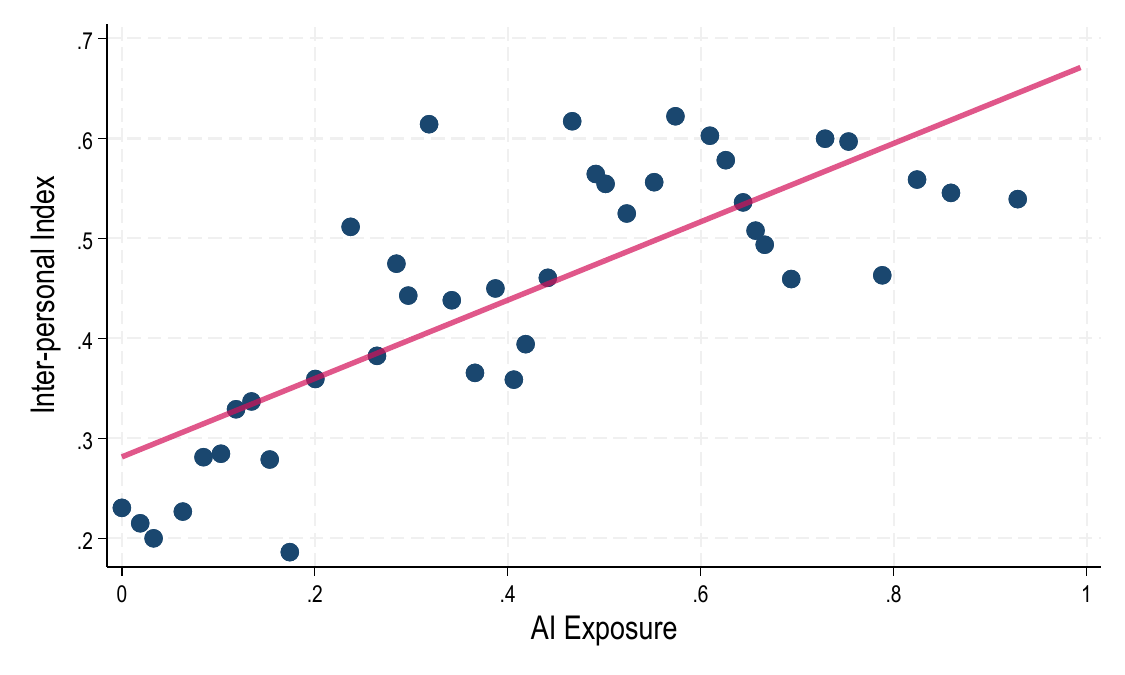}}
    \caption{Technological Exposures across Interpersonal Skills}
    \note{\textit{Notes:}This figure illustrates the relationship between occupational interpersonal skill intensities and exposure to automation (Panel (a)) and AI (Panel (b)).}
    \label{f:expos_int}
\end{figure}
\subsection{Spatial Visualization of Technological Clustering} \label{app:spatial_visualization}

Figure~\ref{f:spatial_structure} visualizes the spatial mechanism through which technological clustering constrains labor market adjustment. Panel A maps the eigenshock with the smallest eigenvalue across cognitive-manual skill space. The pattern bifurcates: high-manual/low-cognitive occupations (lower right) and high-cognitive/low-manual occupations (upper left) load strongly but oppositely. Panel B reveals that AI exposure concentrates precisely in the high-cognitive region identified by the eigenshock.

\begin{figure}[ht]
    \centering
    \subcaptionbox{Eigenshock with Smallest Eigenvalue}{\includegraphics[scale = 0.25]{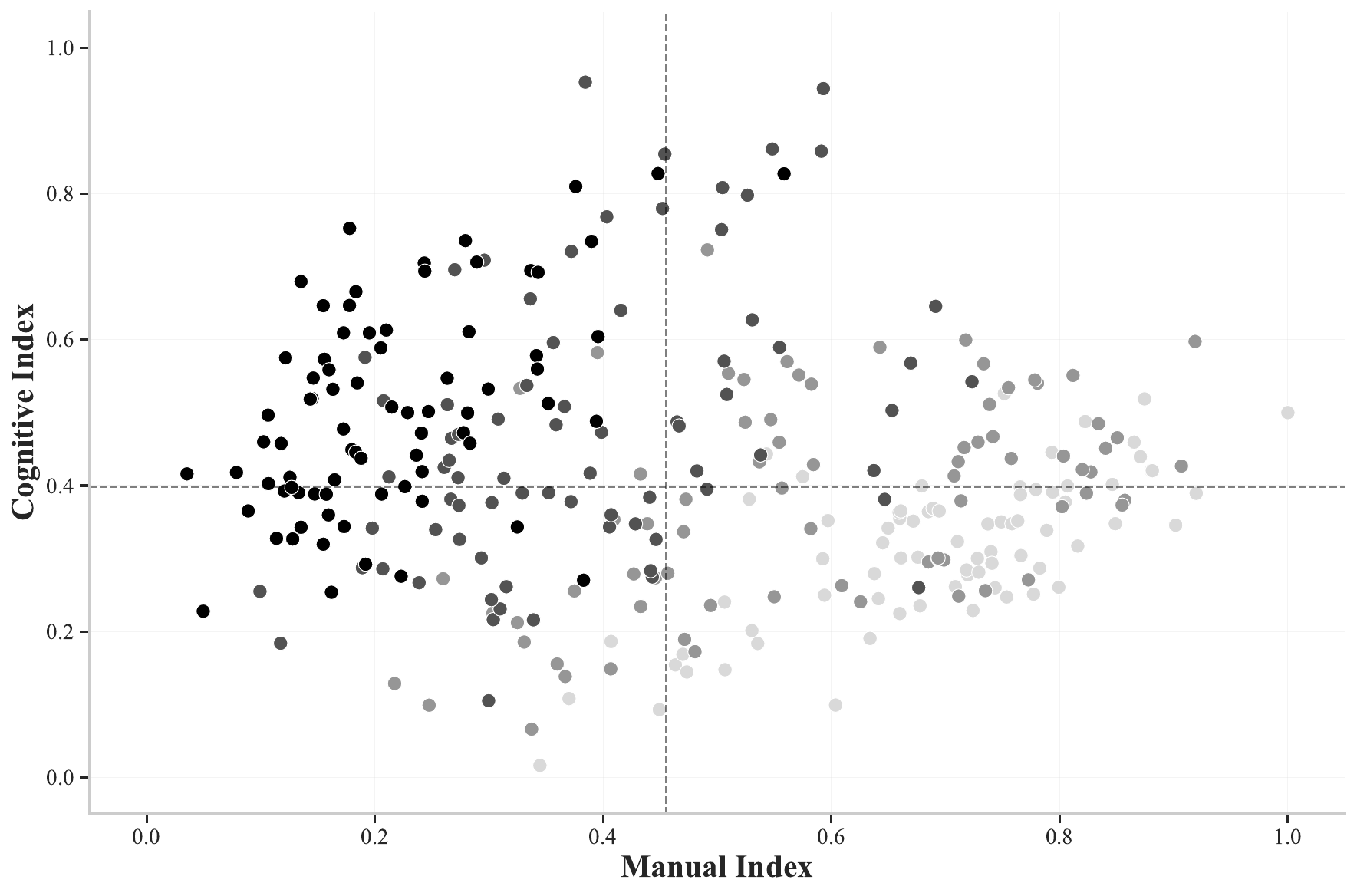}}\hfill
    \subcaptionbox{AI Exposure in Skill Space}{\includegraphics[scale = 0.25]{figures/skill_space_gpt4_ai_eval.pdf}}
    \caption{Spatial Structure of Technological Constraints}
    \label{f:spatial_structure}
    \note{\textit{Notes:}Panel A: Eigenshock with smallest eigenvalue (2018) in cognitive-manual skill space. Panel B: AI exposure distribution. Darker shading indicates higher loading/exposure. Dashed lines mark median skill intensities. The correspondence shows AI aligns with the shock pattern least absorbable through reallocation.}
\end{figure}

The grayscale gradient exposes the mobility trap: occupations with highest eigenshock loading (darkest quartile) form tight clusters. A financial analyst facing AI exposure cannot escape to data analysis or market research—these skill-similar alternatives face comparable threats. The correlation enabling natural transitions becomes the mechanism preventing escape.\footnote{Dashed median lines divide the space into quadrants. Occupations in the upper-left quadrant face double jeopardy: direct AI exposure plus being surrounded by similarly threatened occupations.} This spatial concentration in skill space—visible in the alignment between Panels A and B—explains why technological shocks generate more severe wage effects than alternative demand shifts that disperse across skill dimensions.

\subsection{Alternative Demand Shocks: Trade and Demographics} \label{app:alternative_shocks}

To contextualize the distributional impact of automation and AI, we compare their spectral properties with two alternative sources of occupational demand shifts obtained from \cite{Autor2024-ay}: Chinese import competition (the ``China shock'') and demographic changes from population aging. These shocks provide counterfactual benchmarks for understanding how technological clustering differs from other labor market disruptions.

\paragraph{China Shock}

The China shock represents occupation-level demand changes induced by increased Chinese import competition in U.S. manufacturing during 1991–2014. Following \cite{Autor2013-et}, industry $i$'s exposure is measured as the change in imports from China to other developed countries ($\Delta M_{i,t}^{OC}$) scaled by the industry's initial U.S. market size (domestic output plus imports minus exports in 1988). An occupation's exposure is then constructed as:
\begin{equation*}
    \text{ChinaExposure}_{j,t} = 100 \times \sum_i \frac{E_{ij,t-10}}{E_{j,t-10}} \times \frac{\Delta M_{i,t}^{OC}}{Y_{i,88} + M_{i,88} - X_{i,88}}
\end{equation*}
where $E_{ij,t-10}/E_{j,t-10}$ is occupation $j$'s employment share in industry $i$ at the beginning of the period. This measure captures how occupation $j$'s employment distribution across industries exposes it to differential trade shocks. By construction, non-manufacturing occupations have zero China exposure, while manufacturing occupations vary based on their specific industry composition and those industries' import exposure.

The China shock exhibits fundamentally different spatial properties than technological shocks. While it concentrates in manufacturing, affected occupations span diverse skill intensities—from production workers to engineers to managers—generating dispersion across the skill space rather than clustering within it. This dispersion creates adjustment pathways: displaced manufacturing workers can transition to service occupations with similar skill intensities but different industry exposure.

\paragraph{Demographic Changes}

Demographic demand shifts capture how population aging alters consumption patterns across industries, following \cite{DellaVigna2007-ff}. The Baby Boom generation's progression through the age distribution—from prime working age (1980–2000) to middle and late adulthood (2000–2018)—systematically shifted consumption toward healthcare, leisure, and age-related services while reducing demand for child-related products and durables.

Occupation-level demographic exposure is constructed by combining age-specific consumption profiles from the Bureau of Labor Statistics' Consumer Expenditure Survey with Census population data. For each product category $k$, age-consumption profiles are estimated and consumption changes are predicted based on demographic shifts, then crosswalked to consistent industries. An occupation's demographic exposure is calculated as:
\begin{equation*}
    \text{DemographicExposure}_{j,t} = 100 \times \sum_i \frac{E_{ij,t-1}}{E_{j,t-1}} \times \widetilde{\Delta \ln \text{demand}}_{i,t}
\end{equation*}
where $\widetilde{\Delta \ln \text{demand}}_{i,t}$ is the predicted log change in demand for industry $i$'s output due to demographic composition changes.

Demographic shocks generate winners and losers across the occupational distribution—healthcare occupations expand while those in youth-oriented industries contract—but affected occupations are dispersed throughout skill space. A childcare worker can transition to eldercare; a toy designer to medical device development. This dispersion across skill dimensions, rather than concentration within them, fundamentally distinguishes demographic shocks from technological clustering.

\subsection{Wage and Employment Effects of Automation} \label{b:appendix:wage_emp_auto}

This section provides additional details on the wage and employment effects of automation exposure. The Panel Study of Income Dynamics (PSID) is a widely used longitudinal dataset that has tracked nearly 9,200 U.S. families since 1968. We leverage its panel structure to estimate relative wage trends by occupation while controlling for selection effects.

Since the main specifications have already been discussed, we now present additional results in Figure~\ref{f:psid_additional}, which examines wage effects by gender and under different control specifications. Panel A reports the wage effects of automation separately for men and women, showing that the results are nearly identical, with no statistically significant differences. Panel B introduces additional controls, with the blue line accounting for age and $\text{age}^2$ and allows for changing return to education for the green line, while the green line further allows for a changing return to education. The results suggest that changes in the return to education explain about a quarter of the wage effects attributed to automation.

However, when estimating elasticities, we prefer the main specification without controlling for changes in the return to education. From a long-run perspective, new workers may adjust their educational and occupational choices in response to shifts in the skill premium.
\begin{figure}[ht]
    \centering
    \subcaptionbox{Wage Effects by Gender}{\includegraphics[scale=0.24]{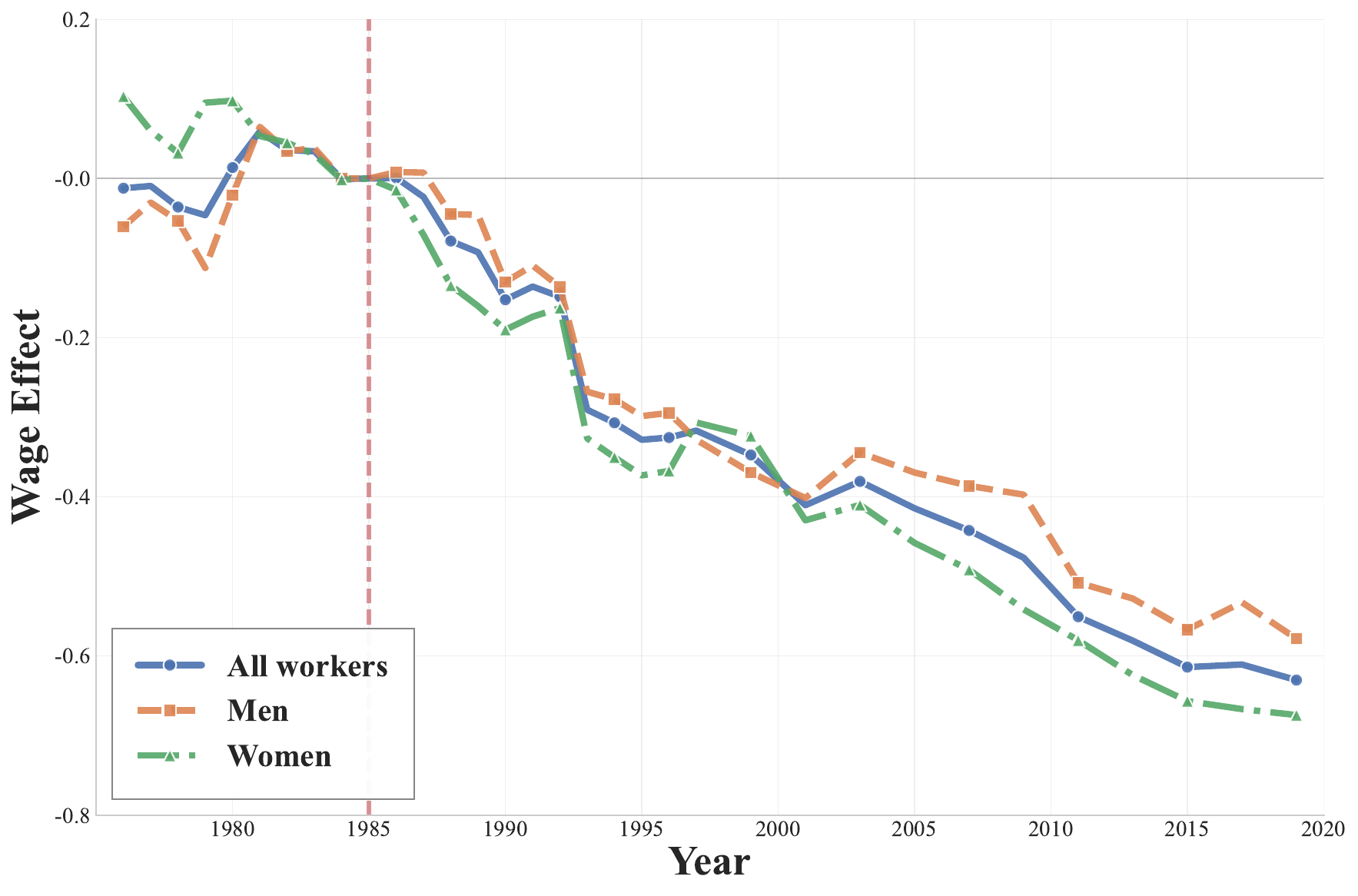}}\hfill
    \subcaptionbox{Additional Controls}{\includegraphics[scale=0.24]{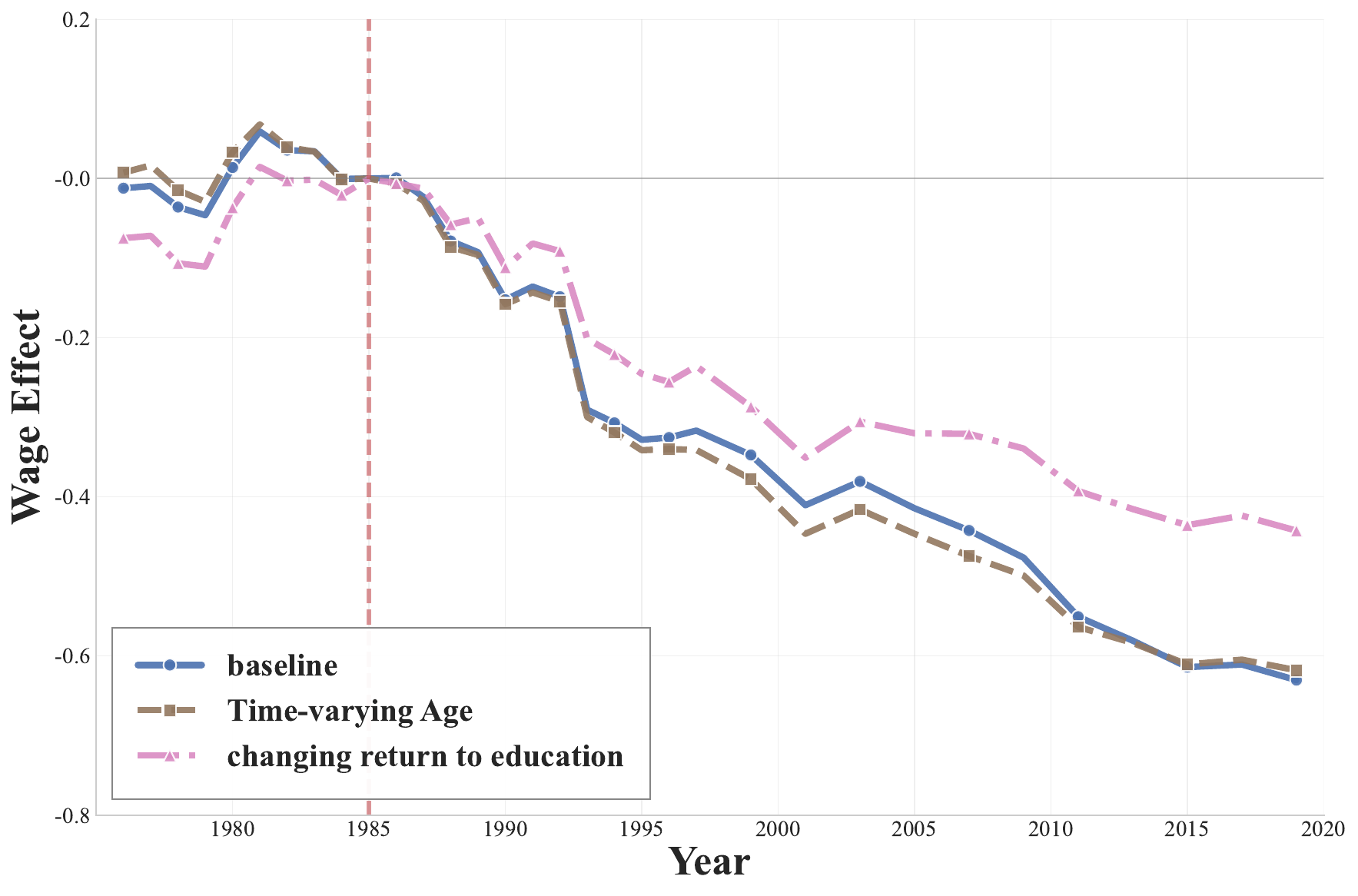}}
    \caption{Effects of Automation on Wages}
    \note{\textit{Notes:} Panel A presents the wage effects of automation separately for men and women, showing no statistically significant differences. Panel B introduces additional controls, where the blue line includes age and \( \text{age}^2 \), and the green line further accounts for a changing return to education. The latter explains approximately 25\% of the wage effects attributed to automation.}
    \label{f:psid_additional}
\end{figure}

We now present additional results on the heterogeneous employment effects of automation across demographic groups, which are used to estimate correlation structures. Panel A of Figure~\ref{f:hetero_emp_demo} displays the average change in log employment shares between 1980 and 2010 by gender for white workers, while Panel B presents the corresponding employment effects for Black workers. The results indicate that white men are the least responsive to automation. Based on the data, this group was predominantly employed in occupations requiring more manual skills, which, as shown in our estimation results, are less portable across occupations. This pattern is reflected in our estimation procedure, which captures the variation in occupational transitions. As a result, the estimated correlation parameter for manual skills, $\rho_{\text{Man}}$, is relatively small, indicating lower substitutability of manual-intensive jobs.

\begin{figure}[ht]
    \centering
    \subcaptionbox{White Workers}{\includegraphics[scale=0.25]{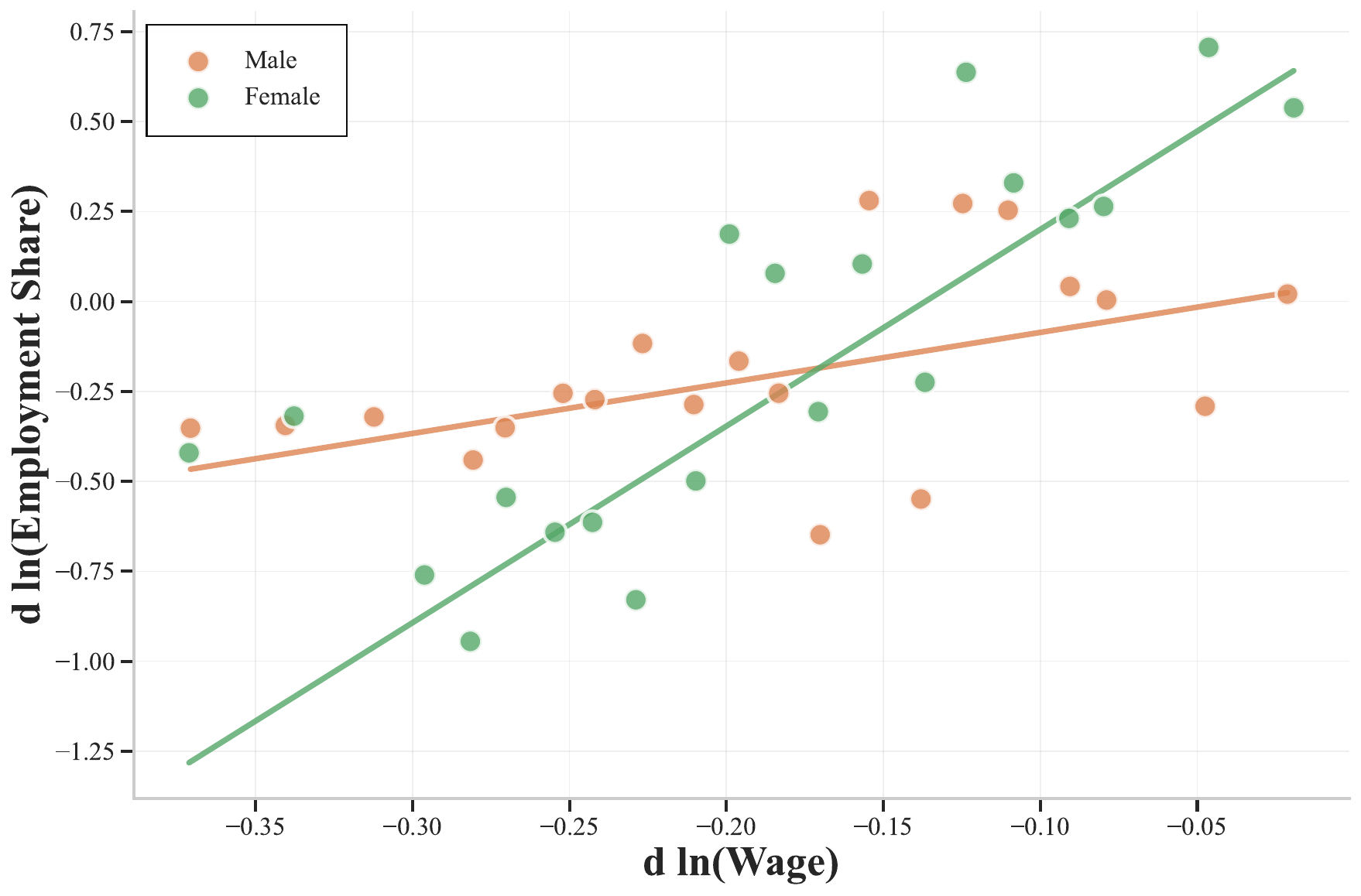}}\hfill
    \subcaptionbox{Black Workers}{\includegraphics[scale=0.25]{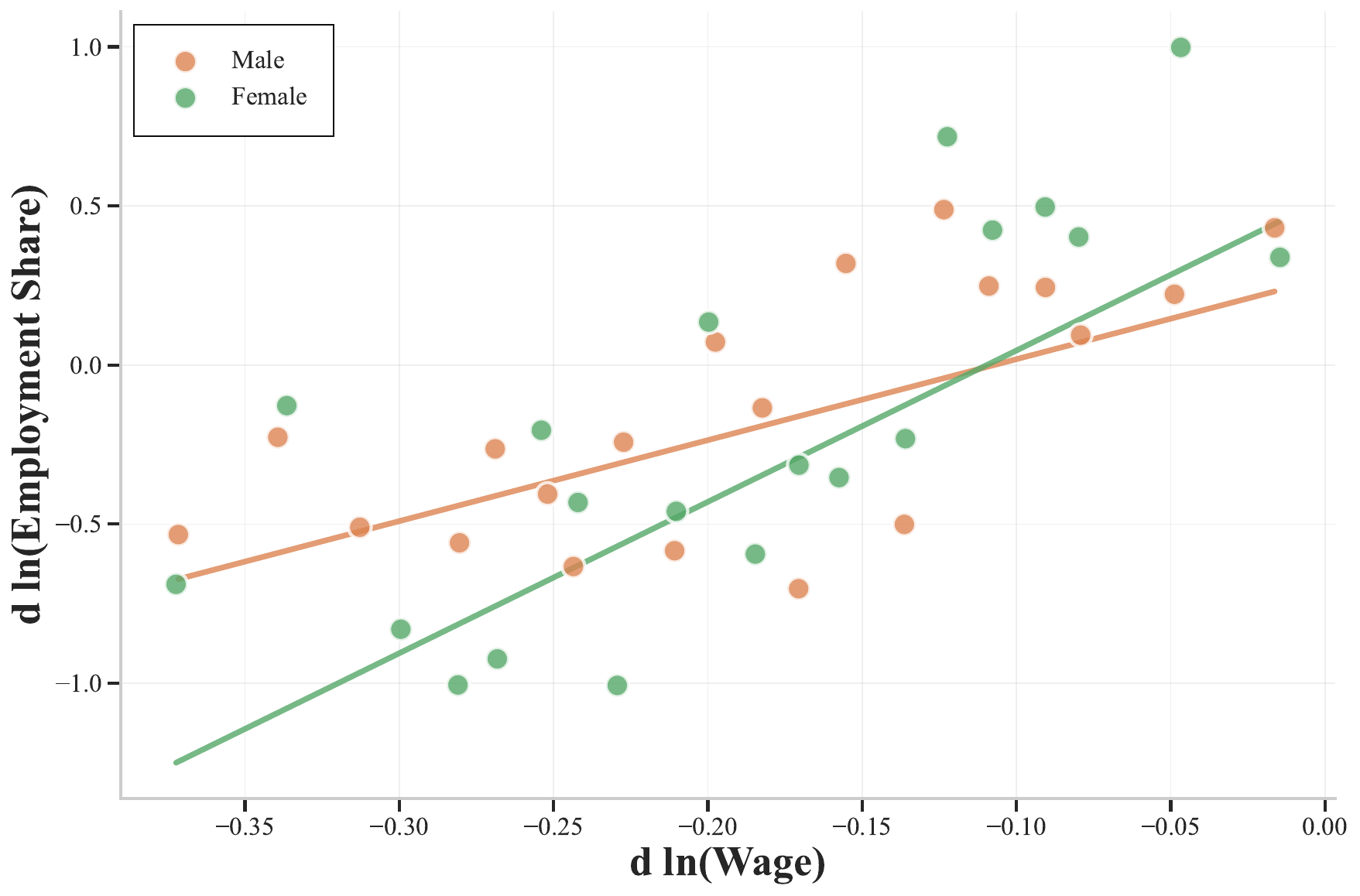}}
    \caption{Heterogeneous Employment Effects by Demographic Groups}
    \note{\textit{Notes:} Panel A shows the employment effects of automation for white workers by gender, while Panel B presents the results for Black workers.}
    \label{f:hetero_emp_demo}
\end{figure}

The PPML estimator jointly incorporates changes in the employment distribution, naturally weighting employment shares in the estimation process.

\subsection{Alternative Nested CES Specifications and Model Comparison} \label{b:appendix:nested_ces}

This section demonstrates why standard Nested CES specifications fail to capture realistic substitution patterns, validating our CNCES framework's flexible skill-based approach.

\paragraph{Nested CES Framework}
Standard Nested CES partitions occupations into mutually exclusive nests $\mathcal{O} = \bigcup_{n=1}^N \mathcal{N}_n$ with $\mathcal{N}_i \cap \mathcal{N}_j = \emptyset$. This generates within-nest elasticity $\theta/(1-\rho_n)$ and cross-nest elasticity $\theta$, assuming each occupation belongs to exactly one nest.\footnote{The productivity distribution is $\Pr[\boldsymbol{\epsilon}(i) \leq \boldsymbol{\epsilon}] = \exp\left[-\sum_{n=1}^N\left(\sum_{o\in \mathcal{N}_n}\epsilon_{o}^{\frac{-\theta}{1-\rho_n}}\right)^{1-\rho_n}\right]$.}

\paragraph{Estimation Results}
Table \ref{tab:nested_ces_results} presents PPML estimates for two standard nesting structures using our 1980--2000 automation data.

\begin{table}[ht]
\centering
\caption{Nested CES Estimation Results}
\label{tab:nested_ces_results}
\footnotesize
\renewcommand{\arraystretch}{1.2}
\begin{tabular}{l|ccc|ccc}
\toprule
 & \multicolumn{3}{c|}{\textbf{Occupation Categories}} & \multicolumn{3}{c}{\textbf{Skill Intensity}} \\
\midrule
$\theta$ (Cross-nest) & \multicolumn{3}{c|}{2.67 (0.31)} & \multicolumn{3}{c}{2.05 (0.28)} \\
\midrule
Nest & Low-skill & High-skill & Manuf. & Cognitive & Manual & Interpers. \\
$\rho_n$ (Within-nest) & 0.26 (0.18) & 0.00 (--) & 0.00 (--) & 0.30 (0.15) & 0.00 (--) & 0.45 (0.20) \\
Within-nest elasticity & 3.61 & 2.67 & 2.67 & 2.93 & 2.05 & 3.73 \\
\bottomrule
\end{tabular}
\note{\textit{Notes:}Standard errors in parentheses. Dashes indicate parameters constrained to zero.}
\end{table}

Three patterns emerge. First, most within-nest correlations $\rho_n$ are statistically zero: occupations within predefined categories are no closer substitutes than the cross-nest average. Second, cross-nest elasticities (2.05--2.67) approach our CES benchmark of 3.12, indicating rigid nesting provides minimal improvement over independence. Third, these estimates contrast sharply with our CNCES results: $\theta = 1.10$ with substantial correlations ($\rho_{\text{Cog}} = 0.77$, $\rho_{\text{Man}} = 0.48$), revealing two-thirds of substitution occurs within skill dimensions.

\paragraph{Why Nested CES Fails}
The failure stems from imposing discrete boundaries on continuous skill intensities. Nested CES assumes $\omega_o^s \in \{0,1\}$—each occupation uses exactly one skill. Our data reveals occupations draw from 2.3 skills on average with continuous intensities $\omega_o^s \in (0,1)$. A financial analyst primarily uses cognitive skills ($\omega_o^{\text{cog}} = 0.75$) but also requires interpersonal abilities ($\omega_o^{\text{int}} = 0.20$). Forcing such occupations into single nests destroys the natural substitution structure.\footnote{Mathematically, CNCES nests standard Nested CES when $\omega_o^s \in \{0,1\}$, reducing to $F(x_1, \ldots, x_O) = \sum_{s} \left[\sum_{o: \omega_o^s = 1} x_o^{\frac{1}{1-\rho_s}}\right]^{1-\rho_s}$.}

These results confirm that flexible skill-based distances—not arbitrary categorical boundaries—determine occupational substitutability and shape technological incidence.

\subsection{The Network Topology of AI Exposure} \label{appendix:ai_topology}

To complement our analysis of automation exposure in the main text, this section examines how AI exposure maps onto the occupational substitution network using 2018 data, when AI capabilities had become more clearly defined.

\begin{figure}[ht]
    \centering
    \includegraphics[width = \linewidth]{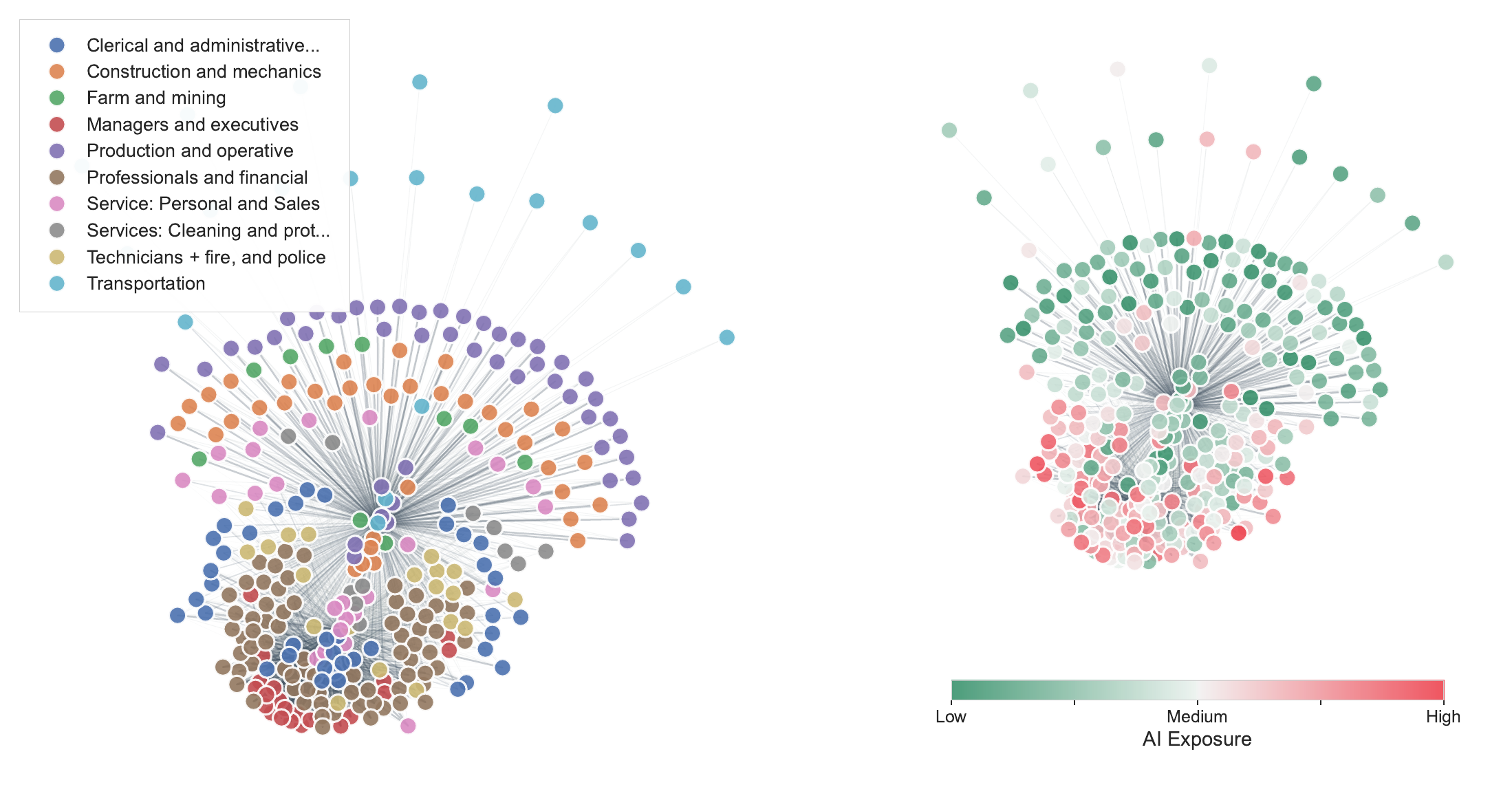}
    \caption{The Network Structure of Occupational Substitutability and AI Exposure, 2018}
    \label{fig:occupation_topology_2018}
    \note{\textit{Notes:}This figure presents the occupational substitution network for 2018, following the same methodology as Figure~\ref{fig:occupation_topology_1980}. Edges represent substitutability between occupation pairs based on estimated cross-wage elasticities, with darker lines indicating stronger substitution relationships. The left panel shows occupational clustering by broad categories, while the right panel maps AI exposure using a green gradient (darker green indicates higher AI exposure). The concentration of AI exposure in professional and cognitive-intensive occupations contrasts with the automation pattern, yet exhibits similar clustering within skill-adjacent occupations.}
\end{figure}

Figure~\ref{fig:occupation_topology_2018} demonstrates that our parsimonious skill-based framework remains robust over time, continuing to generate natural occupational clusters that align with economic intuition. Despite nearly four decades of technological change and labor market evolution between 1980 and 2018, the fundamental structure persists: occupations group according to their cognitive, manual, and interpersonal skill intensities. The professional cluster remains cohesive, production occupations maintain their tight interconnections, and service occupations continue to form distinct sub-clusters based on their specific skill combinations.

The stability of this topology structure validates our modeling choice to characterize occupations by their location in a three-dimensional skill space. The 2018 topology shows some evolution—certain connections have strengthened while others have weakened—but the overall topology remains remarkably consistent. This persistence suggests that the skill-based organization of work represents a fundamental feature of the labor market rather than a temporary configuration. Our CNCES framework, by incorporating these skill dimensions through the correlation structure, captures this enduring architecture.

The right panel reveals how AI exposure maps onto the occupational network, providing a striking contrast to automation. AI exposure (shown in darker green) concentrates in the professional and financial cluster, with particularly high intensity among occupations requiring advanced cognitive skills such as financial analysts, market researchers, and technical writers. This concentration extends to adjacent clerical and administrative occupations that share cognitive skill intensities. The spatial pattern of AI exposure—clustering in cognitive-intensive occupations at the opposite end of the network from automation's manual-intensive targets—emerges naturally from AI's capacity to perform tasks involving pattern recognition, language processing, and analytical reasoning.

Notably, both automation and AI exhibit the clustering pattern: technological shocks concentrate within skill-adjacent occupations rather than dispersing randomly across the network. This parallel structure, despite affecting different segments of the labor market, underscores a fundamental insight of our model. When technologies target specific skills, they necessarily affect clusters of related occupations. The topology visualization makes this abstract concept concrete, showing how our three-skill parameterization successfully captures the complex substitution patterns that govern labor market adjustment to technological change.
\subsection{Gender-Specific Spectral Decomposition} \label{app:gender_spectral}

The spectral decomposition reveals striking gender differences in how technological shocks interact with occupational structure, reflecting distinct employment distributions across skill space.

\begin{figure}[ht]
    \centering
    \subcaptionbox{Male Workers}{\includegraphics[width=0.48\linewidth]{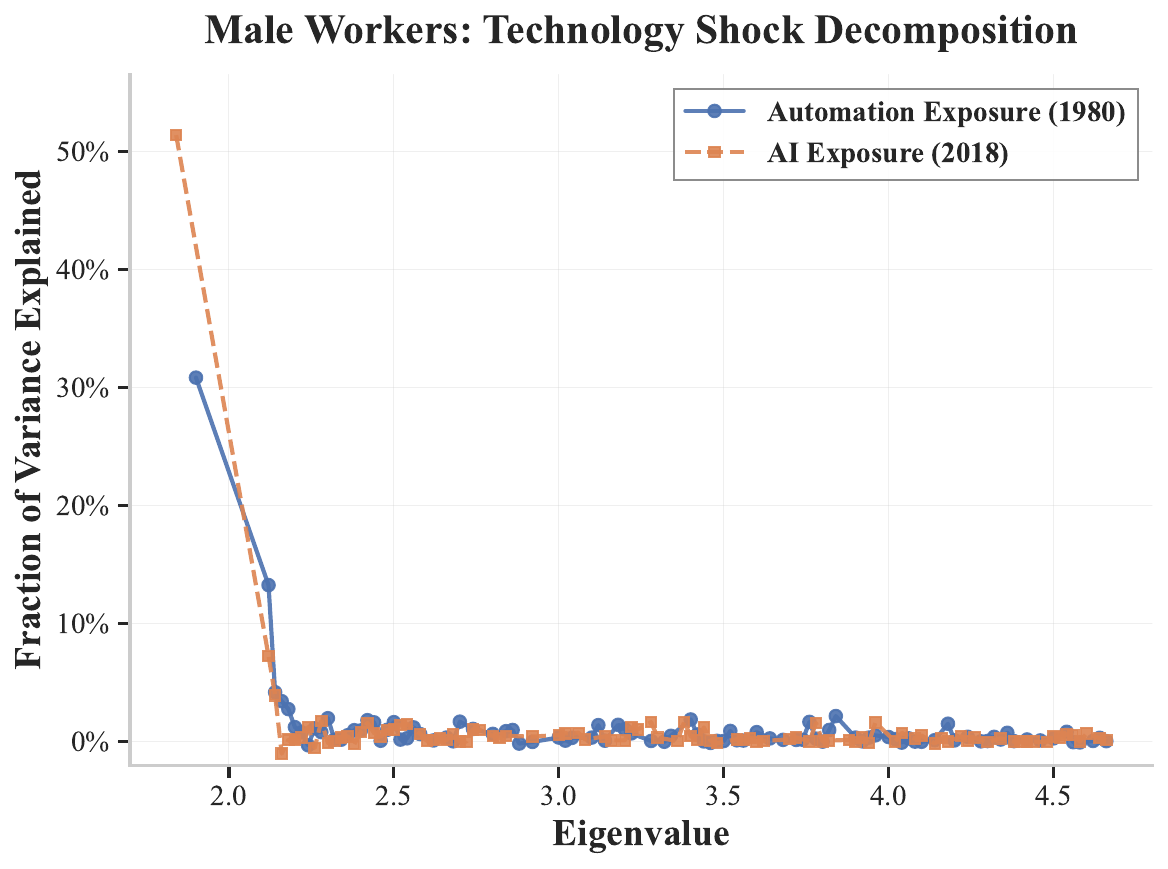}}\hfill
    \subcaptionbox{Female Workers}{\includegraphics[width=0.48\linewidth]{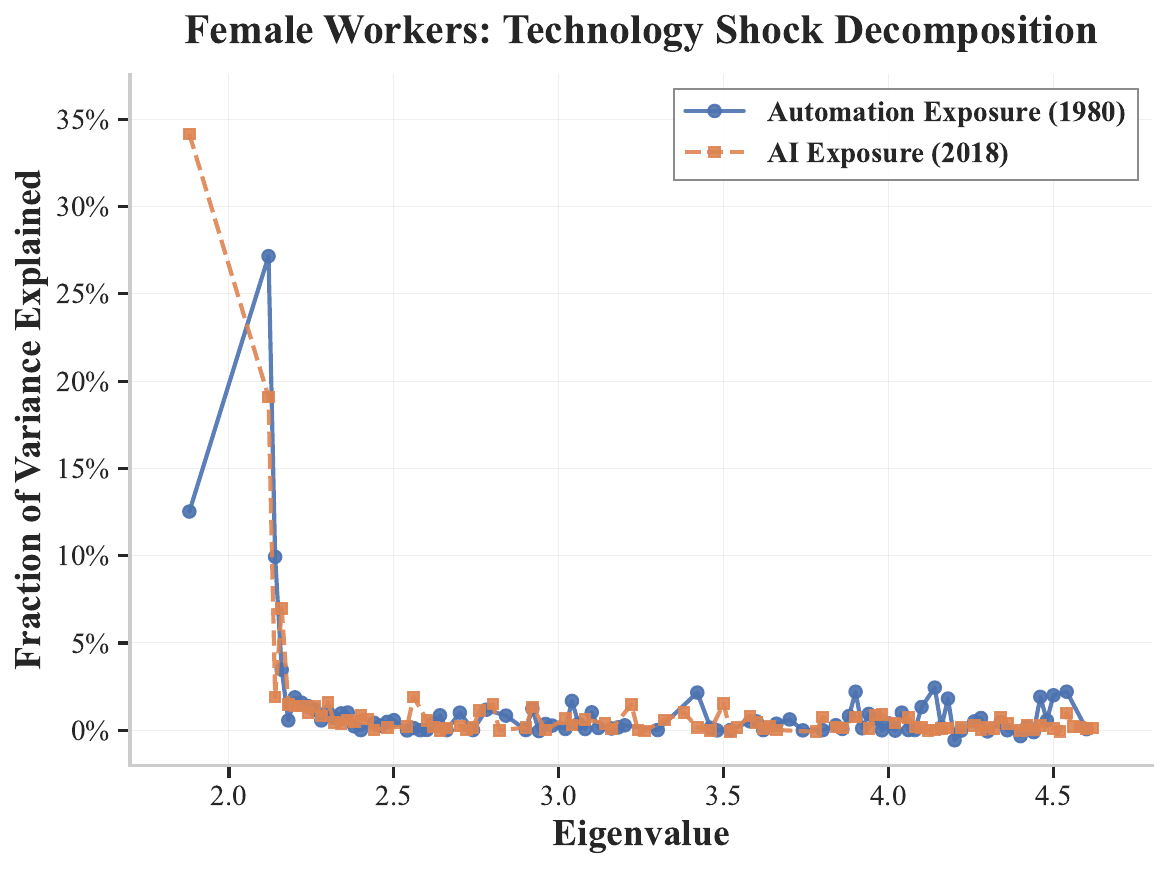}}
    \caption{Technology Shock Decomposition by Gender}
    \label{f:spectral_gender}
    \note{\textit{Notes:}Variance decomposition of automation and AI exposures into eigenshocks for male and female workers using substitution matrices based on 1980 (automation) and 2018 (AI) employment shares. Gender-specific employment distributions generate distinct substitution structures and eigenvalue patterns.}
\end{figure}

\paragraph{Automation's Gender-Differentiated Impact}
Figure~\ref{f:spectral_gender} reveals that automation constrains male workers more severely than female workers. For men (Panel A), automation concentrates 31\% of variance on the smallest eigenvalue (1.95), compared to 27\% at eigenvalue 2.15 for women (Panel B). This difference reflects occupational segregation: men dominate production and operative occupations where automation clusters, while women's employment disperses across service, clerical, and professional occupations. 

The eigen-decomposition difference implies different adjustment capacities. Male workers face effective elasticity of approximately 1.9, while female workers retain elasticity near 2.3. This 20\% difference in mobility translates directly to wage incidence: male production workers experience pass-through rates approaching 45\%, while female workers in similar exposure levels face 38\% pass-through. The gender gap emerges not from different skill transferability but from employment concentration—men's overrepresentation in manual-intensive occupations creates fewer escape routes when automation strikes.

\paragraph{AI's Convergent Pattern}
In contrast, AI exposure shows remarkable similarity across genders. Both panels display extreme concentration on the smallest eigenvalue: 51\% for men and 34\% for women, both at eigenvalue 1.95. Despite women's higher representation in cognitive-intensive occupations potentially affected by AI, the clustering pattern remains universal. This convergence suggests AI's broad reach across cognitive tasks affects both gender-segregated and integrated occupations equally.

The similar eigenvalue loading implies comparable mobility constraints. Both male and female workers in AI-exposed occupations face effective elasticities around 2.4, generating pass-through rates of 20 - 50\%. Unlike automation, where occupational segregation provides some insulation for female workers, AI's cognitive focus creates uniform rigidity across gender lines.

\paragraph{Implications for Distributional Analysis}
These gender-specific patterns highlight how initial employment distributions shape technological incidence. Automation's concentration in male-dominated manual occupations amplifies its impact on male workers through both direct exposure and constrained mobility. Female workers' diversification across skill clusters—partly reflecting historical exclusion from manufacturing—inadvertently provides better adjustment options. For AI, the universal nature of cognitive tasks eliminates this protective dispersion, suggesting future technological shocks may generate more uniform gender impacts while maintaining severe clustering effects.
\subsection{The Employment Effects of Automation and AI} \label{s:emp_effect_auto_ai}
As discussed in the main text, clustering shocks lead to smaller employment adjustments while exacerbating wage disparities. Panel (a) of Figure \ref{f:emp_effects_auto_ai} illustrates the relationship between changes in log employment shares and relative wage changes for automation. The CES benchmark (dashed line) rotates counterclockwise, overstating employment shifts, particularly for negatively impacted occupations. This suggests that the CES framework underestimates the rigidity in labor reallocation caused by clustering shocks.

Panel (b) presents the same employment effects for AI exposure, revealing a similar pattern. The CES model again overstates employment adjustments, failing to account for the constrained worker mobility induced by the skill-clustering nature of AI-exposed occupations. These findings highlight the importance of incorporating a richer substitution structure, as captured by DIDES, to better reflect labor market frictions in response to technological change.

\begin{figure}[ht]
    \centering
    \subcaptionbox{Automation}{\includegraphics[scale=0.39]{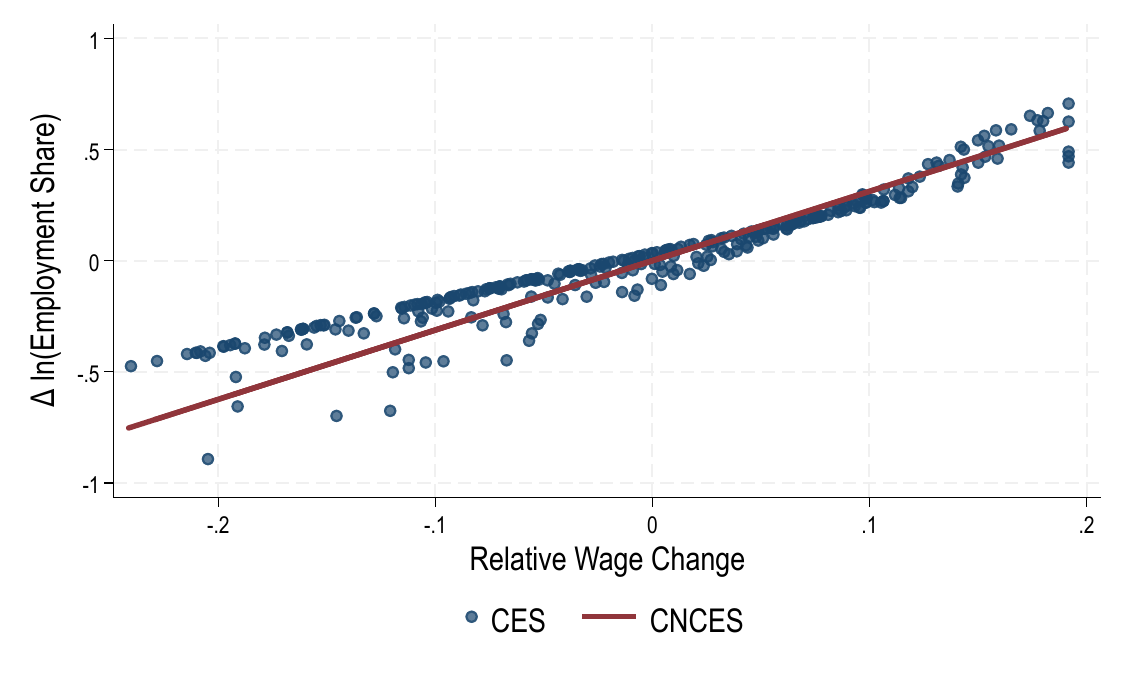}}\hfill
    \subcaptionbox{AI}{\includegraphics[scale=0.39]{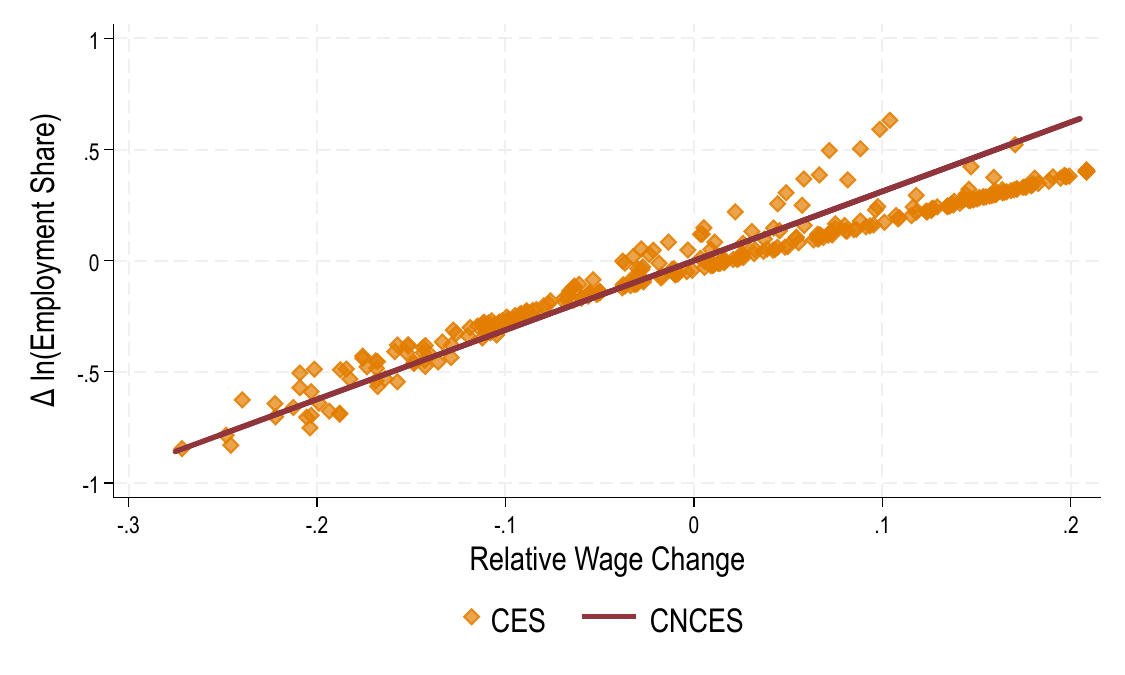}}
    \caption{Employment Effects of Technological Shocks}
    \note{\textit{Notes:} This figure compares employment effects of automation (Panel a) and AI (Panel b) against the CES benchmark. The CES framework overestimates employment adjustments, particularly in negatively impacted occupations, due to its failure to account for clustering shocks that restrict labor mobility.}
    \label{f:emp_effects_auto_ai}
\end{figure}
\subsection{Wage Pass-Through with Productivity in Production} \label{app:passthrough_efficiency}

When $\delta = 1$ (all workers contribute productivity to production), the labor supply elasticity becomes:
\begin{equation*}
    \Theta_{oo'}^{\text{eff}} = \begin{cases}
        \theta\frac{x_oF_{oo}}{F_o} + (\theta-1)(1-\pi_o) & \text{if } o = o' \\
        \theta\frac{x_{o'}F_{oo'}}{F_o} - (\theta-1)\pi_{o'} & \text{if } o \neq o'
    \end{cases}
\end{equation*}

With $\theta = 1.10$, the baseline substitution term $(\theta-1) = 0.10$ nearly vanishes, leaving primarily the correlation-driven term $\theta x_{o'}F_{oo'}/F_o$ to govern labor market adjustment. This dramatically amplifies wage pass-through relative to our baseline specification.

\begin{figure}[ht]
    \centering
    \begin{subfigure}[b]{0.48\textwidth}
        \centering
        \includegraphics[width=\textwidth]{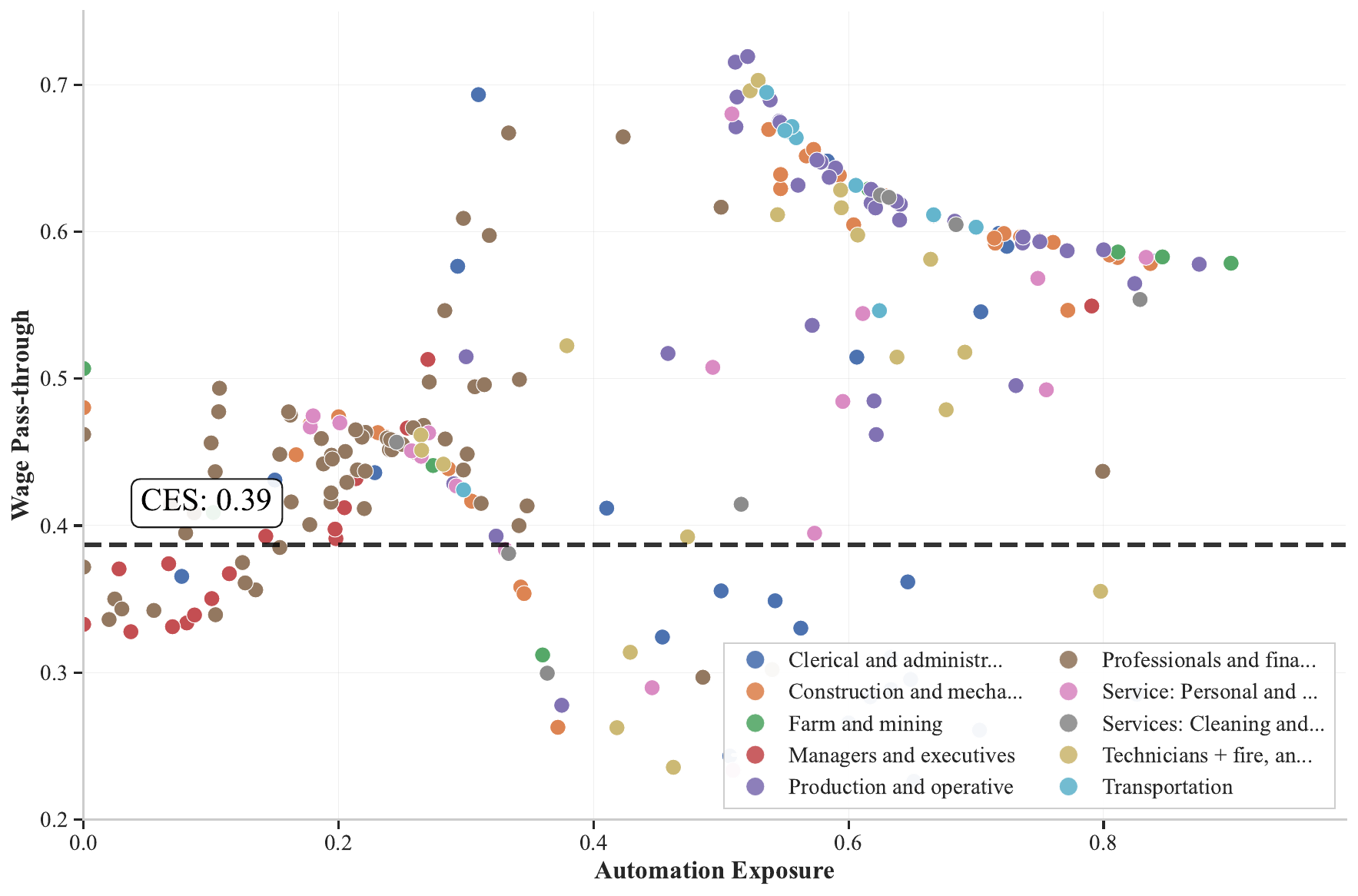}
        \caption{Automation exposure and wage pass-through}
        \label{fig:robot_passthrough_efficiency}
    \end{subfigure}
    \hfill
    \begin{subfigure}[b]{0.48\textwidth}
        \centering
        \includegraphics[width=\textwidth]{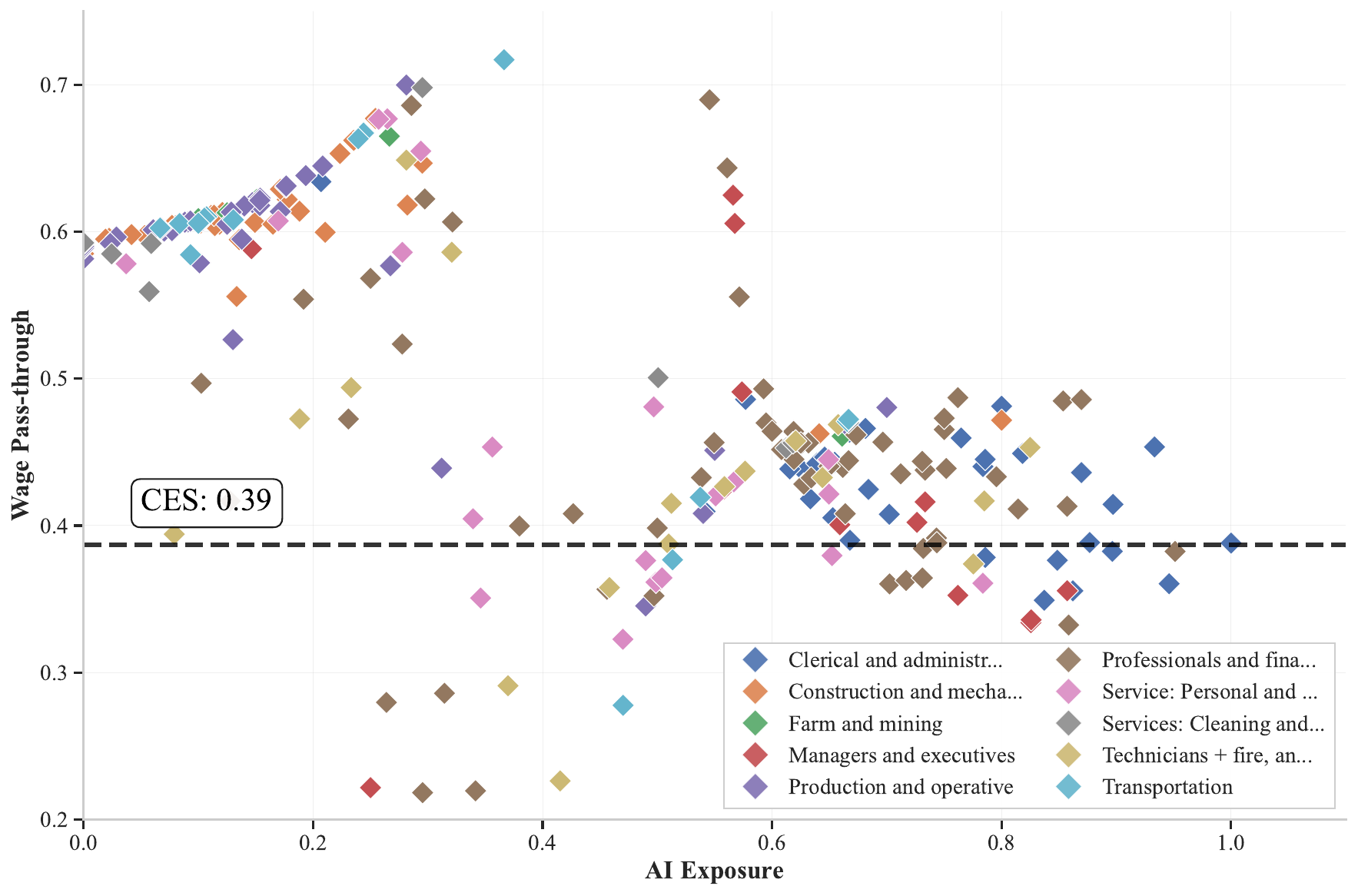}
        \caption{AI exposure and wage pass-through}
        \label{fig:ai_passthrough_efficiency}
    \end{subfigure}
    \caption{Wage Pass-Through under $\delta = 1$ (Productivity in Production)}
    \label{fig:passthrough_efficiency}
\end{figure}

Figure \ref{fig:passthrough_efficiency} illustrates this amplification. Under $\delta = 1$, wage pass-through ranges from 0.20 to 0.70 across occupations, compared to the uniform CES benchmark of 0.387. The substantial variation—entirely absent in the CES model—stems from heterogeneous correlation patterns. 
Two patterns emerge distinctly:
\begin{enumerate}
    \item \textbf{Clustering amplification:} Technology-exposed occupations cluster in pass-through space, with AI-exposed occupations (Panel a) showing particularly tight grouping around 0.60-0.65, well above the CES benchmark
    \item \textbf{CES misestimation:} The uniform CES pass-through of 0.387 masks enormous heterogeneity, overestimating effects for some occupations by 50\% while underestimating for others by 80\%
\end{enumerate}

These results underscore that our baseline estimates ($\delta = 0$) provide conservative bounds on technological incidence. If workers' productivity enters production even partially, wage inequality from technological clustering substantially exceeds our documented effects, with correlation patterns becoming the dominant force in wage determination.

\subsection{Construct Job Transition with CPS} \label{b:appendix:CPS}

Our estimation strategy hinges on observing aggregate job flows across occupations. To construct our occupation-level panel for the period 1980–2018, we rely on individual-level data from the US Census Bureau’s March Current Population Survey (CPS). Each March CPS provides detailed information on respondents’ current occupation as well as the occupation in which they spent most of the previous calendar year. We restrict our sample to individuals aged 25–64 who are employed full-time and have worked at least 26 weeks in the preceding year, thereby ensuring the reliability of our occupational transition estimates. We also exclude observations with extreme or inconsistent income values to mitigate measurement error. Using these data, we construct annual job flow rates for occupations.

Employing a consistent occupation coding scheme, we generate a balanced panel of 306 three-digit occupations. Given the sparsity of observed transitions at this detailed level, we further aggregate these occupations into 15 clusters using a k-means algorithm based on occupational skill intensities. This intuitive clustering groups together occupations with similar skill profiles, ensuring robust estimates of aggregate job flows and facilitating subsequent analyses.

Furthermore, as noted by \cite{Artuc2010-xh}, the retrospective design of the March CPS captures job transitions over a period shorter than a full year—respondents report the longest-held job from the previous calendar year, typically reflecting employment around mid-year. To correct for this timing bias, we annualize the observed job transition probabilities using the transformation \footnote{This approach ensures that no annual job-to-job flows are missing.}
\begin{equation*}
    \mu^{\text{ANN}}_t = \mu_t^{2}.
\end{equation*}

\newpage

\section{Additional Materials}\label{c:appendix}

\setcounter{theorem}{0}
\setcounter{proposition}{0} 
\setcounter{lemma}{0}
\setcounter{corollary}{0}
\setcounter{definition}{0}
\setcounter{assumption}{0}
\setcounter{remark}{0}
\setcounter{table}{0}
\setcounter{figure}{0}
\setcounter{equation}{0} 
%
\renewcommand{\thetheorem}{C\arabic{theorem}}
\renewcommand{\theproposition}{C\arabic{proposition}}
\renewcommand{\thelemma}{C\arabic{lemma}}
\renewcommand{\thecorollary}{C\arabic{corollary}}
\renewcommand{\thedefinition}{C\arabic{definition}}
\renewcommand{\theassumption}{C\arabic{assumption}}
\renewcommand{\theremark}{C\arabic{remark}}
\renewcommand{\thetable}{C\arabic{table}}
\renewcommand{\thefigure}{C\arabic{figure}}
\renewcommand{\theequation}{C\arabic{equation}}

\subsection{Derivation of Employment Shares} \label{app_ss:emp_share_derivation}

This appendix derives the closed-form expression for occupational employment shares under the multivariate Fréchet productivity distribution. 

\begin{proposition}
Given the joint productivity distribution:
\begin{equation*}
    \Pr[\epsilon_1(i) \leq \epsilon_1, \ldots, \epsilon_O(i) \leq \epsilon_O] = \exp\left[-F\left(A_1\epsilon_1^{-\theta}, \ldots, A_O\epsilon_O^{-\theta}\right)\right]
\end{equation*}
the share of workers choosing occupation $o$ is:
\begin{equation*}
    \pi_o = \frac{A_o w_o^{\theta} F_o(A_1w_1^{\theta}, \ldots, A_Ow_O^{\theta})}{F(A_1w_1^{\theta}, \ldots, A_Ow_O^{\theta})}
\end{equation*}
\end{proposition}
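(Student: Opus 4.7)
The plan is to compute $\pi_o = \Pr[w_o\epsilon_o(i) \geq w_{o'}\epsilon_{o'}(i) \text{ for all } o']$ by transforming to utility variables and integrating the appropriate density against the CDF of the competing alternatives on the diagonal. First, I would perform a change of variables from productivities to utilities $U_{o'} = w_{o'}\epsilon_{o'}(i)$. Since $\epsilon_{o'} = U_{o'}/w_{o'}$ and the joint CDF of $(\epsilon_1,\ldots,\epsilon_O)$ has the given Fréchet form, substitution yields
\begin{equation*}
G(u_1,\ldots,u_O) \equiv \Pr[U_1 \leq u_1,\ldots, U_O \leq u_O] = \exp\!\bigl[-F\bigl(\tilde{A}_1 u_1^{-\theta},\ldots, \tilde{A}_O u_O^{-\theta}\bigr)\bigr],
\end{equation*}
where $\tilde{A}_{o'} \equiv A_{o'} w_{o'}^{\theta}$. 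This absorbs the wage vector into the location parameters and reduces the problem to computing the choice probability for a canonical multivariate Fréchet.

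Second, I would write
\begin{equation*}
\pi_o = \int_0^{\infty} \frac{\partial G}{\partial u_o}(u_1,\ldots,u_O)\bigg|_{u_1=\cdots=u_O=u}\, du,
\end{equation*}
interpreting the integrand as the joint density that $U_o$ equals $u$ while all $U_{o'} \leq u$. Differentiating $G$ with respect to $u_o$ gives
\begin{equation*}
\frac{\partial G}{\partial u_o} = G \cdot \theta\, \tilde{A}_o\, u_o^{-\theta-1}\, F_o\bigl(\tilde{A}_1 u_1^{-\theta},\ldots, \tilde{A}_O u_O^{-\theta}\bigr).
\end{equation*}
Evaluating along the diagonal $u_{o'}=u$ for all $o'$ and using homogeneity of degree one of $F$ (so $F(\tilde{A}_1 u^{-\theta},\ldots) = u^{-\theta}F(\tilde{A})$) and degree zero of $F_o$ (so $F_o(\tilde{A}_1 u^{-\theta},\ldots) = F_o(\tilde{A})$), the integrand becomes
\begin{equation*}
\theta\, \tilde{A}_o\, F_o(\tilde{A})\, u^{-\theta-1}\, \exp\!\bigl[-u^{-\theta} F(\tilde{A})\bigr].
\end{equation*}

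Third, I would evaluate the resulting one-dimensional integral via the substitution $v = u^{-\theta} F(\tilde{A})$, under which $-dv = \theta F(\tilde{A}) u^{-\theta-1} du$. This collapses the integral to $\int_0^{\infty} e^{-v}\, dv/[\theta F(\tilde{A})] = 1/[\theta F(\tilde{A})]$. Substituting back yields
\begin{equation*}
\pi_o = \frac{\tilde{A}_o F_o(\tilde{A})}{F(\tilde{A})} = \frac{A_o w_o^{\theta} F_o(A_1 w_1^{\theta},\ldots, A_O w_O^{\theta})}{F(A_1 w_1^{\theta},\ldots, A_O w_O^{\theta})},
\end{equation*}
which is the claimed expression. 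The main obstacle is the careful application of the homogeneity properties of $F$ on the diagonal; once $F$ scales as $u^{-\theta}$ and $F_o$ becomes independent of $u$, the integral reduces to a standard Gamma-type calculation that evaluates cleanly to one, yielding the choice probability without any residual constants. A minor subtlety is verifying that $G$ and its partial derivatives are well-behaved enough (finiteness and continuity on $(0,\infty)^O$) to justify differentiating under the distribution and integrating term-by-term, but this follows from the smoothness of $F$ together with the exponential decay of $G$ as $u \to 0$.
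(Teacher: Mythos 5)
Your proposal is correct and follows essentially the same route as the paper's proof in Appendix~C: both differentiate the joint utility CDF along the diagonal, invoke homogeneity of degree one of $F$ and degree zero of $F_o$ to pull $u^{-\theta}$ out, and evaluate the resulting Fréchet-type integral (your direct substitution $v = u^{-\theta}F(\tilde{A})$ is just a streamlined version of the paper's integration to $t$ followed by $t \to \infty$). The only cosmetic difference is that you absorb wages into $\tilde{A}_{o'} = A_{o'}w_{o'}^{\theta}$ up front via an explicit change of variables, which the paper does implicitly.
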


\begin{proof}
We derive the probability that a worker chooses occupation $o$, which occurs when $w_o\epsilon_o(i) \geq w_{o'}\epsilon_{o'}(i)$ for all $o' \neq o$.

First, consider the joint probability that occupation $o$ yields utility less than $t$ and is optimal:
\begin{align*}
    &\Pr[w_o\epsilon_o(i) < t \text{ and } w_o\epsilon_o(i) = \max_{o'} w_{o'}\epsilon_{o'}(i)] \nonumber \\
    &= \Pr[w_o\epsilon_o(i) < t \text{ and } w_o\epsilon_o(i) \geq w_{o'}\epsilon_{o'}(i), \forall o' \neq o]
\end{align*}

This equals the probability that all occupational utilities are below $t$, with occupation $o$ being the highest. Using the law of total probability:
\begin{align*}
    &= \int_0^t \frac{\partial}{\partial z} \Pr[w_{o'}\epsilon_{o'} \leq z, \forall o'] \bigg|_{z} dz
\end{align*}

Substituting the joint distribution and differentiating:
\begin{align*}
    &= \int_0^t \frac{\partial}{\partial z} \exp\left[-F\left(A_1w_1^{\theta}z^{-\theta}, \ldots, A_Ow_O^{\theta}z^{-\theta}\right)\right] dz \nonumber \\
    &= \int_0^t A_ow_o^{\theta} F_o\left(A_1w_1^{\theta}z^{-\theta}, \ldots, A_Ow_O^{\theta}z^{-\theta}\right) \nonumber \\
    &\quad \times \exp\left[-F\left(A_1w_1^{\theta}z^{-\theta}, \ldots, A_Ow_O^{\theta}z^{-\theta}\right)\right] \theta z^{-\theta-1} dz
\end{align*}

Using the homogeneity of degree one property of $F$:
\begin{align*}
    F\left(A_1w_1^{\theta}z^{-\theta}, \ldots, A_Ow_O^{\theta}z^{-\theta}\right) = z^{-\theta} F\left(A_1w_1^{\theta}, \ldots, A_Ow_O^{\theta}\right)
\end{align*}

Since $F_o$ is homogeneous of degree zero (as the derivative of a degree-one homogeneous function):
\begin{align*}
    F_o\left(A_1w_1^{\theta}z^{-\theta}, \ldots, A_Ow_O^{\theta}z^{-\theta}\right) = F_o\left(A_1w_1^{\theta}, \ldots, A_Ow_O^{\theta}\right)
\end{align*}

Substituting these properties:
\begin{align*}
    &= \int_0^t A_ow_o^{\theta} F_o(A_1w_1^{\theta}, \ldots, A_Ow_O^{\theta}) \exp\left[-F(A_1w_1^{\theta}, \ldots, A_Ow_O^{\theta})z^{-\theta}\right] \theta z^{-\theta-1} dz \nonumber \\
    &= \frac{A_ow_o^{\theta} F_o(A_1w_1^{\theta}, \ldots, A_Ow_O^{\theta})}{F(A_1w_1^{\theta}, \ldots, A_Ow_O^{\theta})} \nonumber \\
    &\quad \times \int_0^t \exp\left[-F(A_1w_1^{\theta}, \ldots, A_Ow_O^{\theta})z^{-\theta}\right] F(A_1w_1^{\theta}, \ldots, A_Ow_O^{\theta}) \theta z^{-\theta-1} dz
\end{align*}

The integral evaluates to:
\begin{align*}
    \int_0^t \exp[-\Lambda z^{-\theta}] \Lambda \theta z^{-\theta-1} dz = 1 - \exp[-\Lambda t^{-\theta}]
\end{align*}
where $\Lambda = F(A_1w_1^{\theta}, \ldots, A_Ow_O^{\theta})$.

Therefore:
\begin{align*}
    \Pr[w_o\epsilon_o(i) < t \text{ and optimal}] = \frac{A_ow_o^{\theta} F_o(A_1w_1^{\theta}, \ldots, A_Ow_O^{\theta})}{F(A_1w_1^{\theta}, \ldots, A_Ow_O^{\theta})} \left(1 - \exp[-\Lambda t^{-\theta}]\right)
\end{align*}

Taking the limit as $t \to \infty$:
\begin{align*}
    \pi_o &= \lim_{t \to \infty} \Pr[w_o\epsilon_o(i) < t \text{ and } w_o\epsilon_o(i) = \max_{o'} w_{o'}\epsilon_{o'}(i)] \nonumber \\
    &= \frac{A_ow_o^{\theta} F_o(A_1w_1^{\theta}, \ldots, A_Ow_O^{\theta})}{F(A_1w_1^{\theta}, \ldots, A_Ow_O^{\theta})}
\end{align*}

This completes the proof.
\end{proof}

\textbf{Remark:} This result crucially depends on the max-stability property of the multivariate Fréchet distribution and the homogeneity properties of the correlation function $F$. The employment share expression shows that occupation $o$'s share depends on its productivity-weighted wage ($A_ow_o^{\theta}$), scaled by how the correlation function responds to changes in that occupation's attractiveness ($F_o/F$).

\subsection{Derivation of Labor Supply Elasticities} \label{app_ss:elasticity_derivation}

This section derives the labor supply elasticity matrix $\Theta$ from the employment share equation \eqref{app_eq:employment_share}.

Starting from the employment share:
\begin{equation*}
    \pi_o = \frac{A_o w_o^{\theta} F_o(A_1w_1^{\theta}, \ldots, A_Ow_O^{\theta})}{F(A_1w_1^{\theta}, \ldots, A_Ow_O^{\theta})}
\end{equation*}

Let $x_o = A_ow_o^{\theta}$ and define the wage index $W = F(x_1, \ldots, x_O)^{1/\theta}$. The elasticity with respect to relative wages is:
\begin{align*}
    \frac{\partial\ln \pi_{o}}{\partial\ln(w_{o'}/W)} &= \frac{\partial}{\partial\ln(w_{o'}/W)} \ln\left[\frac{x_o F_o(x_1, \ldots, x_O)}{F(x_1, \ldots, x_O)}\right] \nonumber \\
    &= \frac{\partial}{\partial\ln(w_{o'}/W)} \ln\left[\left(\frac{w_o}{W}\right)^{\theta} F_o\left(\frac{x_1}{W^{\theta}}, \ldots, \frac{x_O}{W^{\theta}}\right)\right]
\end{align*}

Using the fact that $F$ is homogeneous of degree one (hence $F_o$ is homogeneous of degree zero):
\begin{equation*}
    \frac{\partial\ln \pi_{o}}{\partial\ln(w_{o'}/W)} = 
    \begin{cases}
        \theta\frac{x_{o'}F_{oo'}}{F_o} & \text{if } o' \neq o \\
        \theta\frac{x_o F_{oo}}{F_o} + \theta & \text{if } o' = o
    \end{cases}
\end{equation*}

To obtain the elasticity with respect to absolute wages, we use:
\begin{equation*}
    \frac{\partial\ln \pi_{o}}{\partial\ln w_{o'}} = \frac{\partial\ln \pi_{o}}{\partial\ln(w_{o'}/W)} \cdot \frac{\partial\ln(w_{o'}/W)}{\partial\ln w_{o'}} + \frac{\partial\ln \pi_{o}}{\partial\ln W} \cdot \frac{\partial\ln W}{\partial\ln w_{o'}}
\end{equation*}

Since $\frac{\partial\ln(w_{o'}/W)}{\partial\ln w_{o'}} = 1 - \frac{\partial\ln W}{\partial\ln w_{o'}}$ and $\frac{\partial\ln W}{\partial\ln w_{o'}} = \pi_{o'}$:
\begin{align*}
    \frac{\partial\ln \pi_{o}}{\partial\ln w_{o'}} &= \theta\frac{x_{o'}F_{oo'}}{F_o} - \theta\pi_{o'} \quad \text{for } o' \neq o \\
    \frac{\partial\ln \pi_{o}}{\partial\ln w_{o}} &= \theta\frac{x_o F_{oo}}{F_o} + \theta - \theta\pi_o = \theta\frac{x_o F_{oo}}{F_o} + \theta(1 - \pi_o)
\end{align*}

Since $L_o = \pi_o\bar{L}$, the labor supply elasticity is:
\begin{equation*}
    \Theta_{oo'} = \frac{\partial \ln L_o}{\partial \ln w_{o'}} = \frac{\partial \ln \pi_o}{\partial \ln w_{o'}} = 
    \begin{cases}
        \theta\left[\frac{x_{o'}F_{oo'}}{F_o}\bigg|_{x_j = A_jw_j^{\theta}} - \pi_{o'}\right] & \text{if } o \neq o' \\
        \theta\left[\frac{x_oF_{oo}}{F_o}\bigg|_{x_j = A_jw_j^{\theta}} + 1 - \pi_o\right] & \text{if } o = o'
    \end{cases}
\end{equation*}

This completes the derivation of the labor supply elasticity matrix in equation \eqref{app_eq:elasticity_matrix}.

\subsection{Zero Row Sum Property of the Elasticity Matrix} \label{app_ss:zero_row_sum}

A crucial property of the labor supply elasticity matrix $\Theta$ is that its row sums equal zero, implying that uniform wage changes do not affect relative employment. This result follows directly from the homogeneity of degree one property of the correlation function $F$.

\begin{proposition}
For the elasticity matrix $\Theta$ defined in equation \eqref{app_eq:elasticity_matrix}, we have:
\begin{equation*}
    \sum_{o'=1}^O \Theta_{oo'} = 0 \quad \text{for all } o
\end{equation*}
\end{proposition}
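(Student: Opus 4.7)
The plan is to compute $\sum_{o'=1}^O \Theta_{oo'}$ directly by splitting the sum into its diagonal and off-diagonal pieces, then use two facts: (i) employment shares sum to one, and (ii) a second-order Euler identity for the homogeneous correlation function $F$.

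First I would write
\begin{equation*}
\sum_{o'=1}^O \Theta_{oo'} = \theta \sum_{o'=1}^O \frac{x_{o'} F_{oo'}}{F_o} \;+\; \theta\Bigl(1 - \sum_{o'=1}^O \pi_{o'}\Bigr),
\end{equation*}
where the second group collects the $-\theta \pi_{o'}$ terms from the off-diagonal entries together with the extra $+\theta(1-\pi_o)$ inside the diagonal entry. Since employment shares partition the workforce, $\sum_{o'} \pi_{o'} = 1$, so the second group vanishes. The statement therefore reduces to showing that $\sum_{o'} x_{o'} F_{oo'}(x_1,\ldots,x_O) = 0$.

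The main step is to observe that because $F$ is homogeneous of degree one, its first partial $F_o$ is homogeneous of degree zero (derivatives of a degree-$k$ homogeneous function are degree-$(k-1)$ homogeneous). Applying Euler's theorem to $F_o$ gives exactly
\begin{equation*}
\sum_{o'=1}^O x_{o'} \frac{\partial F_o}{\partial x_{o'}} = \sum_{o'=1}^O x_{o'} F_{oo'} = 0 \cdot F_o = 0.
\end{equation*}
Plugging this back yields $\sum_{o'} \Theta_{oo'} = 0$, as claimed.

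There is really no obstacle here beyond recognizing that Euler's identity must be applied to $F_o$ rather than to $F$ itself; the rest is bookkeeping. It is worth noting explicitly in the write-up that the $\pi_{o'}$ terms play a dual role — they compensate between the diagonal and off-diagonal definitions so that only the correlation-curvature sum remains — and that this is precisely what allows the homogeneity of $F$ to deliver the zero row-sum. This confirms that only relative wage movements induce reallocation, consistent with the interpretation of $\boldsymbol{u}_1 \propto \mathbf{1}$ as the null direction identified in Lemma~\ref{lemma:eigenvalues}.
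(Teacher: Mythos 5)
Your proof is correct and follows essentially the same route as the paper's: both decompose the row sum so that the share terms cancel via $\sum_{o'} \pi_{o'} = 1$, and both then apply Euler's theorem to the degree-zero homogeneous partial $F_o$ to conclude $\sum_{o'} x_{o'} F_{oo'} = 0$. No gaps.
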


\begin{proof}
Starting from the definition of employment shares in equation \eqref{app_eq:employment_share}:
\begin{equation*}
    \pi_o = \frac{A_o w_o^{\theta} F_o(A_1w_1^{\theta}, \ldots, A_Ow_O^{\theta})}{F(A_1w_1^{\theta}, \ldots, A_Ow_O^{\theta})}
\end{equation*}

Let $x_o = A_o w_o^{\theta}$ for notational simplicity. The sum of elasticities for row $o$ is:
\begin{align*}
    \sum_{o'=1}^O \Theta_{oo'} &= \sum_{o'=1}^O \frac{\partial \ln \pi_o}{\partial \ln w_{o'}} \\
    &= \sum_{o'=1}^O \theta \frac{\partial \ln \pi_o}{\partial \ln x_{o'}} \\
    &= \theta \sum_{o'=1}^O \left[ \frac{x_{o'} F_{oo'}}{F_o} - \pi_{o'} \right] + \theta \cdot \mathbf{1}_{o=o'} \\
    &= \frac{\theta}{F_o} \sum_{o'=1}^O x_{o'} F_{oo'} + \theta(1 - \pi_o) - \theta \sum_{o' \neq o} \pi_{o'}
\end{align*}

Since $\sum_{o'=1}^O \pi_{o'} = 1$, we have $1 - \pi_o = \sum_{o' \neq o} \pi_{o'}$, which simplifies the expression to:
\begin{equation*}
    \sum_{o'=1}^O \Theta_{oo'} = \frac{\theta}{F_o} \sum_{o'=1}^O x_{o'} F_{oo'}
\end{equation*}

Now we invoke Euler's theorem for homogeneous functions. Since $F$ is homogeneous of degree one, its partial derivative $F_o$ is homogeneous of degree zero. By Euler's theorem applied to $F_o$:
\begin{equation*}
    \sum_{o'=1}^O x_{o'} \frac{\partial F_o}{\partial x_{o'}} = 0 \cdot F_o = 0
\end{equation*}

But $\frac{\partial F_o}{\partial x_{o'}} = F_{oo'}$ by definition, therefore:
\begin{equation*}
    \sum_{o'=1}^O x_{o'} F_{oo'} = 0
\end{equation*}

This immediately implies:
\begin{equation*}
    \sum_{o'=1}^O \Theta_{oo'} = \frac{\theta}{F_o} \cdot 0 = 0
\end{equation*}
\end{proof}

This zero row sum property has important economic implications. It ensures that proportional wage increases—such as those resulting from aggregate productivity growth—do not induce occupational reallocation. Only relative wage changes, such as those caused by asymmetric technological shocks, trigger worker mobility across occupations. This property is essential for the model's consistency and ensures that the labor market responds only to distributional shocks rather than level effects.

\subsection{Proof of Equilibrium Existence and Uniqueness} \label{app_ss:equilibrium_proof}

\begin{proposition}
Given the production structure in Section \ref{app_ss:production} and labor supply in Section \ref{app_ss:labor_supply}, a unique competitive equilibrium exists.
\end{proposition}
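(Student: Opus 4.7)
The plan is to reduce the equilibrium to a fixed-point problem in relative wages and then deploy a gross-substitutes argument powered by the sign-switching property of $F$. First, I would combine the wage equation from firm optimization and the labor-supply expression to eliminate $L_o$: substituting $L_o = \pi_o(\boldsymbol{w})\bar{L}$ into $w_o = y^{1/\sigma}\alpha_o^{1/\sigma}L_o^{-1/\sigma}$ and raising to the $\sigma$-th power yields the compact system
\begin{equation*}
w_o^{\sigma}\pi_o(\boldsymbol{w}) = \frac{y(\boldsymbol{w})\,\alpha_o}{\bar{L}},\qquad o=1,\ldots,O,
\end{equation*}
where $y(\boldsymbol{w})$ is pinned down (given the numeraire choice $p_y=1$) by the CES zero-profit condition $\sum_o \alpha_o w_o^{1-\sigma} = \mathcal{A}^{1-\sigma}$. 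Because $\pi_o$ is homogeneous of degree zero in $\boldsymbol{w}$, ratios of the left-hand side across occupations depend only on relative wages. Thus, up to the scalar level fixed by the zero-profit constraint, equilibrium is characterized by the $O{-}1$ relative-wage equations obtained from ratios of the display above.

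For \emph{existence}, I would write the system in logs as $G_o(\boldsymbol{w}) - G_1(\boldsymbol{w}) = \sigma\ln(\alpha_o/\alpha_1)$, where $G_o(\boldsymbol{w}) := \sigma\ln w_o + \ln \pi_o(\boldsymbol{w})$, and construct a continuous self-map on a compact subset of the wage simplex (imposing $\sum_o \alpha_o w_o^{1-\sigma}=\mathcal{A}^{1-\sigma}$). Continuity of the map follows from the smoothness of $F$ and strict positivity of $\pi_o$, which is itself guaranteed by the unboundedness of $F$ (preventing corner solutions where some $\pi_o$ vanishes) together with positivity of $\{A_o,\alpha_o\}$. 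Because $L_o^s\in(0,\bar{L})$ on this domain, wages stay in a compact interior region, and Brouwer's fixed-point theorem delivers existence.

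For \emph{uniqueness}, the key observation is that the Jacobian of $G$ with respect to log wages has entries
\begin{equation*}
J_{oo'} \;=\; \frac{\partial G_o}{\partial \ln w_{o'}} \;=\; \Theta_{oo'} + \sigma\,\mathbf{1}\{o=o'\}.
\end{equation*}
By the sign-switching property of $F$ established in Appendix~\ref{app_sss:correlation_properties} (gross substitutes), $\Theta_{oo'}\leq 0$ for $o\neq o'$ and $\Theta_{oo}\geq 0$. Combined with $\sigma>0$, this gives $J_{oo}>0$, $J_{oo'}\leq 0$ for $o\neq o'$, and by the zero row-sum property of $\Theta$ proved in Appendix~\ref{app_ss:zero_row_sum}, the row sums of $J$ equal $\sigma>0$. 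Hence $J$ is a strictly diagonally dominant M-matrix at every admissible wage vector. The map $\boldsymbol{w}\mapsto G(\boldsymbol{w})$ therefore satisfies the gross-substitutes property globally, and a standard Gale--Nikaido / Arrow--Hurwicz argument (applied to the reduced system in relative wages after normalizing, e.g., $w_1=1$) yields global injectivity, hence uniqueness.

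The main obstacle, in my view, is not the local M-matrix property but its global extension: proving that the gross-substitutes structure holds uniformly across the wage domain, not merely at a candidate equilibrium. This requires the sign-switching property of $F$ to be a \emph{global} property of mixed partials (which the cross-nested CES correlation function in equation~\eqref{eq:corr_cnces} satisfies by direct computation), together with verifying that the zero-profit manifold intersects the positive orthant in a compact, connected set on which $G$ is a proper map. A careful line-integration along relative-wage paths, as in the classical Arrow--Block--Hurwicz proof, converts the local monotonicity into a unique solution of $G_o - G_1 = \text{const}$.
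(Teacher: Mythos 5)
Your proposal is correct and follows essentially the same route as the paper's own proof: Brouwer's fixed-point theorem for existence (the paper works on the employment-share simplex via $\lambda = \pi(w(\lambda))$, which buys compactness for free, while you work on the wage/zero-profit manifold) and gross substitutes from the sign-switching property of $F$ combined with Gale--Nikaid\^o univalence for uniqueness. Your explicit computation $J = \sigma I + \Theta$ with row sums equal to $\sigma$ (via the zero row-sum property of $\Theta$) is a slightly sharper statement of the diagonal-dominance/P-matrix structure the paper asserts, and your closing caveat about globalizing the argument across the normalization is at the same level of rigor as the paper's own sketch.
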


\begin{proof}
\textbf{Existence:}  
Define employment shares $\lambda_o = L_o/\bar L$ and note that market clearing requires
\[
\lambda = \pi\!\left(w(\lambda)\right),
\]
where $\pi(\cdot)$ are labor supply shares from \eqref{app_eq:employment_share} and $w(\cdot)$ are occupational wages from \eqref{app_eq:wage_equation}. The mapping $T:\Delta^{O-1}\to \Delta^{O-1}$ defined by $T(\lambda)=\pi(w(\lambda))$ is continuous: (i) $\pi(w)$ is continuous and strictly positive by the properties of the correlation function $F$, and (ii) $w(\lambda)$ is continuous from the CES production structure. Since $T$ maps the compact convex simplex $\Delta^{O-1}$ into itself, Brouwer's fixed-point theorem guarantees at least one equilibrium $\lambda^*$ and corresponding $w^*$ (unique up to a scalar normalization).

\textbf{Uniqueness:}  
Two features rule out multiple equilibria.  

(i) \emph{Labor supply:} By the sign-switching property of $F$, the elasticity matrix
\[
\Theta_{oo'} = \frac{\partial \ln L_o}{\partial \ln w_{o'}}
\]
satisfies $\Theta_{oo}>0$, $\Theta_{oo'}<0$ for $o\neq o'$, and $\sum_{o'}\Theta_{oo'}=0$. Thus, occupations are gross substitutes from the workers' perspective.  

(ii) \emph{Labor demand:} From \eqref{app_eq:wage_equation}, own-wage elasticities are negative ($\partial\ln w_o/\partial\ln L_o=-1/\sigma<0$), while cross-elasticities are positive for $o\neq o'$ under $\sigma>1$, implying that occupations are gross substitutes in production.  

Taken together, excess demand in log-space has a Jacobian that is a $P$-matrix (positive diagonal dominance with gross-substitute sign pattern). By the Gale--Nikaid\^o global univalence theorem (or equivalently Kelso--Crawford/Gul--Stacchetti arguments for gross substitutes), the fixed point $\lambda^*$ is unique, and so are relative wages $w^*$ once a normalization is imposed.

\textbf{Conclusion}  
Under $\sigma>1$ and the sign-switching (gross substitutes) property of $F$, a competitive equilibrium always exists and is unique in relative wages and employment shares.

\end{proof}

\subsection{Derivation of Wage Incidence in Proposition \ref{prop:tech_incidence}} \label{app_ss:incidence_proof}

Starting from the equilibrium conditions:
\begin{align*}
    d\ln\boldsymbol{w} &= \frac{1}{\sigma}d\ln y \cdot \mathbf{1} - \frac{1}{\sigma}d\ln\boldsymbol{\alpha} - \frac{1}{\sigma}d\ln\boldsymbol{L} \quad \text{(labor demand)} \\
    d\ln\boldsymbol{L} &= \Theta \cdot d\ln\boldsymbol{w} \quad \text{(labor supply)}
\end{align*}

Substituting the labor supply response into the demand equation:
\begin{align*}
    d\ln\boldsymbol{w} &= \frac{1}{\sigma}d\ln y \cdot \mathbf{1} - \frac{1}{\sigma}d\ln\boldsymbol{\alpha} - \frac{1}{\sigma}\Theta \cdot d\ln\boldsymbol{w} \\
    \left(\mathbf{I} + \frac{\Theta}{\sigma}\right) d\ln\boldsymbol{w} &= \frac{1}{\sigma}d\ln y \cdot \mathbf{1} - \frac{1}{\sigma}d\ln\boldsymbol{\alpha}
\end{align*}

Since $\sum_{o'}\Theta_{oo'} = 0$ for all $o$ (see Appendix \ref{app_ss:zero_row_sum}), the matrix $(\mathbf{I} + \Theta/\sigma)$ is invertible. Solving for wages:
\begin{equation*}
    d\ln\boldsymbol{w} = \frac{1}{\sigma}d\ln y \cdot \mathbf{1} - \underbrace{\left(\mathbf{I} + \frac{\Theta}{\sigma}\right)^{-1}}_{\equiv \Delta} \cdot \frac{d\ln\boldsymbol{\alpha}}{\sigma}
\end{equation*}

This establishes equation \eqref{eq:wage_incidence} with the pass-through matrix $\Delta = (\mathbf{I} + \Theta/\sigma)^{-1}$.

\subsection{Derivation of Mobility Gains} \label{app_ss:mobility_gain}

Consider a marginal worker initially in occupation $o$ who transitions to occupation $o'$ following the shock. Before the shock, this worker was indifferent between the two occupations:
\begin{equation*}
    \ln w_o + \ln\epsilon_o(i) = \ln w_{o'} + \ln\epsilon_{o'}(i)
\end{equation*}

After the shock, the worker strictly prefers $o'$:
\begin{equation*}
    \ln w_o + d\ln w_o + \ln\epsilon_o(i) < \ln w_{o'} + d\ln w_{o'} + \ln\epsilon_{o'}(i)
\end{equation*}

The equivalent variation (EV) for this marginal switcher satisfies:
\begin{equation*}
    \ln w_o + d\ln w_o + \ln\epsilon_o(i) + \text{EV}(i) = \ln w_{o'} + d\ln w_{o'} + \ln\epsilon_{o'}(i)
\end{equation*}

Using the initial indifference condition:
\begin{equation*}
    \text{EV}(i) = d\ln w_{o'} - d\ln w_o
\end{equation*}

For small changes, the share of workers transitioning from $o$ to $o'$ when $d\ln w_{o'} > d\ln w_o$ is:
\begin{equation*}
    \mu_{oo'} = -\Theta_{oo'}(d\ln w_{o'} - d\ln w_o)
\end{equation*}

Note that $\Theta_{oo'} < 0$ for $o \neq o'$, so $\mu_{oo'} > 0$ when wages rise more in $o'$.

The average mobility gain for workers initially in occupation $o$ is:
\begin{align*}
    \text{Mobility Gain}_o &= \sum_{o': d\ln w_{o'} > d\ln w_o} \mu_{oo'} \cdot \text{EV}_{oo'} \\
    &= \sum_{o': d\ln w_{o'} > d\ln w_o} [-\Theta_{oo'}(d\ln w_{o'} - d\ln w_o)] \cdot (d\ln w_{o'} - d\ln w_o) \\
    &= -\sum_{o'} \Theta_{oo'}(d\ln w_{o'} - d\ln w_o)^2 \cdot \mathbf{1}_{d\ln w_{o'} > d\ln w_o}
\end{align*}

This establishes equation \eqref{eq:mobility_gains}.

\subsection{Proof of Proposition \ref{prop:spectral}} \label{app_ss:spectral_proof}

We derive the spectral decomposition of wage incidence using the eigendecomposition $\Theta = U\Lambda V$, where $V = U^{-1}$.

\textbf{Step 1: Decompose the technological shock.}
Since the eigenvectors $\{\boldsymbol{u}_n\}$ form a basis for $\mathbb{R}^O$, we can write:
\begin{equation*}
    \frac{d\ln\boldsymbol{\alpha}}{\sigma} = \sum_{n=1}^O b_n \boldsymbol{u}_n
\end{equation*}
where the coefficients are $\boldsymbol{b}= \left(U^{\prime} U\right)^{-1} U^{\prime} \frac{d\ln\boldsymbol{\alpha}}{\sigma} $.

\textbf{Step 2: Apply the eigendecomposition to the pass-through matrix.}
The pass-through matrix can be written as:
\begin{align*}
    \Delta &= \left(\mathbf{I} + \frac{\Theta}{\sigma}\right)^{-1} = \left(\mathbf{I} + \frac{U\Lambda V}{\sigma}\right)^{-1} \\
    &= U\left(\mathbf{I} + \frac{\Lambda}{\sigma}\right)^{-1}V = \sum_{n=1}^O \frac{1}{1 + \lambda_n/\sigma} \boldsymbol{u}_n\boldsymbol{v}_n'
\end{align*}

\textbf{Step 3: Compute the wage response.}
Substituting into the wage incidence equation:
\begin{align*}
    d\ln\boldsymbol{w} &= \frac{d\ln y}{\sigma}\mathbf{1} - \Delta \cdot \frac{d\ln\boldsymbol{\alpha}}{\sigma} \\
    &= \frac{d\ln y}{\sigma}\mathbf{1} - \sum_{n=1}^O \frac{1}{1 + \lambda_n/\sigma} \boldsymbol{u}_n\boldsymbol{v}_n' \cdot \left(\sum_{m=1}^O b_m \boldsymbol{u}_m\right) \\
    &= \frac{d\ln y}{\sigma}\mathbf{1} - \sum_{n=1}^O \frac{b_n}{1 + \lambda_n/\sigma} \boldsymbol{u}_n
\end{align*}

The last equality uses the orthogonality property $\boldsymbol{v}_n' \cdot \boldsymbol{u}_m = \delta_{nm}$.

This completes the proof, showing that each eigenshock $\boldsymbol{u}_n$ passes through to wages with a dampening factor $(1 + \lambda_n/\sigma)^{-1}$.
\subsection{Proof of Eigenvalue Properties} \label{app_ss:eigenvalue_proof}

This section proves the eigenvalue properties of the labor supply elasticity matrix $\Theta$ stated in Lemma \ref{lemma:eigenvalues}.

\begin{proof}
We establish each property in turn.

\textbf{Part 1: Existence of zero eigenvalue with uniform eigenvector.}

From Section \ref{app_ss:zero_row_sum}, we know that $\sum_{o'} \Theta_{oo'} = 0$ for all $o$. This implies:
\begin{equation*}
    \Theta \cdot \mathbf{1} = \mathbf{0}
\end{equation*}
where $\mathbf{1} = [1, 1, \ldots, 1]'$. Therefore, $\lambda = 0$ is an eigenvalue with right eigenvector $\boldsymbol{u}_1 = \mathbf{1}/\sqrt{O}$ (normalized).

\textbf{Part 2: Non-negativity of all eigenvalues.}

The matrix $\Theta$ has the structure:
\begin{equation*}
    \Theta_{oo'} = \begin{cases}
        \theta\left[\frac{x_oF_{oo}}{F_o} + 1 - \pi_o\right] > 0 & \text{if } o = o' \\
        \theta\left[\frac{x_{o'}F_{oo'}}{F_o} - \pi_{o'}\right] < 0 & \text{if } o \neq o'
    \end{cases}
\end{equation*}

The sign-switching property of $F$ ensures $F_{oo'} \leq 0$ for $o \neq o'$, making $\Theta$ a matrix with positive diagonal and negative off-diagonal elements. Additionally, the diagonal dominance condition holds:
\begin{equation*}
    \Theta_{oo} = -\sum_{o' \neq o} \Theta_{oo'} > 0
\end{equation*}

By the Gershgorin circle theorem, all eigenvalues lie in the union of discs:
\begin{equation*}
    \lambda \in \bigcup_{o} \left\{z \in \mathbb{C}: |z - \Theta_{oo}| \leq \sum_{o' \neq o} |\Theta_{oo'}|\right\}
\end{equation*}

Since $\Theta_{oo} = \sum_{o' \neq o} |\Theta_{oo'}|$ (from the zero row sum), each disc is centered at a positive point with radius equal to the center. Therefore, all discs lie in the right half-plane: $\text{Re}(\lambda) \geq 0$.

For a real matrix with real eigenvalues (which $\Theta$ has due to its economic interpretation), this implies $\lambda \geq 0$.

\textbf{Part 3: Uniqueness of zero eigenvalue.}

Suppose $\lambda = 0$ has geometric multiplicity greater than 1. Then there exists a non-uniform vector $\boldsymbol{x} \neq c\mathbf{1}$ such that $\Theta\boldsymbol{x} = \mathbf{0}$.

Without loss of generality, normalize $\boldsymbol{x}$ so that $\max_o x_o = 1$ and $\min_o x_o < 1$. Let $o^* = \arg\max_o x_o$. Then:
\begin{equation*}
    0 = (\Theta\boldsymbol{x})_{o^*} = \Theta_{o^*o^*} + \sum_{o' \neq o^*} \Theta_{o^*o'} x_{o'}
\end{equation*}

Since $x_{o'} < x_{o^*} = 1$ for at least one $o'$ and $\Theta_{o^*o'} < 0$ for all $o' \neq o^*$:
\begin{equation*}
    \sum_{o' \neq o^*} \Theta_{o^*o'} x_{o'} > \sum_{o' \neq o^*} \Theta_{o^*o'} = -\Theta_{o^*o^*}
\end{equation*}

This gives $(\Theta\boldsymbol{x})_{o^*} > 0$, contradicting $\Theta\boldsymbol{x} = \mathbf{0}$. Therefore, the zero eigenvalue has geometric (and algebraic) multiplicity 1.

\end{proof}

\subsection{Additional Derivations for Dynamic Model} \label{c:appendix:add_math_dynamic}

\subsubsection{Derivation of System in Changes (Section \ref{a:appendix:sys_changes_dynamic})}

This section provides detailed derivations for the dynamic system expressed in growth rates.

\paragraph{Production equilibrium in changes}
Log-differentiating the wage equation yields:
\begin{equation*}
\sigma \ln \dot{w}_{o,t+1} + \ln \dot{L}_{o,t+1} = \ln \dot{Y}_{t+1} + \ln \dot{\alpha}_{o,t+1}
\end{equation*}

This implies wages adjust according to:
\begin{equation*}
\dot{\boldsymbol{w}}_{t+1} = \dot{w}(\dot{\boldsymbol{L}}_{t+1}, \dot{\boldsymbol{\Psi}}_{t+1})
\end{equation*}
where $\dot{\boldsymbol{\Psi}}_{t+1}$ represents changes in fundamentals.

\paragraph{Evolution of adjusted mobility}
Starting from the definition:
\begin{equation*}
\tilde{\mu}_{oo',t} = \frac{A_{o',t}Z_{oo',t}^{\theta/\kappa}}{F(A_{1,t}Z_{o1,t}^{\theta/\kappa}, \ldots, A_{O,t}Z_{oO,t}^{\theta/\kappa})}
\end{equation*}

Taking the ratio across time:
\begin{align*}
\frac{\tilde{\mu}_{oo',t}}{\tilde{\mu}_{oo',t-1}} &= \frac{A_{o',t}Z_{oo',t}^{\theta/\kappa}/A_{o',t-1}Z_{oo',t-1}^{\theta/\kappa}}{F(A_{1,t}Z_{o1,t}^{\theta/\kappa}, \ldots)/F(A_{1,t-1}Z_{o1,t-1}^{\theta/\kappa}, \ldots)}
\end{align*}

Using $Z_{oo',t} = \exp(\beta V_{o',t+1} + \ln w_{o',t} - \tau_{oo'})$ and $\dot{u}_{o,t} = \exp(V_{o,t} - V_{o,t-1})$:
\begin{align*}
\frac{\tilde{\mu}_{oo',t}}{\tilde{\mu}_{oo',t-1}} &= \frac{\dot{A}_{o',t} \dot{u}_{o',t+1}^{\beta\theta/\kappa} \dot{w}_{o',t}^{\theta/\kappa}}{F(\{A_{o'',t}Z_{oo'',t}^{\theta/\kappa}/F_{t-1}\}_{o''=1}^O)}
\end{align*}

where $F_{t-1} = F(A_{1,t-1}Z_{o1,t-1}^{\theta/\kappa}, \ldots, A_{O,t-1}Z_{oO,t-1}^{\theta/\kappa})$.

Using the homogeneity of $F$ and noting that $A_{o'',t}Z_{oo'',t}^{\theta/\kappa}/F_{t-1} = \tilde{\mu}_{oo'',t-1} \cdot \dot{A}_{o'',t} \dot{u}_{o'',t+1}^{\beta\theta/\kappa} \dot{w}_{o'',t}^{\theta/\kappa}$:
\begin{equation*}
\boxed{\frac{\tilde{\mu}_{oo',t}}{\tilde{\mu}_{oo',t-1}} = \frac{\dot{A}_{o',t} \dot{u}_{o',t+1}^{\beta\theta/\kappa} \dot{w}_{o',t}^{\theta/\kappa}}{F(\{\tilde{\mu}_{oo'',t-1} \dot{A}_{o'',t} \dot{u}_{o'',t+1}^{\beta\theta/\kappa} \dot{w}_{o'',t}^{\theta/\kappa}\}_{o''=1}^O)}}
\end{equation*}

\paragraph{Evolution of expected utility}
From the value function:
\begin{equation*}
V_{o,t} = \frac{\kappa}{\theta}\ln F(A_{1,t}Z_{o1,t}^{\theta/\kappa}, \ldots, A_{O,t}Z_{oO,t}^{\theta/\kappa}) + \bar{\gamma}\frac{\kappa}{\theta}
\end{equation*}

The change in value is:
\begin{align*}
V_{o,t+1} - V_{o,t} &= \frac{\kappa}{\theta}\ln\frac{F(A_{1,t+1}Z_{o1,t+1}^{\theta/\kappa}, \ldots)}{F(A_{1,t}Z_{o1,t}^{\theta/\kappa}, \ldots)}
\end{align*}

Using homogeneity to factor out $F(A_{1,t}Z_{o1,t}^{\theta/\kappa}, \ldots)$:
\begin{align*}
V_{o,t+1} - V_{o,t} &= \frac{\kappa}{\theta}\ln F\left(\left\{\frac{A_{o'',t+1}Z_{oo'',t+1}^{\theta/\kappa}}{F(A_{1,t}Z_{o1,t}^{\theta/\kappa}, \ldots)}\right\}_{o''=1}^O\right)
\end{align*}

Substituting $A_{o'',t+1}Z_{oo'',t+1}^{\theta/\kappa} = A_{o'',t}Z_{oo'',t}^{\theta/\kappa} \cdot \dot{A}_{o'',t+1} \dot{Z}_{oo'',t+1}^{\theta/\kappa}$ and recognizing that $A_{o'',t}Z_{oo'',t}^{\theta/\kappa}/F_t = \tilde{\mu}_{oo'',t}$:
\begin{equation*}
\boxed{\dot{u}_{o,t+1} = F(\{\tilde{\mu}_{oo'',t} \dot{A}_{o'',t+1} \dot{u}_{o'',t+2}^{\beta\theta/\kappa} \dot{w}_{o'',t+1}^{\theta/\kappa}\}_{o''=1}^O)^{\kappa/\theta}}
\end{equation*}

\paragraph{Labor market clearing}
Employment evolves through transitions:
\begin{equation*}
L_{o,t} = \sum_{o'} \mu_{o'o,t} L_{o',t-1}
\end{equation*}
where observed transitions relate to adjusted rates via:
\begin{equation*}
\mu_{oo',t} = \tilde{\mu}_{oo',t} F_{o'}(\tilde{\mu}_{o1,t}, \ldots, \tilde{\mu}_{oO,t})
\end{equation*}

These equations form a complete system characterizing the dynamic equilibrium in growth rates, preserving the DIDES structure through the correlation function $F$.
\subsubsection{Derivation of Dynamic Hat Algebra (Section \ref{a:appendix:dynamic_hat_algebra})}

This section derives the counterfactual evolution equations for the dynamic model.

\paragraph{Counterfactual wage determination.}
From the production equilibrium, counterfactual wages relate to baseline wages through:
\begin{equation*}
\hat{w}_{o,t+1} = \frac{\dot{w}_{o,t+1}'}{\dot{w}_{o,t+1}} = \left(\frac{\hat{Y}_{t+1} \hat{\alpha}_{o,t+1}}{\hat{L}_{o,t+1}}\right)^{\frac{1}{\sigma}}
\end{equation*}
where hats denote ratios of counterfactual to baseline growth rates.

\paragraph{Evolution of counterfactual transition probabilities.}
Starting from the ratio of counterfactual to baseline growth rates:
\begin{align*}
\frac{\tilde{\mu}_{oo',t}'}{\tilde{\mu}_{oo',t-1}'} &= \frac{\dot{\tilde{\mu}}_{oo',t} \hat{A}_{o',t} \hat{u}_{o',t+1}^{\beta\theta/\kappa} \hat{w}_{o',t}^{\theta/\kappa}}{\text{[Denominator]}}
\end{align*}

The denominator requires careful manipulation. Using the ratio of counterfactual to baseline correlation functions:
\begin{align*}
\text{[Denominator]} &= \frac{F(\{\tilde{\mu}_{oo'',t-1}' \dot{A}_{o'',t}' \dot{u}_{o'',t+1}'^{\beta\theta/\kappa} \dot{w}_{o'',t}'^{\theta/\kappa}\}_{o''})}
{F(\{\tilde{\mu}_{oo'',t-1} \dot{A}_{o'',t} \dot{u}_{o'',t+1}^{\beta\theta/\kappa} \dot{w}_{o'',t}^{\theta/\kappa}\}_{o''})}
\end{align*}

Recognizing that counterfactual growth rates equal baseline growth times hat values, and using homogeneity of $F$:
\begin{align*}
&= F\left(\left\{\frac{\tilde{\mu}_{oo'',t-1}'}{\tilde{\mu}_{oo'',t-1}} \cdot \tilde{\mu}_{oo'',t} \cdot \hat{A}_{o'',t} \hat{u}_{o'',t+1}^{\beta\theta/\kappa} \hat{w}_{o'',t}^{\theta/\kappa}\right\}_{o''}\right)
\end{align*}

Therefore, the recursive formula is:
\begin{equation*}
\boxed{\tilde{\mu}_{oo',t}' = \frac{\tilde{\mu}_{oo',t-1}' \dot{\tilde{\mu}}_{oo',t} \hat{A}_{o',t} \hat{u}_{o',t+1}^{\beta\theta/\kappa} \hat{w}_{o',t}^{\theta/\kappa}}{F(\{\tilde{\mu}_{oo'',t-1}' \dot{\tilde{\mu}}_{oo'',t} \hat{A}_{o'',t} \hat{u}_{o'',t+1}^{\beta\theta/\kappa} \hat{w}_{o'',t}^{\theta/\kappa}\}_{o''=1}^O)}}
\end{equation*}

\paragraph{Evolution of counterfactual expected utility.}
Following similar steps for the utility growth rates:
\begin{align*}
\hat{u}_{o,t+1} &= \frac{\dot{u}_{o,t+1}'}{\dot{u}_{o,t+1}} \\
&= \frac{F(\{\tilde{\mu}_{oo'',t}' \dot{A}_{o'',t+1}' \dot{u}_{o'',t+2}'^{\beta\theta/\kappa} \dot{w}_{o'',t+1}'^{\theta/\kappa}\}_{o''})^{\kappa/\theta}}{F(\{\tilde{\mu}_{oo'',t} \dot{A}_{o'',t+1} \dot{u}_{o'',t+2}^{\beta\theta/\kappa} \dot{w}_{o'',t+1}^{\theta/\kappa}\}_{o''})^{\kappa/\theta}}
\end{align*}

Using the same homogeneity argument:
\begin{equation*}
\boxed{\hat{u}_{o,t+1} = F(\{\tilde{\mu}_{oo'',t}' \dot{\tilde{\mu}}_{oo'',t+1} \hat{A}_{o'',t+1} \hat{u}_{o'',t+2}^{\beta\theta/\kappa} \hat{w}_{o'',t+1}^{\theta/\kappa}\}_{o''=1}^O)^{\kappa/\theta}}
\end{equation*}

\paragraph{Observed transitions and employment evolution.}
The observed counterfactual transitions incorporate correlation effects:
\begin{equation*}
\mu_{oo',t}' = \tilde{\mu}_{oo',t}' F_{o'}(\tilde{\mu}_{o1,t}', \ldots, \tilde{\mu}_{oO,t}')
\end{equation*}

Employment evolves through the transition matrix plus exogenous flows:
\begin{equation*}
L_{o,t}' = \sum_{o'} \mu_{o'o,t}' L_{o',t-1}' + \Delta L_{o,t}
\end{equation*}

These equations provide a complete characterization of counterfactual dynamics, preserving the DIDES structure through the correlation function $F$ while enabling analysis of alternative technological scenarios.
\subsubsection{Initial Dynamics with Unexpected Shocks}

For unexpected shocks at $t=1$, the economy begins at baseline equilibrium with $\hat{u}_{o,0} = 1$, $\mu_{oo',0}' = \mu_{oo',0}$, and $L_{o,0}' = L_{o,0}$.

\paragraph{Deriving the initial utility adjustment.}
The baseline expected utility at $t=0$ is:
\begin{equation*}
u_{o,0} = F(\{A_{o',0}Z_{oo',0}^{\theta/\kappa}\}_{o'=1}^O)^{\kappa/\theta}
\end{equation*}

Since initial conditions are identical ($A_{o',0}' = A_{o',0}$, $w_{o',0}' = w_{o',0}$), we can rewrite using counterfactual notation:
\begin{equation*}
u_{o,0} = F\left(\left\{\frac{A_{o',0}}{A_{o',0}'} \cdot \frac{Z_{oo',0}}{Z_{oo',0}'} \cdot A_{o',0}'Z_{oo',0}'^{\theta/\kappa}\right\}_{o'=1}^O\right)^{\kappa/\theta}
\end{equation*}

Since the ratios equal unity at $t=0$:
\begin{equation*}
u_{o,0} = F(\{A_{o',0}'Z_{oo',0}'^{\theta/\kappa}\}_{o'=1}^O)^{\kappa/\theta}
\end{equation*}

After the shock, counterfactual utility at $t=1$ is:
\begin{equation*}
u_{o,1}' = F(\{A_{o',1}'Z_{oo',1}'^{\theta/\kappa}\}_{o'=1}^O)^{\kappa/\theta}
\end{equation*}

Taking the ratio and using homogeneity of $F$:
\begin{align*}
\frac{u_{o,1}'}{u_{o,0}} &= \frac{F(\{A_{o',1}'Z_{oo',1}'^{\theta/\kappa}\}_{o'})^{\kappa/\theta}}{F(\{A_{o',0}'Z_{oo',0}'^{\theta/\kappa}\}_{o'})^{\kappa/\theta}} \\
&= F\left(\left\{\frac{A_{o',1}'Z_{oo',1}'^{\theta/\kappa}}{F(\{A_{o'',0}'Z_{oo'',0}'^{\theta/\kappa}\}_{o''})} \right\}_{o'=1}^O\right)^{\kappa/\theta}
\end{align*}

\paragraph{Connecting to baseline transition probabilities.}
Note that at $t=0$:
\begin{align*}
\tilde{\mu}_{oo',0} &= \frac{A_{o',0}Z_{oo',0}^{\theta/\kappa}}{F(\{A_{o'',0}Z_{oo'',0}^{\theta/\kappa}\}_{o''})} \\
&= \frac{\frac{A_{o',0}Z_{oo',0}^{\theta/\kappa}}{A_{o',0}'Z_{oo',0}'^{\theta/\kappa}} \cdot A_{o',0}'Z_{oo',0}'^{\theta/\kappa}}{F\left(\left\{\frac{A_{o'',0}Z_{oo'',0}^{\theta/\kappa}}{A_{o'',0}'Z_{oo'',0}'^{\theta/\kappa}} \cdot A_{o'',0}'Z_{oo'',0}'^{\theta/\kappa}\right\}_{o''}\right)}
\end{align*}

Since initial conditions are identical, the ratios equal unity, yielding:
\begin{equation*}
\tilde{\mu}_{oo',0} = \frac{A_{o',0}'Z_{oo',0}'^{\theta/\kappa}}{F(\{A_{o'',0}'Z_{oo'',0}'^{\theta/\kappa}\}_{o''})}
\end{equation*}

Combining this with the utility ratio:
\begin{align*}
\frac{u_{o,1}'}{u_{o,0}} &= F\left(\left\{\tilde{\mu}_{oo',0} \cdot \frac{A_{o',1}'Z_{oo',1}'^{\theta/\kappa}}{A_{o',0}'Z_{oo',0}'^{\theta/\kappa}}\right\}_{o'=1}^O\right)^{\kappa/\theta} \\
&= F\left(\left\{\tilde{\mu}_{oo',0} \cdot \dot{A}_{o',1}' \dot{Z}_{oo',1}'^{\theta/\kappa}\right\}_{o'=1}^O\right)^{\kappa/\theta}
\end{align*}

Since $Z_{oo',t} = \exp(\beta V_{o',t+1} + \ln w_{o',t} - \tau_{oo'})$:
\begin{equation*}
\dot{u}_{o,1}' = F(\{\tilde{\mu}_{oo',0} \dot{A}_{o',1}' \dot{w}_{o',1}'^{\theta/\kappa} \dot{u}_{o',2}'^{\beta\theta/\kappa}\}_{o'=1}^O)^{\kappa/\theta}
\end{equation*}

\paragraph{Computing the initial hat values.}
The baseline utility growth follows:
\begin{equation*}
\dot{u}_{o,1} = F(\{\tilde{\mu}_{oo',0} \dot{A}_{o',1} \dot{w}_{o',1}^{\theta/\kappa} \dot{u}_{o',2}^{\beta\theta/\kappa}\}_{o'=1}^O)^{\kappa/\theta}
\end{equation*}

The hat value is:
\begin{align*}
\hat{u}_{o,1} &= \frac{\dot{u}_{o,1}'}{\dot{u}_{o,1}} \\
&= F\left(\left\{\frac{\tilde{\mu}_{oo',0} \dot{A}_{o',1}' \dot{w}_{o',1}'^{\theta/\kappa} \dot{u}_{o',2}'^{\beta\theta/\kappa}}{F(\{\tilde{\mu}_{oo'',0} \dot{A}_{o'',1} \dot{w}_{o'',1}^{\theta/\kappa} \dot{u}_{o'',2}^{\beta\theta/\kappa}\}_{o''})}\right\}_{o'=1}^O\right)^{\kappa/\theta}
\end{align*}

Recognizing that $\dot{A}_{o',1}'/\dot{A}_{o',1} = \hat{A}_{o',1}$ and similarly for other variables:
\begin{align*}
\hat{u}_{o,1} &= F\left(\left\{\tilde{\mu}_{oo',1} \cdot \hat{A}_{o',1} \hat{w}_{o',1}^{\theta/\kappa} \hat{u}_{o',2}^{\beta\theta/\kappa} \cdot \frac{u_{o',1}'/u_{o',1}}{u_{o',1}'/u_{o',1}}\right\}_{o'=1}^O\right)^{\kappa/\theta}
\end{align*}

Note that $u_{o',1}'/u_{o',1} = (u_{o',1}'/u_{o',0}) \cdot (u_{o',0}/u_{o',1}) = \dot{u}_{o',1}'/\dot{u}_{o',1} = \hat{u}_{o',1}$.

Defining $\vartheta_{oo',1} = \tilde{\mu}_{oo',1}\hat{u}_{o',1}^{\beta\theta/\kappa}$:
\begin{equation*}
\boxed{\hat{u}_{o,1} = F(\{\vartheta_{oo',1} \hat{A}_{o',1} \hat{w}_{o',1}^{\theta/\kappa} \hat{u}_{o',2}^{\beta\theta/\kappa}\}_{o'=1}^O)^{\kappa/\theta}}
\end{equation*}

\paragraph{Initial transition probabilities.}
Following parallel derivation for transition probabilities, starting from:
\begin{equation*}
\frac{\tilde{\mu}_{oo',1}'}{\tilde{\mu}_{oo',1}} = \frac{A_{o',1}'Z_{oo',1}'^{\theta/\kappa}/A_{o',1}Z_{oo',1}^{\theta/\kappa}}{F(\{A_{o'',1}'Z_{oo'',1}'^{\theta/\kappa}\}_{o''})/F(\{A_{o'',1}Z_{oo'',1}^{\theta/\kappa}\}_{o''})}
\end{equation*}

After similar manipulations:
\begin{equation*}
\boxed{\tilde{\mu}_{oo',1}' = \frac{\vartheta_{oo',1} \hat{A}_{o',1} \hat{w}_{o',1}^{\theta/\kappa} \hat{u}_{o',2}^{\beta\theta/\kappa}}{F(\{\vartheta_{oo'',1} \hat{A}_{o'',1} \hat{w}_{o'',1}^{\theta/\kappa} \hat{u}_{o'',2}^{\beta\theta/\kappa}\}_{o''=1}^O)}}
\end{equation*}

The adjustment factor $\vartheta_{oo',1}$ captures the combined effect of the initial shock and forward-looking expectations, encoding how unexpected changes propagate through the DIDES structure.


\subsection{Derivation of Effective Labor Supply Elasticities} \label{app:eff_elasticity_derivation}

This appendix derives the effective labor supply elasticities in equation \eqref{app:eq:eff_elasticity} when a fraction $\delta \in [0,1]$ of workers contribute their idiosyncratic productivity directly to production, as discussed in Appendix \ref{app:efficiency}.

\subsubsection{Workers with Productivity in Production}

For workers whose productivity enters production, occupation $o$ provides wage $w_o\epsilon_o(i)$ per unit of labor. Given the multivariate Fréchet distribution from equation \eqref{eq:prod_distr}:
\begin{equation*}
    \Pr[\epsilon_1(i) \leq \epsilon_1, \ldots, \epsilon_O(i) \leq \epsilon_O] = \exp[-F(A_1\epsilon_1^{-\theta}, \ldots, A_O\epsilon_O^{-\theta})]
\end{equation*}

Workers choose occupation $o^*(i) = \arg\max_o \{w_o\epsilon_o(i)\}$, yielding employment share:
\begin{equation*}
    \pi_o = \frac{A_ow_o^{\theta}F_o(A_1w_1^{\theta}, \ldots, A_Ow_O^{\theta})}{F(A_1w_1^{\theta}, \ldots, A_Ow_O^{\theta})}
\end{equation*}

The key distinction arises in aggregating efficiency units. The conditional expectation of productivity for workers choosing occupation $o$ is:
\begin{equation*}
    \mathbb{E}[\epsilon_o(i)|o^*(i)=o] = \Gamma\left(1-\frac{1}{\theta}\right)F(A_1w_1^{\theta}, \ldots, A_Ow_O^{\theta})^{1/\theta}/w_o
\end{equation*}

Therefore, efficiency units supplied to occupation $o$ by productivity workers are:
\begin{equation*}
    \ell_o^{\text{prod}} = \delta \cdot \pi_o \cdot \mathbb{E}[\epsilon_o(i)|o^*(i)=o] = \delta \cdot \Gamma\left(1-\frac{1}{\theta}\right)w_o^{\theta-1}F_o(x_1, \ldots, x_O) \cdot F(x_1, \ldots, x_O)^{(1-\theta)/\theta}
\end{equation*}
where $x_o = A_ow_o^{\theta}$.

To derive elasticities, we differentiate with respect to wages. Define the wage index $W = F(x_1, \ldots, x_O)^{1/\theta}$. The elasticity with respect to relative wages is:
\begin{align*}
    \frac{\partial\ln\ell_o^{\text{prod}}}{\partial\ln(w_{o'}/W)} &= \frac{\partial}{\partial\ln(w_{o'}/W)}\ln\left[\left(\frac{w_o}{W}\right)^{\theta-1}F_o\left(\frac{x_1}{W^{\theta}}, \ldots, \frac{x_O}{W^{\theta}}\right)\right]\\
    &= \begin{cases}
        (\theta-1) + \theta\frac{x_oF_{oo}}{F_o} & \text{if } o' = o \\
        \theta\frac{x_{o'}F_{oo'}}{F_o} & \text{if } o' \neq o
    \end{cases}
\end{align*}

Since $\frac{\partial\ln W}{\partial\ln w_{o'}} = \pi_{o'}$, the absolute wage elasticity becomes:
\begin{equation}\label{app:eq:theta_prod}
    \Theta_{oo'}^{\text{prod}} = \frac{\partial\ln\ell_o^{\text{prod}}}{\partial\ln w_{o'}} = \begin{cases}
        \theta\frac{x_oF_{oo}}{F_o} + (\theta-1)(1-\pi_o) & \text{if } o' = o \\
        \theta\frac{x_{o'}F_{oo'}}{F_o} - (\theta-1)\pi_{o'} & \text{if } o' \neq o
    \end{cases}
\end{equation}

\subsubsection{Workers with Productivity in Preferences}

For workers whose productivity affects effort costs, efficiency units supplied are $(1-\delta) \cdot \pi_o$. The derivation follows Appendix \ref{app_ss:elasticity_derivation}, yielding:
\begin{equation}\label{app:eq:theta_pref}
    \Theta_{oo'}^{\text{pref}} = \begin{cases}
        \theta\left[\frac{x_oF_{oo}}{F_o} + 1 - \pi_o\right] & \text{if } o' = o \\
        \theta\left[\frac{x_{o'}F_{oo'}}{F_o} - \pi_{o'}\right] & \text{if } o' \neq o
    \end{cases}
\end{equation}

\subsubsection{Weighted Effective Elasticity}

Total efficiency units in occupation $o$ are:
\begin{equation*}
    \ell_o^{\text{eff}} = \delta \cdot \pi_o \cdot \mathbb{E}[\epsilon_o(i)|o^*(i)=o] + (1-\delta) \cdot \pi_o
\end{equation*}

Define $s_o^{\delta}$ as the share of efficiency units from productivity workers:
\begin{equation*}
    s_o^{\delta} = \frac{\delta \cdot \mathbb{E}[\epsilon_o(i)|o^*(i)=o]}{\delta \cdot \mathbb{E}[\epsilon_o(i)|o^*(i)=o] + (1-\delta)}
\end{equation*}

The effective elasticity is the weighted average by efficiency unit shares:
\begin{align}
    \Theta_{oo'}^{\text{eff}} &= s_o^{\delta} \cdot \Theta_{oo'}^{\text{prod}} + (1-s_o^{\delta}) \cdot \Theta_{oo'}^{\text{pref}} \label{app:eq:theta_weighted}
\end{align}

Substituting the individual elasticities:
\begin{align}
    \Theta_{oo'}^{\text{eff}} &= \begin{cases}
        s_o^{\delta}\left[\theta\frac{x_oF_{oo}}{F_o} + (\theta-1)(1-\pi_o)\right] + (1-s_o^{\delta})\left[\theta\frac{x_oF_{oo}}{F_o} + \theta(1-\pi_o)\right] & \text{if } o' = o \nonumber\\
        s_{o'}^{\delta}\left[\theta\frac{x_{o'}F_{oo'}}{F_o} - (\theta-1)\pi_{o'}\right] + (1-s_{o}^{\delta})\left[\theta\frac{x_{o'}F_{oo'}}{F_o} - \theta\pi_{o'}\right] & \text{if } o' \neq o \nonumber
    \end{cases}
\end{align}

Simplifying:
\begin{align}
    \Theta_{oo'}^{\text{eff}} &= \begin{cases}
        \theta\frac{x_oF_{oo}}{F_o} + [s_o^{\delta}(\theta-1) + (1-s_o^{\delta})\theta](1-\pi_o) & \text{if } o' = o \\
        \theta\frac{x_{o'}F_{oo'}}{F_o} - [s_{o'}^{\delta}(\theta-1) + (1-s_{o}^{\delta})\theta]\pi_{o'} & \text{if } o' \neq o
    \end{cases} \nonumber\\
    &= \begin{cases}
        \theta\frac{x_oF_{oo}}{F_o} + (\theta-s_o^{\delta})(1-\pi_o) & \text{if } o' = o \\
        \theta\frac{x_{o'}F_{oo'}}{F_o} - (\theta-s_{o}^{\delta})\pi_{o'} & \text{if } o' \neq o
    \end{cases} \label{app:eq:theta_eff_final}
\end{align}

This establishes the result in equation \eqref{app:eq:eff_elasticity}, demonstrating that the correlation term $\theta x_{o'}F_{oo'}/F_o$ remains unaffected by worker composition while the baseline substitution term decreases from $\theta$ to $(\theta-s_o^{\delta})$ as the share of efficiency units from productivity workers increases.

\end{document}